\numberwithin{equation}{section}
\let\thmhead\thmhead@plain
\let\swappedhead@plain=\swappedhead
\newtheorem{theorem}{Theorem}[section]
\newtheorem{lemma}[theorem]{Lemma}
\newtheorem{conjecture}[theorem]{Conjecture}
\newtheorem{proposition}[theorem]{Proposition}
\newtheorem{corollary}[theorem]{Corollary}
\theoremstyle{definition}
\newtheorem{definition}[theorem]{Definition}
\newtheorem{example}[theorem]{Example}
\newtheorem{remark}[theorem]{Remark}
\newcommand{\dd}{\delta}
\newcommand{\ee}{\epsilon}
\newcommand{\ttt}{\theta}
\newcommand{\BB}{\bar{\epsilon}}
\newcommand{\EE}{\mathscr{E}}
\newcommand{\asr}[1][m|n]{\EE(#1)_{\mathbb{R}}}
\newcommand{\asz}[1][m|n]{\EE(#1)_{\mathbb{Z}}}
\newcommand{\aaa}{\alpha}
\newcommand{\pmr}{\Phi^{\pm}}
\newcommand{\pr}{\Phi^{+}}
\newcommand{\rr}{\tilde{\Phi}}
\newcommand{\prr}{\tilde{\Phi}^{+}}
\newcommand{\pmrr}{\tilde{\Phi}^{\pm}}
\newcommand{\re}{\Phi_{\mathrm{even}}}
\newcommand{\ri}{\Phi_{\mathrm{iso}}}
\newcommand{\ra}{\Phi_{\mathrm{aniso}}}
\newcommand{\pmre}{\Phi^{\pm}_{\mathrm{even}}}
\newcommand{\pmri}{\Phi^{\pm}_{\mathrm{iso}}}
\newcommand{\pmra}{\Phi^{\pm}_{\mathrm{aniso}}}
\newcommand{\pre}{\Phi^{+}_{\mathrm{even}}}
\newcommand{\pmrer}{\tilde{\Phi}^{\pm}_{\mathrm{even}}}
\newcommand{\pmrir}{\tilde{\Phi}^{\pm}_{\mathrm{iso}}}
\newcommand{\pmrar}{\tilde{\Phi}^{\pm}_{\mathrm{aniso}}}
\newcommand{\prer}{\tilde{\Phi}^{+}_{\mathrm{even}}}
\newcommand{\prir}{\tilde{\Phi}^{+}_{\mathrm{iso}}}
\newcommand{\prar}{\tilde{\Phi}^{+}_{\mathrm{aniso}}}
\newcommand{\tA}[1][m|n]{\mathfrak{sl}(#1)}
\newcommand{\tB}[1][2m+1|2n]{\mathfrak{osp}(#1)}
\newcommand{\GG}[1][m|n]{\mathfrak{g}(#1)}
\newcommand{\tAn}[1][m]{\mathfrak{sl}(#1)}
\newcommand{\tBn}[1][2m+1]{\mathfrak{so}(#1)}
\newcommand{\GGn}[1][m]{\mathfrak{g}(#1)}
\newcommand{\psign}{\pi}
\newcommand{\UGG}[1][m|n]{U_q(\GG[#1])}
\newcommand{\UGGp}[1][m|n]{U_q^{+}(\GG[#1])}
\newcommand{\UGGn}[1][m]{U_q(\GG[#1])}
\newcommand{\UGGpn}[1][m]{U_q^{+}(\GG[#1])}
\newcommand{\UtAp}[1][m|n]{U_q^{+}(\mathfrak{sl}(#1))}
\newcommand{\UtBp}[1][2m+1|2n]{U_q^{+}(\mathfrak{osp}(#1))}
\newcommand{\UtBpn}[1][2m+1]{U_q^{+}(\mathfrak{so}(#1))}
\newcommand{\phaseIrel}{\phi}
\newcommand{\phaseIIrel}{\varphi}
\newcommand{\WG}[1][m|n]{W(\GG[#1])}
\newcommand{\WGn}[1][m]{W(\GGn[#1])}
\newcommand{\WSG}[1][m|n]{SW(\GG[#1])}
\newcommand{\htf}{\mathrm{ht}}
\newcommand{\tm}{\gamma}
\newcommand{\tmt}{\tilde{\gamma}}
\newcommand{\tdr}{\mathscr{R}}
\newcommand{\tdl}{\mathscr{L}}
\newcommand{\tdlt}{\tilde{\mathscr{L}}}
\newcommand{\tdlq}{\tdl(q)}
\newcommand{\tdm}{\mathscr{M}}
\newcommand{\tdmq}{\tdm(q)}
\newcommand{\tdn}{\mathscr{N}}
\newcommand{\tdk}{\mathscr{K}}
\newcommand{\tdj}{\mathscr{J}}
\newcommand{\tdg}{\mathscr{G}}
\newcommand{\tdx}{\mathscr{X}}
\newcommand{\tdy}{\mathscr{Y}}
\newcommand{\tdz}{\mathscr{Z}}
\newcommand{\tdrC}{\mathcal{R}}
\newcommand{\tdlC}{\mathcal{L}}
\newcommand{\tdmC}{\mathcal{M}}
\newcommand{\tdnC}{\mathcal{N}}
\newcommand{\tdjC}{\mathcal{J}}
\newcommand{\tdxC}{\mathcal{X}}
\newcommand{\tdyC}{\mathcal{Y}}
\newcommand{\tdzC}{\mathcal{Z}}
\newcommand{\ronri}{\theta}
\newcommand{\tdjPhaseI}{\psi_1}
\newcommand{\tdjPhaseII}{\psi_2}
\newcommand{\appGa}{X}
\newcommand{\appGb}{Z}
\newcommand{\boson}{\mathcal{B}_q}
\newcommand{\ii}{1}
\newcommand{\jj}{2}
\newcommand{\kk}{3}
\newcommand{\oop}{e_{\ii\jj}}
\newcommand{\poo}{e_{\jj\kk}}
\newcommand{\pot}{e_{(\jj\kk)\kk}}
\newcommand{\ooo}{e_{\ii\jj\kk}}
\newcommand{\oot}{e_{((\ii\jj)\kk)\kk}}
\newcommand{\ott}{e_{(((\ii\jj)\kk)\kk)\jj}}
\newcommand{\toX}{e_{\jj\ii}}
\newcommand{\rtX}{e_{\kk\jj}}
\newcommand{\ltrlo}{e_{(\jj\kk)\ii}}
\newcommand{\ltolr}{e_{(\jj\ii)\kk}}
\newcommand{\rltol}{e_{\kk\jj\ii}}
\newcommand{\rlrtl}{e_{\kk(\kk\jj)}}
\newcommand{\olltrlrl}{e_{\ii((\jj\kk)\kk)}}
\newcommand{\lltrlrlo}{e_{((\jj\kk)\kk)\ii}}
\newcommand{\lltolrlr}{e_{((\jj\ii)\kk)\kk}}
\newcommand{\rlrltoll}{e_{\kk(\kk(\jj\ii))}}
\newcommand{\tlllotlrlrl}{e_{\jj(((\ii\jj)\kk)\kk)}}
\newcommand{\loltrllltrl}{e_{(\ii(\jj\kk))(\jj\kk)}}
\newcommand{\ltrllltrlol}{e_{(\jj\kk)((\jj\kk)\ii)}}
\newcommand{\ltollltrlrl}{e_{(\jj\ii)((\jj\kk)\kk)}}
\newcommand{\lltrlrlltol}{e_{((\jj\kk)\kk)(\jj\ii)}}
\newcommand{\tlrlrltolll}{e_{\jj(\kk(\kk(\jj\ii)))}}
\newcommand{\rlotl}{e_{\kk(\ii\jj)}}
\newcommand{\lrtlo}{e_{\kk\jj\ii}}
\newcommand{\rlrlotll}{e_{\kk(\kk(\ii\jj))}}
\newcommand{\lrlrtllo}{e_{(\kk(\kk\jj))\ii}}
\newcommand{\lrlrtlllotl}{e_{(\kk(\kk\jj))(\ii\jj)}}
\newcommand{\lrtlllrtlol}{e_{(\kk\jj)((\kk\jj)\ii)}}
\newcommand{\tllrlrtllol}{e_{\jj((\kk(\kk\jj))\ii)}}
\newcommand{\stm}{\Gamma}
\newcommand{\stmB}{\Theta}
\newcommand{\stmC}{\Xi}
\newcommand{\phaseI}{\rho_1}
\newcommand{\phaseII}{\rho_2}
\newcommand{\phaseIII}{\rho_3}
\newcommand{\phaseIb}{\eta_1}
\newcommand{\phaseIIb}{\eta_2}
\newcommand{\phaseIIIb}{\eta_3}
\newcommand{\spaceD}{\hspace{3em}}
\newcommand{\spaceDb}{\hspace{3em}}
\newcommand{\geneBoe}{\tilde{e}}
\newcommand{\cc}{\tikz[baseline=-0.5ex]\draw[radius=1ex] (0,0) circle;}
\newcommand{\ccb}{\tikz[baseline=-0.5ex]\draw[black, fill=black, radius=1ex] (0,0) circle;}
\newcommand{\cct}{%
\begin{tikzpicture}[baseline=-0.5ex]%
\draw[radius=1ex] (0,0) circle;%
\foreach \i in {0,...,3}{%
\coordinate (N\i) at (\i*360/4++45:1ex);%
}%
\draw (N0) -- (N2);%
\draw (N1) -- (N3);%
\end{tikzpicture}%
}
\newcommand{\cctn}{%
\begin{tikzpicture}[baseline=-0.5ex]%
\foreach \i in {0,...,3}{%
\coordinate (N\i) at (\i*360/4++45:1ex);%
}%
\draw (N0) -- (N2);%
\draw (N1) -- (N3);%
\end{tikzpicture}%
}
\newcommand{\ltA}{dash}
\newcommand{\ltB}{Rightarrow}
\newcommand{\ddtwo}[3]{%
\begin{tikzcd}[cramped, sep=14pt, ampersand replacement=\&]%
#2 \arrow[#1, start anchor={[xshift=-2pt]}, end anchor={[xshift=2pt]}]{r} \& #3%
\end{tikzcd}%
}
\newcommand{\ddtwoA}[2]{\ddtwo{\ltA}{#1}{#2}}
\newcommand{\ddtwoB}[2]{\ddtwo{\ltB}{#1}{#2}}
\newcommand{\ddtwoWul}[5]{%
\begin{tikzcd}[cramped, sep=14pt, ampersand replacement=\&, row sep=0ex, baseline=-0.5ex]%
#4 \& #5 \\%
#2 \arrow[#1, start anchor={[xshift=-2pt]}, end anchor={[xshift=2pt]}]{r} \& #3%
\end{tikzcd}%
}
\newcommand{\ddthree}[5]{%
\begin{tikzcd}[cramped, sep=14pt, ampersand replacement=\&, row sep=0ex, baseline=-0.5ex]%
#3 \arrow[#1, start anchor={[xshift=-2pt]}, end anchor={[xshift=2pt]}]{r} \& #4%
\arrow[#2, start anchor={[xshift=-2pt]}, end anchor={[xshift=2pt]}]{r} \& #5
\end{tikzcd}%
}
\newcommand{\ddthreeA}[3]{\ddthree{\ltA}{\ltA}{#1}{#2}{#3}}
\newcommand{\ddthreeB}[3]{\ddthree{\ltA}{\ltB}{#1}{#2}{#3}}
\newcommand{\ddthreeWul}[8]{%
\begin{tikzcd}[cramped, sep=14pt, ampersand replacement=\&, row sep=0ex, baseline=-0.5ex]%
#6 \& #7 \& #8\\%
#3 \arrow[#1, start anchor={[xshift=-2pt]}, end anchor={[xshift=2pt]}]{r} \& #4%
\arrow[#2, start anchor={[xshift=-2pt]}, end anchor={[xshift=2pt]}]{r} \& #5
\end{tikzcd}%
}
\newcommand{\ddthreeAWul}[6]{\ddthreeWul{\ltA}{\ltA}{#1}{#2}{#3}{#4}{#5}{#6}}
\newcommand{\ddthreeBWul}[6]{\ddthreeWul{\ltA}{\ltB}{#1}{#2}{#3}{#4}{#5}{#6}}
\newcommand{\ddgenA}{%
\begin{tikzcd}[sep=small, ampersand replacement=\&, row sep=0ex, baseline=-0.5ex, column sep=1ex]%
\BB_{1}-\BB_{2} \& \BB_{2}-\BB_{3} \& \qquad \& \quad \& \BB_{N-1}-\BB_{N} \\%
\cctn \arrow[\ltA, start anchor={[xshift=-2pt]}, end anchor={[xshift=2pt]}]{r} \& \cctn%
\arrow[\ltA, start anchor={[xshift=-2pt]}, end anchor={[xshift=2pt]}]{r} \& {\ }%
\arrow[\ltA, dotted, start anchor={[xshift=-2pt]}, end anchor={[xshift=2pt]}]{r} \& {\ }%
\arrow[\ltA, start anchor={[xshift=-2pt]}, end anchor={[xshift=2pt]}]{r} \& \cctn%
\end{tikzcd}%
}
\newcommand{\ddgenArightmost}{%
\begin{tikzcd}[sep=small, ampersand replacement=\&, row sep=0ex, baseline=-0.5ex, column sep=1ex]%
\cc \arrow[\ltA, start anchor={[xshift=-2pt]}, end anchor={[xshift=2pt]}]{r} \& {\ }%
\arrow[\ltA, dotted, start anchor={[xshift=-2pt]}, end anchor={[xshift=2pt]}]{r} \& {\ }%
\arrow[\ltA, start anchor={[xshift=-2pt]}, end anchor={[xshift=2pt]}]{r} \& \cc%
\arrow[\ltA, start anchor={[xshift=-2pt]}, end anchor={[xshift=2pt]}]{r} \& \cct%
\end{tikzcd}%
}
\newcommand{\ddgenBe}{%
\begin{tikzcd}[sep=small, ampersand replacement=\&, row sep=0ex, baseline=-0.5ex, column sep=1ex]%
\BB_{1}-\BB_{2} \& \BB_{2}-\BB_{3} \& \qquad \& \quad \& \BB_{N-1}-\BB_{N} \& \BB_{N} \\%
\cctn \arrow[\ltA, start anchor={[xshift=-2pt]}, end anchor={[xshift=2pt]}]{r} \& \cctn%
\arrow[\ltA, start anchor={[xshift=-2pt]}, end anchor={[xshift=2pt]}]{r} \& {\ }%
\arrow[\ltA, dotted, start anchor={[xshift=-2pt]}, end anchor={[xshift=2pt]}]{r} \& {\ }%
\arrow[\ltA, start anchor={[xshift=-2pt]}, end anchor={[xshift=2pt]}]{r} \& \cctn%
\arrow[\ltB, start anchor={[xshift=-2pt]}, end anchor={[xshift=2pt]}]{r} \& \cc%
\end{tikzcd}%
}
\newcommand{\ddgenBa}{%
\begin{tikzcd}[sep=small, ampersand replacement=\&, row sep=0ex, baseline=-0.5ex, column sep=1ex]%
\BB_{1}-\BB_{2} \& \BB_{2}-\BB_{3} \& \qquad \& \quad \& \BB_{N-1}-\BB_{N} \& \BB_{N} \\%
\cctn \arrow[\ltA, start anchor={[xshift=-2pt]}, end anchor={[xshift=2pt]}]{r} \& \cctn%
\arrow[\ltA, start anchor={[xshift=-2pt]}, end anchor={[xshift=2pt]}]{r} \& {\ }%
\arrow[\ltA, dotted, start anchor={[xshift=-2pt]}, end anchor={[xshift=2pt]}]{r} \& {\ }%
\arrow[\ltA, start anchor={[xshift=-2pt]}, end anchor={[xshift=2pt]}]{r} \& \cctn%
\arrow[\ltB, start anchor={[xshift=-2pt]}, end anchor={[xshift=2pt]}]{r} \& \ccb%
\end{tikzcd}%
}
\newcommand{\ddgenAd}{%
\begin{tikzcd}[sep=small, ampersand replacement=\&, row sep=0ex, baseline=-0.5ex, column sep=0ex]%
\ee_{1}-\ee_{2} \& \quad \& \quad \& \ee_{m-1}-\ee_{m} \& \ee_{m}-\dd_{1} \& \dd_{1}-\dd_{2} \& \quad \& \quad \& \dd_{n-1}-\dd_{n} \\%
\cc \arrow[\ltA, start anchor={[xshift=-2pt]}, end anchor={[xshift=2pt]}]{r} \& {\ }%
\arrow[\ltA, dotted, start anchor={[xshift=-2pt]}, end anchor={[xshift=2pt]}]{r} \& {\ }%
\arrow[\ltA, start anchor={[xshift=-2pt]}, end anchor={[xshift=2pt]}]{r} \& \cc%
\arrow[\ltA, start anchor={[xshift=-2pt]}, end anchor={[xshift=2pt]}]{r} \& \cct%
\arrow[\ltA, start anchor={[xshift=-2pt]}, end anchor={[xshift=2pt]}]{r} \& \cc%
\arrow[\ltA, start anchor={[xshift=-2pt]}, end anchor={[xshift=2pt]}]{r} \& {\ }%
\arrow[\ltA, dotted, start anchor={[xshift=-2pt]}, end anchor={[xshift=2pt]}]{r} \& {\ }%
\arrow[\ltA, start anchor={[xshift=-2pt]}, end anchor={[xshift=2pt]}]{r} \& \cc%
\end{tikzcd}%
}
\newcommand{\ddgenBed}{%
\begin{tikzcd}[sep=small, ampersand replacement=\&, row sep=0ex, baseline=-0.5ex, column sep=0ex]%
\dd_{1}-\dd_{2} \& \quad \& \quad \& \dd_{n-1}-\dd_{n} \& \dd_{n}-\ee_{1} \& \ee_{1}-\ee_{2} \& \quad \& \quad \& \ee_{m-1}-\ee_{m} \& \ee_{m} \\%
\cc \arrow[\ltA, start anchor={[xshift=-2pt]}, end anchor={[xshift=2pt]}]{r} \& {\ }%
\arrow[\ltA, dotted, start anchor={[xshift=-2pt]}, end anchor={[xshift=2pt]}]{r} \& {\ }%
\arrow[\ltA, start anchor={[xshift=-2pt]}, end anchor={[xshift=2pt]}]{r} \& \cc%
\arrow[\ltA, start anchor={[xshift=-2pt]}, end anchor={[xshift=2pt]}]{r} \& \cct%
\arrow[\ltA, start anchor={[xshift=-2pt]}, end anchor={[xshift=2pt]}]{r} \& \cc%
\arrow[\ltA, start anchor={[xshift=-2pt]}, end anchor={[xshift=2pt]}]{r} \& {\ }%
\arrow[\ltA, dotted, start anchor={[xshift=-2pt]}, end anchor={[xshift=2pt]}]{r} \& {\ }%
\arrow[\ltA, start anchor={[xshift=-2pt]}, end anchor={[xshift=2pt]}]{r} \& \cc%
\arrow[\ltB, start anchor={[xshift=-2pt]}, end anchor={[xshift=2pt]}]{r} \& \cc%
\end{tikzcd}%
}
\newcommand{\ddgenBad}{%
\begin{tikzcd}[sep=small, ampersand replacement=\&, row sep=0ex, baseline=-0.5ex, column sep=1ex]%
\dd_{1}-\dd_{2} \& \dd_{2}-\dd_{3} \& \qquad \& \quad \& \dd_{n-1}-\dd_{n} \& \dd_{n} \\%
\cc \arrow[\ltA, start anchor={[xshift=-2pt]}, end anchor={[xshift=2pt]}]{r} \& \cc%
\arrow[\ltA, start anchor={[xshift=-2pt]}, end anchor={[xshift=2pt]}]{r} \& {\ }%
\arrow[\ltA, dotted, start anchor={[xshift=-2pt]}, end anchor={[xshift=2pt]}]{r} \& {\ }%
\arrow[\ltA, start anchor={[xshift=-2pt]}, end anchor={[xshift=2pt]}]{r} \& \cc%
\arrow[\ltB, start anchor={[xshift=-2pt]}, end anchor={[xshift=2pt]}]{r} \& \ccb%
\end{tikzcd}%
}%
\begin{document}
\title[Three dimensional integrability and PBW bases of quantum superalgebras]
{Tetrahedron and 3D Reflection Equation from PBW Bases of \\
the Nilpotent Subalgebra of Quantum Superalgebras}
\author{Akihito Yoneyama}
\address{Akihito Yoneyama, Institute of Physics,
University of Tokyo, Komaba, Tokyo 153-8902, Japan}
\email{yoneyama@gokutan.c.u-tokyo.ac.jp}
\maketitle
\vspace{0.5cm}
\begin{center}{\bf Abstract}\end{center}
In this paper, we study transition matrices of PBW bases of the nilpotent subalgebra of quantum superalgebras associated with all possible Dynkin diagrams of type A and B in the case of rank 2 and 3, and examine relationships with three-dimensional (3D) integrability.
We obtain new solutions to the Zamolodchikov tetrahedron equation via type A and the 3D reflection equation via type B, where the latter equation was proposed by Isaev and Kulish as a 3D analog of the reflection equation of Cherednik.
As a by-product of our approach, the Bazhanov-Sergeev solution to the Zamolodchikov tetrahedron equation is characterized as the transition matrix for a particular case of type A, which clarifies an algebraic origin of it.
Our work is inspired by the recent developments connecting transition matrices for quantum non-super algebras with intertwiners of irreducible representations of quantum coordinate rings.
We also discuss the crystal limit of transition matrices, which gives a super analog of transition maps of Lusztig's parametrizations of the canonical basis.
\vspace{0.4cm}
\tableofcontents
\addtocontents{toc}{\setcounter{tocdepth}{1}}
\section{Introduction}\label{sec 1}
\subsection{Background}\label{sec 11}
The Zamolodchikov tetrahedron equation\cite{Zam80} is a three dimensional analog of the Yang-Baxter equation\cite{Bax07}, where the latter equation serves as a cornerstone of integrable systems in two dimensions in terms of its physical applications and deeply understood algebraic aspects.
Along the same line as the Yang-Baxter equation, the tetrahedron equation gives the condition of factorizations for a four-body scattering of strings, and also gives a sufficient condition for the commutativity of the associated layer-to-layer transfer matrix.
Solutions to the Yang-Baxter equation are called $R$ matrices, and it is well known that we can systematically construct $R$ matrices through the Drinfeld-Jimbo quantum algebras\cite{Dri86,Jim86b}, but unlike the Yang-Baxter equation, there is no known way to obtain non-trivial solutions to the tetrahedron equation as such a systematic procedure.
\par
Historically, one important family of solutions to the tetrahedron equation is the $N$-state Zamolodchikov model, which was first proposed by \cite{Zam81} for $N=2$ as the first non-trivial solution and later generalized by \cite{BB92,SMS96} for general $N$.
From an algebraic point of view, it is known that the solutions are related to the $R$ matrices associated with the cyclic representations of the affine quantum algebra $U_q(A_{n-1}^{(1)})$ at roots of unity.
For the history of the solutions, see the introduction of \cite{BS06} and references therein.
\par
In this paper, we focus on infinite-dimensional solutions on the Fock spaces, which are essentially different solutions from the $N$-state Zamolodchikov model.
Our starting point is the known solution $(\tdr,\tdl)$ to the following tetrahedron equations:
\begin{align}
\tdr_{123}\tdr_{145}\tdr_{246}\tdr_{356}=\tdr_{356}\tdr_{246}\tdr_{145}\tdr_{123}
,
\label{KV94 te intro}
\\
\tdl_{123}\tdl_{145}\tdl_{246}\tdr_{356}=\tdr_{356}\tdl_{246}\tdl_{145}\tdl_{123}
,
\label{BS06 te intro}
\end{align}
where indices represent the tensor components on which each matrix acts non-trivially.
The matrix elements of $\tdr\in\mathrm{End}(F\otimes F\otimes F)$ and $\tdl\in\mathrm{End}(V\otimes V\otimes F)$ will be specified in (\ref{3dR mat el}) and (\ref{3dL mat el}), where $F$ and $V$ are the bosonic and Fermionic Fock spaces, respectively.
We call them the 3D R and 3D L.
\par
The 3D R was first derived\cite{KV94} as the intertwiner of the irreducible representations of the quantum coordinate ring $A_q(A_2)$, where the associated tetrahedron equation (\ref{KV94 te intro}) holds as the identity of the intertwiner of the irreducible representations of the quantum coordinate ring $A_q(A_3)$\cite{RTF90}.
The 3D R was also independently discovered by the seminal paper \cite{BS06} as explained later, and they are identified by \cite{KO12}.
As an amazing connection, the 3D R also gives the transition matrix of the PBW bases of the nilpotent subalgebra of the quantum algebra $U_q^{+}(A_2)$.
It is first observed by Sergeev\cite{Ser08}, and later systematically generalized as the Kuniba-Okado-Yamada theorem\cite{KOY13}, which states that intertwiners of irreducible representations of quantum coordinate rings agree with transition matrices of PBW bases of the nilpotent subalgebra of quantum algebras for all finite-dimensional simple Lie algebras.
See also \cite{Sai16,Tan17} which proved and sophisticated this theorem from a different point of view.
\par
On the other hand, the 3D L was obtained by a heuristic \textit{quantization} of the solution to the local Yang-Baxter equation\cite{MN89} by Bazhanov-Sergeev\cite{BS06}.
They made an ansatz that the 3D L gives an operator-valued solution to the local Yang-Baxter equation, which is equivalent to the tetrahedron equation (\ref{BS06 te intro}), and solved (\ref{BS06 te intro}) for $\tdr$.
It also gives an alternative derivation of the 3D R.
As a remarkable result related to the 3D L, it is known that the layer-to-layer transfer matrix of size $m\times n$ associated with the 3D L gives the spectral duality between different row-to-row transfer matrices: $\tAn[m]$ spin chain of system size $n$ and $\tAn[n]$ spin chain of system size $m$\cite{BS06}.
The duality is called the rank-size duality, and later, also appeared in the context of the five-dimensional gauge theory\cite{MMRZZ13}.
\par
Of course, the 3D R and 3D L are essentially three-dimensional objects, but it is known that there is an interesting connection to the $R$ matrix.
More concretely, there is a systematic way to reduce one solution to the tetrahedron equation to an infinite family of $R$ matrices.
By applying this procedure to the 3D R and 3D L, we can obtain explicit formulae of the $R$ matrices associated with some affine quantum algebras\cite{KOS15}.
By $n$-concatenation of the 3D R, we obtain the $R$ matrices associated with the symmetric tensor representations of $U_q(A_{n-1}^{(1)})$, and the Fock representations of $U_q(D_{n+1}^{(2)})$, $U_q(A_{2n}^{(2)})$ and $U_q(C_{n}^{(1)})$.
Similarly, by $n$-concatenation of the 3D L, we obtain the $R$ matrices associated with the fundamental representations of $U_q(A_{n-1}^{(1)})$, and the spin representations of $U_q(D_{n+1}^{(2)})$, $U_q(B_{n}^{(1)})$ and $U_q(D_{n}^{(1)})$.
Moreover, by mixing uses of some 3D R and 3D L, we also obtain the $R$ matrices associated with the generalized quantum groups\cite{KOS15}.
They are called \textit{matrix product solutions} to the Yang-Baxter equation.
For more details, see \cite{KOS15} and references therein.
\subsection{Motivation}\label{sec 12}
One of our motivations for this paper is why the 3D R and 3D L lead to such similar results, although they have totally different origins.
Actually, the 3D L has been derived again in several ways after \cite{BS06}.
First, \cite{KO12} identified the tetrahedron equation (\ref{BS06 te intro}) as the set of intertwining relations of the irreducible representations of $A_q(A_2)$, that is, the tetrahedron equation (\ref{BS06 te intro}) is obtained by arranging the intertwining relations of the 3D R into the matrix form (\ref{BS06 te intro}), simply by introducing the matrix $\tdl$.
See Remark \ref{KO12 obs rem}.{\ }for more details of this observation.
This procedure also works for the intertwining relation for $A_q(C_2)$\cite{KP18} and even for $A_q(G_2)$\cite{Kun18}, and leads to matrix product solutions to the reflection equation of Cherednik\cite{Che84} and the $G_2$ reflection equation.
See Remark \ref{KP18 rem} for more details for type C.
These are interesting connections but quite mysterious.
Also, although this connection for type A gives a derivation of the tetrahedron equation (\ref{BS06 te intro}), algebraic origins of the 3D L has been still unclear.
\par
On the one hand, Sergeev gave a parallel derivation\cite{Ser09} for the 3D R and 3D L by using the methods called \textit{quantum geometry}\cite{BMS10}.
At first glance, it seems that they consider something like a super analog of the irreducible representations of the quantum coordinate ring $A_q(A_2)$.
See for example (56)$\sim$(59) of \cite{Ser09}.
However, to verify it is highly non-trivial because there is no theory about irreducible representations of quantum super coordinate rings like Soibelman's theory for the non-super case\cite{Soi92}.
Then, the result by \cite{Ser09} can not be understood in terms of usual languages of \textit{quantum algebras}, at least straightforwardly.
\par
We also remark that the classical limit of the tetrahedron equation (\ref{BS06 te intro}) is recently derived in relation to non-trivial transformations of a plabic network, which can be interpreted as cluster mutations\cite{GSZ20}.
\subsection{Main achievements}\label{sec 13}
In this paper, we give a derivation for the 3D L in terms of the PBW bases of the nilpotent subalgebra of the quantum superalgebra\cite{Yam94} associated with the Dynkin diagram \ddtwoA{\cc}{\cct}.
We identify the 3D L with the transition matrix of them, which clarifies a completely parallel origin for the 3D L to the 3D R.
This result is just a special case of our investigations: we study transition matrices associated with all Dynkin diagrams of type A in the case of rank 2, which become parts of the tetrahedron equations.
Actually, we obtain a matrix $\tdn\in\mathrm{End}(V\otimes F\otimes V)$ by considering the case of \ddtwoA{\cct}{\cct}, which is new and different from the 3D R and 3D L.
The matrix elements of $\tdn$ will be specified in (\ref{3dN mat el}), and we call $\tdn$ as the 3D N.
\par
By considering the transition matrix for the case of rank 3 and attributing it to a composition of transition matrices of rank 2 in two ways, we obtain several solutions to the tetrahedron equation which the 3D R, L, and N satisfy.
We study the transition matrices associated with all Dynkin diagrams of type A in the case of rank 3, where \ddthreeA{\cct}{\cc}{\cc} and \ddthreeA{\cct}{\cct}{\cc} can be easily attribnuted to \ddthreeA{\cc}{\cc}{\cct} and \ddthreeA{\cc}{\cct}{\cct}, respectively, so we consider 6 Dynkin diagrams in total.
The cases for \ddthreeA{\cc}{\cc}{\cc} and \ddthreeA{\cc}{\cc}{\cct} reproduce the known tetrahedron equations (\ref{KV94 te intro}) and (\ref{BS06 te intro}), respectively.
For the case of \ddthreeA{\cc}{\cct}{\cct}, we obtain the following equation:
\begin{align}
\tdn(q^{-1})_{123}\tdn(q^{-1})_{145}\tdr_{246}\tdl_{356}=\tdl_{356}\tdr_{246}\tdn(q^{-1})_{145}\tdn(q^{-1})_{123}
.
\label{3dn te intro}
\end{align}
This suggests the 3D N gives a new solution to the tetrahedron equation.
The remaining 3 cases also give the tetrahedron like equations, but actually they are the tetrahedron equations \textit{up to sign factors}.
Further investigations should be done as to whether we can attribute them to the usual tetrahedron equations.
See Remark \ref{LLMM remark} related to this issue.
\par
We can generalize these results to the case of type B.
For type B, the associated equation is the 3D reflection equation\cite{IK97}, which is proposed by Isaev and Kulish as a three-dimensional analog of the reflection equation of Cherednik\cite{Che84}.
They also call the equation the \textit{tetrahedron reflection equation}.
Actually, they obtained the equation as the associativity condition for the 3D boundary Zamolodchikov algebra\cite[(9)]{IK97}, just as the tetrahedron equation is obtained as the associativity condition for the 3D Zamolodchikov algebra\cite{Zam80}.
Physically, the 3D reflection equation gives the condition for factorizations for a three-body scattering of strings with boundary reflections, along the same line as the tetrahedron equation.
\par
Essentially, there are only two known non-trivial solutions to the 3D reflection equation\cite{KO12,KO13}.
Here, we present one of the equations:
\begin{align}
\tdr_{456}
\tdr_{489}
\tdj_{3579}
\tdr_{269}
\tdr_{258}
\tdj_{1678}
\tdj_{1234}
=
\tdj_{1234}
\tdj_{1678}
\tdr_{258}
\tdr_{269}
\tdj_{3579}
\tdr_{489}
\tdr_{456}
,
\label{KO13 tre intro}
\end{align}
where $\tdr$ is the 3D R and the matrix elements of $\tdj\in\mathrm{End}(F\otimes F\otimes F\otimes F)$ will be specified in (\ref{3dJ mat el}).
We call $\tdj$ as the 3D J.
The 3D J was first derived as the intertwiner of the irreducible representations of the quantum coordinate ring $A_q(B_2)$, where the associated 3D reflection equation (\ref{KO13 tre intro}) holds as the identity of the intertwiner of the irreducible representations of the quantum coordinate ring $A_q(B_3)$\cite{KO13}.
As an immediate corollary of the Kuniba-Okado-Yamada theorem, we find the 3D J also gives the transition matrix of the PBW bases of the nilpotent subalgebra of the quantum algebra $U_q^{+}(B_2)$.
Our result for type B gives new solutions to the 3D reflection equation, which generalizes the solution (\ref{KO13 tre intro}) to the family of solutions (\ref{tre general}).
Actually, we introduce three analogs of the 3D J; we call them the 3D X, Y and Z.
We emphasize that our result also gives some explicit formula of transition matrices for type B.
\par
Our idea comes from trying to interpret Sergeev's result\cite{Ser09} on the side of PBW bases through the Kuniba-Okado-Yamada theorem although the theorem has not been established for the super case.
Note however our proofs do not need any result for quantum coordinate rings.
The derivation of the tetrahedron and 3D reflection equation is done only using \textit{higher-order} relations for quantum superalgebras.
\par
Finally, we discuss the behavior of transition matrices at $q=0$, which is known as the crystal limit\cite{Kas91}.
In the crystal limit, transition matrices of PBW bases give so-called transition maps of Lusztig's parametrizations of the canonical basis because PBW bases correspond to the canonical basis in that case\cite{Lus90,BZ01}.
Then, if we take the limit for super cases, it is expected we can obtain a super analog of transition maps.
In this paper, we show that we can take normalizations for transition matrices so that such non-trivial limits exist, and obtain explicit formulae for almost all cases.
In contrast to non-super cases, non-trivial elements of transition matrices take not only $0,1$ but also $-1$ in the crystal limit, and they define non-trivial bijections on mixed spaces of $\{0,1\}$ and $\mathbb{Z}_{\geq 0}$.
\subsection{Outline}\label{sec 14}
The ourline of this paper is as follows.
In Section \ref{sec 2}, we briefly review basic facts about finite-dimensional Lie superalgebras of type A and B.
Then, we introduce quantum superalgebras and their PBW theorem by \cite{Yam94}.
In Section \ref{sec 3}, we summarize the 3D operators which give solutions to the tetrahedron and 3D reflection equations.
Section \ref{sec 4} and \ref{sec 5} are main parts of this paper.
They can be read almost independently.
In Section \ref{sec 4}, we consider transition matrices of PBW bases of the nilpotent subalgebra of quantum superalgebras associated with all possible Dynkin diagrams of type A in the case of rank 2 and 3, and we obtain several solutions to the tetrahedron equation.
In Section \ref{tA qroot rels subsec}, we introduce some notations to briefly describe the PBW bases of rank 3 and higher-order relations for them, which are used in Section \ref{sec 43}.
In Section \ref{sec 42} and \ref{sec 43}, we study transition matrices of rank 2 and 3, respectively.
Section \ref{sec 5} is type B version of Section \ref{sec 4}, where the associated equation is the 3D reflection equaion.
Finally, in section \ref{sec 6}, we discuss the crystal limit of transition matrices.
Appendix \ref{app 3dX} is devoted to the proof of Theorem \ref{my 3dX result}.
In Appendix \ref{app 3dZ}, we derive recurrence equations for the 3D Z, which is the transition matrix associated with \ddtwoB{\cc}{\ccb}.
\subsection*{Acknowledgements}
The author thanks Atsuo Kuniba and Masato Okado for helpful comments and discussions.
Special thanks are due to Atsuo Kuniba for the encouragement to complete this project.
\addtocontents{toc}{\setcounter{tocdepth}{2}}
\section{Quantum superalgebras of type A and B}\label{sec 2}
\subsection{Root data of finite-dimensional Lie superalgebras}\label{sec 21}
In this paper, we consider quantum superalgebras associated with finite-dimensional Lie superalgebras $\tA$ and $\tB$\cite{CW12,FSS89,Kac77,Zha14}.
Here, $m,n$ are non-negative integers and we assume $m+n\geq 2$.
We call $\tA$ as type A and $\tB$ as type B.
Let $\GG$ denote $\tA$ or $\tB$.
If we set $n=0$, $\GGn=\GG[m|0]$ is reduced to finite-dimensional simple Lie algebras.
To avoid confusion, we also call the finite-dimensional simple Lie algebras as the finite-dimensional simple non-super Lie algebras.
In this case, we simply write $\tA[m|0]$ and $\tB[2m+1|0]$ by $\tAn$ and $\tBn$, respectively.
In this section, we describe root data of $\GG$.
Here, we use a similar setup to \cite{XZ16}.
\par
We set $N=m+n$.
Let $\asr$ be the $N$-dimensional real vector space with a non-degenerate symmetric bilinear form $(\cdot,\cdot):\asr\times\asr\to\mathbb{R}$.
We use $\ee_i{\ }(i=1,\cdots,m)$ and $\dd_i{\ }(i=1,\cdots,n)$ as a basis of $\asr$ with a non-degenerate symmetric bilinear form given by
\begin{align}
(\ee_i,\ee_j)=(-1)^{\theta}\delta_{i,j}
,\quad
(\dd_i,\dd_j)=-(-1)^{\theta}\delta_{i,j}
,\quad
(\ee_i,\dd_j)=0
,
\label{ambi sp}
\end{align}
where $\ttt=0,1$ which is specified above Example \ref{theta example}, and $\delta_{i,j}$ is the Kronecker delta.
Let $\BB(m|n)=(\BB_1,\cdots,\BB_{N})$ denote an ordered basis of $\asr$ which is a permutation of $\ee_i{\ }(i=1,\cdots,m)$ and $\dd_i{\ }(i=1,\cdots,n)$.
Without loss of generality, we only consider cases when $\ee_{i}$ appears before $\ee_{i+1}$ and $\dd_{i}$ appears before $\dd_{i+1}$ in $\BB(m|n)$ for all $i$, which is called admissible.
\par
Let $\Phi$ be the set of roots of $\GG$ and $\Pi=\{\aaa_1,\cdots,\aaa_{r}\}$ be the set of simple roots of $\GG$, where $r$ is the rank of $\GG$.
Here, $r=N-1$ for $\tA$ and $r=N$ for $\tB$.
We write the set of labels by $I=\{1,\cdots,r\}$.
We call $\Phi$ and $\Pi$ as the root system and fundamental system of $\GG$.
When $\BB(m|n)$ is given, the fundamental system $\Pi$ is realized as Table \ref{simple roots}.
\begin{table}[htb]
\centering
\caption{}
\label{simple roots}
\begin{tabular}{c|c}\hline
$\GG$ & simple roots \\\hline
$\tA$ & $\aaa_i=\BB_{i}-\BB_{i+1}{\ }(i=1,\cdots,r)$ \\\hline
$\tB$ & $\aaa_i=\BB_{i}-\BB_{i+1}{\ }(i=1,\cdots,r-1)$, $\aaa_{r}=\BB_{r}$ \\\hline
\end{tabular}
\end{table}
We write the positve and negative part of the root lattice of $\GG$ by $Q^{\pm}=\pm\sum_{i=1}^{r}\mathbb{Z}_{\geq 0}\aaa_{i}\backslash\{0\}$ and the positive and negative part of the root system by $\pmr=\Phi\cap Q^{\pm}$, which will be identified in Table \ref{positive roots} for each case.
We also set the weight lattice of $\GG$ by $\asz=\sum_{i=1}^{m}\mathbb{Z}\ee_{i}\oplus \sum_{i=1}^{n}\mathbb{Z}\dd_{i}$.
\par
For $\lambda=\sum_{i=1}^{m}a_{i}\ee_{i}+\sum_{i=1}^{n}b_{i}\dd_{i}\in \asz$, we define the parity $p:\asz\to \{0,1\}$ of $\lambda$ as $p(\lambda)=\sum_{i=1}^{n}b_i{\ }(\mathrm{mod}{\ }2)$, and this induces the parity of elements of $\Phi$ via its realization.
We call $\lambda\in\asz$ is even if $p(\lambda)=0$ and odd if $p(\lambda)=1$.
We set the set of indices of odd simple roots by $\tau\subset I$.
Then, the positive part of the root system $\pr$ is given in Table \ref{positive roots}.
\begin{table}[htb]
\centering
\caption{}
\label{positive roots}
\begin{tabular}{c|c}\hline
$\GG$ & positive part of root system \\\hline
$\tA$ & $\pr=\{\BB_{i}-\BB_{j}{\ }(1\leq i<j\leq N)\}$ \\\hline
$\tB$ & $\pr=\{\BB_{i}\pm\BB_{j}{\ }(1\leq i<j\leq N),{\ }\BB_{i}{\ }(1\leq i\leq N),{\ }2\BB_{i}{\ }(1\leq i\leq N,{\ }i\in\tau)\}$ \\\hline
\end{tabular}
\end{table}
We set the set of reduced roots by $\rr=\{\aaa\in\Phi\mid \aaa/2\notin\Phi\}$ and the positive and negative part of it by $\pmrr=\rr\cap Q^{\pm}$.
\par
For $\alpha\in\Phi$, if $p(\alpha)=1$ and $(\alpha,\alpha)=0$, we call $\alpha$ as the isotropic odd root.
On the one hand, if $p(\alpha)=1$ and $(\alpha,\alpha)\neq 0$, we call $\alpha$ as the anisotropic odd root.
The set of even roots, isotropic odd roots and anisotropic odd roots are denoted by $\re$, $\ri$ and $\ra\subset \Phi$, respectively.
Also, the set of the positive and negative part of even roots, isotropic odd roots and anisotropic odd roots are denoted by $\pmre$, $\pmri$ and $\pmra\subset \pmr$, respectively.
We also set the reduced version of them by $\pmrer=\pmre\cap\rr$, $\pmrir=\pmri$ and $\pmrar=\pmra$.
\subsection{Cartan matrices, Dynkin diagrams and Weyl groups}\label{sec 22}
Let $(a_{ij})_{i,j\in I}$, $(d_{i})_{i\in I}$ be the Cartan matrix and the symmetrizing matrix of $\GG$.
Here, $d_i$ is given by $(\alpha_i,\alpha_i)/2$ for $\alpha_{i}\in\prer\cup\prar$ and $1$ for $\alpha_{i}\in\prir$.
Also, $a_{ij}$ is given by $a_{ij}=(\alpha_i,\alpha_j)/d_i$.
We often write $A=(a_{ij})$ and $D=\mathrm{diag}(d_1,\cdots,d_r)$, and the symmetrized Catran matrix by $DA=(d_{i}a_{ij})$.
We call the pair $(A,p)$ as the Cartan data of $\GG$.
For later use, we also define $d_{\alpha}$ by $(\alpha,\alpha)/2$ for $\alpha\in\re\cup\ra$ and $1$ for $\alpha\in\ri$.
Let $\mathfrak{h}=\sum_{i=1}^{r}\mathbb{C}h_i$ be the Cartan subalgebra of $\GG$, where $\{h_i\}_{i\in I}$ is chosen as $\aaa_j(h_i)=a_{ij}$.
We call $\{h_i\}_{i\in I}$ as the set of simple coroots of $\GG$.
\par
The Cartan data can be diagrammatically represented by the Dynkin diagram.
The Dynkin diagram associated with $(A,p)$ is defined as follows.
First, we set $r$ dots and decorate the $i$-th dot by $\cc$ for $\aaa_i\in\prer$, $\cct$ for $\aaa_i\in\prir$ and $\ccb$ for $\aaa_i\in\prar$, respectively.
We also use $\cctn$ representing $\cc$ or $\cct$.
Then, for every pair of different numbers $(i,j)$, we connect them with $|a_{ij}|$ lines if $a_{ij}\neq 0$.
Also, if $|a_{ij}|\geq 2$, these lines are equipped with an arrow pointing from the $j$-th dot to the $i$-th dot.
All possible Dynkin diagrams of $\GG$ are given in Table \ref{dd table gen}, where $\BB_{N}=\ee_{m}$ for the first Dynkin diagram of $\tB$ and $\BB_{N}=\dd_{n}$ for the second Dynkin diagram of $\tB$.
\begin{table}[htb]
\centering
\caption{}
\label{dd table gen}
\begin{tabular}{c|c}\hline
$\GG$ & Dynkin diagram \\\hline
$\tA$ & \ddgenA \\\hline
\multirow{2}{*}[-2.4ex]{$\tB$} & \ddgenBe \\
& \ddgenBa\\\hline
\end{tabular}
\end{table}
\par
Here, we specify the value of $\theta=0,1$ in (\ref{ambi sp}).
If $\min\{(\alpha_{i},\alpha_{j})\mid i\neq j\}<0$ for both values of $\theta$, we choose $\theta$ so that $(\BB_1,\BB_1)=1$ holds.
If not, we choose $\theta$ so that $\min\{(\alpha_{i},\alpha_{j})\mid i\neq j\}<0$ is satisfied.
\begin{example}\label{theta example}
For the case \ddthreeWul{\ltA}{\ltB}{\cct}{\cc}{\cc}{\dd_1-\ee_2}{\ee_2-\ee_3}{\ee_3}{\ }, we have $DA=\begin{pmatrix}0&-1&0\\ -1&2&-1\\ 0&-1&1\end{pmatrix}$ for $\theta=0$ and $DA=\begin{pmatrix}0&1&0\\ 1&-2&1\\ 0&1&-1\end{pmatrix}$ for $\theta=1$.
We then choose $\theta=0$ for this case.
\par
For more examples, see Section \ref{sec 4} and \ref{sec 5}.
\end{example}
\par
Let $\WG$ be the Weyl group of $\GG$ which is generated by reflections $s_{\alpha}{\ }(\alpha\in\re\cup\ra)$ which are associated with even and anisotropic roots.
The action of them is given by
\begin{align}
s_{\alpha}(\beta)
=\beta-\frac{2(\alpha,\beta)}{(\alpha,\alpha)}\alpha
\quad
(\beta\in\Phi)
.
\end{align}
Under actions of $\WG$, the root system is invariant.
The image of the fundamental system is a different one, but it gives the same Cartan data.
For the finite-dimensional simple non-super Lie algebras, it is known that all possible choice of the fundamental system is conjugate via the Weyl group actions\cite[\S 10.3. Theorem]{Hum72}.
Then, the Dynkin diagrams one-to-one correspond to the finite-dimensional simple non-super Lie algebras.
For non-super case, relations of $\WGn$ are given by $s_i^2=1$, $(s_is_j)^{m_{ij}}=1{\ }(i\neq j)$ where $m_{ij}=2,3,4$ for $a_{ij}a_{ji}=0,1,2$, respectively.
Here, we write $s_{i}=s_{\alpha_i}$.
\par
For general finite-dimensional Lie superalgebras, however, the fundamental systems are not always conjugate via the Weyl group actions.
It is known that by adding some elements and extending the Weyl group $\WG$, all fundamental systems become conjugate\cite[Appendix II. Theorem]{LSS85}.
The elements are called odd reflections, and we call the extended Weyl group the Weyl supergroup denoted by $\WSG$.
Formally, odd reflections are reflections associated with odd roots, and the action of the elements of the Weyl supergroup is given by
\begin{align}
s_{\alpha}(\beta)
=
\begin{cases}
\beta-\frac{2(\alpha,\beta)}{(\alpha,\alpha)}\alpha
\quad &(\alpha\in\re\cup\ra), \\
\beta+\alpha
\quad &(\alpha\in\ri,{\ }(\alpha,\beta)\neq 0,{\ }\beta\neq\alpha), \\
\beta
\quad &(\alpha\in\ri,{\ }(\alpha,\beta)=0,{\ }\beta\neq\alpha), \\
-\alpha
\quad &(\beta=\alpha),
\end{cases}
\end{align}
where $\alpha,\beta\in\Phi$.
Similar to usual reflections, the root system is invariant under actions of odd reflections, but the image of the fundamental system gives different Cartan data.
Therefore, the Dynkin diagrams do \textit{not} correspond to the finite-dimensional Lie superalgebras but rather their fundamental systems.
\par
The standard choice of the fundamental system of $\GG$ is called distinguished, where the associated Dynkin diagrams have only one odd root.
The realizations and the corresponding Dynkin diagrams are given in Table \ref{dis roots din}.
\begin{table}[htb]
\centering
\caption{}
\label{dis roots din}
\begin{tabular}{c|c}\hline
$\GG$ & distinguished Dynkin diagram \\\hline
$\tA$ & \ddgenAd \\\hline
$\tB{\ }(m>0)$ & \ddgenBed \\\hline
$\tB[1|2n]$ & \ddgenBad\\\hline
\end{tabular}
\end{table}
In this paper, we focus on the nilpotent subalgebra of quantum superalgebras, rather than the whole algebras.
Since the nilpotent subalgebra depends on the choice of the fundamental system of $\GG$, in addition to the distinguished Dynkin diagrams, we also consider non-distinguished ones as given in Table \ref{dd table gen}.
\subsection{Quantum superalgebras}
Throughout this paper, we assume $q$ is generic.
We set $q_i=q^{d_i}$ and $v_{i}=q^{(\BB_i,\BB_i)}{\ }(i\in I)$.
We use a variant of $q$-number and its factorial defined by
\begin{align}
[k]_{q,\psign}
=
\frac{(\psign q)^{k}-q^{-k}}{\psign q-q^{-1}}
,\quad
[m]_{q,\psign}!
=
\prod_{k=1}^{m}
[k]_{q,\psign}
,
\end{align}
where $k,m\in\mathbb{Z}_{\geq 0}$ and $\psign=\pm 1$\cite{CHW16}.
We promise $[0]_{q,\psign}!=1$.
For simplicity, we write $[k]_{q,1}=[k]_{q}$ and $[m]_{q,1}!=[m]_{q}!$ for $\psign=1$.
The quantum superalgebra $\UGG$ associated with the Cartan data $(A,p)$ is an associative algebra over $\mathbb{C}$ generated by $\{e_i,f_i,k_i=q_{i}^{h_i}\mid i\in I\}$ satisfying the following relations\cite{CK90,Yam94}:
\begin{align}
&k_i^{\pm 1}k_i^{\mp 1}=1
,\quad k_ik_j=k_jk_i
,\quad k_ie_j=q_i^{a_{ij}}e_jk_i
,\quad k_if_j=q_i^{-a_{ij}}f_jk_i
\label{qs rel1}
,\\
&e_if_j-(-1)^{p(\alpha_i)p(\alpha_j)}f_je_i=\delta_{i,j}\frac{k_i-k_i^{-1}}{q_i-q_i^{-1}}
\label{qs rel2}
,\\
&\sum_{\nu=0}^{1+|a_{ij}|}
(-1)^{\nu+p(\alpha_i)\nu(\nu-1)/2+\nu p(\alpha_i)p(\alpha_j)}
e_{i}^{(1+|a_{ij}|-\nu)}
e_{j}
e_{i}^{(\nu)}
=0
\quad
(a_{ij}\neq 0,{\ }i\neq j,{\ }\alpha_{i}\in\prer\cup\prar)
\label{qs rel3}
,\\
&\sum_{\nu=0}^{1+|a_{ij}|}
(-1)^{\nu+p(\alpha_i)\nu(\nu-1)/2+\nu p(\alpha_i)p(\alpha_j)}
f_{i}^{(1+|a_{ij}|-\nu)}
f_{j}
f_{i}^{(\nu)}
=0
\quad
(a_{ij}\neq 0,{\ }i\neq j,{\ }\alpha_{i}\in\prer\cup\prar)
\label{qs rel4}
,\\
&[e_{i},e_{j}]=0
,\quad
[f_{i},f_{j}]=0
\quad
(a_{ij}=0)
\label{qs rel5}
,
\end{align}
where we set $e_{i}^{(\nu)}=e_{i}^{\nu}/[\nu]_{q_{i},(-1)^{p(\alpha_{i})}}!$, $f_{i}^{(\nu)}=f_{i}^{\nu}/[\nu]_{q_{i},(-1)^{p(\alpha_{i})}}!$, and so-called additional relations for the case when the associated Dynkin diagram has the subdiagram {\ }\ddthree{\ltA}{\ltA}{\cctn}{\cct}{\cctn}{\ }or {\ }\ddthree{\ltA}{\ltB}{\cctn}{\cct}{\cc}{\ }or {\ }\ddthree{\ltA}{\ltB}{\cctn}{\cct}{\ccb}:
\begin{align}
\begin{split}
&e_{i-1}e_{i}e_{i+1}e_{i}
+(-1)^{\phaseIrel_{i}}e_{i}e_{i-1}e_{i}e_{i+1}
+(-1)^{\phaseIIrel_{i}}e_{i}e_{i+1}e_{i}e_{i-1}
+(-1)^{\phaseIrel_{i}+\phaseIIrel_{i}}e_{i+1}e_{i}e_{i-1}e_{i}
\\
&-(-1)^{p(\alpha_{i-1})}
\left(q+q^{-1}\right)
e_{i}e_{i-1}e_{i+1}e_{i}
=0
\quad
(\alpha_{i}\in\prir)
,
\end{split}
\label{qs rel6}
\\
\begin{split}
&f_{i-1}f_{i}f_{i+1}f_{i}
+(-1)^{\phaseIrel_{i}}f_{i}f_{i-1}f_{i}f_{i+1}
+(-1)^{\phaseIIrel_{i}}f_{i}f_{i+1}f_{i}f_{i-1}
+(-1)^{\phaseIrel_{i}+\phaseIIrel_{i}}f_{i+1}f_{i}f_{i-1}f_{i}
\\
&-(-1)^{p(\alpha_{i-1})}
\left(q+q^{-1}\right)
f_{i}f_{i-1}f_{i+1}f_{i}
=0
\quad
(\alpha_{i}\in\prir)
,
\end{split}
\label{qs rel7}
\end{align}
where $\phaseIrel_i,\phaseIIrel_i$ are given by
\begin{align}
\phaseIrel_{i}=p(\alpha_{i-1})+p(\alpha_{i+1})
,\quad
\phaseIIrel_{i}=p(\alpha_{i-1})p(\alpha_{i+1})
.
\end{align}
For reader's convenience, we list (\ref{qs rel3}) for each case which appears in this paper:
\begin{enumerate}[(1)]
\item
$|a_{ij}|=1$ and $\alpha_i\in\prer$:
\begin{align}
e_{i}^{2}e_{j}-(q+q^{-1})e_{i}e_{j}e_{i}+e_{j}e_{i}^2=0
,
\end{align}
\item
$|a_{ij}|=2$ and $\alpha_i\in\prer$:
\begin{align}
e_{i}^3e_{j}-(q+1+q^{-1})e_{i}^2e_{j}e_{i}+(q+1+q^{-1})e_{i}e_{j}e_{i}^2-e_{j}e_{i}^3=0
,
\end{align}
\item
$|a_{ij}|=2$ and $\alpha_i\in\prar$, $\alpha_j\in\prer$:
\begin{align}
e_{i}^3e_{j}+(1-q-q^{-1})e_{i}^2e_{j}e_{i}+(1-q-q^{-1})e_{i}e_{j}e_{i}^2+e_{j}e_{i}^3=0
,
\end{align}
\item
$|a_{ij}|=2$ and $\alpha_i\in\prar$, $\alpha_j\in\prir$:
\begin{align}
e_{i}^3e_{j}-(1-q-q^{-1})e_{i}^2e_{j}e_{i}+(1-q-q^{-1})e_{i}e_{j}e_{i}^2-e_{j}e_{i}^3=0
,
\end{align}
\end{enumerate}
where we always assume $i\neq j$.
\par
In this paper, we focus on $\UGGp$ which is the nilpotent subalgebra of $\UGG$ generated by $\{e_{i}\}_{i\in I}$.
We represent $\UGGp$ by the Dynkin diagram associated with the Cartan data $(A,p)$.
We have the root space decomposition of $\UGGp=\bigoplus_{\alpha\in Q^{+}}\UGGp_{\alpha}$ where each root space are given by $\UGGp_{\alpha}=\{g\mid k_ig=q_{i}^{\alpha(h_{i})}gk_{i}{\ }(i\in I)\}$.
For $x\in\UGGp_{\alpha},{\ }y\in\UGGp_{\beta}$, we define the $q$-commutator $[\cdot,\cdot]_{q}$ by
\begin{align}
[x,y]_{q}
=xy-(-1)^{p(\alpha)p(\beta)}q^{-(\alpha,\beta)}yx
,
\end{align}
and for simplicity we write $[\cdot,\cdot]_{1}=[\cdot,\cdot]$ for $q=1$.
By using the $q$-commutator, the Serre relation (\ref{qs rel3}) and the additional relation (\ref{qs rel6}) are simply written as follows:
\begin{align}
\begin{alignedat}{2}
&[[e_{j},e_{i}]_{q},e_{i}]_{q}
=[e_{i},[e_{i},e_{j}]_{q}]_{q}
=0
&&\quad (|a_{ij}|=1)
,\\
&[[[e_{j},e_{i}]_{q},e_{i}]_{q},e_{i}]_{q}
=[e_{i},[e_{i},[e_{i},e_{j}]_{q}]_{q}]_{q}
=0
&&\quad (|a_{ij}|=2)
,
\end{alignedat}
\label{serre qcom}
\end{align}
\vspace{-1em}
\begin{align}
[[[e_{i-1},e_{i}]_{q},e_{i+1}]_{q},e_{i}]=0
.
\end{align}
\par
For later use, let $\chi:\UGG\to\UGG$ be the anti-algebra automorphism given by
\begin{align}
\chi(e_i)=e_i
,\quad
\chi(f_i)=f_i
,\quad
\chi(k_i)=(-1)^{p(\alpha_{i})}k_i^{-1}
\label{anti chi}
.
\end{align}
Then, $\chi$ also gives the anti-algebra automorphism on $\UGGp$.
\subsection{PBW bases of the nilpotent subalgebra of quantum superalgebras}
We begin with non-super cases.
In that case, we have $\pr=\pre$.
Let $w_0$ be the longest element of $\WGn$.
When a reduced expression of $w_0=s_{i_1}\cdots s_{i_{l}}$ is given, we set $\beta_{t}{\ }(t=1,\cdots,l)$ by
\begin{align}
\beta_{t}=s_{i_1}\cdots s_{i_{t-1}}(\aaa_{i_{t}})
,
\label{pr re}
\end{align}
where we set $l=|\pr|$.
Then, it is known that we have $\beta_{t}\in\pr{\ }(t=1,\cdots,l)$, $\beta_{i}\neq\beta_{j}{\ }(i\neq j)$ and $\pr=\{\beta_{t}\mid 1\leq t\leq l\}$\cite[P.25]{Hum90}.
\par
It is also known that there exists a quantum analog of this procedure.
Let $T_i:\UGGn\to\UGGn{\ }(i\in I)$ be the algebra automorphism given by
\begin{align}
T_i(e_j)
&=
\begin{cases}
&-k_if_i
\quad
(i=j)
,\\
&\sum_{r=0}^{-a_{ij}}
(-1)^{r}q_{i}^r
e_i^{(r)}
e_j
e_i^{(-a_{ij}-r)}
\quad
(i\neq j)
,
\end{cases}
\\
T_i(f_j)
&=
\begin{cases}
&-e_ik_i^{-1}
\quad
(i=j)
,\\
&\sum_{r=0}^{-a_{ij}}
(-1)^{r}q_{i}^{-r}
e_i^{(-a_{ij}-r)}
e_j
e_i^{(r)}
\quad
(i\neq j)
,
\end{cases}
\\
T_i(k_j)
&=
k_{i}^{-a_{ij}}k_{j}
.
\end{align}
Here, $T_i$ is known as the so-called Lusztig's braid group action on $\UGGn$\cite{Lus90}.
Actually, it is known that $\{T_i\}_{i\in I}$ satisfy the braid group relations.
We set $e_{\beta_{t}}{\ }(t=1,\cdots,l)$ by
\begin{align}
e_{\beta_t}=T_{i_1}T_{i_2}\cdots T_{i_{t-1}}(e_{i_t})
,
\label{qroot Lus}
\end{align}
where $\beta_{t}$ is given by (\ref{pr re}).
Then, it is known that we have $e_{\beta_{t}}\in\UGGpn_{\beta_{t}}{\ }(t=1,\cdots,l)$ and $e_{\beta_{i}}\neq e_{\beta_{j}}{\ }(i\neq j)$.
Also, it gives a PBW basis of $\UGGpn$, which depends on the choice of reduced expressions of $w_0$\cite{Lus90}:
\begin{theorem}\label{Lus PBW thm}
For $A=(a_1,\cdots,a_{l})\in(\mathbb{Z}_{\geq 0})^{l}$, we set
\begin{align}
E^{A}=e_{\beta_{1}}^{(a_1)}e_{\beta_{2}}^{(a_2)}\cdots e_{\beta_{l}}^{(a_l)}
,
\end{align}
where we normalize $e_{\beta_{t}}^{(a_{t})}=e_{\beta_{t}}^{a_{t}}/[a_t]_{p_t}!$, $p_t=q^{d_{\beta_{t}}}$.
Then, $\{E^{A}\mid A\in (\mathbb{Z}_{\geq 0})^{l}\}$ is a basis of $\UGGpn$.
\end{theorem}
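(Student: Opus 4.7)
The plan is to prove the theorem in four steps: (i) verify that each $e_{\beta_t}$ defined by (\ref{qroot Lus}) is a genuine root vector of weight $\beta_t$; (ii) establish Levendorskii--Soibelman type convexity relations among the $e_{\beta_t}$'s; (iii) deduce that the ordered monomials $E^A$ span $\UGGpn$; and (iv) prove linear independence via a character comparison.

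For step (i), I would use the fact that the Lusztig automorphisms $T_i$ descend to the simple reflections $s_i$ on the weight lattice, so that $T_{i_1}\cdots T_{i_{t-1}}$ sends $e_{i_t}\in\UGGpn_{\alpha_{i_t}}$ to an element of weight $s_{i_1}\cdots s_{i_{t-1}}(\alpha_{i_t})=\beta_t$. The reduced-expression property of $w_0$ guarantees that each intermediate root $s_{i_1}\cdots s_{i_{k}}(\alpha_{i_{k+1}})$ is positive; this is exactly the statement that $T_i$ preserves the positive part $\UGGpn$ whenever the successive simple reflection sends a positive root to a positive root, which follows from the explicit formulas for $T_i(e_j)$ when $i\neq j$. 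Consequently $e_{\beta_t}\in\UGGpn_{\beta_t}$.

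Step (ii) is the main technical obstacle. I would establish that for $1\le i<j\le l$,
\begin{align*}
[e_{\beta_i},e_{\beta_j}]_{q}=\sum_{A}c_{A}\,e_{\beta_{i+1}}^{a_{i+1}}\cdots e_{\beta_{j-1}}^{a_{j-1}},\qquad c_{A}\in\mathbb{Q}(q),
\end{align*}
where the sum runs over multi-indices $A$ with $\sum_{k=i+1}^{j-1}a_{k}\beta_{k}=\beta_i+\beta_j$. The strategy is to apply the common braid factor $T_{i_1}\cdots T_{i_{i-1}}$ to reduce to the case $i=1$, and then exploit the convexity of the ordering on $\pr$ to reduce further to a rank-$2$ subsystem spanned by the two roots involved. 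The rank-$2$ identities can be verified by direct computation from the quantum Serre relations (\ref{qs rel3}), case by case for $|a_{ij}|=1,2,3$ corresponding to the types $A_2$, $B_2=C_2$, $G_2$.

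For step (iii), using the convexity relations from (ii), any product of the generators $e_{i_1},\dots,e_{i_l}$ (which coincide with $e_{\beta_1}$ and other $e_{\beta_t}$'s reachable by braid conjugation) can be rewritten as a $\mathbb{Q}(q)$-linear combination of ordered monomials $E^A$, by an induction on a well-founded partial order (e.g.\ lexicographic on the multi-index $A$ combined with the total degree). Since $\UGGpn$ is generated by the $e_i$'s, this shows $\{E^A\}$ spans. Finally, for step (iv), linear independence follows by comparing graded dimensions: the character of $\UGGpn$ with respect to the $Q^+$-grading equals $\prod_{\alpha\in\pr}(1-e^{-\alpha})^{-1}$ (for example, by specialising to $q=1$ and invoking the classical PBW theorem, or by a direct Hilbert-series argument on the defining relations), and this coincides with the generating function for multi-indices $A\in(\mathbb{Z}_{\ge 0})^l$ weighted by $\sum a_t\beta_t$. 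Combined with spanning, equality of characters forces linear independence.
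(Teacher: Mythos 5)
The paper does not prove this theorem at all: it is quoted verbatim from Lusztig \cite{Lus90} (the only PBW statement the paper actually argues for is Theorem \ref{Yam PBW thm}, which it reduces to \cite{Yam94}). So there is no in-paper proof to compare against; your proposal has to stand on its own. Steps (i)--(iii) are the standard Levendorski\u{\i}--Soibelman route and are sound in outline: (i) is immediate from the fact that $T_i$ acts on weights by $s_i$ together with the reduced-expression property; (ii) is the convexity/straightening lemma (note you want the relation for the \emph{out-of-order} product $e_{\beta_j}e_{\beta_i}$, $i<j$, since that is what the rewriting in (iii) consumes, and the rank-$2$ verification, while standard, is a genuinely heavy computation in the $G_2$ case); (iii) then follows by Bergman-diamond/straightening induction. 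All of this is workable, if far from short.

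Step (iv), however, contains a genuine gap as written. Spanning gives the upper bound $\dim(\UGGpn)_{\mu}\leq\#\{A:\sum_t a_t\beta_t=\mu\}$, so to conclude linear independence you need the matching \emph{lower} bound $\dim(\UGGpn)_{\mu}\geq\dim U(\mathfrak{n}^{+})_{\mu}$, and neither justification you offer produces it. A ``Hilbert-series argument on the defining relations'' can only bound graded dimensions from above (imposing relations shrinks the algebra), and the specialization $q\to 1$ of an integral form gives, by semicontinuity, $\dim_{\mathbb{Q}(q)}(\UGGpn)_{\mu}\leq\dim(U^{+}|_{q=1})_{\mu}$ --- again the wrong direction; one would still have to rule out the possibility that extra collapsing occurs generically. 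The missing ingredient is an independent lower bound, which in the standard treatments comes from the nondegeneracy of the Drinfeld--Lusztig pairing between $U_q^{+}$ and $U_q^{-}$ (or, equivalently, from exhibiting Verma modules on which the monomials $E^{A}$ act by linearly independent operators), or from Lusztig's inductive tensor-decomposition of the subalgebras $U^{+}[w]$ along a reduced word. Without one of these inputs the character identity $\mathrm{ch}\,\UGGpn=\prod_{\alpha\in\pr}(1-e^{-\alpha})^{-1}$ is essentially equivalent to the theorem you are trying to prove, and the argument is circular.
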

\par
For super cases, it is known that there is a naive construction of a PBW basis without using some maps like the Lusztig's braid group action\cite{KT91,Yam94}.
We note that a super analog of Lusztig's braid group action was introduced in the context of the so-called Weyl groupoid\cite{Hec10,HY08}.
\par
Let us explain the construction by \cite{Yam94}.
We define two partial orders $<$ on $\prr$ as follows.
For $\gamma=\sum_{i=1}^{r}c_i\alpha_i\in\prr$, we define the integers $\htf(\gamma),g(\gamma),c_{\gamma}\in\mathbb{N}$ by $\htf(\gamma)=\sum_{i=1}^{r}c_i$, $g(\gamma)=\min\{i\mid c_i\neq 0\}$ and $c_{\gamma}=c_{g(\gamma)}$.
Then, for $\alpha,\beta\in\prr$, we set two partial orders $O_1,O_2$ by
\begin{align}
&O_1:
\quad
\alpha<\beta
\quad
\Longleftrightarrow
\quad
g(\alpha)<g(\beta)
{\ }
\mathrm{or}
{\ }
(g(\alpha)=g(\beta){\ }\mathrm{and}{\ } \htf(\alpha)<\htf(\beta))
\label{no1}
,\\
&O_2:
\quad
\alpha<\beta
\quad
\Longleftrightarrow
\quad
g(\alpha)>g(\beta)
{\ }
\mathrm{or}
{\ }
(g(\alpha)=g(\beta){\ }\mathrm{and}{\ } \htf(\alpha)>\htf(\beta))
\label{no2}
.
\end{align}
Note that $O_1$ is the same order as \cite{Yam94}.
By using them, we define quantum root vectors as follows:
\begin{definition}\label{qrv}
For every $\beta\in\prr$, we define the elements $e_{\beta}\in\UGGp_{\beta}$ as follows:
\begin{enumerate}[(i)]
\item
If $\beta=\alpha_{i}$, we set $e_{\beta}=e_{i}$.
\item
If $\beta=\alpha+\alpha_{i}$ where $\alpha\in\prr$ and $g(\alpha)<i$, we define $e_{\beta}$ depending on the partial order $O_i$.
We set $e'_{\beta}=[e_{i},e_{\alpha}]_{q}$ for $O_1$, and $e'_{\beta}=[e_{\alpha},e_{i}]_{q}$ for $O_2$.
Then, we set $e_{\beta}=e'_{\beta}/(q^{1/2}+q^{-1/2})$ for the case $\GG=\tB$, $i=r$ and $\alpha=\BB_{j}{\ }(1\leq j\leq r-1)$. We set $e_{\beta}=e'_{\beta}$ otherwise.
\end{enumerate}
\end{definition}
\noindent
We note that the above normalization factor $q^{1/2}+q^{-1/2}$ naturally appears from the Lusztig's braid group action for non-super cases.
\par
Then, the quantum root vectors give PBW bases of $\UGGp$:
\begin{theorem}\label{Yam PBW thm}
Let $\beta_t{\ }(t=1,\cdots,l)$ denote the reduced roots, which satisfy $\beta_1<\cdots<\beta_{l}$ under the order $O_i$.
Here, $l=|\prr|$.
For $A=(a_1,\cdots,a_{l})$ where $a_{t}\in\mathbb{Z}_{\geq 0}$ for $\beta_{t}\in\prer\cup\prar$ and $a_{t}\in\{0,1\}$ for $\beta_{t}\in\prir$, we set
\begin{align}
E^{A}_{i}=e_{\beta_{1}}^{(a_1)}e_{\beta_{2}}^{(a_2)}\cdots e_{\beta_{l}}^{(a_l)}
,
\end{align}
where we normalize $e_{\beta_{t}}^{(a_{t})}=e_{\beta_{t}}^{a_{t}}/[a_t]_{p_t,(-1)^{p(\beta_{t})}}!$, $p_t=q^{d_{\beta_{t}}}$.
Then
\begin{align}
B_i
=\{E^{A}_{i}\mid a_{t}\in\mathbb{Z}_{\geq 0}{\ }(\beta_{t}\in\prer\cup\prar),{\ }a_{t}\in\{0,1\}{\ }(\beta_{t}\in\prir)\}
,
\label{Yam PBW basis}
\end{align}
is a basis of $\UGGp$.
\end{theorem}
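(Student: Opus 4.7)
The plan is to establish the PBW property in two classical stages: first show the ordered monomials $\{E^A_i\}$ span $\UGGp$ by means of certain $q$-commutator straightening identities among the quantum root vectors, and then prove linear independence by comparing graded dimensions with the classical enveloping superalgebra.

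The technical core is a family of \emph{ordering relations}: for every pair $\alpha,\beta\in\prr$ with $\alpha<\beta$ in the order $O_i$, the bracket $[e_{\alpha},e_{\beta}]_{q}$ should lie in the linear span of ordered monomials $e_{\gamma_1}^{(c_1)}\cdots e_{\gamma_k}^{(c_k)}$ with $\alpha<\gamma_1\leq\cdots\leq\gamma_k<\beta$, together with the nilpotency $e_{\beta}^{2}=0$ for isotropic odd $\beta$. I would prove these by induction on $\htf(\beta)-\htf(\alpha)$ and $\htf(\beta)+\htf(\alpha)$. The base cases reduce to the Serre relations (\ref{qs rel3})-(\ref{qs rel5}), the additional relations (\ref{qs rel6}), and Definition \ref{qrv} itself; the inductive step rewrites $\beta$ (or $\alpha$) as $\gamma+\alpha_j$ with $g(\gamma)<j$, applies the super Jacobi identity for the $q$-bracket, and uses the inductive hypothesis to reduce to brackets of strictly smaller depth. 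A careful case analysis is required according to whether the roots involved are even, isotropic odd, or anisotropic odd, and whether we are in the $\tB$ case where the normalization factor $q^{1/2}+q^{-1/2}$ of Definition \ref{qrv}(ii) appears; for isotropic odd $\beta$, the identity $e_{\beta}^{2}=0$ follows from $(\beta,\beta)=0$ together with the definitional $q$-bracket relation. Once these straightening relations are in hand, spanning follows by a standard rewriting argument: any word in the generators $e_1,\dots,e_r$ is brought into normal form by repeatedly applying the relations, with induction on the pair (total height, number of inversions with respect to $O_i$), and termination is guaranteed because each rewriting strictly decreases this pair in lexicographic order.

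The main obstacle is linear independence, since unlike the non-super case we cannot invoke Lusztig's braid group action to transport a known PBW basis to another. My approach would be a flat-deformation argument: introduce the $Q^{+}$-grading $\UGGp=\bigoplus_{\alpha\in Q^{+}}\UGGp_{\alpha}$, so that each weight space is finite-dimensional and stable under rewriting, and compare with the classical enveloping superalgebra $U(\mathfrak{n}^{+})$ where $\mathfrak{n}^{+}\subset\GG$ is the positive nilpotent part. Specializing the integral $\mathbb{Z}[q,q^{-1}]$-form at $q=1$ sends $E^{A}_{i}$ to the corresponding ordered monomial in classical root vectors, and the classical PBW theorem for Lie superalgebras (which already imposes the constraint $a_t\in\{0,1\}$ on isotropic odd coordinates) ensures these are linearly independent in each weight space. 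Since spanning gives $\dim\UGGp_{\alpha}\leq\#\{A:\mathrm{wt}(E^{A}_{i})=\alpha\}$ and classical PBW gives the reverse inequality after specialization, the two coincide and linear independence over generic $q$ follows. The delicate point to verify is the flatness of the $q$-deformation, i.e.\ that specializing the integral form genuinely recovers $U(\mathfrak{n}^{+})$; this can be checked degree-by-degree using the same straightening relations, or alternatively one may follow Yamane's original strategy of exhibiting an explicit faithful representation (e.g.\ a Verma module for $\UGG$) on which the $E^{A}_{i}$ act as linearly independent operators.
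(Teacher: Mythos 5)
Your proposal is a genuinely different — and far more ambitious — route than the one the paper takes. The paper does not re-prove the PBW property at all: it reduces Theorem \ref{Yam PBW thm} to Proposition 10.4.1 of Yamane's paper \cite{Yam94} by two short observations. First, for the order $O_1$, the quantum root vectors used here differ from Yamane's only by an overall scalar and the substitution $q\to q^{-1}$ (since $[e_{i},e_{\alpha}]_q$ and $[e_{\alpha},e_{i}]_q$ are related by $e_{\beta}=(-1)^{p(\alpha_i)p(\alpha)+1}q^{-(\alpha_i,\alpha)}(e^{\mathrm{Yam}}_{\beta}|_{q\to q^{-1}})$), and the defining relations of $\UGG$ are invariant under $q\to q^{-1}$, so Yamane's basis theorem transfers directly. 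Second, the case of $O_2$ is reduced to $O_1$ via the anti-algebra automorphism $\chi$ of (\ref{anti chi}), using $E_{2}^{A^{\mathrm{op}}}=\chi(E_{1}^{A})$. Your plan — Levendorskii–Soibelman-type straightening relations ordered by $O_i$, a rewriting argument for spanning, and linear independence by specialization at $q=1$ against the classical PBW theorem for Lie superalgebras — is essentially a self-contained re-derivation of the content of Yamane's proposition. It is a sound strategy in outline, but the two places you flag as delicate (the full case analysis of the ordering relations across even, isotropic odd, and anisotropic odd roots with the type B normalization $q^{1/2}+q^{-1/2}$, and the flatness of the integral form at $q=1$) are precisely the hard technical core that the paper deliberately outsources to \cite{Yam94}; as written your argument is a program rather than a complete proof of those steps. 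What your approach buys is independence from the external reference and an explicit mechanism (the straightening relations) that could be reused elsewhere; what the paper's approach buys is a two-paragraph proof whose only new content is the $q\to q^{-1}$ bookkeeping and the $\chi$-symmetry between the two orders.
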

\begin{proof}
We attribute the statement to \cite{Yam94}.
First, we consider the case when the order is given by $O_1$.
In \cite{Yam94}, the order among the elements of $\prr$ is the same as $O_1$, but the quantum root vectors are defined by $e^{\mathrm{Yam}}_{\beta}=[e_{\alpha},e_{i}]_q$ instead of $e_{\beta}=[e_{i},e_{\alpha}]_q$ as Definition \ref{qrv}.
However, $e_{\beta}$ and $e^{\mathrm{Yam}}_{\beta}$ satisfy the following simple relation:
\begin{align}
e_{\beta}
=[e_{i},e_{\alpha}]_q
=(-1)^{p(\alpha_i)p(\alpha)+1}q^{-(\alpha_i,\alpha)}[e_{\alpha},e_{i}]_{q^{-1}}
=(-1)^{p(\alpha_i)p(\alpha)+1}q^{-(\alpha_i,\alpha)}
\left(e^{\mathrm{Yam}}_{\beta}|_{q\to q^{-1}}\right)
.
\end{align}
Then, the only differences between our construction and \cite{Yam94} are overall factors and its $q$-dependence.
Since the relations of $\UGG$ are invariant under $q\to q^{-1}$, by the Proposition 10.4.1{\ }of \cite{Yam94}, we find that (\ref{Yam PBW basis}) gives a PBS basis of $U_{q^{-1}}^{+}(\GG)$.
Then, under the order $O_1$, (\ref{Yam PBW basis}) gives a PBW basis of $\UGGp$.
\par
The case when the order is given by $O_2$ is attributed to the case of $O_1$.
Actually, $E_{2}^{A^{\mathrm{op}}}=\chi(E_{1}^{A})$ holds for every $A$, where $A^{\mathrm{op}}$ is the reverse order of $A$.
This shows (\ref{Yam PBW basis}) under the order $O_2$ gives a PBW basis of $\UGGp$ because $\chi$ is an automorphism on $\UGGp$.
\end{proof}
\begin{remark}
The construction by \cite{Yam94} can be considered as a natural analog of one of Theorem \ref{Lus PBW thm} as follows.
For non-super cases, we call an order $<$ among the elements of $\pr$ is \textit{normal} (or \textit{convex}) if, for $\alpha\in\pr$ which is written by $\alpha=\beta+\gamma{\ }(\beta,\gamma\in\pr)$, the order among $\alpha,\beta,\gamma$ satisfies $\beta<\alpha<\gamma$ or $\gamma<\alpha<\beta$.
Then, it is known that there exists a one-to-one correspondence between orders induced by reduced expressions of $w_0$ like (\ref{pr re}) and normal orders\cite[\S 3 Proposition 2]{Zhe87}.
The normal order can be defined in a similar way for super cases, and the orders (\ref{no1}) and (\ref{no2}) actually satisfy the condition of the normal order.
\end{remark}
\par
Let $\tm^{A}_{B}$ and $\tmt^{A}_{B}$ be the transition matrices given by
\begin{align}
E_{2}^{A}
&=
\sum_{B}
\tm_{B}^{A}E_{1}^{B^{\mathrm{op}}}
,
\label{tm1}
\\
E_{1}^{A}
&=
\sum_{B}
\tmt_{B}^{A}E_{2}^{B^{\mathrm{op}}}
.
\label{tm2}
\end{align}
where $X^{\mathrm{op}}=(x_l,\cdots,x_1)$ is the reverse order of $X=(x_1,\cdots,x_l)$.
They are one of the main objects of this paper.
By using $E_2^{X^{\mathrm{op}}}=\chi(E_1^{X})$, we obtain the following relation:
\begin{align}
\tmt_{B}^{A}=\tm_{B^{\mathrm{op}}}^{A^{\mathrm{op}}}
.
\label{tm rel}
\end{align}
We then only consider $\tm_{B}^{A}$ below.
\subsection{Technical lemmas for higher-order relations}
In this section, we introduce some technical lemma used to prove higher-order relations in the later sections.
First, the $q$-commutator enjoy the following Jacobi like identity\cite[(4.4.2)]{Yam94}.
\begin{lemma}\label{qcom jacobi}
For $x\in\UGGp_{\alpha},{\ }y\in\UGGp_{\beta},{\ }z\in\UGGp_{\gamma}$, we have
\begin{align}
[[x,y]_{q},z]_{q}-[x,[y,z]_{q}]_{q}
=
(-1)^{p(\beta)p(\gamma)}q^{-(\beta,\gamma)}[x,z]_{q}y
-(-1)^{p(\alpha)p(\beta)}q^{-(\alpha,\beta)}y[x,z]_{q}
.
\end{align}
\end{lemma}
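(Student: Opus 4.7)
The plan is to prove the identity by direct expansion of both sides, using only the definition of the $q$-commutator and the multiplicativity of signs/$q$-factors across direct sums of root spaces. Since $xy$ and $yx$ both lie in $\UGGp_{\alpha+\beta}$, and $yz,zy\in\UGGp_{\beta+\gamma}$, the outer $q$-commutators on the left-hand side use the sign $(-1)^{(p(\alpha)+p(\beta))p(\gamma)}$ with factor $q^{-(\alpha+\beta,\gamma)}$ and the sign $(-1)^{p(\alpha)(p(\beta)+p(\gamma))}$ with factor $q^{-(\alpha,\beta+\gamma)}$, respectively. These combine via $(-1)^{p(\mu+\nu)}=(-1)^{p(\mu)+p(\nu)}$ and the bilinearity $(\mu+\nu,\lambda)=(\mu,\lambda)+(\nu,\lambda)$, which are the only algebraic inputs beyond bookkeeping.

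The key step is to expand $[[x,y]_q,z]_q$ as a linear combination of the six monomials $xyz,yxz,zxy,xzy,yzx,zyx$, and similarly expand $[x,[y,z]_q]_q$. Taking the difference, the $xyz$ coefficients (both equal to $1$) cancel, and a short check shows the $zyx$ coefficients are also equal, hence cancel: both reduce to $(-1)^{p(\alpha)p(\beta)+p(\alpha)p(\gamma)+p(\beta)p(\gamma)}q^{-(\alpha,\beta)-(\alpha,\gamma)-(\beta,\gamma)}$. What remains is precisely the four-term combination
\begin{equation*}
-(-1)^{p(\alpha)p(\beta)}q^{-(\alpha,\beta)}yxz
-(-1)^{(p(\alpha)+p(\beta))p(\gamma)}q^{-(\alpha+\beta,\gamma)}zxy
+(-1)^{p(\beta)p(\gamma)}q^{-(\beta,\gamma)}xzy
+(-1)^{p(\alpha)(p(\beta)+p(\gamma))}q^{-(\alpha,\beta+\gamma)}yzx.
\end{equation*}

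Then I would expand the right-hand side $(-1)^{p(\beta)p(\gamma)}q^{-(\beta,\gamma)}[x,z]_{q}y-(-1)^{p(\alpha)p(\beta)}q^{-(\alpha,\beta)}y[x,z]_{q}$ using $[x,z]_q=xz-(-1)^{p(\alpha)p(\gamma)}q^{-(\alpha,\gamma)}zx$, which produces exactly these four monomials with matching coefficients once one simplifies the products of signs and $q$-exponents using bilinearity.

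The computation has no genuine obstacle; the only risk is bookkeeping error in the sign exponents, particularly confirming that the $zyx$ coefficients on the two sides of the outer bracket cancel against each other. For that, it is convenient to record the sign/$q$-factor produced each time one swaps two adjacent letters and to verify that swapping $xy\leftrightarrow yx$ then commuting with $z$ yields the same total factor as swapping $yz\leftrightarrow zy$ then commuting with $x$, which is immediate from the symmetry of $p(\mu)p(\nu)$ and $(\mu,\nu)$. After that, matching the four remaining monomials is term-by-term.
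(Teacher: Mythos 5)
Your proposal is correct and follows essentially the same route as the paper: both expand $[[x,y]_q,z]_q$ and $[x,[y,z]_q]_q$ into monomials directly from the definition, observe that the $xyz$ and $zyx$ coefficients coincide and cancel in the difference, and match the remaining four monomials against the expansion of $(-1)^{p(\beta)p(\gamma)}q^{-(\beta,\gamma)}[x,z]_{q}y-(-1)^{p(\alpha)p(\beta)}q^{-(\alpha,\beta)}y[x,z]_{q}$ using bilinearity of $(\cdot,\cdot)$ and additivity of the parity. The sign and $q$-exponent bookkeeping you record is exactly what appears in the paper's displayed expansions.
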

\begin{proof}
By writing down the definitions, we get
\begin{align}
\begin{split}
[[x,y]_{q},z]_{q}
=
&xyz
-(-1)^{p(\alpha)p(\beta)}
q^{-(\alpha,\beta)}yxz
-(-1)^{(p(\alpha)+p(\beta))p(\gamma)}
q^{-(\alpha+\beta,\gamma)}zxy
\\
&+(-1)^{p(\alpha)p(\beta)+p(\beta)p(\gamma)+p(\gamma)p(\alpha)}
q^{-(\alpha,\beta)-(\beta,\gamma)-(\gamma,\alpha)}zyx
,
\end{split}
\\
\begin{split}
[x,[y,z]_{q}]_{q}
=
&xyz
-(-1)^{p(\beta)p(\gamma)}
q^{-(\beta,\gamma)}xzy
-(-1)^{p(\alpha)(p(\beta)+p(\gamma))}
q^{-(\alpha,\beta+\gamma)}yzx
\\
&+(-1)^{p(\alpha)p(\beta)+p(\beta)p(\gamma)+p(\gamma)p(\alpha)}
q^{-(\alpha,\beta)-(\beta,\gamma)-(\gamma,\alpha)}zyx
.
\end{split}
\end{align}
We then obtain the desired results.
\end{proof}
\begin{corollary}\label{qcom jacobi coro}
We set $x\in\UGGp_{\alpha},{\ }y\in\UGGp_{\beta},{\ }z\in\UGGp_{\gamma}$.
\begin{enumerate}[(1)]
\item
If $[x,z]_q=0$, we have $[[x,y]_{q},z]_{q}=[x,[y,z]_{q}]_{q}$.
\item
If $[y,z]=0$ and $(\beta,\gamma)=0$, we have $[[x,y]_{q},z]_{q}=(-1)^{p(\beta)p(\gamma)}[[x,z]_{q},y]_{q}$.
\end{enumerate}
\end{corollary}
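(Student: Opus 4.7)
The plan is to read both parts off directly from Lemma \ref{qcom jacobi}, whose right-hand side is a linear combination of $[x,z]_{q}y$ and $y[x,z]_{q}$ with scalar coefficients. For part (1), the hypothesis $[x,z]_{q}=0$ makes the entire right-hand side of that identity vanish, so it collapses to $[[x,y]_{q},z]_{q}=[x,[y,z]_{q}]_{q}$. Nothing further is required.

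For part (2), the first move is to notice that the hypotheses $[y,z]=0$ and $(\beta,\gamma)=0$ together force $[y,z]_{q}=0$ as well. Indeed, by definition $[y,z]_{q}=yz-(-1)^{p(\beta)p(\gamma)}q^{-(\beta,\gamma)}zy$, and at $(\beta,\gamma)=0$ this specialises to $[y,z]_{1}=[y,z]=0$. Substituting $[y,z]_{q}=0$ into Lemma \ref{qcom jacobi} and using $(\beta,\gamma)=0$ to collapse the factor $q^{-(\beta,\gamma)}$, I obtain
\[
[[x,y]_{q},z]_{q}=(-1)^{p(\beta)p(\gamma)}[x,z]_{q}y-(-1)^{p(\alpha)p(\beta)}q^{-(\alpha,\beta)}y[x,z]_{q}.
\]
The remaining task is to expand $(-1)^{p(\beta)p(\gamma)}[[x,z]_{q},y]_{q}$ directly from the definition of the $q$-commutator, using $[x,z]_{q}\in \UGGp_{\alpha+\gamma}$; after applying $(\gamma,\beta)=0$ to kill the $q^{-(\gamma,\beta)}$ factor and $(-1)^{2p(\beta)p(\gamma)}=1$ to collapse the doubled sign, the resulting expression matches the display above, proving the desired identity. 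The only point requiring care is the bookkeeping of parities and scalar prefactors, but no genuine obstacle arises.
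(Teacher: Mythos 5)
Your proposal is correct and follows exactly the route the paper intends: the corollary is stated without a separate proof precisely because both parts are read off directly from Lemma \ref{qcom jacobi}, with part (1) being immediate from the vanishing of the right-hand side and part (2) following from $[y,z]_{q}=[y,z]=0$ at $(\beta,\gamma)=0$ together with the expansion of $[[x,z]_{q},y]_{q}$ using the additivity of the parity, $p(\alpha+\gamma)=p(\alpha)+p(\gamma)\ (\mathrm{mod}\ 2)$. The sign and $q$-power bookkeeping in your final comparison checks out.
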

\par
By using Corollary \ref{qcom jacobi coro}(2) for $y=z=e_{i}{\ }(\alpha_i\in\prir)$, we obtain $[[x,e_{i}]_{q},e_{i}]_{q}=0$.
This suggests the Serre relation (\ref{serre qcom}) actually holds even when $\alpha_{i}\in\prir$:
\begin{corollary}\label{generalized serre}
We set $e_{i},e_{j}$ satisfying $a_{ij}\neq 0$ and $i\neq j$.
Then, we have
\begin{align}
\begin{alignedat}{2}
&[[e_{j},e_{i}]_{q},e_{i}]_{q}
=[e_{i},[e_{i},e_{j}]_{q}]_{q}
=0
&&\quad (|a_{ij}|=1)
,\\
&[[[e_{j},e_{i}]_{q},e_{i}]_{q},e_{i}]_{q}
=[e_{i},[e_{i},[e_{i},e_{j}]_{q}]_{q}]_{q}
=0
&&\quad (|a_{ij}|=2)
.
\end{alignedat}
\end{align}
\end{corollary}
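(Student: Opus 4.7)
The plan is to split on the parity of $\alpha_{i}$. When $\alpha_{i}\in\prer\cup\prar$ there is nothing new to prove: the displayed identities are exactly (\ref{serre qcom}), i.e.\ the $q$-commutator rewriting of the Serre relation (\ref{qs rel3}), which is already imposed as a defining relation of $\UGG$. The whole content of the corollary therefore lies in the isotropic case $\alpha_{i}\in\prir$, where (\ref{qs rel3}) is \emph{not} available.

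For the $|a_{ij}|=1$ part with $\alpha_{i}\in\prir$, I would follow the hint that precedes the statement and apply Corollary \ref{qcom jacobi coro}(2) with $y=z=e_{i}$. The two hypotheses are satisfied because $(\alpha_{i},\alpha_{i})=0$ holds by isotropy and $[e_{i},e_{i}]=2e_{i}^{2}=0$ by the standard nilpotency relation attached to an isotropic odd generator. The conclusion then specializes to $[[x,e_{i}]_{q},e_{i}]_{q}=(-1)^{p(\alpha_{i})^{2}}[[x,e_{i}]_{q},e_{i}]_{q}=-[[x,e_{i}]_{q},e_{i}]_{q}$, forcing $[[x,e_{i}]_{q},e_{i}]_{q}=0$ for every $x$; specializing $x=e_{j}$ yields the first equality. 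For the mirror equation $[e_{i},[e_{i},e_{j}]_{q}]_{q}=0$, I would invoke Lemma \ref{qcom jacobi} directly with $x=y=e_{i}$ and $z=e_{j}$: the term $[[e_{i},e_{i}]_{q},e_{j}]_{q}$ on the left vanishes by $e_{i}^{2}=0$, and a short expansion shows that the right-hand side matches $[e_{i},[e_{i},e_{j}]_{q}]_{q}$, so the identity collapses to $2[e_{i},[e_{i},e_{j}]_{q}]_{q}=0$. The $|a_{ij}|=2$ case for $\alpha_{i}\in\prir$ then requires no new work, since the arguments above used no hypothesis on $a_{ij}$ and one further $q$-bracket with $e_{i}$ simply yields $[0,e_{i}]_{q}=0$.

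The only genuinely non-formal input throughout is the nilpotency $e_{i}^{2}=0$ for isotropic odd $e_{i}$. This is a standard relation of quantum superalgebras of Yamane's type which, even if not listed explicitly among (\ref{qs rel1})--(\ref{qs rel7}), is compatible with (and in fact forced by) the restriction $a_{t}\in\{0,1\}$ for $\beta_{t}\in\prir$ in the PBW basis of Theorem \ref{Yam PBW thm}. I expect this to be the only point needing explicit acknowledgement; everything else is routine bookkeeping with the graded Jacobi identity.
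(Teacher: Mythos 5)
Your proof is correct and follows essentially the same route as the paper: the non-isotropic case is just the Serre relation (\ref{serre qcom}), and the isotropic case is exactly the application of Corollary \ref{qcom jacobi coro}(2) with $y=z=e_{i}$ that the paper states in the sentence immediately preceding the corollary. Your explicit check of the mirror identity $[e_{i},[e_{i},e_{j}]_{q}]_{q}=0$ via Lemma \ref{qcom jacobi} is a detail the paper leaves implicit, and the nilpotency $e_{i}^{2}=0$ you rely on is indeed available in the paper's setup (relation (\ref{qs rel5}) applied with $i=j$, since $a_{ii}=0$ for $\alpha_{i}\in\prir$, or Lemma \ref{qroot general lemma}(1) specialized to a simple isotropic root), so there is no gap.
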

We also use the following relations for quantum root vectors.
Lemma \ref{qroot general lemma} is given in Lemma 5.2.1.(iii) and Remark 5.2.2.(i) of \cite{Yam94}.
\begin{lemma}\label{qroot general lemma}
We consider the quantum root vectors $e_{\alpha}$ under the order $O_1$.
\begin{enumerate}[(1)]
\item
For $\alpha\in\prir$, we have $e_{\alpha}^{2}=0$.
\item
Let $\alpha\in\prr$ satisfy $c_{\alpha}=1$, where $c_{\alpha}$ is given by the above of (\ref{no1}).
Take $\alpha_{i}$ satisfying $g(\alpha)<i$ and $\alpha+\alpha_{i}\notin\prr$.
We then have
\begin{align}
[e_{\alpha},e_{i}]_{q}=0
.
\end{align}
\end{enumerate}
\end{lemma}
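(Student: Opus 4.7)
My plan is to establish (1) and (2) simultaneously by induction on $\htf(\alpha)$, using the recursive definition of the quantum root vectors (Definition \ref{qrv}) together with the Jacobi-like identity of Lemma \ref{qcom jacobi} and the generalized Serre relations of Corollary \ref{generalized serre}.

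I would begin with (2). The base case $\alpha = \alpha_j$ with $g(\alpha_j) = j < i$ and $\alpha_j + \alpha_i \notin \prr$ forces $a_{ij} = 0$ by inspection of the type A and B Dynkin diagrams in Table \ref{dd table gen}, so $(\alpha_i, \alpha_j) = 0$ and $[e_j, e_i]_q = e_j e_i - (-1)^{p(\alpha_i)p(\alpha_j)} e_i e_j$ vanishes by relation (\ref{qs rel5}) (after a parity check). For the inductive step, write $\alpha = \gamma + \alpha_k$ with $g(\gamma) < k$, so that $e_\alpha$ equals $[e_k, e_\gamma]_q$ up to a possible factor $(q^{1/2} + q^{-1/2})^{-1}$ from Definition \ref{qrv}(ii). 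Expanding $[e_\alpha, e_i]_q = [[e_k, e_\gamma]_q, e_i]_q$ via Lemma \ref{qcom jacobi} yields a principal term $[e_k, [e_\gamma, e_i]_q]_q$ plus a correction involving $[e_k, e_i]_q$ and $e_\gamma$. The first term vanishes by applying the inductive hypothesis of (2) to $\gamma$ after checking $\gamma + \alpha_i \notin \prr$, and the second vanishes either by a further application of (2) (when $\alpha_k + \alpha_i \notin \prr$) or by a direct Serre-type identity (when $\alpha_k$ and $\alpha_i$ are adjacent in the Dynkin diagram).

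For (1), since $\alpha \in \prir$ satisfies $p(\alpha) = 1$ and $(\alpha, \alpha) = 0$, one has $[e_\alpha, e_\alpha]_q = 2 e_\alpha^2$, so it suffices to show $[e_\alpha, e_\alpha]_q = 0$. The base case $\alpha = \alpha_j \in \prir$ reduces to $e_j^2 = 0$, which I would verify directly from the defining relations of $\UGGp$ at an isotropic simple node, using Corollary \ref{qcom jacobi coro}(2) with $y = z = e_j$ and comparison with the additional relation (\ref{qs rel6}). For the inductive step, writing $e_\alpha = [e_k, e_\gamma]_q$ and expanding one factor of $e_\alpha \cdot e_\alpha$ via Lemma \ref{qcom jacobi}, the computation reduces to showing that $[e_\alpha, e_k]_q$ and $e_\gamma^2$ vanish; the former follows from (2) applied to $\alpha$ and $\alpha_k$ (since $\alpha + \alpha_k = \gamma + 2\alpha_k$ is not a reduced root in types A and B), while the latter is zero by the inductive hypothesis when $\gamma$ is isotropic odd and handled by a direct computation in the remaining parity cases.

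The main obstacle is the case analysis in the inductive step of (2), specifically verifying that the non-root condition $\alpha + \alpha_i \notin \prr$ genuinely descends to $\gamma + \alpha_i \notin \prr$ and imposes the right constraint on $\alpha_k + \alpha_i$. This requires enumerating how a reduced positive root can fail to absorb $\alpha_i$ across all admissible Dynkin diagrams of $\GG$, and the type B terminal short root introduces additional bookkeeping because of the normalization factor in Definition \ref{qrv}. Since the statement already appears in \cite[Lemma 5.2.1(iii) and Remark 5.2.2(i)]{Yam94}, the cleanest route is to attribute the lemma to Yamane after noting that the convention $e_\beta = [e_i, e_\alpha]_q$ of Definition \ref{qrv} differs from Yamane's $[e_\alpha, e_i]_q$ only by an overall sign and the substitution $q \to q^{-1}$, neither of which affects vanishing of $q$-commutators.
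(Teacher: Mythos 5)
The paper gives no independent proof of this lemma: it is stated with the bare attribution to Lemma 5.2.1(iii) and Remark 5.2.2(i) of \cite{Yam94}, and the convention dictionary you invoke at the end (Definition \ref{qrv} differs from Yamane's root vectors by an overall nonzero scalar and the substitution $q\to q^{-1}$, which is harmless for vanishing statements) is exactly the one the paper records in the proof of Theorem \ref{Yam PBW thm}. So your closing paragraph coincides with the paper's route and is by itself a sufficient justification.

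The inductive argument you lead with, however, should not be presented as a proof, because the inductive step of (2) does not close. Concretely, take type B of rank $2$ with $\alpha=\alpha_1+2\alpha_2=\gamma+\alpha_2$, $\gamma=\alpha_1+\alpha_2$, and $k=i=2$: the hypothesis $\alpha+\alpha_2\notin\prr$ holds, but $\gamma+\alpha_2=\alpha\in\prr$, so the non-root condition does not descend to $\gamma$ and the ``principal term'' is not covered by the inductive hypothesis; moreover the correction terms from Lemma \ref{qcom jacobi} involve $[e_2,e_2]_{q}$, which is nonzero when $\alpha_2\in\prer\cup\prar$, and the two correction terms carry different scalar prefactors, so they do not cancel against each other. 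In this instance the vanishing of $[e_{\alpha},e_2]_{q}$ is (up to reordering of the $q$-commutators) precisely the cubic Serre relation of Corollary \ref{generalized serre} for $|a_{21}|=2$ --- a defining relation, not a consequence of the induction --- so a correct direct proof must take the rank-$2$ Serre and additional relations as base cases and genuinely enumerate how $\alpha+\alpha_i$ can fail to be a reduced root. There is also a small circularity in your base case for (1): Corollary \ref{qcom jacobi coro}(2) with $y=z=e_j$ has $[e_j,e_j]=2e_j^2=0$ as a \emph{hypothesis}, so it cannot be used to derive $e_j^2=0$; that identity is instead an instance of (\ref{qs rel5}) with $i=j$, since $a_{jj}=0$ for $\alpha_j\in\prir$. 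None of this affects your final answer, which is to defer to \cite{Yam94} --- that is the right call and is what the paper does.
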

\section{Tetrahedron equation and 3D reflection equation}\label{sec 3}
\subsection{Tetrahedron equation}\label{te sec}
\begin{figure}[ht]
\centering
\caption{}
\includegraphics{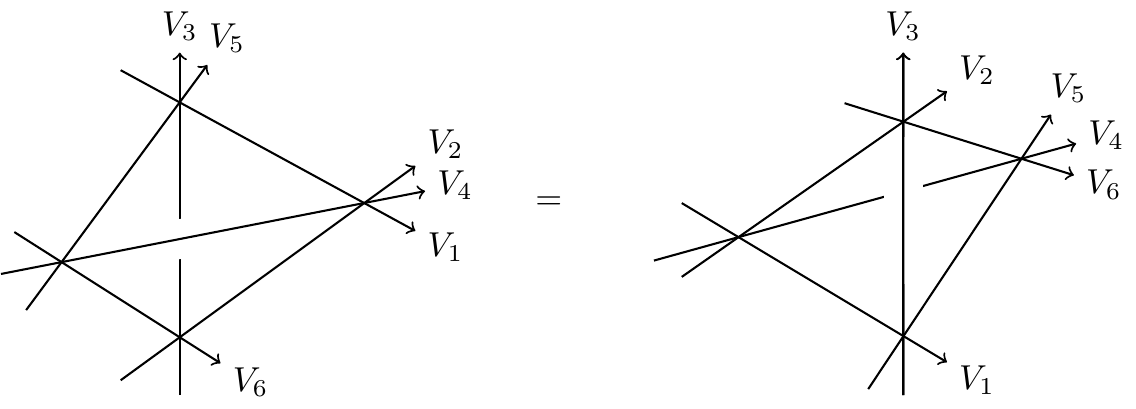}
\label{te fig}
\end{figure}
In this section, we summarize solutions to the tetrahedron and 3D reflection equation, which are related to transition matrices of PBW bases of the nilpotent subalgebra of quantum non-super algebras.
For the history of them, see Section \ref{sec 1}.
Here, we consider the tetrahedron equation\cite{Zam80}, which is a three dimensional analog of the Yang-Baxter equation\cite{Bax07}.
The equation is defined on the tensor product of six linear spaces, and pictorially represented as Figure \ref{te fig}, where $V_i$ are some linear spaces, specified below.
\par
In this paper, we focus on infinite-dimensional solutions on the Fock spaces.
Let $F=\bigoplus_{m=0,1,2,\cdots}\mathbb{C}\ket{m}$ be the bosonic Fock space.
We set $\tdr(q)\in\mathrm{End}(F\otimes F\otimes F)$\cite{KV94}\footnote{The formula given in \cite{KV94} involves misprints unfortunately.} by
\begin{align}
&\tdr(q)(\ket{i}\otimes \ket{j}\otimes \ket{k})
=\sum_{a,b,c\in\mathbb{Z}_{\geq 0}}
\tdr(q)_{i,j,k}^{a,b,c}\ket{a}\otimes \ket{b}\otimes \ket{c}
,
\label{3dR}
\\
&\tdr(q)_{i,j,k}^{a,b,c}
=
\delta_{i+j}^{a+b}
\delta_{j+k}^{b+c}
\sum_{\underset{\scriptstyle\lambda+\mu=b}{\lambda,\mu\in\mathbb{Z}_{\geq 0}}}
(-1)^\lambda q^{i(c-j)+(k+1)\lambda+\mu(\mu-k)}
\frac{(q^2)_{c+\mu}}{(q^2)_c}
\begin{pmatrix}
i \\
\mu
\end{pmatrix}_{\!\! q^2}
\begin{pmatrix}
j \\
\lambda
\end{pmatrix}_{\!\! q^2}
,
\label{3dR mat el}
\end{align}
where we use $\delta_{x}^{y}=\delta_{x,y}$ and the $q$-factorial and $q$-binomial:
\begin{align}
(q)_m=\prod_{k=1}^m(1-q^{k})
,\quad
\begin{pmatrix}l \\m\end{pmatrix}_{\!\! q}
=
\begin{cases}
\displaystyle \frac{(q)_l}{(q)_{l-m}(q)_{m}} \quad &(0\leq m\leq l), \\
{\ \ \ \ \ \ \ } 0 &(\mathrm{otherwise}).
\end{cases}
\end{align}
Summations in (\ref{3dR}) are actually finite due to $\delta_{i+j}^{a+b}\delta_{j+k}^{b+c}$ in (\ref{3dR mat el}).
This is also the same for other 3D operators we will introduce later.
For simplicity, we also use the abbreviated notation $\tdr=\tdr(q)$.
Then, the matrix $\tdr$ satisfies the following tetrahedron equation without a spectral parameter:
\begin{align}
\tdr_{123}\tdr_{145}\tdr_{246}\tdr_{356}=\tdr_{356}\tdr_{246}\tdr_{145}\tdr_{123}
,
\label{KV94 te}
\end{align}
where indices represent the tensor components on which each matrix acts non-trivially.
We simply call $\tdr$ as the 3D R.
The 3D R gives the transition matrix of the PBW bases of the nilpotent subalgebra of the quantum algebra $\UtAp[3]$ associated with the Dynkin diagram \ddtwoA{\cc}{\cc}.
See Theorem \ref{KOY13 thm 3dR} for a more detailed statement.
\par
On the one hand, it is known that there is another solution on the Fock spaces\cite{BS06}.
Let $V=\bigoplus_{m=0,1}\mathbb{C}u_m$ be the fermionic Fock space.
We set $\tdlq\in\mathrm{End}(V\otimes V\otimes F)$ by
\begin{align}
&\tdlq(u_i\otimes u_j\otimes \ket{k})
=\sum_{a,b\in\{0,1\},c\in\mathbb{Z}_{\geq 0}}
\tdlq_{i,j,k}^{a,b,c}
u_a\otimes u_b\otimes \ket{c}
,
\label{3dL}
\\
\begin{split}
&\tdlq_{0,0,k}^{0,0,c}=\tdlq_{1,1,k}^{1,1,c}=\delta_{k,c}
, \quad \tdlq_{0,1,k}^{0,1,c}=-\delta_{k,c}q^{k+1}
, \quad \tdlq_{1,0,k}^{1,0,c}=\delta_{k,c}q^{k}
, \\
&\tdlq_{1,0,k}^{0,1,c}=\delta_{k-1,c}(1-q^{2k})
, \quad \tdlq_{0,1,k}^{1,0,c}=\delta_{k+1,c},
\end{split}
\label{3dL mat el}
\end{align}
where $\tdlq_{i,j,k}^{a,b,c}=0$ other than (\ref{3dL mat el}).
For simplicity, we also use the abbreviated notation $\tdl=\tdlq$.
The matrix $\tdl$ together with the 3D R satisfies the following tetrahedron equation\cite{BS06}:
\begin{align}
\tdl_{123}\tdl_{145}\tdl_{246}\tdr_{356}=\tdr_{356}\tdl_{246}\tdl_{145}\tdl_{123}
.
\label{BS06 te}
\end{align}
We simply call $\tdl$ as the 3D L.
Although the original tetrahedron equation obtained in \cite{BS06} involves some parameters, the above equation (\ref{BS06 te}) is equivalent to it because they are actually cancelled out as remarked in \cite{BS06}.
Actually, our 3D L concides with the original one up to a gauge transformation by some diagonal matrix.
\par
Unlike the 3D R, the 3D L has lacked an algebraic origin in terms of established \textit{quantum algebras} although they exhibit quite parallel results for their reduction to matrix product solutions to the Yang-Baxter equation as we explained in Section \ref{sec 11}.
In Theorem \ref{my 3dL result}, we will derive the 3D L as the transition matrix of the PBW bases of the nilpotent subalgebra of the quantum superalgebra $\UtAp[2|1]$ associated with the Dynkin diagram \ddtwoA{\cc}{\cct}, which clarifies a parallel origin for the 3D L to the 3D R.
\par
As a relation for them, the following lemma is used for the proof of Theorem \ref{my 3dL result}:
\begin{lemma}\label{RL similar}
\begin{align}
\begin{split}
\tdr_{0,0,k}^{0,0,c}=\tdl_{0,0,k}^{0,0,c}
, \quad \tdr_{0,1,k}^{0,1,c}=\tdl_{0,1,k}^{0,1,c}
, \quad \tdr_{1,0,k}^{1,0,c}=\tdl_{1,0,k}^{1,0,c}
, \quad \tdr_{1,0,k}^{0,1,c}=\tdl_{1,0,k}^{0,1,c}
, \quad \tdr_{0,1,k}^{1,0,c}=\tdl_{0,1,k}^{1,0,c}
.
\end{split}
\end{align}
\end{lemma}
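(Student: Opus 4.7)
The plan is to verify each of the five identities by direct substitution into the closed formula (\ref{3dR mat el}) for the 3D R and comparison with the elementary expressions (\ref{3dL mat el}) for the 3D L. Since all five claims involve $i,j,a,b\in\{0,1\}$, the $q^{2}$-binomial coefficients appearing in (\ref{3dR mat el}) take values only in $\{0,1\}$, and the inner summation over $\lambda+\mu=b$ has at most two terms, typically cut down further by the vanishing of the $q^{2}$-binomial $\binom{0}{1}$.

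Concretely, I would process the five cases in turn. For $(i,j,a,b)=(0,0,0,0)$ and $(1,0,1,0)$ one has $b=0$, forcing $\lambda=\mu=0$; combined with $\delta_{j+k}^{b+c}$ the sums reduce respectively to $\delta_{k,c}$ and to $q^{i(c-j)}\delta_{k,c}=q^{k}\delta_{k,c}$, matching the 3D L. For $(i,j,a,b)=(0,1,0,1)$ only the term $(\lambda,\mu)=(1,0)$ contributes (the other being killed by $\binom{i}{1}=\binom{0}{1}=0$), and the prefactor collapses to $-q^{k+1}\delta_{k,c}$. For $(i,j,a,b)=(0,1,1,0)$ again $b=0$, so the sum yields $1$, and $\delta_{j+k}^{b+c}$ produces $\delta_{k+1,c}$. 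The only mildly nontrivial case is $(i,j,a,b)=(1,0,0,1)$, where only $(\lambda,\mu)=(0,1)$ survives; here one evaluates $(q^{2})_{c+1}/(q^{2})_{c}=1-q^{2(c+1)}$, combines it with the exponent $q^{i(c-j)+\mu(\mu-k)}=q^{c+1-k}$, and imposes $\delta_{k,c+1}$ to obtain $\delta_{k-1,c}(1-q^{2k})$.

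The main obstacle, if one can call it that, is purely bookkeeping: correctly tracking the sign $(-1)^{\lambda}$, the exponent $i(c-j)+(k+1)\lambda+\mu(\mu-k)$, and the factor $(q^{2})_{c+\mu}/(q^{2})_{c}$ once the two Kronecker constraints $\delta_{i+j}^{a+b}\delta_{j+k}^{b+c}$ have been imposed. There is no conceptual content beyond careful substitution, so the verification amounts to a short case-by-case computation that can be written out in a few lines per identity.
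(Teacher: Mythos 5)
Your verification is correct: all five cases follow by plugging $i,j,a,b\in\{0,1\}$ into (\ref{3dR mat el}), noting that the Kronecker deltas and the vanishing $q^2$-binomials $\binom{0}{1}_{q^2}$ reduce each inner sum to at most one term, and the resulting expressions match (\ref{3dL mat el}) exactly (I checked each of your five evaluations, including the exponent bookkeeping in the $(1,0)\to(0,1)$ case). The paper states the lemma without proof, treating it as precisely this kind of direct substitution, so your approach coincides with the intended one.
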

For later use, in addition to the 3D R and 3D L, we define $\tdmq\in\mathrm{End}(F\otimes V\otimes V)$ by
\begin{align}
&\tdmq(\ket{i}\otimes u_j\otimes u_k)
=\sum_{a\in\mathbb{Z}_{\geq 0},b,c\in\{0,1\}}
\tdmq_{i,j,k}^{a,b,c}
\ket{a}\otimes u_b\otimes u_c
,
\label{3dM}
\\
&\tdmq_{i,j,k}^{a,b,c}
=\tdlq_{k,j,i}^{c,b,a}
.
\label{3dM mat el}
\end{align}
We simply call $\tdmq$ as the 3D M.
For simplicity, we also use the abbreviated notation $\tdm=\tdmq$.
\begin{remark}\label{KO12 obs rem}
As we explained in Section \ref{sec 12}, the tetraehdron equation (\ref{BS06 te}) was derived again in several ways after \cite{BS06}.
Here, we explain the details of the derivation by \cite{KO12}.
The $q$-boson algebra $\boson$ is an associative algebra over $\mathbb{C}$ generated by $\{\mathbf{a}^{\pm},\mathbf{k}^{\pm 1}\}$ satisfying the following relations:
\begin{align}
\mathbf{k}\mathbf{a}^\pm=q^{\pm 1}\mathbf{a}^\pm\mathbf{k}
,\quad \mathbf{a}^-\mathbf{a}^+=\mathbf{1}-q^2\mathbf{k}^2
,\quad \mathbf{a}^+\mathbf{a}^-=\mathbf{1}-\mathbf{k}^2
.
\end{align}
It has a representation on $F$ as follows:
\begin{align}
\mathbf{k}\ket{m}=q^{m}\ket{m}
,\quad \mathbf{a}^+\ket{m}=\ket{m+1}
,\quad \mathbf{a}^-\ket{m}=(1-q^{2m})\ket{m-1}
.
\end{align}
Then, the intertwining relations of the 3D R are given by
\begin{align}
\begin{split}
&\tdr(\mathbf{a}^{\pm}\otimes \mathbf{k}\otimes 1)
=(\mathbf{a}^{\pm}\otimes 1\otimes \mathbf{k}+\mathbf{k}\otimes \mathbf{a}^{\pm}\otimes \mathbf{a}^{\mp})\tdr, \\
&\tdr(1\otimes \mathbf{k}\otimes \mathbf{a}^{\pm})
=(\mathbf{k}\otimes 1\otimes \mathbf{a}^{\pm}+\mathbf{a}^{\mp}\otimes \mathbf{a}^{\pm}\otimes \mathbf{k})\tdr, \\
&\tdr(1\otimes \mathbf{a}^{\pm}\otimes 1)
=(\mathbf{a}^{\pm}\otimes 1\otimes \mathbf{a}^{\pm}-q\mathbf{k}\otimes \mathbf{a}^{\pm}\otimes \mathbf{k})\tdr, \\
&\tdr(\mathbf{a}^{+}\otimes \mathbf{a}^{-}\otimes \mathbf{a}^{+}-q\mathbf{k}\otimes 1\otimes \mathbf{k})
=(\mathbf{a}^{-}\otimes \mathbf{a}^{+}\otimes \mathbf{a}^{-}-q\mathbf{k}\otimes 1\otimes \mathbf{k})\tdr, \\
&[\tdr,\mathbf{k}\otimes \mathbf{k}\otimes 1]=[\tdr,1\otimes \mathbf{k}\otimes \mathbf{k}]=0
.
\end{split}
\label{3dR defining rel}
\end{align}
As we explained in Section \ref{sec 11}, the 3D R is uniquely characterized by (\ref{3dR defining rel}) up to the normalization.
(\ref{3dR mat el}) is obtained by choosing $\tdr(\ket{0}\otimes\ket{0}\otimes\ket{0})=\ket{0}\otimes\ket{0}\otimes\ket{0}$.
On the other hand, matrix elements of the 3D L can be expressed by using $\tdl_{i,j}^{a,b}\in\boson{\ }(i,j,a,b\in\{0,1\})$ defined by
\begin{align}
&\tdl(u_i\otimes u_j\otimes \ket{k})
=\sum_{a,b\in\{0,1\}}
u_a\otimes u_b\otimes \tdl_{i,j}^{a,b}\ket{k}
,
\label{3dL op}
\\
&\tdl_{0,0}^{0,0}=\tdl_{1,1}^{1,1}=1
,\quad
\tdl_{0,1}^{0,1}=-q\mathbf{k}
,\quad
\tdl_{1,0}^{1,0}=\mathbf{k}
,\quad
\tdl_{1,0}^{0,1}=\mathbf{a}^{-}
,\quad
\tdl_{0,1}^{1,0}=\mathbf{a}^{+}
,
\label{3dL op mat el}
\end{align}
where $\tdl_{i,j}^{a,b}=0$ other than (\ref{3dL op mat el}).
Therefore, we can consider the 3D L as an operator-valued $4\times 4$ matrix, and the tetrahedron equation (\ref{BS06 te}) as an operator-valued $8\times 8$ matrix.
The key observation of \cite{KO12} is each matrix element of the operator-valued equation (\ref{BS06 te}) exactly corresponds to an intertwing relation of (\ref{3dR defining rel}).
That is, the tetrahedron equation (\ref{BS06 te}) is equivalent to the set of intertwining relations of the irreducible representations of $A_q(A_2)$.
This is an interesting connection but quite mysterious.
Also, this connection gives a derivation of the tetrahedron equation (\ref{BS06 te}) but the \textit{algebraic origin} of 3D L has been unclear.
\end{remark}
\begin{remark}\label{LLMM remark}
There is another known solution to the tetrahedron equation which the 3D L satisfies.
We set $\tdlt\in\mathrm{End}(F\otimes V\otimes V)$
\begin{align}
&\tdlt(\ket{i}\otimes u_j\otimes u_k)
=\sum_{a\in\mathbb{Z}_{\geq 0},b,c\in\{0,1\}}
\tdlt_{i,j,k}^{a,b,c}\ket{a}\otimes u_b\otimes u_c
\label{3dLt}
,\\
&\tdlt_{i,j,k}^{a,b,c}
=\tdl(-q)_{j,k,i}^{b,c,a}
.
\label{3dLt mat el}
\end{align}
Then, the matrix $\tdlt$ together with the 3D L satisfies the following tetrahedron equation:
\begin{align}
\tdlt_{135}\tdlt_{124}\tdl_{456}\tdl_{236}
=\tdl_{236}\tdl_{456}\tdlt_{124}\tdlt_{135}
.
\label{LLMM te}
\end{align}

The equation (\ref{LLMM te}) was first presented by \cite{BS06,Ser06} and obtained again by quantum geometry settings\cite{Ser09,BMS10}.
It plays an important role to show the commutativity of the layer-to-layer transfer matrix associated with the 3D L\cite{Ser06}.
Later, we derive an equation (\ref{my te c4}) which involves only ``the 3D L like objects'' as (\ref{LLMM te}).
Actually, it resembles equation (\ref{LLMM te}), but involves nonlocal sign factors, so we can not write it as a matrix equation like (\ref{LLMM te}).
We do not deal with this issue in this paper, but it is an interesting question whether we can attribute (\ref{LLMM te}) to Corollary \ref{tA r3 mythm4} or not.
\end{remark}
\subsection{3D reflection equation}
We then proceed to explanations of the 3D reflection equation\cite{IK97}, which is a boundary analog of the tetrahedron equation.
The equation is defined on the tensor product of nine linear spaces.
The diagram of the 3D reflection equation is obtained in \cite[Figure 1]{KO13}.
\par
Essentially, there are only two known non-trivial solutions to the 3D reflection equation\cite{KO12,KO13}.
We use the following notation:
\begin{align}
\left\{
\begin{array}{l}
i_1,\cdots,i_r \\
i_1,\cdots,i_s
\end{array}
\right\}
=
\begin{cases}
{\ }\frac{\prod_{k=1}^r(q)_{i_k}}{\prod_{k=1}^s(q)_{j_k}}
\quad &(^\forall i_k,j_k\in\mathbb{Z}_{\geq 0}), \\
{\ }0
\quad &(\mathrm{otherwise}).
\end{cases}
\end{align}
We set $\tdj(q)\in\mathrm{End}(F\otimes F\otimes F\otimes F)$ by
\begin{align}
\tdj(q)
\ket{i}\otimes\ket{j}\otimes\ket{k}\otimes\ket{l}
=\sum_{a,b,c,d\in\mathbb{Z}_{\geq 0}}
\tdj(q)_{i,j,k,l}^{a,b,c,d}\ket{a}\otimes\ket{b}\otimes\ket{c}\otimes\ket{d}
,
\label{3dJ}
\end{align}
\begin{align}
\begin{split}
 \tdj(q)_{i,j,k,l}^{a,b,c,d}
 =&\delta_{i+2j+k}^{a+2b+c}
 \delta_{j+k+l}^{b+c+d}
 \frac{(q^2)_l}{(q^2)_d}
 \sum_{\alpha,\beta,\gamma\in\mathbb{Z}_{\geq 0}}
 \frac{(-1)^{\alpha+\gamma}}{(q^2)_{b-\beta}}
 q^{\tdjPhaseI/2} \\
 &\times
 \tdj(q)_{a+b-\alpha-\beta-\gamma,0,b+c-\alpha-\beta-\gamma,d}^{i+j-\alpha-\beta-\gamma,0,j+k-\alpha-\beta-\gamma,l}
 \left\{
 \begin{array}{c}
 j,b-\beta,j+k-\alpha-\beta,i+j-\alpha-\beta \\
 \alpha,\beta,\gamma,c-\alpha,a-\alpha,j-\alpha-\beta,b-\beta-\gamma
 \end{array}
\right\}
,
\end{split}
\label{3dJ mat el}
\end{align}
where $\tdj(q)_{i,0,k,l}^{a,0,c,d}$ is given by
\begin{align}
\tdj(q)_{i,0,k,l}^{a,0,c,d}
=
\delta_{i+k}^{a+c}
\delta_{k+l}^{c+d}
\sum_{\lambda\in\mathbb{Z}_{\geq 0}}
(-1)^{c+\lambda}
\frac{(q^2)_{d+\lambda}}{(q^2)_d}
q^{\tdjPhaseII/2}
\left\{
\begin{array}{c}
i,k \\
\lambda,i-\lambda,c-\lambda,k-c+\lambda
\end{array}
\right\}
,
\end{align}
and $\tdjPhaseI,\tdjPhaseII$ are given by
\begin{align}
\tdjPhaseI
&=\alpha(\alpha+2b-2\beta-1)+(2\beta-b)(a+b+c)+\gamma(\gamma-1)-j(i+j+k), \\
\tdjPhaseII
&=(l+d+1)(i+c-2\lambda)+c-i
.
\end{align}
For simplicity, we also use the abbreviated notation $\tdj=\tdj(q)$.
We also set $\tdk(q)\in\mathrm{End}(F\otimes F\otimes F\otimes F)$ by
\begin{align}
&\tdk(q)
\ket{i}\otimes\ket{j}\otimes\ket{k}\otimes\ket{l}
=\sum_{a,b,c,d\in\mathbb{Z}_{\geq 0}}
\tdk(q)_{i,j,k,l}^{a,b,c,d}\ket{a}\otimes\ket{b}\otimes\ket{c}\otimes\ket{d}
,
\label{3dK}
\\
&\tdk(q)_{i,j,k,l}^{a,b,c,d}
=\tdj(q^2)_{l,k,j,i}^{d,c,b,a}
.
\label{3dK mat el}
\end{align}
For simplicity, we also use the abbreviated notation $\tdk=\tdk(q)$.
Then, the matrix $\tdj,\tdk$ together with 3D R satisfies the following 3D reflection equations:
\begin{align}
&\tdr_{456}
\tdr_{489}
\tdj_{3579}
\tdr_{269}
\tdr_{258}
\tdj_{1678}
\tdj_{1234}
=
\tdj_{1234}
\tdj_{1678}
\tdr_{258}
\tdr_{269}
\tdj_{3579}
\tdr_{489}
\tdr_{456}
,
\label{KO13 tre}
\\
&\tdr_{456}
\tdr_{489}
\tdk_{3579}
\tdr_{269}
\tdr_{258}
\tdk_{1678}
\tdk_{1234}
=
\tdk_{1234}
\tdk_{1678}
\tdr_{258}
\tdr_{269}
\tdk_{3579}
\tdr_{489}
\tdr_{456}
.
\label{KO12 tre}
\end{align}
We simply call $\tdj,{\ }\tdk$ as the 3D J and 3D K, respectively.
\par
The origin of the 3D K is quite similar to the 3D J as we explained in Section \ref{sec 13}.
That is, the 3D K gives the intertwiner of the irreducible representations of the quantum coordinate ring $A_q(C_2)$, where the associated 3D reflection equation (\ref{KO12 tre}) holds as the identity of the intertwiner of the irreducible representations of the quantum coordinate ring $A_q(C_3)$\cite{KO12}.
As an immediate corollary of the Kuniba-Okado-Yamada theorem, we can see the 3D J also gives the transition matrix of the PBW bases of the nilpotent subalgebra of the quantum algebra $U_q^{+}(C_2)$.
\par
Although the 3D K itself also appears by considering $U_q^{+}(B_2)$ up to $q$-dependence (see Theorem \ref{KOY13 thm 3dJ} and (\ref{tm rel})), it is worth to emphasize that (\ref{KO12 tre}) does not follow from discussions only using type B, that is, it is essentially type C object different from (\ref{KO13 tre}).
In this paper, we focus on the PBW basis for type B, and will give new solutions to the 3D reflection equation, which generalize the solution (\ref{KO13 tre}) to the family of solutions (\ref{tre general}).
\begin{remark}\label{KP18 rem}
Although the 3D R, J and K have similar origins as we mentioned above, unlike the 3D R, the 3D J and 3D K themselves do not give matrix product solutions to the reflection equation because the 3D boundary Zamolodchikov algebra and its associativity condition, i.e. the 3D reflection equation, take different forms.
Nevertheless, it is known that we can obtain matrix product solutions to the reflection equation by arranging the intertwining relations of 3D K into a matrix equation\cite{KP18}:
\begin{align}
\tdl_{123}
\tdg_{24}
\tdl_{215}
\tdg_{16}
\tdk_{3456}
=
\tdk_{3456}
\tdg_{16}
\tdl_{125}
\tdg_{24}
\tdl_{213}
,
\label{qre}
\end{align}
where $\tdl$ is the 3D L with $q\to q^2$ and we introduce a matrix $\tdg$, which gives $K$-matrices in the reflection equation.
Interestingly, this procedure is exactly in the same way as we explained in Remark \ref{KO12 obs rem}.
The equation (\ref{qre}) is called the \textit{quantized} reflection equation\cite{KP18}.
By reducing the equation (\ref{qre}), we get the solutions to the reflection equation associated with the fundamental representations of $U_q(A_{n-1}^{(1)})$, and the spin representations of $U_q(D_{n+1}^{(2)})$, $U_q(B_{n}^{(1)})$ and $U_q(D_{n}^{(1)})$\cite{KP18}.
See \cite{KP18} for more details.
Later, the $K$-matrices are characterized as the interwiners of some coideal subalgebras of the quantum algebras\cite{KOY19b}.
\end{remark}
\section{PBW bases of type A and tetrahedron equation}\label{sec 4}
\subsection{PBW bases of type A}\label{tA qroot rels subsec}
In this section, we focus on quantum superalgebras of type A in the case of rank 2 and 3.
Here, we introduce some notations to briefly describe the PBW bases of the nilpotent subalgebra of them, and show higher-order relations for them.
For the case of type A, there are no anisotropic odd roots.
We then simply write $\prer\cup\prar$ by $\prer$.
We set $e_{ij},e_{(ij)k},e_{i(jk)}\in\UtAp$ by
\begin{align}
e_{ij}
=
[e_{i},e_{j}]_{q}
,\quad
e_{(ij)k}
=
[e_{ij},e_{k}]_{q}
,\quad
e_{i(jk)}
=
[e_{i},e_{jk}]_{q}
,
\label{tA notations}
\end{align}
where $i,j,k\in I$.
By considering Corollary \ref{qcom jacobi coro}(1), we simply write $e_{ijk}=e_{(ij)k}$ for the case $(\alpha_i,\alpha_k)=0$.
We have the following higher-order relations for them:
\begin{proposition}\label{tA qroot rels}
\begin{align}
&e_{(i,i-1),i+1}=(-1)^{p(\alpha_{i-1})p(\alpha_{i+1})}e_{(i,i+1),i-1}
\label{tA qroot rel1}
,\\
&[e_{i-1,i},e_{i+1,i}]=0
\label{tA qroot rel2}
,\\
&[e_{i},e_{i-1,i,i+1}]=0
\label{tA qroot rel3}
,\\
&e_{i}^2e_{i+1,i+2}-(q+q^{-1})e_{i}e_{i+1,i+2}e_{i}+e_{i+1,i+2}e_{i}^2=0
\quad (\alpha_{i}\in\prer)
\label{tA qroot serre 1}
,\\
&e_{i+1,i+2}^2e_{i}-(q+q^{-1})e_{i+1,i+2}e_{i}e_{i+1,i+2}+e_{i}e_{i+1,i+2}^2=0
\quad (\alpha_{i+1}+\alpha_{i+2}\in\prer)
\label{tA qroot serre 2}
,\\
&e_{i+2}^2e_{i+1,i}-(q+q^{-1})e_{i+2}e_{i+1,i}e_{i+2}+e_{i+1,i}e_{i+2}^2=0
\quad (\alpha_{i+2}\in\prer)
\label{tA qroot serre 3}
,\\
&e_{i+1,i}^2e_{i+2}-(q+q^{-1})e_{i+1,i}e_{i+2}e_{i+1,i}+e_{i+2}e_{i+1,i}^2=0
\quad (\alpha_{i}+\alpha_{i+1}\in\prer)
\label{tA qroot serre 4}
,\\
&e_{i,i+1}^2=0
\quad (\alpha_{i}+\alpha_{i+1}\in\prir)
\label{tA qroot serre 5}
.
\end{align}
\end{proposition}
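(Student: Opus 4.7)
The plan is to derive every identity from three ingredients already available in the paper: the $q$-commutator Jacobi identity (Lemma \ref{qcom jacobi}) together with its corollary, the generalized Serre relation (Corollary \ref{generalized serre}), and the commutativity of non-adjacent simple root vectors coming from (\ref{qs rel5}). Since type A has no anisotropic odd roots, every simple root is either even or isotropic, so the only parity-dependent ingredient is Lemma \ref{qroot general lemma}(1). I would organize the proof by grouping the seven identities according to difficulty.

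First I would dispose of the two cleanest cases. Relation (\ref{tA qroot rel1}) is a one-line application of Corollary \ref{qcom jacobi coro}(2) with $x=e_i$, $y=e_{i-1}$, $z=e_{i+1}$: in the type A fundamental system, non-adjacent simple roots satisfy $[e_{i-1},e_{i+1}]=0$ and $(\alpha_{i-1},\alpha_{i+1})=0$, so the corollary yields precisely the sign $(-1)^{p(\alpha_{i-1})p(\alpha_{i+1})}$. Relation (\ref{tA qroot serre 5}) is immediate from Lemma \ref{qroot general lemma}(1), since $\alpha_i+\alpha_{i+1}\in\prir$ means $e_{i,i+1}$ is the quantum root vector attached to an isotropic root and hence squares to zero.

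Next I would treat (\ref{tA qroot rel2}) and (\ref{tA qroot rel3}). For (\ref{tA qroot rel2}) I would expand $[e_{i-1,i},e_{i+1,i}]=[[e_{i-1},e_i]_q,[e_{i+1},e_i]_q]$ and iterate Lemma \ref{qcom jacobi} to push the inner brackets outward, using $[e_{i-1},e_{i+1}]=0$ to eliminate the cross term and the generalized Serre vanishing $[[e_{i\pm 1},e_i]_q,e_i]_q=0$ of Corollary \ref{generalized serre} to kill the remaining one. For (\ref{tA qroot rel3}) I would write $e_{i-1,i,i+1}=[[e_{i-1},e_i]_q,e_{i+1}]_q$ and apply Corollary \ref{qcom jacobi coro}(1) (with the relevant pair of commuting factors) together with (\ref{tA qroot rel1}) to rebracket $[e_i,e_{i-1,i,i+1}]$ into a sum of triple Serre-type expressions, each of which vanishes by Corollary \ref{generalized serre}.

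The Serre-type relations (\ref{tA qroot serre 1})--(\ref{tA qroot serre 4}) form the most technical group. Each is equivalent, after absorbing a single sign coming from the composite parity of $\alpha_{i+1}+\alpha_{i+2}$ or $\alpha_i+\alpha_{i+1}$, to a double $q$-commutator vanishing of the form $[[u,v]_q,v]_q=0$ in which $u$ is a simple root vector and $v$ is a length-two bracket built from two commuting simple root vectors (or the symmetric statement with the roles swapped). In each case I would first use Corollary \ref{qcom jacobi coro}(1) to move the outer $e_i$ across the commuting factor inside $v$, thereby reducing the identity to a statement of the form $[[e_a,e_b]_q,e_a]_q=0$ or $[e_a,[e_a,e_b]_q]_q=0$, which is exactly Corollary \ref{generalized serre}. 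The main obstacle will be keeping the signs $(-1)^{p(\alpha_i)p(\alpha_j)}$ and the $q$-powers $q^{-(\alpha_i,\alpha_j)}$ produced by each Jacobi step consistent with the scalar $(q+q^{-1})$ appearing in the claim; a short case split according to the parities of $\alpha_i,\alpha_{i+1},\alpha_{i+2}$, combined with the fact that the appropriate sum $\alpha_i+\alpha_{i+1}$ or $\alpha_{i+1}+\alpha_{i+2}$ lies in $\prer$ by hypothesis (so $q_{\text{composite}}=q$), should keep the bookkeeping routine.
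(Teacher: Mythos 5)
Your handling of (\ref{tA qroot rel1}) and (\ref{tA qroot serre 5}) matches the paper exactly, and your reduction of (\ref{tA qroot serre 1}) and (\ref{tA qroot serre 3}) to Corollary \ref{generalized serre} is essentially the paper's argument (though the corollary you need there is \ref{qcom jacobi coro}(2), not (1), since the commuting pair $e_i,e_{i\pm 2}$ sits in the position governed by that part). The problem is with (\ref{tA qroot rel2}) and (\ref{tA qroot rel3}), and it propagates to (\ref{tA qroot serre 2}) and (\ref{tA qroot serre 4}). The toolkit you allow yourself --- Lemma \ref{qcom jacobi}, $[e_{i-1},e_{i+1}]=0$, and Corollary \ref{generalized serre} --- is not sufficient to prove that either of $[e_{i-1,i},e_{i+1,i}]$ or $[e_{i},e_{i-1,i,i+1}]$ vanishes. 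If you carry out the Jacobi manipulations you describe, every cross term you can kill is of the form $[e_{i-1,i},e_i]_q$ or $[e_i,[e_i,e_{i+1}]_q]_q$, and what survives is exactly the other relation in the pair: one finds $[e_{i-1,i},e_{i+1,i}]_q=[e_{(i-1,i),i+1},e_i]_q$ and, going the other way, $[e_i,e_{i-1,(i,i+1)}]_q=[e_{i,i-1},e_{i,i+1}]_q$. So the argument only establishes that these commutators are all equal to one another (up to sign); it never shows the common value is zero. This is not an accident of bookkeeping: the vanishing of $[e_{i-1,i},e_{i+1,i}]$ is a Levendorskii--Soibelman--type straightening relation, and the paper closes the loop by invoking Lemma \ref{qroot general lemma}(2) (imported from Yamane), which gives $[e_\alpha,e_i]_q=0$ for $\alpha=\alpha_{i-1}+\alpha_i+\alpha_{i+1}$ because $\alpha+\alpha_i$ is not a root. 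That structural input is absent from your plan and cannot be replaced by iterating the Jacobi identity.

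The same omission resurfaces in your treatment of (\ref{tA qroot serre 2}) and (\ref{tA qroot serre 4}). For these two (where the \emph{composite} root vector is squared), the Jacobi/Corollary reduction does not land on Corollary \ref{generalized serre} alone: after rewriting $[e_{i+1,i+2},e_i]_q$ as $\pm[e_{i+1,i},e_{i+2}]_q$ via (\ref{tA qroot rel1}) and killing the $[e_{i+1,i+2},e_{i+2}]_q$ cross term by the generalized Serre relation, the surviving term is $[[e_{i+1,i+2},e_{i+1,i}]_q,e_{i+2}]_q$, whose vanishing requires precisely the commutation relation (\ref{tA qroot rel2 2}) (the $\chi$-image of (\ref{tA qroot rel2})). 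So you should either add Lemma \ref{qroot general lemma}(2) to your list of ingredients and prove (\ref{tA qroot rel2})--(\ref{tA qroot rel3}) first, feeding them into (\ref{tA qroot serre 2}) and (\ref{tA qroot serre 4}) as the paper does, or replace it by an explicit monomial computation using the Serre relations as rewriting rules --- but as written the chain of reductions is circular.
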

\begin{proof}
(\ref{tA qroot rel1}) is obtained from Corollary \ref{qcom jacobi coro}(2) because $[e_{i-1},e_{i+1}]=0$ and $(\alpha_{i-1},\alpha_{i+1})=0$.
(\ref{tA qroot rel2}) and (\ref{tA qroot rel3}) are obtained by $[e_{i-1,i},e_{i+1,i}]=[e_{i-1,i,i+1},e_{i}]=0$ where we used Corollary \ref{qcom jacobi coro}(1) and Lemma \ref{qroot general lemma}(2).
(\ref{tA qroot serre 5}) is a cororally of Lemma \ref{qroot general lemma}(1).
\par
For (\ref{tA qroot serre 1}) $\sim$ (\ref{tA qroot serre 4}), we only consider (\ref{tA qroot serre 1}) and (\ref{tA qroot serre 2}).
The remaining relations (\ref{tA qroot serre 3}) and (\ref{tA qroot serre 4}) can be proved in the same way.
By using the $q$-commutator, the left hand side of (\ref{tA qroot serre 1}) can be written as $[[e_{i+1,i+2},e_{i}]_{q},e_{i}]_{q}$.
Then we have
\begin{align}
[[e_{i+1,i+2},e_{i}]_{q},e_{i}]_{q}
=(-1)^{p(\alpha_{i})p(\alpha_{i+2})}[[e_{i+1,i},e_{i+2}]_{q},e_{i}]_{q}
=[[e_{i+1,i},e_{i}]_{q},e_{i+2}]_{q}
=0
,
\end{align}
where we used Corollary \ref{qcom jacobi coro}(2) and the Serre relation (\ref{serre qcom}).
Similarly, the left hand side of (\ref{tA qroot serre 2}) can be written as $[e_{i+1,i+2},[e_{i+1,i+2},e_{i}]_{q}]_{q}$.
Then we have
\begin{align}
[e_{i+1,i+2},[e_{i+1,i+2},e_{i}]_{q}]_{q}
=
(-1)^{p(i)p(i+2)}[e_{i+1,i+2},[e_{i+1,i},e_{i+2}]_{q}]_{q}
=0
,
\end{align}
where we first used (\ref{tA qroot rel1}) and then Corollary \ref{qcom jacobi coro}(1), Corollary \ref{generalized serre} and (\ref{tA qroot rel2 2}).
\end{proof}
\begin{proposition}\label{tA qroot rels2}
\begin{align}
&e_{i+1,(i-1,i)}=(-1)^{p(\alpha_{i-1})p(\alpha_{i+1})}e_{i-1,(i+1,i)}
\label{tA qroot rel1 2}
,\\
&[e_{i,i-1},e_{i,i+1}]=0
\label{tA qroot rel2 2}
,\\
&[e_{i},e_{i+1,i,i-1}]=0
\label{tA qroot rel3 2}
,\\
&e_{i}^2e_{i+2,i+1}-(q+q^{-1})e_{i}e_{i+2,i+1}e_{i}+e_{i+2,i+1}e_{i}^2=0
\quad (\alpha_{i}\in\prer)
\label{tA qroot serre 1 2}
,\\
&e_{i+2,i+1}^2e_{i}-(q+q^{-1})e_{i+2,i+1}e_{i}e_{i+2,i+1}+e_{i}e_{i+2,i+1}^2=0
\quad (\alpha_{i+1}+\alpha_{i+2}\in\prer)
\label{tA qroot serre 2 2}
,\\
&e_{i+2}^2e_{i,i+1}-(q+q^{-1})e_{i+2}e_{i,i+1}e_{i+2}+e_{i,i+1}e_{i+2}^2=0
\quad (\alpha_{i+2}\in\prer)
\label{tA qroot serre 3 2}
,\\
&e_{i,i+1}^2e_{i+2}-(q+q^{-1})e_{i,i+1}e_{i+2}e_{i,i+1}+e_{i+2}e_{i,i+1}^2=0
\quad (\alpha_{i}+\alpha_{i+1}\in\prer)
\label{tA qroot serre 4 2}
,\\
&e_{i+1,i}^2=0
\quad (\alpha_{i}+\alpha_{i+1}\in\prir)
\label{tA qroot serre 5 2}
.
\end{align}
\end{proposition}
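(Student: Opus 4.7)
The plan is to deduce Proposition \ref{tA qroot rels2} from Proposition \ref{tA qroot rels} by applying the anti-algebra automorphism $\chi$ defined in (\ref{anti chi}), which fixes every $e_i$ and reverses the order of multiplication. The key computation is that for homogeneous $x \in \UGGp_{\alpha}$, $y \in \UGGp_{\beta}$ one has
\begin{align}
\chi([x,y]_q)
=\chi(y)\chi(x)-(-1)^{p(\alpha)p(\beta)}q^{-(\alpha,\beta)}\chi(x)\chi(y)
=[\chi(y),\chi(x)]_q,
\end{align}
so iteratively
\begin{align}
\chi(e_{ij})=e_{ji},\quad
\chi(e_{(ij)k})=e_{k,(j,i)},\quad
\chi(e_{i,(jk)})=e_{(k,j),i}.
\end{align}

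First I would record the bracket-reversal formula above and the resulting action of $\chi$ on the compound vectors appearing in Proposition \ref{tA qroot rels}. Then I would apply $\chi$ term-by-term to each of the seven relations there and match the output with the correspondingly numbered relation in Proposition \ref{tA qroot rels2}. For instance, (\ref{tA qroot rel1}) transforms into (\ref{tA qroot rel1 2}) because the scalar $(-1)^{p(\alpha_{i-1})p(\alpha_{i+1})}$ is $\chi$-invariant; the Serre-type relation (\ref{tA qroot serre 1}) reverses its three factors and the compound vector $e_{i+1,i+2}$ is relabelled to $e_{i+2,i+1}$, yielding (\ref{tA qroot serre 1 2}); the vanishing relations (\ref{tA qroot rel2}), (\ref{tA qroot rel3}), (\ref{tA qroot serre 5}) go through trivially since $\chi(0)=0$ and the transformed super-bracket (resp.\ square) agrees with the claimed one up to a nonzero scalar.

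The only real obstacle is bookkeeping: one must verify that the parity/root conditions $\alpha_i\in\prer$, $\alpha_i+\alpha_{i+1}\in\prir$, etc., stated in Proposition \ref{tA qroot rels2} are exactly the images under $\chi$ of the conditions in Proposition \ref{tA qroot rels}. This is automatic, since $\chi$ preserves both the root grading and the parity. A direct alternative would be to redo the proof of Proposition \ref{tA qroot rels} verbatim with the opposite bracket convention, invoking Lemma \ref{qcom jacobi}, Corollary \ref{qcom jacobi coro}, Corollary \ref{generalized serre}, and Lemma \ref{qroot general lemma}(1); but the $\chi$-route is shorter and conceptually parallel to the identity $E_{2}^{A^{\mathrm{op}}}=\chi(E_{1}^{A})$ already employed in the proof of Theorem \ref{Yam PBW thm}.
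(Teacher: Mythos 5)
Your proposal is correct and coincides with the paper's own proof, which simply applies the anti-algebra automorphism $\chi$ of (\ref{anti chi}) to the relations (\ref{tA qroot rel1}) $\sim$ (\ref{tA qroot serre 5}); your explicit verification that $\chi([x,y]_q)=[\chi(y),\chi(x)]_q$ and the resulting relabelling of compound root vectors is exactly the bookkeeping the paper leaves implicit.
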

\begin{proof}
By applying the anti-algebra automorphism $\chi$ given by (\ref{anti chi}) on (\ref{tA qroot rel1}) $\sim$ (\ref{tA qroot serre 5}), we obtain the desired results.
\end{proof}
\par
By writing down quantum root vectors given by Definition \ref{qrv} for the case of rank 2, we find they are given by
\begin{align}
&B_{1}:
\quad e_{\beta_1}=e_{1}
,\quad e_{\beta_2}=e_{21}
,\quad e_{\beta_3}=e_{2}
,
\label{tA r2 qroot vec1}
\\
&B_{2}:
\quad e_{\beta_1}=e_{2}
,\quad e_{\beta_2}=e_{12}
,\quad e_{\beta_3}=e_{1}
,
\label{tA r2 qroot vec2}
\end{align}
where $\beta_{t}{\ }(t=1,\cdots,3)$ are the same as Theorem \ref{Yam PBW thm}.
For non-super case, (\ref{tA r2 qroot vec1}) and (\ref{tA r2 qroot vec2}) concide with quantum root vectors given by (\ref{qroot Lus}) with the reduced expressions $w_0=s_{1}s_{2}s_{1},s_{2}s_{1}s_{2}$ of the longest element of the Weyl group, respectively.
\par
Similarly, by writing down quantum root vectors given by Definition \ref{qrv} for the case of rank 3, we find they are given by
\begin{align}
&B_{1}:
\quad e_{\beta_1}=e_{1}
,\quad e_{\beta_2}=e_{21}
,\quad e_{\beta_3}=e_{321}
,\quad e_{\beta_4}=e_{2}
,\quad e_{\beta_5}=e_{32}
,\quad e_{\beta_6}=e_{3}
,
\label{tA r3 qroot vec1}
\\
&B_{2}:
\quad e_{\beta_1}=e_{3}
,\quad e_{\beta_2}=e_{23}
,\quad e_{\beta_3}=e_{2}
,\quad e_{\beta_4}=e_{123}
,\quad e_{\beta_5}=e_{12}
,\quad e_{\beta_6}=e_{1}
,
\label{tA r3 qroot vec2}
\end{align}
where $\beta_{t}{\ }(t=1,\cdots,6)$ are the same as Theorem \ref{Yam PBW thm}.
For non-super case, (\ref{tA r3 qroot vec1}) and (\ref{tA r3 qroot vec2}) concide with quantum root vectors given by (\ref{qroot Lus}) with the reduced expressions $w_0=s_{1}s_{2}s_{3}s_{1}s_{2}s_{1},s_{3}s_{2}s_{3}s_{1}s_{2}s_{3}$ of the longest element of the Weyl group, respectively.
\subsection{Transition matrices of PBW bases of type A of rank 2}\label{sec 42}
In this section, we consider transition matrices of the PBW bases of $\UtAp$ of rank 2, so $m+n=3$.
All possible Dynkin diagrams associated with admissible realizations are given in Table \ref{tA all dynkin}.
\begin{table}[htb]
\centering
\caption{}
\label{tA all dynkin}
\begin{tabular}{c|c}\hline
$\GG$ & Dynkin diagram \\\hline
$\tA[3|0]$ & \ddtwoWul{\ltA}{\cc}{\cc}{\epsilon_1-\epsilon_2}{\epsilon_2-\epsilon_3} \\\hline
$\tA[2|1]$ & \ddtwoWul{\ltA}{\cc}{\cct}{\epsilon_1-\epsilon_2}{\epsilon_2-\delta_3}
\quad
\ddtwoWul{\ltA}{\cct}{\cct}{\epsilon_1-\delta_2}{\delta_2-\epsilon_3}
\quad
\ddtwoWul{\ltA}{\cct}{\cc}{\delta_1-\epsilon_2}{\epsilon_2-\epsilon_3} \\\hline
$\tA[1|2]$ & \ddtwoWul{\ltA}{\cc}{\cct}{\delta_1-\delta_2}{\delta_2-\epsilon_3}
\quad
\ddtwoWul{\ltA}{\cct}{\cct}{\delta_1-\epsilon_2}{\epsilon_2-\delta_3}
\quad
\ddtwoWul{\ltA}{\cct}{\cc}{\epsilon_1-\delta_2}{\delta_2-\delta_3} \\\hline
$\tA[0|3]$ & \ddtwoWul{\ltA}{\cc}{\cc}{\delta_1-\delta_2}{\delta_2-\delta_3} \\\hline
\end{tabular}
\end{table}
In Table \ref{tA all dynkin}, $(\Pi,p)$ associated with same Dynkin diagrams are exactly same.
We then only consider quantum superalgebras associated with the following Dynkin diagrams given by (\ref{tA our dynkin})
\begin{equation}
\begin{alignedat}{4}
&\mathrm{(I)}&{\ }&\ddtwoWul{\ltA}{\cc}{\cc}{\epsilon_1-\epsilon_2}{\epsilon_2-\epsilon_3}
&\qquad
&\mathrm{(II)}&{\ }&\ddtwoWul{\ltA}{\cc}{\cct}{\epsilon_1-\epsilon_2}{\epsilon_2-\delta_3}
\\
&\mathrm{(III)}&{\ }&\ddtwoWul{\ltA}{\cct}{\cc}{\epsilon_1-\delta_2}{\delta_2-\delta_3}
&\qquad
&\mathrm{(IV)}&{\ }&\ddtwoWul{\ltA}{\cct}{\cct}{\epsilon_1-\delta_2}{\delta_2-\epsilon_3}
\label{tA our dynkin}
\end{alignedat}
\end{equation}
where they are distinguished except (IV), in the sense defined in Section \ref{sec 22}.
For the case of rank 2, quantum root vecotrs are given by (\ref{tA r2 qroot vec1}) and (\ref{tA r2 qroot vec2}), so the transition matrix in (\ref{tm1}) is given as follows:
\begin{align}
e_{2}^{(a)}e_{12}^{(b)}e_{1}^{(c)}
&=\sum_{i,j,k}
\tm_{i,j,k}^{a,b,c}
e_{1}^{(k)}e_{21}^{(j)}e_{2}^{(i)}
,
\label{tA trans mat def1}
\end{align}
where the domain of indices is specified below.
Hereafter, we consider each case.
Sometimes, we abbreviate simple roots for Dynkin diagrams, but we always assume that they are given as (\ref{tA our dynkin}).
\subsubsection{The case (I) $\vcenter{\hbox{\protect\includegraphics{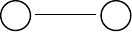}}}$}
In this case, the corresponding symmetrized Cartan matrix is given by
\begin{align}
DA
=
\begin{pmatrix}
2 & -1 \\
-1 & 2
\end{pmatrix}
,
\end{align}
and the corresponding positive roots are given by
\begin{align}
\prer&=\{\alpha_1,\alpha_2,\alpha_1+\alpha_2\}
,
\\
\prir&=\{\}
.
\end{align}
Then, indices are specified as $i,j,k,a,b,c\in\mathbb{Z}_{\geq 0}$ for (\ref{tA trans mat def1}).
The transition matrix in (\ref{tA trans mat def1}) is explicitly given as the consequence of the Kuniba-Okado-Yamada theorem\cite{KOY13}:
\begin{theorem}[\cite{Ser08,KOY13}]\label{KOY13 thm 3dR}
For the quantum superalgebra associated with \ddtwoA{\cc}{\cc}, the transition matrix in (\ref{tA trans mat def1}) is given by
\begin{align}
\tm_{i,j,k}^{a,b,c}
=\tdr_{i,j,k}^{a,b,c}
\label{KOY13 3dR result}
,
\end{align}
where $\tdr$ is the 3D R given by (\ref{3dR mat el}).
\end{theorem}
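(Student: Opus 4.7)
Proof plan. Both simple roots are even in case (I), so $U_q^+(\mathfrak{g}(3|0)) = U_q^+(\mathfrak{sl}_3)$ is the classical quantum nilpotent algebra of type $A_2$, and the PBW bases $B_1, B_2$ coincide with Lusztig's PBW bases associated with the reduced expressions $w_0 = s_1 s_2 s_1$ and $w_0 = s_2 s_1 s_2$ of the longest element of $W(\mathfrak{sl}_3)$. The statement is therefore a direct instance of the Kuniba--Okado--Yamada theorem \cite{KOY13} applied to $A_2$ (first observed for this case by Sergeev \cite{Ser08}), so in principle it suffices to cite the general result.

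For a self-contained derivation, the plan is to characterize $\gamma^{a,b,c}_{i,j,k}$ by a recursion and then verify that $\tdr^{a,b,c}_{i,j,k}$ satisfies the same recursion with the same initial data. First I would record the basic commutation rules among $e_1, e_2, e_{12}, e_{21}$, all deducible from Proposition \ref{tA qroot rels} and Lemma \ref{qcom jacobi}, namely relations of the form $[e_{21},e_1]_q = 0$, $[e_2,e_{12}]_q = 0$, and $e_{12} = e_1 e_2 - q^{-1} e_2 e_1$. Using these, expand the left-hand side of (\ref{tA trans mat def1}) by induction on $a+b+c$: move one factor at a time past the others, replacing products of $e_1, e_2$ that are out of $B_1$-order with $e_{12}$ or $e_{21}$ corrections, and re-collect in the normal form $e_1^{(k)} e_{21}^{(j)} e_2^{(i)}$. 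This produces a closed recursion for $\gamma^{a,b,c}_{i,j,k}$ together with the two conservation laws $i+j = a+b$, $j+k = b+c$ coming from the weight grading on $U_q^+(\mathfrak{sl}_3)_{a \alpha_1 + (b+c) \alpha_2}$ etc.

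Next I would verify that the explicit formula (\ref{3dR mat el}) solves the same recursion. After cancelling common $q$-power and $q$-binomial factors, this reduces to a $q$-Pascal identity for the inner sum over $\lambda, \mu$. Checking the initial case (say $b=0$, where the expansion is trivial and both sides reduce to $\delta^{a}_i \delta^{c}_k$) closes the induction.

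The main obstacle is the combinatorial matching in the second step: the inner sum in (\ref{3dR mat el}) is an alternating $q$-hypergeometric sum and a brute-force comparison with the recursion is delicate. In practice the cleanest route, and the one that generalizes to later sections of the paper, is to bypass this computation by using the $q$-boson presentation of Remark \ref{KO12 obs rem}: the intertwining relations (\ref{3dR defining rel}) determine $\tdr$ uniquely up to the normalization $\tdr(\ket{0}\otimes\ket{0}\otimes\ket{0}) = \ket{0}\otimes\ket{0}\otimes\ket{0}$, and a direct inspection shows that these relations are exactly the relations satisfied by the coefficients $\gamma^{a,b,c}_{i,j,k}$ when one applies $e_1, e_2$ or their duals to both sides of (\ref{tA trans mat def1}) (cf.\ \cite{Sai16,Tan17}). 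Matching the normalization on the highest-weight vector then forces $\gamma = \tdr$.
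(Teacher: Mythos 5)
The paper gives no proof of this statement: it is quoted directly from \cite{Ser08,KOY13} as an instance of the Kuniba--Okado--Yamada theorem for $A_2$, so your first paragraph already coincides with the paper's treatment, and the remainder of your write-up is supplementary. Your sketched self-contained route (weight conservation, low-degree reduction, and the $q$-boson intertwining characterization of Remark \ref{KO12 obs rem}) is sound in outline and is essentially the strategy the paper itself deploys later for the genuinely new cases such as Theorem \ref{my 3dL result}; just note two small slips in your stated relations for case (I): with the paper's conventions one has $e_{12}=e_1e_2-q\,e_2e_1$ (not $q^{-1}$), and the vanishing Serre relations are $[e_{21},e_1]_q=0$ and $[e_{12},e_2]_q=0$, whereas $[e_2,e_{12}]_q=(q-q^{-1})e_{12}e_2\neq 0$.
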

\subsubsection{The case (II) $\vcenter{\hbox{\protect\includegraphics{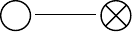}}}$}
In this case, the corresponding symmetrized Cartan matrix is given by
\begin{align}
DA
=
\begin{pmatrix}
2 & -1 \\
-1 & 0
\end{pmatrix}
,
\end{align}
and the corresponding positive roots are given by
\begin{align}
\prer&=\{\alpha_1\}
,
\\
\prir&=\{\alpha_2,\alpha_1+\alpha_2\}
.
\end{align}
Then, indices are specified as $i,j,a,b\in\{0,1\},{\ }k,c\in\mathbb{Z}_{\geq 0}$ for (\ref{tA trans mat def1}).
The transition matrix in (\ref{tA trans mat def1}) is explicitly given as follows:
\begin{theorem}\label{my 3dL result}
For the quantum superalgebra associated with \ddtwoA{\cc}{\cct}, the transition matrix in (\ref{tA trans mat def1}) is given by
\begin{align}
\tm_{i,j,k}^{a,b,c}
=\tdl_{i,j,k}^{a,b,c}
,
\end{align}
where $\tdl$ is the 3D L given by (\ref{3dL mat el}).
\end{theorem}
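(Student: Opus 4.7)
The plan is to reduce the identity $\tm_{i,j,k}^{a,b,c}=\tdl_{i,j,k}^{a,b,c}$ to a small, finite case analysis on $(a,b)\in\{0,1\}^2$, since the parities force $e_2^{(a)},e_{12}^{(b)}$ (resp.\ $e_2^{(i)},e_{21}^{(j)}$) with $a,b,i,j>1$ to vanish by Lemma~\ref{qroot general lemma}(1). Concretely, for the diagram $\protect\ddtwoA{\cc}{\cct}$ the isotropic odd roots are $\alpha_2$ and $\alpha_1+\alpha_2$, so $e_2^2=e_{12}^2=e_{21}^2=0$ (the latter two by (\ref{tA qroot serre 5}) and (\ref{tA qroot serre 5 2})), leaving only the four values of $(a,b)$. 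The free parameter is $c\in\mathbb{Z}_{\geq 0}$.

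First I would assemble the commutation identities that let me move each odd generator past a power of $e_1$. The Serre relations through Corollary~\ref{generalized serre} give $[e_{21},e_1]_q=0$ and $[e_1,e_{12}]_q=0$, which translate to $e_{21}e_1=q^{-1}e_1e_{21}$ and $e_{12}e_1=qe_1e_{12}$. From $e_{21}=e_2e_1-qe_1e_2$ one obtains the fundamental recursion $e_2e_1=qe_1e_2+e_{21}$, and a one-line induction on $c$ yields
\begin{align}
e_2e_1^{(c)}=q^c e_1^{(c)}e_2+e_1^{(c-1)}e_{21}.
\label{planmove}
\end{align}
Similarly, using Lemma~\ref{qroot general lemma}(2) applied to $\alpha_1+2\alpha_2\notin\prr$, one gets $[e_{12},e_2]_q=0$, hence $e_{12}e_2=-qe_2e_{12}$, and a direct computation (exploiting $e_2^2=0$) gives the key identity $e_2e_{12}=e_{21}e_2$.

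Next I would handle the four cases. The case $(a,b)=(0,0)$ is tautological and matches $\tdl_{0,0,k}^{0,0,c}=\delta_{k,c}$. The case $(a,b)=(1,0)$ follows directly from (\ref{planmove}): the two surviving terms produce $\tm_{1,0,c}^{1,0,c}=q^c$ and $\tm_{0,1,c-1}^{1,0,c}=1$, matching $\tdl_{1,0,k}^{1,0,c}=\delta_{k,c}q^{k}$ and $\tdl_{0,1,k}^{1,0,c}=\delta_{k+1,c}$. For $(a,b)=(0,1)$ I substitute $e_{12}=(1-q^2)e_1e_2-qe_{21}$ into $e_{12}e_1^{(c)}=q^ce_1^{(c)}e_{12}$, convert $e_1^{c+1}$ to the divided power $e_1^{(c+1)}$ and use the elementary identity $(1-q^2)q^c[c+1]_q=1-q^{2(c+1)}$ to land exactly on $\tdl_{1,0,c+1}^{0,1,c}=1-q^{2(c+1)}$ and $\tdl_{0,1,c}^{0,1,c}=-q^{c+1}$. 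Finally, for $(a,b)=(1,1)$ I rewrite $e_2e_{12}e_1^{(c)}=e_{21}e_2e_1^{(c)}$, apply (\ref{planmove}), and use $e_{21}e_1^{(c)}=q^{-c}e_1^{(c)}e_{21}$ together with $e_{21}^2=0$; the single surviving term is $e_1^{(c)}e_{21}e_2$, giving $\tm_{1,1,c}^{1,1,c}=1=\tdl_{1,1,k}^{1,1,c}$ (with $k=c$).

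The potential obstacle is essentially bookkeeping rather than structural. The trickiest step is the $(a,b)=(0,1)$ case, where the non-trivial $q$-dependence $1-q^{2(c+1)}$ of the 3D L matrix element must emerge from the normalization $[c+1]_q!$ in the divided power; making this match hinges on the identity $(1-q^2)[c+1]_q=-q(q^{c+1}-q^{-c-1})$, which collapses the expected ratio cleanly. Aside from this, the proof is essentially a guided verification: the Serre relations fix all two-generator commutations and the fact that $e_{12}^2=e_{21}^2=e_2^2=0$ makes the case analysis strictly finite in $(a,b,i,j)$.
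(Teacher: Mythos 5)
Your proof is correct, and for three of the four cases it takes a genuinely different route from the paper's. The paper handles $(a,b)=(0,0),(0,1),(1,0)$ indirectly: after establishing weight conservation it observes that only the even Serre relation $e_1^2e_2-(q+q^{-1})e_1e_2e_1+e_2e_1^2=0$ can intervene, notes this is the same relation as in the non-super case \ddtwoA{\cc}{\cc}, and imports the answer from Theorem \ref{KOY13 thm 3dR} via Lemma \ref{RL similar}; only $(a,b)=(1,1)$ is computed directly, by the induction $e_2e_1e_2e_1^{c}=e_1^{c}e_2e_1e_2$. You instead compute all four cases from scratch using the straightening rules $e_2e_1^{(c)}=q^{c}e_1^{(c)}e_2+e_1^{(c-1)}e_{21}$, $e_{12}e_1=qe_1e_{12}$, $e_{21}e_1=q^{-1}e_1e_{21}$, $e_2e_{12}=e_{21}e_2$ and the nilpotency $e_2^2=e_{12}^2=e_{21}^2=0$; I verified each case and the coefficients match (\ref{3dL mat el}) exactly, including the passage from $q^{c}(1-q^2)[c+1]_q$ to $1-q^{2(c+1)}$ in the $(a,b)=(0,1)$ case. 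What your route buys is self-containedness and transparency: it does not rely on the explicit 3D R formula (\ref{3dR mat el}) or on Lemma \ref{RL similar}, and it exhibits each entry of $\tdl$ as the output of an explicit normal-ordering computation. What the paper's route buys is brevity given that Theorem \ref{KOY13 thm 3dR} is already in hand, and an observation (``only the even Serre relation is active when the $e_2$-degree is at most one'') that recurs in the higher-rank arguments. One minor attribution slip: $[e_{12},e_2]_q=0$ is an instance of Corollary \ref{generalized serre} rather than of Lemma \ref{qroot general lemma}(2), which as stated concerns the $O_1$ root vector $e_{21}$; the identity is true either way and is in fact not needed in your four case computations.
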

\begin{proof}
Multiplying both sides of (\ref{tA trans mat def1}) by $k_2$ from left and $k_2^{-1}$ from right, also using the relation (\ref{qs rel1}), we obtain
\begin{align}
e_{2}^{(a)}e_{12}^{(b)}e_{1}^{(c)}
&=\sum_{i,j,k}
q^{b+c-j-k}\tm_{i,j,k}^{a,b,c}
e_{1}^{(k)}e_{21}^{(j)}e_{2}^{(i)}
.
\label{my 3dL result proof1}
\end{align}
On the other hand, $\tm_{i,j,k}^{a,b,c}=q^{b+c-j-k}\tm_{i,j,k}^{a,b,c}$ holds becasuse $\{e_{1}^{(k)}e_{21}^{(j)}e_{2}^{(i)}\}$ are linearly independent by Theorem \ref{Yam PBW thm}.
This means, if $\tm_{i,j,k}^{a,b,c}\neq 0$, $b+c=j+k$ holds.
Similarly, multiplying both sides of (\ref{tA trans mat def1}) by $k_1$ from left and $k_1^{-1}$ from right, also using the relation (\ref{qs rel1}), we obtain $-a+b+2c=-i+j+2k$ if $\tm_{i,j,k}^{a,b,c}\neq 0$.
Combining them, we eventually obtain the following weight conservation:
\begin{align}
\tm_{i,j,k}^{a,b,c}= 0
\quad
(i+j\neq a+b\quad \mathrm{or}\quad j+k\neq b+c)
.
\label{my 3dL result proof2}
\end{align}
\par
Next, we consider (\ref{tA trans mat def1}) for the cases $(a,b)=(0,0),(0,1),(1,0)$.
For these case, the degreee of $e_2$ is at most 1 in both sides of (\ref{tA trans mat def1}) thanks to the weight conservation (\ref{my 3dL result proof2}).
Now, the relations $e_1,e_2$ satisfy are
\begin{align}
e_{1}^2e_{2}
-(q+q^{-1})e_{1}e_{2}e_{1}
+e_{2}e_{1}^{2}
=0
,\quad
e_{2}^2=0
.
\label{my 3dL result proof3}
\end{align}
Therefore, the only relation one can apply on both sides of (\ref{tA trans mat def1}) is the first relation of (\ref{my 3dL result proof3}) for the cases $(a,b)=(0,0),(0,1),(1,0)$.
The first relation of (\ref{my 3dL result proof3}) is the same as one of the case (I) \ddtwoA{\cc}{\cc}, so by Lemma \ref{RL similar}, we obtain $\tm_{i,j,k}^{a,b,c}=\tdl_{i,j,k}^{a,b,c}$ for the cases $(a,b)=(0,0),(0,1),(1,0)$ and $i+j=a+b,{\ }j+k=b+c$ are satisfied.
\par
Eventually, it is sufficient to show that (\ref{tA trans mat def1}) for the cases $(a,b)=(1,1)$
\begin{align}
e_2e_1e_2e_1^{c}
=\tm_{1,1,c}^{1,1,c}
e_1^{c}e_2e_1e_2
\quad (c\in\mathbb{Z}_{\geq 0})
,
\label{my 3dL result proof4}
\end{align}
holds for $\tm_{1,1,c}^{1,1,c}=\tdl_{1,1,c}^{1,1,c}=1$, where we used the weight conservation (\ref{my 3dL result proof2}) and $e_{2}^2=0$, and multiplied both sides by $[c]_{q}!$.
Actually, we can prove (\ref{my 3dL result proof4}) by induction as follows.
When $c=0$, (\ref{my 3dL result proof4}) trivially holds for $\tm_{1,1,0}^{1,1,0}=1$.
Let us suppose (\ref{my 3dL result proof4}) is true for $c=n$ with $\tm_{1,1,n}^{1,1,n}=1$.
Then, we obtain
\begin{align}
e_{2}e_{1}e_{2}e_{1}^{n+1}
=e_{1}^{n}e_{2}e_{1}e_{2}e_{1}
=\frac{1}{q+q^{-1}}
e_{1}^{n}e_{2}e_{1}^2e_{2}
=e_{1}^{n+1}e_{2}e_{1}e_{2}
,
\label{my 3dL result proof5}
\end{align}
where we used (\ref{my 3dL result proof3}).
Thus, (\ref{my 3dL result proof4}) holds for $c=n+1$ with $\tm_{1,1,n+1}^{1,1,n+1}=1$.
To sum up the above discussion, we then obtain $\tm_{i,j,k}^{a,b,c}=\tdl_{i,j,k}^{a,b,c}$.
\end{proof}
\begin{corollary}\label{my inv cor}
\begin{align}
\tdl^{-1}=\tdl
.
\end{align}
\end{corollary}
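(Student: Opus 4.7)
The plan is to verify $\tdl^2 = \mathrm{id}$ directly from the explicit matrix elements (\ref{3dL mat el}). First, I would observe that $\tdl$ preserves the quantities $i+j$ and $j+k$ on the basis vectors $u_i \otimes u_j \otimes \ket{k}$ (this is visible by inspection of the listed matrix elements, and mirrors the weight conservation (\ref{my 3dL result proof2}) established in the proof of Theorem \ref{my 3dL result}). Consequently, $V \otimes V \otimes F$ decomposes into $\tdl$-invariant subspaces that are either one-dimensional (spanned by $u_0 \otimes u_0 \otimes \ket{k}$ or by $u_1 \otimes u_1 \otimes \ket{k}$) or two-dimensional (spanned by $u_0 \otimes u_1 \otimes \ket{k}$ and $u_1 \otimes u_0 \otimes \ket{k+1}$ for some $k \geq 0$).

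On the one-dimensional subspaces, (\ref{3dL mat el}) gives $\tdl = \mathrm{id}$, so $\tdl^2 = \mathrm{id}$ there. On each two-dimensional subspace, (\ref{3dL mat el}) represents $\tdl$ in the ordered basis $(u_0 \otimes u_1 \otimes \ket{k},\, u_1 \otimes u_0 \otimes \ket{k+1})$ by the block
\[
\begin{pmatrix} -q^{k+1} & 1-q^{2(k+1)} \\ 1 & q^{k+1} \end{pmatrix}.
\]
Squaring yields the identity: the diagonal entries reduce to $q^{2(k+1)} + (1 - q^{2(k+1)}) = 1$, while the off-diagonal entries cancel by the trivial identities $-q^{k+1}\cdot 1 + 1 \cdot q^{k+1} = 0$ and $-q^{k+1}(1-q^{2(k+1)}) + (1-q^{2(k+1)})\cdot q^{k+1} = 0$.

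Assembling these block computations gives $\tdl^2 = \mathrm{id}$, equivalently $\tdl^{-1} = \tdl$. No serious obstacle arises, since the fermionic character of $V$ forces each invariant block to have dimension at most two, reducing the problem to a family of elementary $2 \times 2$ squarings indexed by $k \in \mathbb{Z}_{\geq 0}$. (Conceptually, the result is also forced by combining the general identity (\ref{tm rel}) with the matrix-inverse interpretation of (\ref{tm1})--(\ref{tm2}), which shows that any transition matrix of the form $\tm$ is its own matrix inverse; we omit this viewpoint in favor of the direct verification above.)
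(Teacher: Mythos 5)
Your verification is correct, but it takes a genuinely different route from the paper's. The paper deduces Corollary \ref{my inv cor} purely from the PBW-basis interpretation: it applies the transition (\ref{tA trans mat def1}) and then the reverse transition (\ref{tm2}), uses $\tdm_{x,y,z}^{k,j,i}=\tdl_{z,y,x}^{i,j,k}$, and concludes $\sum_{i,j,k}\tdl_{i,j,k}^{a,b,c}\,\tdl_{z,y,x}^{i,j,k}=\delta_{a,z}\delta_{b,y}\delta_{c,x}$ from the linear independence of the PBW basis (Theorem \ref{Yam PBW thm}) --- i.e.\ exactly the ``conceptual'' argument you relegate to your final parenthesis. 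Your route instead checks $\tdl^{2}=\mathrm{id}$ directly on the explicit matrix elements (\ref{3dL mat el}); this has the advantage of being independent of Theorem \ref{my 3dL result} (it establishes involutivity of the matrix $\tdl$ without invoking its identification as a transition matrix), at the cost of being a computation rather than a structural consequence. One small bookkeeping point: your list of invariant subspaces omits the singleton spanned by $u_1\otimes u_0\otimes\ket{0}$, which has no partner $u_0\otimes u_1\otimes\ket{-1}$; this is harmless, since $\tdl_{1,0,0}^{1,0,0}=q^{0}=1$ and the coefficient $1-q^{2k}$ of the would-be off-diagonal image vanishes at $k=0$, so $\tdl$ fixes that vector and the conclusion is unaffected. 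With that vector accounted for, your $2\times 2$ blocks are exactly right (each has trace $0$ and determinant $-1$, hence squares to the identity), and the assembled statement $\tdl^{-1}=\tdl$ follows.
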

\begin{proof}
By using (\ref{tA trans mat def1}) and (\ref{tm rel}), we obtain
\begin{align}
 e_{2}^{(a)}e_{12}^{(b)}e_{1}^{(c)}
 =\sum_{i,j,k}
 \tdl_{i,j,k}^{a,b,c}
 e_{1}^{(k)}e_{21}^{(j)}e_{2}^{(i)}
 &=\sum_{i,j,k}\sum_{x,y,z}
 \tdl_{i,j,k}^{a,b,c}
 \tdm_{x,y,z}^{k,j,i}
 e_{2}^{(z)}e_{12}^{(y)}e_{1}^{(x)}
 \\
 &=\sum_{i,j,k}\sum_{x,y,z}
 \tdl_{i,j,k}^{a,b,c}
 \tdl_{z,y,x}^{i,j,k}
 e_{2}^{(z)}e_{12}^{(y)}e_{1}^{(x)}
 ,
\label{inv cor proof1}
\end{align}
Here, we omit the domain of indices but it is easily specified.
Since $\{e_{2}^{(a)}e_{12}^{(b)}e_{1}^{(c)}\}$ are linearly independent by Theorem \ref{Yam PBW thm}, we obtain
\begin{align}
\sum_{i,j,k}\sum_{x,y,z}
 \tdl_{i,j,k}^{a,b,c}
 \tdl_{z,y,x}^{i,j,k}
=\delta_{a,z}\delta_{b,y}\delta_{c,x}
.
\end{align}
This finishes the proof.
\end{proof}
\subsubsection{The case (III) $\vcenter{\hbox{\protect\includegraphics{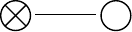}}}$}
In this case, the corresponding symmetrized Cartan matrix is given by
\begin{align}
DA
=
\begin{pmatrix}
0 & -1 \\
-1 & 2
\end{pmatrix}
,
\end{align}
and the corresponding positive roots are given by
\begin{align}
\prer&=\{\alpha_2\}
,
\\
\prir&=\{\alpha_1,\alpha_1+\alpha_2\}
.
\end{align}
Then, indices are specified as $j,k,b,c\in\{0,1\},{\ }i,a\in\mathbb{Z}_{\geq 0}$ for (\ref{tA trans mat def1}).
The transition matrix in (\ref{tA trans mat def1}) is explicitly given as follows:
\begin{corollary}\label{my 3dL result 2}
For the quantum superalgebra associated with \ddtwoA{\cct}{\cc}, the transition matrix in (\ref{tA trans mat def1}) is given by
\begin{align}
\tm_{i,j,k}^{a,b,c}
=\tdm_{i,j,k}^{a,b,c}
,
\end{align}
where $\tdm$ are the 3D M given by (\ref{3dM mat el}).
\end{corollary}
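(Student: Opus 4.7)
The plan is to deduce the identity from Theorem \ref{my 3dL result} together with the index-reversal relation (\ref{tm rel}), by exploiting the Dynkin diagram automorphism that exchanges the two simple roots. The Cartan matrix and parity data of cases (II) and (III) coincide under the label swap $\alpha_1 \leftrightarrow \alpha_2$, so I would first introduce an algebra isomorphism $\phi$ from the nilpotent subalgebra in case (II) to that in case (III) defined by $\phi(e_1) = e_2$ and $\phi(e_2) = e_1$. Its well-definedness on the defining relations (\ref{qs rel3}) and (\ref{qs rel5}) is immediate because those relations are symmetric under the label swap once parities are swapped accordingly, and no relation (\ref{qs rel6}) arises in rank 2. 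A short computation from (\ref{tA notations}) then yields $\phi(e_{12}) = e_{21}$ and $\phi(e_{21}) = e_{12}$.

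Next, I would apply $\phi$ term by term to the case (II) transition equation $e_{2}^{(a)} e_{12}^{(b)} e_{1}^{(c)} = \sum_{i,j,k} \tdl_{i,j,k}^{a,b,c} \, e_{1}^{(k)} e_{21}^{(j)} e_{2}^{(i)}$ provided by Theorem \ref{my 3dL result}. Since in both cases all $d_i$ equal $1$ and each quantum divided power $e_\beta^{(n)}$ is normalized by $[n]_{q,(-1)^{p(\beta)}}!$ depending only on the parity of $\beta$, and since $\phi$ matches these parities correctly, it sends divided powers to divided powers. The resulting identity inside case (III) reads $e_{1}^{(a)} e_{21}^{(b)} e_{2}^{(c)} = \sum_{i,j,k} \tdl_{i,j,k}^{a,b,c} \, e_{2}^{(k)} e_{12}^{(j)} e_{1}^{(i)}$, which is precisely the defining equation (\ref{tm2}) of $\tmt$ in case (III) with $A = (a,b,c)$ and $B = (i,j,k)$. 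We thus read off $\tmt_{(i,j,k)}^{(a,b,c)} = \tdl_{i,j,k}^{a,b,c}$.

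Finally, the relation (\ref{tm rel}) applied to case (III) gives $\tm_{(i,j,k)}^{(a,b,c)} = \tmt_{(k,j,i)}^{(c,b,a)} = \tdl_{k,j,i}^{c,b,a}$, and the definition (\ref{3dM mat el}) of the 3D M identifies $\tdl_{k,j,i}^{c,b,a}$ with $\tdm_{i,j,k}^{a,b,c}$, completing the proof. The only delicate point is the compatibility of divided-power normalizations under $\phi$, but since the two Dynkin diagrams have matching $d_i$ and matching parities up to the label swap, this bookkeeping is routine rather than a real obstacle; the main conceptual content is simply the observation that case (III) is the ``reversed'' version of case (II) and therefore governed by $\tdm = \tdl^{\mathrm{op}}$ rather than by $\tdl$ itself.
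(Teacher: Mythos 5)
Your argument is correct and is essentially the paper's own proof: the paper likewise defines the homomorphism $e_1\mapsto e_2$, $e_2\mapsto e_1$ from the case (II) algebra to the case (III) algebra and then invokes Theorem \ref{my 3dL result} together with (\ref{tm rel}), leaving the bookkeeping you spell out as ``easy to see.'' Your verification of the divided-power normalizations and the identification $\tm_{i,j,k}^{a,b,c}=\tdl_{k,j,i}^{c,b,a}=\tdm_{i,j,k}^{a,b,c}$ fills in exactly those omitted details.
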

\begin{proof}
$h:\UtAp[2|1]\to\UtAp[1|2]$ defined by $e_{1}\mapsto e_{2}$, $e_{2}\mapsto e_{1}$ gives an algebra homomorphism, where the former algebra is associated with \ddtwoA{\cc}{\cct} and the latter is associated with \ddtwoA{\cct}{\cc}.
Then, by Theorem \ref{my 3dL result} and (\ref{tm rel}), it is easy to see that the statement holds.
\end{proof}
\subsubsection{The case (IV) $\vcenter{\hbox{\protect\includegraphics{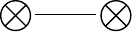}}}$}
In this case, the corresponding symmetrized Cartan matrix is given by
\begin{align}
DA
=
\begin{pmatrix}
0 & -1 \\
-1 & 0
\end{pmatrix}
,
\end{align}
and the corresponding positive roots are given by
\begin{align}
\prer&=\{\alpha_1+\alpha_2\}
,
\\
\prir&=\{\alpha_1,\alpha_2\}
.
\end{align}
Then, indices are specified as $i,k,a,c\in\{0,1\},{\ }j,b\in\mathbb{Z}_{\geq 0}$ for (\ref{tA trans mat def1}).
We set $\tdn(q)\in\mathrm{End}(V\otimes F\otimes V)$ by
\begin{align}
&\tdn(q)(u_i\otimes \ket{j}\otimes u_k)
=\sum_{a,c\in\{0,1\},b\in\mathbb{Z}_{\geq 0}}
\tdn(q)_{i,j,k}^{a,b,c}
u_a\otimes \ket{b}\otimes u_c
,
\label{3dN}
\\
\begin{split}
&\tdn(q)_{0,j,0}^{0,b,0}=\delta_{j,b}q^{j}
,\quad
\tdn(q)_{1,j,1}^{1,b,1}=-\delta_{j,b}q^{j+1}
,\quad
\tdn(q)_{0,j,1}^{0,b,1}=\tdn(q)_{1,j,0}^{1,b,0}=\delta_{j,b}
,\\
&\tdn(q)_{1,j,1}^{0,b,0}=\delta_{j+1,b}q^{j}(1-q^{2})
,\quad
\tdn(q)_{0,j,0}^{1,b,1}=\delta_{j-1,b}[j]_q
,
\end{split}
\label{3dN mat el}
\end{align}
where $\tdn_{i,j,k}^{a,b,c}=0$ other than (\ref{3dN mat el}).
For simplicity, we also use the abbreviated notation $\tdn=\tdn(q)$.
We simply call $\tdn$ as the 3D N.
Then, the transition matrix in (\ref{tA trans mat def1}) is explicitly given as follows:
\begin{theorem}\label{my 3dN result}
For the quantum superalgebra associated with \ddtwoA{\cct}{\cct}, the transition matrix in (\ref{tA trans mat def1}) is given by
\begin{align}
\tm_{i,j,k}^{a,b,c}
=\tdn_{i,j,k}^{a,b,c}
,
\end{align}
where $\tdn$ is the 3D N given by (\ref{3dN mat el}).
\end{theorem}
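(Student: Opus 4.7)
The plan is to adapt the strategy used in the proof of Theorem \ref{my 3dL result}. First I would establish weight conservation: multiplying both sides of (\ref{tA trans mat def1}) by $k_1^{\pm 1}$ and $k_2^{\pm 1}$ and using the linear independence of the PBW basis from Theorem \ref{Yam PBW thm} forces $\tm_{i,j,k}^{a,b,c} = 0$ unless $i+j = a+b$ and $j+k = b+c$. Because $a,c,i,k \in \{0,1\}$, this restriction immediately cuts the unknown coefficients down to the six families of matrix elements recorded in (\ref{3dN mat el}).

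Next I would derive the basic commutation identities in $\UtAp[2|1]$ that follow from $e_1^2 = e_2^2 = 0$ (Lemma \ref{qroot general lemma}(1)) together with the definitions $e_{12} = e_1 e_2 + q e_2 e_1$ and $e_{21} = q e_1 e_2 + e_2 e_1$. A short direct calculation yields
\begin{align*}
e_1 e_{12} &= q\, e_{12}\, e_1, & e_{12}\, e_2 &= q\, e_2\, e_{12}, \\
e_{21}\, e_1 &= q\, e_1\, e_{21}, & e_2\, e_{21} &= q\, e_{21}\, e_2,
\end{align*}
together with the change-of-root-vector identity $e_{12} = q\, e_{21} + (1-q^2)\, e_1 e_2$; a consequence of these is $e_{12}\, e_{21} = e_{21}\, e_{12}$.

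The heart of the argument is the expansion
\[
e_{12}^{(b)} = q^{b}\, e_{21}^{(b)} + q^{b-1}(1-q^{2})\, e_{1}\, e_{21}^{(b-1)}\, e_{2},
\]
which I would prove by induction on $b$ using the above identities; this directly supplies the coefficients $\tdn_{0,b,0}^{0,b,0} = q^b$ and $\tdn_{1,b-1,1}^{0,b,0} = q^{b-1}(1-q^2)$. Multiplying this formula by $e_{2}^{(a)}$ from the left and $e_{1}^{(c)}$ from the right and repeatedly applying the commutation rules (in particular $e_{21}^{b}\, e_{1} = q^{b}\, e_{1}\, e_{21}^{b}$ and $e_{2}\, e_{21}^{b} = q^{b}\, e_{21}^{b}\, e_{2}$, together with $e_1^2 = e_2^2 = 0$) handles the three remaining families $(a,c) = (1,0), (0,1), (1,1)$. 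The coefficients that arise combine via the identity $q^{2b} + q^{b-1}(1-q^{2})[b]_q = 1$, after which they match exactly the entries of $\tdn$ listed in (\ref{3dN mat el}).

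The main obstacle is the induction step for the expansion of $e_{12}^{(b)}$: one must commute $e_{12}$ past the mixed expression $e_{1} e_{21}^{(b-1)} e_{2}$ and then reexpress the result back in the second PBW basis, which requires repeatedly toggling between $e_{12}$ and $e_{21}$ via the change-of-root-vector identity. Unlike the non-super case, the quantum root vectors $e_{12}$ and $e_{21}$ are distinct elements of the weight-$(\alpha_1+\alpha_2)$ space of $\UtAp[2|1]$ (they coincide only at $q=1$), so this conversion is an essential new algebraic feature absent from Theorem \ref{KOY13 thm 3dR}; once one has the right inductive form, however, all four boundary cases collapse to the prescribed entries of the 3D N.
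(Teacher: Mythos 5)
Your proposal is correct and follows essentially the same route as the paper's proof: weight conservation via conjugation by $k_1,k_2$, the relations $e_1^2=e_2^2=0$, an inductive expansion of $e_{12}^{(b)}$, and a case-by-case check over $(a,c)\in\{0,1\}^2$. The only difference is bookkeeping: the paper writes $e_{12}^{b}=\mu_1(b)+q^{b}\mu_2(b)$ with $\mu_1(n)=(e_1e_2)^n$, $\mu_2(n)=(e_2e_1)^n$ and compares coefficients of these linearly independent monomials on both sides, whereas you express $e_{12}^{(b)}$ directly in the target PBW basis as $q^{b}e_{21}^{(b)}+q^{b-1}(1-q^{2})e_{1}e_{21}^{(b-1)}e_{2}$ (the two expansions being equivalent via your identity $q^{2b}+q^{b-1}(1-q^{2})[b]_{q}=1$) and then commute $e_2^{(a)}$ and $e_1^{(c)}$ through; the underlying computation is identical.
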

\begin{proof}
By the same discussion as (\ref{my 3dL result proof2}), we obtain the following weight conservation:
\begin{align}
\tm_{i,j,k}^{a,b,c}=0
\quad
(i+j\neq a+b \quad \mathrm{or}\quad j+k\neq b+c)
.
\label{my 3dN result proof1}
\end{align}
Now, the relations $e_1,e_2$ satisfy are $e_{1}^2=e_{2}^2=0$ and $e_{12},e_{21}$ are given by $e_{12}=e_{1}e_{2}+qe_{2}e_{1}$, $e_{21}=e_{2}e_{1}+qe_{1}e_{2}$.
We introduce the following notation:
\begin{align}
\mu_1(n)=\overbrace{e_{1}e_{2}\cdots e_{1}e_{2}}^{2n}
,\quad
\mu_2(n)=\overbrace{e_{2}e_{1}\cdots e_{2}e_{1}}^{2n}
.
\label{my 3dN result proof2}
\end{align}
We then explicitly write down $e_{12}^b,e_{21}^{j}$ as
\begin{align}
e_{12}^{b}=\mu_1(b)+q^{b}\mu_2(b)
,\quad
e_{21}^{j}=\mu_2(j)+q^{j}\mu_1(j)
,
\label{my 3dN result proof3}
\end{align}
Actually, they are easily shown by induction.
Hereafter, we consider each case for values of $(a,c)$ in (\ref{tA trans mat def1}).
\begin{enumerate}[(i)]
\item\label{my 3dN result proof fist case}
For the case $(a,c)=(0,0)$, by the weight conservation (\ref{my 3dN result proof1}), (\ref{tA trans mat def1}) is explicitly written down as
\begin{align}
\frac{\mu_1(b)+q^{b}\mu_2(b)}{[b]_{q}!}
=
\tm_{0,b,0}^{0,b,0}
\frac{\mu_2(b)+q^{b}\mu_1(b)}{[b]_{q}!}
+\tm_{1,b-1,1}^{0,b,0}
\frac{\mu_1(b)}{[b-1]_{q}!}
,
\label{my 3dN result proof4}
\end{align}
where we use (\ref{my 3dN result proof3}) and $e_{1}^2=e_{2}^2=0$.
By comparing coefficients of each monomial $\mu_{1}(b),\mu_{2}(b)$, we obtain
\begin{align}
\frac{1}{[b]_{q}!}
=\tm_{0,b,0}^{0,b,0}\frac{q^{b}}{[b]_{q}!}+\tm_{1,b-1,1}^{0,b,0}\frac{1}{[b-1]_{q}!}
,\quad
\frac{q^{b}}{[b]_{q}!}
=\tm_{0,b,0}^{0,b,0}\frac{1}{[b]_{q}!}
.
\label{my 3dN result proof5}
\end{align}
It is easy to see that $\gamma_{0,b,0}^{0,b,0}=\tdn_{0,b,0}^{0,b,0}$ and $\gamma_{1,b-1,1}^{0,b,0}=\tdn_{1,b-1,1}^{0,b,0}$ give the solution of (\ref{my 3dN result proof5}).
\item
For the case $(a,c)=(0,1)$, by the weight conservation (\ref{my 3dN result proof1}), (\ref{tA trans mat def1}) is explicitly written down as
\begin{align}
\mu_{1}(b)e_{1}
=\tm_{0,b,1}^{0,b,1}
e_{1}\mu_{2}(b)
,
\label{my 3dN result proof6}
\end{align}
where we used (\ref{my 3dN result proof3}) and $e_{1}^2=e_{2}^2=0$.
We then obtain $\tm_{0,b,1}^{0,b,1}=1=\tdn_{0,b,1}^{0,b,1}$ because $\mu_{1}(b)e_{1}=e_{1}\mu_{2}(b)$.
\item
For the case $(a,c)=(1,0)$, by the weight conservation (\ref{my 3dN result proof1}), (\ref{tA trans mat def1}) is explicitly written down as
\begin{align}
e_{2}\mu_{1}(b)
=\tm_{1,b,0}^{1,b,0}
\mu_{2}(b)e_{2}
,
\label{my 3dN result proof7}
\end{align}
where we used (\ref{my 3dN result proof3}) and $e_{1}^2=e_{2}^2=0$.
We then obtain $\tm_{1,b,0}^{1,b,0}=1=\tdn_{1,b,0}^{1,b,0}$ because $e_{2}\mu_{1}(b)=\mu_{2}(b)e_{2}$.
\item
For the case $(a,c)=(1,1)$, by the weight conservation (\ref{my 3dN result proof1}), (\ref{tA trans mat def1}) is explicitly written down as
\begin{align}
\frac{\mu_2(b+1)}{[b]_{q}!}
=
\tm_{1,b,1}^{1,b,1}
\frac{\mu_1(b+1)}{[b]_{q}!}
+\tm_{0,b+1,0}^{1,b,1}
\frac{\mu_2(b+1)+q^{b+1}\mu_1(b+1)}{[b+1]_{q}!}
,
\label{my 3dN result proof8}
\end{align}
where we used (\ref{my 3dN result proof3}) and $e_{1}^2=e_{2}^2=0$.
Similarly to the case (i), it is easy to see that $\gamma_{1,b,1}^{1,b,1}=\tdn_{1,b,1}^{1,b,1}$ and $\gamma_{0,b+1,0}^{1,b,1}=\tdn_{0,b+1,0}^{1,b,1}$ give the solution of (\ref{my 3dN result proof8}).
\end{enumerate}
\end{proof}
\begin{corollary}\label{my inv cor 3dN}
\begin{align}
\tdn^{-1}=\tdn
.
\end{align}
\end{corollary}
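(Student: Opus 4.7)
The plan is to follow exactly the pattern of the proof of Corollary \ref{my inv cor}, composing the transition matrix with itself via the two PBW bases and invoking linear independence from Theorem \ref{Yam PBW thm}. The only structural difference is that here we do not need to introduce an auxiliary object (analogous to $\tdm$ for the 3D L case), because the Dynkin diagram \ddtwoA{\cct}{\cct} is symmetric under interchange of its two nodes, so the transition in the reverse direction will be expressible in terms of $\tdn$ itself.

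Concretely, I would proceed as follows. First, apply Theorem \ref{my 3dN result} to rewrite
\begin{align*}
e_{2}^{(a)}e_{12}^{(b)}e_{1}^{(c)}
=\sum_{i,j,k}\tdn_{i,j,k}^{a,b,c}\,e_{1}^{(k)}e_{21}^{(j)}e_{2}^{(i)}.
\end{align*}
Next, expand each term $e_{1}^{(k)}e_{21}^{(j)}e_{2}^{(i)}=E_{1}^{(k,j,i)}$ in the opposite basis using (\ref{tm2}) and (\ref{tm rel}):
\begin{align*}
e_{1}^{(k)}e_{21}^{(j)}e_{2}^{(i)}
=\sum_{x,y,z}\tmt_{x,y,z}^{k,j,i}\,e_{2}^{(z)}e_{12}^{(y)}e_{1}^{(x)}
=\sum_{x,y,z}\tm_{z,y,x}^{i,j,k}\,e_{2}^{(z)}e_{12}^{(y)}e_{1}^{(x)}
=\sum_{x,y,z}\tdn_{z,y,x}^{i,j,k}\,e_{2}^{(z)}e_{12}^{(y)}e_{1}^{(x)},
\end{align*}
where the last equality again uses Theorem \ref{my 3dN result}. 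Substituting back yields
\begin{align*}
e_{2}^{(a)}e_{12}^{(b)}e_{1}^{(c)}
=\sum_{x,y,z}\Bigl(\sum_{i,j,k}\tdn_{i,j,k}^{a,b,c}\,\tdn_{z,y,x}^{i,j,k}\Bigr)e_{2}^{(z)}e_{12}^{(y)}e_{1}^{(x)}.
\end{align*}

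Since $\{e_{2}^{(z)}e_{12}^{(y)}e_{1}^{(x)}\}$ is a basis of $\UtAp[1|2]$ by Theorem \ref{Yam PBW thm}, comparing coefficients gives
\begin{align*}
\sum_{i,j,k}\tdn_{i,j,k}^{a,b,c}\,\tdn_{z,y,x}^{i,j,k}=\delta_{a,z}\delta_{b,y}\delta_{c,x},
\end{align*}
which is precisely the identity $\tdn\,\tdn=\mathrm{id}$ on $V\otimes F\otimes V$, proving $\tdn^{-1}=\tdn$. There is no real obstacle: the argument is a two-step composition followed by linear independence. The one point deserving care is the bookkeeping of index reversals in (\ref{tm rel}), which is precisely why $\tdn_{z,y,x}^{i,j,k}$ (with reversed tensor factors) appears rather than $\tdn_{x,y,z}^{k,j,i}$.
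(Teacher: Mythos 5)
Your proposal is correct and is exactly the paper's argument: the paper's proof of this corollary simply says it is shown in the same way as Corollary \ref{my inv cor}, and your two-step composition via (\ref{tm2}), (\ref{tm rel}) and Theorem \ref{my 3dN result}, followed by linear independence from Theorem \ref{Yam PBW thm}, is precisely that argument spelled out, including the correct observation that no auxiliary operator (the analogue of $\tdm$) is needed since $V\otimes F\otimes V$ is palindromic. The only nit is the label $\UtAp[1|2]$, which for the case (IV) realization $\epsilon_1-\delta_2,\ \delta_2-\epsilon_3$ should read $\UtAp[2|1]$; this has no bearing on the proof.
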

\begin{proof}
This is shown exactly in the same way as Corollary \ref{my inv cor}.
\end{proof}
\begin{remark}\label{3dl 3dn rem}
We find that the matrix elements of the 3D L and 3D N satisfy the following relation:
\begin{align}
\tdn_{i,j,k}^{a,b,c}
=\frac{[j]_{q}!}{[b]_{q}!}\tdl_{1-a,c,b}^{1-i,k,j}
.
\label{3dl 3dn rel}
\end{align}
It is naturally thought that (\ref{3dl 3dn rel}) originates from the fact that both \ddtwoA{\cc}{\cct} and \ddtwoA{\cct}{\cct} give the Dynkin diagrams of $\tA[2|1]$ as Table \ref{tA all dynkin}.
However, the origin of the relation (\ref{3dl 3dn rel}) in terms of the PBW basis is unknown to us.
We do not deal with this issue in this paper, but it is interesting whether, in general, transition matrices associated with a pair of Cartan data mapped to each other via odd reflections are attributed to each other or not.
For example, we will also establish a relation between transition matrices associated with such pair of Cartan data for type B.
See (\ref{my3dY proof9}).
\end{remark}
\subsection{Transition matrices of PBW bases of type A of rank 3 and tetrahedron equation}\label{sec 43}
In this section, we consider the transition matrix of the PBW bases of $\UtAp$ of rank 3, so $m+n=4$.
All possible Dynkin diagrams associated with admissible realizations are given in Table \ref{tA all dynkin r3}.
\begin{table}[htb]
\centering
\caption{}
\label{tA all dynkin r3}
\begin{tabular}{c|c}\hline
$\GG$ & Dynkin diagram \\\hline
$\tA[4|0]$ &  \ddthreeAWul{\cc}{\cc}{\cc}{\ee_1-\ee_2}{\ee_2-\ee_3}{\ee_3-\ee_4}\\\hline
\multirow{3}{*}[-0.8em]{$\tA[3|1]$} &  \ddthreeAWul{\cc}{\cc}{\cct}{\ee_1-\ee_2}{\ee_2-\ee_3}{\ee_3-\dd_4}
\quad \ddthreeAWul{\cc}{\cct}{\cct}{\ee_1-\ee_2}{\ee_2-\dd_3}{\dd_3-\ee_4}
\\
&\\
&  \ddthreeAWul{\cct}{\cct}{\cc}{\ee_1-\dd_2}{\dd_2-\ee_3}{\ee_3-\ee_4}
\quad \ddthreeAWul{\cct}{\cc}{\cc}{\dd_1-\ee_2}{\ee_2-\ee_3}{\ee_3-\ee_4}
\\\hline
\multirow{3}{*}[-2.4em]{$\tA[2|2]$} &  \ddthreeAWul{\cc}{\cct}{\cc}{\ee_1-\ee_2}{\ee_2-\dd_3}{\dd_3-\dd_4}
\quad \ddthreeAWul{\cct}{\cct}{\cct}{\ee_1-\dd_2}{\dd_2-\ee_3}{\ee_3-\dd_4}
\\
&\\
&  \ddthreeAWul{\cct}{\cc}{\cct}{\dd_1-\ee_2}{\ee_2-\ee_3}{\ee_3-\dd_4}
\quad \ddthreeAWul{\cct}{\cc}{\cct}{\ee_1-\dd_2}{\dd_2-\dd_3}{\dd_3-\ee_4}
\\
&\\
&  \ddthreeAWul{\cct}{\cct}{\cct}{\dd_1-\ee_2}{\ee_2-\dd_3}{\dd_3-\ee_4}
\quad \ddthreeAWul{\cc}{\cct}{\cc}{\dd_1-\dd_2}{\dd_2-\ee_3}{\ee_3-\ee_4}
\\\hline
\multirow{3}{*}[-0.8em]{$\tA[1|3]$} &  \ddthreeAWul{\cct}{\cc}{\cc}{\ee_1-\dd_2}{\dd_2-\dd_3}{\dd_3-\dd_4}
\quad \ddthreeAWul{\cct}{\cct}{\cc}{\dd_1-\ee_2}{\ee_2-\dd_3}{\dd_3-\dd_4}
\\
&\\
&  \ddthreeAWul{\cc}{\cct}{\cct}{\dd_1-\dd_2}{\dd_2-\ee_3}{\ee_3-\dd_4}
\quad \ddthreeAWul{\cc}{\cc}{\cct}{\dd_1-\dd_2}{\dd_2-\dd_3}{\dd_3-\ee_4}
\\\hline
$\tA[0|4]$ &  \ddthreeAWul{\cc}{\cc}{\cc}{\dd_1-\dd_2}{\dd_2-\dd_3}{\dd_3-\dd_4}\\\hline
\end{tabular}
\end{table}
In Table \ref{tA all dynkin r3}, $(\Pi,p)$ associated with same Dynkin diagrams are exactly same.
We then only consider the quantum superalgebras associated with the following Dynkin diagrams given by (\ref{tA our dynkin r3}):
\begin{equation}
\begin{alignedat}{4}
&\mathrm{(I)}&{\ }&\ddthreeAWul{\cc}{\cc}{\cc}{\ee_1-\ee_2}{\ee_2-\ee_3}{\ee_3-\ee_4}
&\qquad
&\mathrm{(II)}&{\ }&\ddthreeAWul{\cc}{\cc}{\cct}{\ee_1-\ee_2}{\ee_2-\ee_3}{\ee_3-\dd_4}
\\
&\mathrm{(III)}&{\ }&\ddthreeAWul{\cc}{\cct}{\cct}{\ee_1-\ee_2}{\ee_2-\dd_3}{\dd_3-\ee_4}
&\qquad
&\mathrm{(IV)}&{\ }&\ddthreeAWul{\cc}{\cct}{\cc}{\ee_1-\ee_2}{\ee_2-\dd_3}{\dd_3-\dd_4}
\\
&\mathrm{(V)}&{\ }&\ddthreeAWul{\cct}{\cc}{\cct}{\ee_1-\dd_2}{\dd_2-\dd_3}{\dd_3-\ee_4}
&\qquad
&\mathrm{(VI)}&{\ }&\ddthreeAWul{\cct}{\cct}{\cct}{\ee_1-\dd_2}{\dd_2-\ee_3}{\ee_3-\dd_4}
\label{tA our dynkin r3}
\end{alignedat}
\end{equation}
where (I), (II) and (IV) are distinguished, in the sense defined in Section \ref{sec 22}.
Here, we omit the following Dynkin diagrams given by (\ref{tA our dynkin r3 omit}), because the cases of (VII) and (VIII) are easily attributed to ones of (II) and (III), respectively.
\begin{equation}
\begin{alignedat}{4}
&\mathrm{(VII)}&{\ }&\ddthreeAWul{\cct}{\cc}{\cc}{\ee_1-\dd_2}{\dd_2-\dd_3}{\dd_3-\dd_4}
&\qquad
&\mathrm{(VIII)}&{\ }&\ddthreeAWul{\cct}{\cct}{\cc}{\ee_1-\dd_2}{\dd_2-\ee_3}{\ee_3-\ee_4}
\label{tA our dynkin r3 omit}
\end{alignedat}
\end{equation}
For the case of rank 3, quantum root vectors are given by (\ref{tA r3 qroot vec1}) and (\ref{tA r3 qroot vec2}), so the transition matrix in (\ref{tm1}) is given as follows:
\begin{align}
e_3^{(o_1)}e_{23}^{(o_2)}e_{2}^{(o_3)}e_{123}^{(o_4)}e_{12}^{(o_5)}e_{1}^{(o_6)}
=
\sum_{i_1,i_2,i_3,i_4,i_5,i_6}
\gamma_{i_1,i_2,i_3,i_4,i_5,i_6}^{o_1,o_2,o_3,o_4,o_5,o_6}
e_1^{(i_6)}e_{21}^{(i_5)}e_{321}^{(i_4)}e_{2}^{(i_3)}e_{32}^{(i_2)}e_{3}^{(i_1)}
,
\label{tA trans mat def r3}
\end{align}
where the domain of indices is specified below.
In order to attribute the transition matrix in (\ref{tA trans mat def r3}) to a composition of transition matrices of rank 2, we exploit the following transition matrices $\stm^{(x)}$:
\begin{align}
e_{2}^{(a)}e_{12}^{(b)}e_{1}^{(c)}
&=\sum_{i,j,k}
\stm^{(2|1)}{}_{i,j,k}^{a,b,c}
e_{1}^{(k)}e_{21}^{(j)}e_{2}^{(i)}
,
\label{tA trans mat r3 eq1}
\\
e_{3}^{(a)}e_{23}^{(b)}e_{2}^{(c)}
&=\sum_{i,j,k}
\stm^{(3|2)}{}_{i,j,k}^{a,b,c}
e_{2}^{(k)}e_{32}^{(j)}e_{3}^{(i)}
,
\label{tA trans mat r3 eq2}
\\
e_{23}^{(a)}e_{123}^{(b)}e_{1}^{(c)}
&=\sum_{i,j,k}
\stm^{(23|1)}{}_{i,j,k}^{a,b,c}
e_{1}^{(k)}e_{(23)1}^{(j)}e_{23}^{(i)}
,
\label{tA trans mat r3 eq3}
\\
e_{32}^{(a)}e_{1(32)}^{(b)}e_{1}^{(c)}
&=\sum_{i,j,k}
\stm^{(32|1)}{}_{i,j,k}^{a,b,c}
e_{1}^{(k)}e_{321}^{(j)}e_{32}^{(i)}
,
\label{tA trans mat r3 eq4}
\\
e_{3}^{(a)}e_{123}^{(b)}e_{12}^{(c)}
&=\sum_{i,j,k}
\stm^{(3|12)}{}_{i,j,k}^{a,b,c}
e_{12}^{(k)}e_{3(12)}^{(j)}e_{3}^{(i)}
,
\label{tA trans mat r3 eq5}
\\
e_{3}^{(a)}e_{(21)3}^{(b)}e_{21}^{(c)}
&=\sum_{i,j,k}
\stm^{(3|21)}{}_{i,j,k}^{a,b,c}
e_{21}^{(k)}e_{321}^{(j)}e_{3}^{(i)}
,
\label{tA trans mat r3 eq6}
\end{align}
where the domain of indices will be specified and explicit formulae of $\stm^{(x)}$ are given for each case in (\ref{tA our dynkin r3}).
\par
Then, by using $\stm^{(x)}$, (\ref{tA qroot rel1}) $\sim$ (\ref{tA qroot rel3}) and (\ref{tA qroot rel1 2}) $\sim$ (\ref{tA qroot rel3 2}), we can construct the transition matrix in (\ref{tA trans mat def r3}) in two ways.
The first way is given by
\begin{align}
&\underline{e_3^{(o_1)}e_{23}^{(o_2)}e_{2}^{(o_3)}}e_{123}^{(o_4)}e_{12}^{(o_5)}e_{1}^{(o_6)}
\label{tA lhs1 eq1}
\\
&=
\sum
\stm^{(3|2)}{}_{x_1,x_2,x_3}^{o_1,o_2,o_3}
e_2^{(x_3)}e_{32}^{(x_2)}\underline{e_{3}^{(x_1)}e_{123}^{(o_4)}e_{12}^{(o_5)}}e_{1}^{(o_6)}
\label{tA lhs1 eq2}
\\
&=
\sum
\stm^{(3|2)}{}_{x_1,x_2,x_3}^{o_1,o_2,o_3}
\stm^{(3|12)}{}_{i_1,x_4,x_5}^{x_1,o_4,o_5}
e_2^{(x_3)}
\underline{e_{32}^{(x_2)}e_{12}^{(x_5)}}
{\ }
\underline{e_{3(12)}^{(x_4)}}
{\ }
\underline{e_{3}^{(i_1)}e_{1}^{(o_6)}}
\label{tA lhs1 eq3}
\\
\begin{split}
&=
\sum
(-1)^{\phaseI (i_1o_6+x_4)+\phaseII x_2x_5}
\stm^{(3|2)}{}_{x_1,x_2,x_3}^{o_1,o_2,o_3}
\stm^{(3|12)}{}_{i_1,x_4,x_5}^{x_1,o_4,o_5}
\\
&\spaceD
\times
e_2^{(x_3)}e_{12}^{(x_5)}\underline{e_{32}^{(x_2)}e_{1(32)}^{(x_4)}e_{1}^{(o_6)}}e_{3}^{(i_1)}
\end{split}
\label{tA lhs1 eq4}
\\
\begin{split}
&=
\sum
(-1)^{\phaseI (i_1o_6+x_4)+\phaseII x_2x_5}
\stm^{(3|2)}{}_{x_1,x_2,x_3}^{o_1,o_2,o_3}
\stm^{(3|12)}{}_{i_1,x_4,x_5}^{x_1,o_4,o_5}
\stm^{(32|1)}{}_{i_2,i_4,x_6}^{x_2,x_4,o_6}
\\
&\spaceD
\times
\underline{e_2^{(x_3)}e_{12}^{(x_5)}e_{1}^{(x_6)}}e_{321}^{(i_4)}e_{32}^{(i_2)}e_{3}^{(i_1)}
\end{split}
\label{tA lhs1 eq5}
\\
\begin{split}
&=
\sum
(-1)^{\phaseI (i_1o_6+x_4)+\phaseII x_2x_5}
\stm^{(3|2)}{}_{x_1,x_2,x_3}^{o_1,o_2,o_3}
\stm^{(3|12)}{}_{i_1,x_4,x_5}^{x_1,o_4,o_5}
\stm^{(32|1)}{}_{i_2,i_4,x_6}^{x_2,x_4,o_6}
\stm^{(2|1)}{}_{i_3,i_5,i_6}^{x_3,x_5,x_6}
\\
&\spaceD
\times
e_1^{(i_6)}e_{21}^{(i_5)}\underline{e_{2}^{(i_3)}e_{321}^{(i_4)}}e_{32}^{(i_2)}e_{3}^{(i_1)}
\end{split}
\label{tA lhs1 eq6}
\\
\begin{split}
&=
\sum
(-1)^{\phaseI (i_1o_6+x_4)+\phaseII x_2x_5+\phaseIII i_3i_4}
\stm^{(3|2)}{}_{x_1,x_2,x_3}^{o_1,o_2,o_3}
\stm^{(3|12)}{}_{i_1,x_4,x_5}^{x_1,o_4,o_5}
\stm^{(32|1)}{}_{i_2,i_4,x_6}^{x_2,x_4,o_6}
\stm^{(2|1)}{}_{i_3,i_5,i_6}^{x_3,x_5,x_6}
\\
&\spaceD
\times
e_1^{(i_6)}e_{21}^{(i_5)}e_{321}^{(i_4)}e_{2}^{(i_3)}e_{32}^{(i_2)}e_{3}^{(i_1)}
,
\end{split}
\label{tA lhs1 eq7}
\end{align}
where and summations are taken on $i_k,x_k{\ }(k=1,\cdots,6)$ and we set
\begin{align}
\phaseI
=p(\alpha_1)p(\alpha_3)
,\quad
\phaseII
=p(\alpha_1+\alpha_2)p(\alpha_2+\alpha_3)
,\quad
\phaseIII
=p(\alpha_2)p(\alpha_1+\alpha_2+\alpha_3)
.
\label{te phase}
\end{align}
We have put the underlines to the parts to be rewritten.
The details of the above procedure are as follows.
For (\ref{tA lhs1 eq1}), we used (\ref{tA trans mat r3 eq2}).
For (\ref{tA lhs1 eq2}), we used (\ref{tA trans mat r3 eq5}).
For (\ref{tA lhs1 eq3}), we used (\ref{tA qroot rel1}), (\ref{tA qroot rel2}) and $[e_1,e_3]=0$.
For (\ref{tA lhs1 eq4}), we used (\ref{tA trans mat r3 eq4}).
For (\ref{tA lhs1 eq5}), we used (\ref{tA trans mat r3 eq1}).
For (\ref{tA lhs1 eq6}), we used (\ref{tA qroot rel3 2}).
\par
Similarly, the second way is given by
\begin{align}
&e_3^{(o_1)}e_{23}^{(o_2)}\underline{e_{2}^{(o_3)}e_{123}^{(o_4)}}e_{12}^{(o_5)}e_{1}^{(o_6)}
\label{tA rhs1 eq1}
\\
&=(-1)^{\phaseIII o_3o_4}
e_3^{(o_1)}e_{23}^{(o_2)}e_{123}^{(o_4)}\underline{e_{2}^{(o_3)}e_{12}^{(o_5)}e_{1}^{(o_6)}}
\label{tA rhs1 eq2}
\\
&=
\sum
(-1)^{\phaseIII o_3o_4}
\stm^{(2|1)}{}_{x_3,x_5,x_6}^{o_3,o_5,o_6}
e_3^{(o_1)}\underline{e_{23}^{(o_2)}e_{123}^{(o_4)}e_{1}^{(x_6)}}e_{21}^{(x_5)}e_{2}^{(x_3)}
\label{tA rhs1 eq3}
\\
&=
\sum
(-1)^{\phaseIII o_3o_4}
\stm^{(2|1)}{}_{x_3,x_5,x_6}^{o_3,o_5,o_6}
\stm^{(23|1)}{}_{x_2,x_4,i_6}^{o_2,o_4,x_6}
\underline{e_3^{(o_1)}e_{1}^{(i_6)}}
{\ }
\underline{e_{(23)1}^{(x_4)}}
{\ }
\underline{e_{23}^{(x_2)}e_{21}^{(x_5)}}e_{2}^{(x_3)}
\label{tA rhs1 eq4}
\\
\begin{split}
&=
\sum
(-1)^{\phaseI (o_1i_6+x_4)+\phaseII x_2x_5+\phaseIII o_3o_4}
\stm^{(2|1)}{}_{x_3,x_5,x_6}^{o_3,o_5,o_6}
\stm^{(23|1)}{}_{x_2,x_4,i_6}^{o_2,o_4,x_6}
\\
&\spaceD
\times
e_{1}^{(i_6)}\underline{e_3^{(o_1)}e_{(21)3}^{(x_4)}e_{21}^{(x_5)}}e_{23}^{(x_2)}e_{2}^{(x_3)}
\end{split}
\label{tA rhs1 eq5}
\\
\begin{split}
&=
\sum
(-1)^{\phaseI (o_1i_6+x_4)+\phaseII x_2x_5+\phaseIII o_3o_4}
\stm^{(2|1)}{}_{x_3,x_5,x_6}^{o_3,o_5,o_6}
\stm^{(23|1)}{}_{x_2,x_4,i_6}^{o_2,o_4,x_6}
\stm^{(3|21)}{}_{x_1,i_4,i_5}^{o_1,x_4,x_5}
\\
&\spaceD
\times
e_{1}^{(i_6)}e_{21}^{(i_5)}e_{321}^{(i_4)}\underline{e_3^{(x_1)}e_{23}^{(x_2)}e_{2}^{(x_3)}}
\end{split}
\label{tA rhs1 eq6}
\\
\begin{split}
&=
\sum
(-1)^{\phaseI (o_1i_6+x_4)+\phaseII x_2x_5+\phaseIII o_3o_4}
\stm^{(2|1)}{}_{x_3,x_5,x_6}^{o_3,o_5,o_6}
\stm^{(23|1)}{}_{x_2,x_4,i_6}^{o_2,o_4,x_6}
\stm^{(3|21)}{}_{x_1,i_4,i_5}^{o_1,x_4,x_5}
\stm^{(3|2)}{}_{i_1,i_2,i_3}^{x_1,x_2,x_3}
\\
&\spaceD
\times
e_{1}^{(i_6)}e_{21}^{(i_5)}e_{321}^{(i_4)}e_{2}^{(i_3)}e_{23}^{(i_2)}e_3^{(i_1)}
,
\end{split}
\label{tA rhs1 eq7}
\end{align}
where summations are taken on $i_k,x_k{\ }(k=1,\cdots,6)$.
Again, we have put the underlines to the parts to be rewritten.
The details of the above procedure are as follows.
For (\ref{tA rhs1 eq1}), we used (\ref{tA qroot rel3}).
For (\ref{tA rhs1 eq2}), we used (\ref{tA trans mat r3 eq1}).
For (\ref{tA rhs1 eq3}), we used (\ref{tA trans mat r3 eq3}).
For (\ref{tA rhs1 eq4}), we used (\ref{tA qroot rel1 2}), (\ref{tA qroot rel2 2}) and $[e_1,e_3]=0$.
For (\ref{tA rhs1 eq5}), we used (\ref{tA trans mat r3 eq6}).
For (\ref{tA rhs1 eq6}), we used (\ref{tA trans mat r3 eq2}).
\par
Now, $\{e_{1}^{(i_6)}e_{21}^{(i_5)}e_{321}^{(i_4)}e_{2}^{(i_3)}e_{23}^{(i_2)}e_3^{(i_1)}\}$ are linearly independent by Theorem \ref{Yam PBW thm}.
Then, by comparing (\ref{tA lhs1 eq7}) and (\ref{tA rhs1 eq7}), we obtain the following result:
\begin{theorem}\label{te general thm}
As the identity of transition matrices of quantum superalgebras of type A, we have
\begin{align}
\begin{split}
&\sum
(-1)^{\phaseI (i_1o_6+x_4)+\phaseII x_2x_5+\phaseIII i_3i_4}
\stm^{(3|2)}{}_{x_1,x_2,x_3}^{o_1,o_2,o_3}
\stm^{(3|12)}{}_{i_1,x_4,x_5}^{x_1,o_4,o_5}
\stm^{(32|1)}{}_{i_2,i_4,x_6}^{x_2,x_4,o_6}
\stm^{(2|1)}{}_{i_3,i_5,i_6}^{x_3,x_5,x_6}
\\
&=
\sum
(-1)^{\phaseI (o_1i_6+x_4)+\phaseII x_2x_5+\phaseIII o_3o_4}
\stm^{(2|1)}{}_{x_3,x_5,x_6}^{o_3,o_5,o_6}
\stm^{(23|1)}{}_{x_2,x_4,i_6}^{o_2,o_4,x_6}
\stm^{(3|21)}{}_{x_1,i_4,i_5}^{o_1,x_4,x_5}
\stm^{(3|2)}{}_{i_1,i_2,i_3}^{x_1,x_2,x_3}
.
\end{split}
\label{te general}
\end{align}
where summations are taken on $x_k{\ }(k=1,\cdots,6)$.
\end{theorem}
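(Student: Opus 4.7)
The plan is to verify the identity (\ref{te general}) by computing the expansion of a single element of $\UtAp$ in the $B_1$ PBW basis via two independent chains of rank-2 transition steps, and then comparing coefficients. Concretely, I would start from the monomial $e_3^{(o_1)}e_{23}^{(o_2)}e_{2}^{(o_3)}e_{123}^{(o_4)}e_{12}^{(o_5)}e_{1}^{(o_6)}$, which is ordered according to $B_2$ (see (\ref{tA r3 qroot vec2})), and reduce it to a linear combination of $B_1$ monomials $e_1^{(i_6)}e_{21}^{(i_5)}e_{321}^{(i_4)}e_{2}^{(i_3)}e_{32}^{(i_2)}e_{3}^{(i_1)}$ in two different orders of substitutions. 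Since the monomials forming $B_1$ are linearly independent by Theorem \ref{Yam PBW thm}, the two resulting coefficients must coincide, which is precisely (\ref{te general}).

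For the first chain, I would work from the leftmost triple inward: apply $\stm^{(3|2)}$ given in (\ref{tA trans mat r3 eq2}) to $e_3^{(o_1)}e_{23}^{(o_2)}e_{2}^{(o_3)}$, then apply $\stm^{(3|12)}$ to the resulting $e_3 e_{123} e_{12}$ subword. Next, invoke the braiding relation (\ref{tA qroot rel1}) together with $[e_{32},e_{12}]_{q}=0$ from (\ref{tA qroot rel2}) and $[e_1,e_3]=0$ to rearrange things into the shape required by $\stm^{(32|1)}$; follow with $\stm^{(2|1)}$. A final application of (\ref{tA qroot rel3 2}) to commute $e_{2}^{(i_3)}$ past $e_{321}^{(i_4)}$ produces an expansion of the form (\ref{tA lhs1 eq7}).

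For the second chain, I would work from the rightmost triple outward. First use (\ref{tA qroot rel3}) to commute $e_{2}^{(o_3)}$ past $e_{123}^{(o_4)}$, then apply $\stm^{(2|1)}$ to the rightmost triple and $\stm^{(23|1)}$ to the next one. Use the dual braiding identity (\ref{tA qroot rel1 2}), together with $[e_{23},e_{21}]_{q}=0$ from (\ref{tA qroot rel2 2}) and $[e_1,e_3]=0$, to rearrange, then apply $\stm^{(3|21)}$, and finally $\stm^{(3|2)}$; this yields an expansion of the form (\ref{tA rhs1 eq7}). Comparing the coefficients of the linearly independent $B_1$ monomials in the two expansions completes the argument.

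The main technical difficulty will be the systematic bookkeeping of sign factors. Each time a braiding relation of Proposition \ref{tA qroot rels} or Proposition \ref{tA qroot rels2} is invoked between vectors carrying nontrivial parity, a factor $(-1)^{p(\alpha)p(\beta)}$ appears; all such contributions must aggregate into the global phases $(-1)^{\phaseI(\cdot)+\phaseII(\cdot)+\phaseIII(\cdot)}$ defined in (\ref{te phase}). Checking that each braiding step is legitimate — which relies on the specific forms of the quantum root vectors from Definition \ref{qrv} and on the parities of the roots appearing — is where the derivation requires care, but no genuinely new algebraic input is needed beyond Propositions \ref{tA qroot rels} and \ref{tA qroot rels2}, and the final identification is purely a matter of matching coefficients via the linear independence guaranteed by Theorem \ref{Yam PBW thm}.
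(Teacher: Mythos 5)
Your proposal follows essentially the same route as the paper: the paper's own derivation is precisely the two substitution chains you describe, culminating in the displays (\ref{tA lhs1 eq7}) and (\ref{tA rhs1 eq7}), which are then compared via the linear independence of the $B_1$ monomials from Theorem \ref{Yam PBW thm}. The only minor quibble is notational: the relation (\ref{tA qroot rel2}) used to exchange $e_{12}$ and $e_{32}$ is the plain supercommutator $[e_{12},e_{32}]=0$, not a $q$-commutator, but this does not affect the argument.
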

The above equation (\ref{te general}) generally involve nonlocal sign factors.
In (\ref{tA our dynkin r3}), we have $\phaseI=\phaseII=\phaseIII=0$ for (I), (II) and (III).
In that case, (\ref{te general}) exactly gives the tetrahedron equation.
Hereafter, we specialize Theorem \ref{te general thm} for each case given in (\ref{tA our dynkin r3}).
\subsubsection{The case (I) $\vcenter{\hbox{\protect\includegraphics{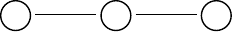}}}$}
In this case, the corresponding symmetrized Cartan matrix is given by
\begin{align}
DA
=
\begin{pmatrix}
2 & -1 & 0 \\
-1 & 2 & -1 \\
0 & -1 & 2
\end{pmatrix}
,
\end{align}
and the corresponding positive roots are given by
\begin{align}
\prer&=\{\alpha_1,\alpha_2,\alpha_3,\alpha_1+\alpha_2,\alpha_2+\alpha_3,\alpha_1+\alpha_2+\alpha_3\}
,
\\
\prir&=\{\}
.
\end{align}
\par
Now, $\stm^{(x)}$ defined by (\ref{tA trans mat r3 eq1}) $\sim$ (\ref{tA trans mat r3 eq6}) are specified as follows:
\begin{lemma}\label{tA trans mat r3 c1 lemma}
For the quantum superalgebra associated with \ddthreeA{\cc}{\cc}{\cc}, we have (\ref{tA trans mat r3 eq1}) $\sim$ (\ref{tA trans mat r3 eq6}) where $\stm^{(x)}$ are given by
\begin{align}
\stm^{(2|1)}
=\stm^{(3|2)}
=\stm^{(23|1)}
=\stm^{(32|1)}
=\stm^{(3|12)}
=\stm^{(3|21)}
=\tdr
.
\label{tA trans mat r3 c1}
\end{align}
\end{lemma}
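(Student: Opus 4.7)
The plan is to recognize each of the six transition matrices $\stm^{(x)}$ as the rank-$2$ transition matrix of an $A_2$-type subalgebra embedded in $\UtAp[4|0]$, and then invoke Theorem~\ref{KOY13 thm 3dR} in each case. Since every simple root is even in case~(I), there are no parity or isotropy complications, and the verification needed in each case amounts to (a) showing that the two ``outer'' generators appearing in equations (\ref{tA trans mat r3 eq1})--(\ref{tA trans mat r3 eq6}) satisfy the standard rank-$2$ $A_2$ Serre relations, and (b) identifying the ``middle'' generator with the corresponding $q$-commutator of the two outer generators, so that the rank-$2$ PBW data of (\ref{tA r2 qroot vec1})--(\ref{tA r2 qroot vec2}) are faithfully reproduced.

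For $\stm^{(2|1)}$ and $\stm^{(3|2)}$, the assertion is immediate: equations (\ref{tA trans mat r3 eq1}) and (\ref{tA trans mat r3 eq2}) are literally the rank-$2$ transition equation (\ref{tA trans mat def1}) applied inside the standard rank-$2$ subalgebras $\langle e_1,e_2\rangle$ and $\langle e_2,e_3\rangle$, with middle generators $e_{21}$ and $e_{32}$ matching (\ref{tA r2 qroot vec1}). Theorem~\ref{KOY13 thm 3dR} then gives $\stm^{(2|1)}=\stm^{(3|2)}=\tdr$.

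For the four remaining cases, I carry out the same reduction with one of the outer generators replaced by a length-$2$ quantum root vector. For $\stm^{(23|1)}$, take $\tilde e_1=e_1$ and $\tilde e_2=e_{23}$: the two Serre relations for the pair $(\tilde e_1,\tilde e_2)$ are exactly (\ref{tA qroot serre 1}) and (\ref{tA qroot serre 2}) at $i=1$, and from (\ref{tA notations}) one checks $\tilde e_{12}=[\tilde e_1,\tilde e_2]_q=e_{1(23)}=e_{123}$ and $\tilde e_{21}=[\tilde e_2,\tilde e_1]_q=e_{(23)1}$, matching the middle generators in (\ref{tA trans mat r3 eq3}). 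So (\ref{tA trans mat r3 eq3}) is nothing but (\ref{tA trans mat def1}) written in the $\tilde e$-generators, and Theorem~\ref{KOY13 thm 3dR} yields $\stm^{(23|1)}=\tdr$. The three other cases are analogous: $\stm^{(32|1)}$ uses the Serre relations (\ref{tA qroot serre 1 2})--(\ref{tA qroot serre 2 2}) at $i=1$ for the pair $(e_1,e_{32})$ with middle vector $e_{1(32)}=e_{321}$; $\stm^{(3|12)}$ uses (\ref{tA qroot serre 3 2})--(\ref{tA qroot serre 4 2}) at $i=1$ for the pair $(e_{12},e_3)$ with middle vector $e_{3(12)}$; and $\stm^{(3|21)}$ uses (\ref{tA qroot serre 3})--(\ref{tA qroot serre 4}) at $i=1$ for the pair $(e_{21},e_3)$ with middle vector $e_{(21)3}$ on the left and $e_{321}$ on the right.

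The only subtle point is the transfer of Theorem~\ref{KOY13 thm 3dR}, originally a statement about the abstract $\UtAp[3|0]$, along the embedding of each rank-$2$ $A_2$ subalgebra into $\UtAp[4|0]$. This works because the identity (\ref{tA trans mat def1}) with coefficients $\tdr_{i,j,k}^{a,b,c}$ is derivable purely from the $A_2$ Serre relations between the two outer generators, and hence holds in any algebra in which the chosen pair satisfies those relations; the linear independence of the rank-$3$ PBW basis guaranteed by Theorem~\ref{Yam PBW thm} then allows us to read off the coefficients unambiguously and conclude $\stm^{(x)}=\tdr$ for all six cases, which is (\ref{tA trans mat r3 c1}).
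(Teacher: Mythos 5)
Your proposal is correct and follows essentially the same route as the paper: the paper likewise verifies the $A_2$ Serre relations for each pair of (possibly composite) generators via Propositions \ref{tA qroot rels} and \ref{tA qroot rels2}, packages this as an algebra homomorphism $h:\UtAp[3]\to\UtAp[4]$ (e.g.\ $e_1\mapsto e_1$, $e_2\mapsto e_{23}$), and pushes the rank-2 identity of Theorem \ref{KOY13 thm 3dR} forward, checking that the divided-power normalizations agree. The only blemish is your parenthetical ``$e_{1(32)}=e_{321}$'' in the $\stm^{(32|1)}$ case, which conflates the two distinct middle vectors $[e_1,e_{32}]_q=e_{1(32)}$ (left-hand side) and $[e_{32},e_1]_q=e_{321}$ (right-hand side); this is a notational slip only and does not affect the argument.
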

\begin{proof}
$\stm^{(2|1)}$, $\stm^{(3|2)}$ are direct consquences of Theorem \ref{KOY13 thm 3dR}.
$\stm^{(23|1)}$ is obtained by (\ref{tA qroot serre 1}), (\ref{tA qroot serre 2}) and Theorem \ref{KOY13 thm 3dR}.
Actually, $e_1,e_{23}$ satisfy the exactly same relations of $e_1,e_2$ of $\UtAp[3]$ associated with \ddtwoA{\cc}{\cc}, so $h:\UtAp[3]\to\UtAp[4]$ defined by $e_{1}\mapsto e_{1}$, $e_{2}\mapsto e_{23}$ gives an algebra homomorphism.
Also, $d_{\alpha_2+\alpha_3}=d_{\alpha_2}$ and $d_{\alpha_1+\alpha_2+\alpha_3}=d_{\alpha_1+\alpha_2}$ are satisfied where the left hand sides are for $\UtAp[4]$ and the right hand sides are for $\UtAp[3]$, so $[m]_{q^{d_{\alpha_2+\alpha_3}}}!=[m]_{q^{d_{\alpha_2}}}!$ and $[m]_{q^{d_{\alpha_1+\alpha_2+\alpha_3}}}!=[m]_{q^{d_{\alpha_1+\alpha_2}}}!$ hold.
Therefore, by applying $h$ on (\ref{tA trans mat def1}) for the case \ddtwoA{\cc}{\cc}, we obtain
\begin{align}
e_{23}^{(a)}e_{123}^{(b)}e_{1}^{(c)}
&=\sum_{i,j,k}
\tdr_{i,j,k}^{a,b,c}
e_{1}^{(k)}e_{(23)1}^{(j)}e_{23}^{(i)}
\quad
(i,j,k,a,b,c\in\mathbb{Z}_{\geq 0})
.
\label{tA trans mat r3 c1 lemma proof1}
\end{align}
This is exactly (\ref{tA trans mat r3 eq3}) for $\stm^{(23|1)}=\tdr$.
The remaining cases can be shown exacly in the same way.
\end{proof}
The phase factors given by (\ref{te phase}) are now $\phaseI=\phaseII=\phaseIII=0$.
Then, (\ref{te general}) is specialized as follows:
\begin{align}
&\sum
\tdr_{x_1,x_2,x_3}^{o_1,o_2,o_3}\tdr_{i_1,x_4,x_5}^{x_1,o_4,o_5}\tdr_{i_2,i_4,x_6}^{x_2,x_4,o_6}\tdr_{i_3,i_5,i_6}^{x_3,x_5,x_6}
=
\sum
\tdr_{x_3,x_5,x_6}^{o_3,o_5,o_6}\tdr_{x_2,x_4,i_6}^{o_2,o_4,x_6}\tdr_{x_1,i_4,i_5}^{o_1,x_4,x_5}\tdr_{i_1,i_2,i_3}^{x_1,x_2,x_3}
,
\end{align}
where all indices are defined on $\mathbb{Z}_{\geq 0}$.
This is exactly the tetrahedron equation (\ref{KV94 te}):
\begin{align}
\tdr_{123}\tdr_{145}\tdr_{246}\tdr_{356}=\tdr_{356}\tdr_{246}\tdr_{145}\tdr_{123}
.
\end{align}
We then get the following result:
\begin{corollary}\label{tA r3 mythm1}
The tetrahedron equation (\ref{KV94 te}) is characterized as the identity of the transition matrices of the quantum superalgebra associated with \ddthreeA{\cc}{\cc}{\cc}.
\end{corollary}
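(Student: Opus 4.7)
The plan is to invoke Theorem \ref{te general thm} and then specialize all the ingredients using Lemma \ref{tA trans mat r3 c1 lemma}. Since (I) has only even simple roots, every parity $p(\alpha_i)$ vanishes, so the phase factors $\phaseI, \phaseII, \phaseIII$ in (\ref{te phase}) are all zero. This collapses the signed identity (\ref{te general}) into a purely positive identity, removing the ``nonlocal sign'' difficulty that would appear in other cases of (\ref{tA our dynkin r3}).

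First I would explicitly substitute $\stm^{(2|1)} = \stm^{(3|2)} = \stm^{(23|1)} = \stm^{(32|1)} = \stm^{(3|12)} = \stm^{(3|21)} = \tdr$ into (\ref{te general}), so that both sides become sums of products of four copies of $\tdr$. Next I would match the contracted indices $x_1, \ldots, x_6$ with the tensor factor labels $1, \ldots, 6$ appearing in (\ref{KV94 te}): the three indices $(o_1, o_2, o_3)$ and $(i_1, i_2, i_3)$ of the first $\stm^{(3|2)}$ are transported to slots $(1,2,3)$; the two remaining ``new'' indices of $\stm^{(3|12)}$ correspond to slots $(4,5)$; the two extra indices of $\stm^{(32|1)}$ give slot $6$; and the analogous identification is made on the right-hand side. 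A direct bookkeeping check then shows that the left-hand side of (\ref{te general}) corresponds exactly to the matrix product $\tdr_{356}\tdr_{246}\tdr_{145}\tdr_{123}$ and the right-hand side to $\tdr_{123}\tdr_{145}\tdr_{246}\tdr_{356}$ (or vice versa, depending on orientation conventions), which is (\ref{KV94 te}).

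I do not anticipate a genuine obstacle here: all the nontrivial work has been absorbed into Theorem \ref{te general thm} (which provides the two factorizations of the rank-3 transition matrix via quantum root vector relations) and Lemma \ref{tA trans mat r3 c1 lemma} (which identifies every rank-2 building block with $\tdr$). The only point requiring a little care is the purely combinatorial step of verifying that the six contracted indices on each side of (\ref{te general}) line up with the tensor-leg pattern $\{123,145,246,356\}$ of (\ref{KV94 te}); this is mechanical and can be read off from the definitions (\ref{tA trans mat r3 eq1})--(\ref{tA trans mat r3 eq6}). Once this identification is established and the vanishing of the sign factors is noted, the corollary follows immediately and requires no further computation.
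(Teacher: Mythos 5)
Your proposal is correct and follows essentially the same route as the paper: specialize Theorem \ref{te general thm} via Lemma \ref{tA trans mat r3 c1 lemma}, observe that $\phaseI=\phaseII=\phaseIII=0$ since all simple roots are even, and read off the tensor-leg pattern $\{123,145,246,356\}$ from the contracted indices to recover (\ref{KV94 te}). No gaps.
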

We note that although Corollary \ref{tA r3 mythm1} is a corollary of the Kuniba-Okado-Yamada theorem\cite{KOY13}, the above calculation gives a direct derivation of the tetrahedron equation (\ref{KV94 te}) without using any results for quantum coordinate rings.
This is a key for the generalization of earlier results to super cases.
\subsubsection{The case (II) $\vcenter{\hbox{\protect\includegraphics{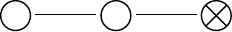}}}$}
In this case, the corresponding symmetrized Cartan matrix is given by
\begin{align}
DA
=
\begin{pmatrix}
2 & -1 & 0 \\
-1 & 2 & -1 \\
0 & -1 & 0
\end{pmatrix}
,
\end{align}
and the corresponding positive roots are given by
\begin{align}
\prer&=\{\alpha_1,\alpha_2,\alpha_1+\alpha_2\}
,
\\
\prir&=\{\alpha_3,\alpha_2+\alpha_3,\alpha_1+\alpha_2+\alpha_3\}
.
\end{align}
Similarly to Lemma \ref{tA trans mat r3 c1 lemma}, by using Proposition \ref{tA qroot rels} and \ref{tA qroot rels2}, we can show the following lemma:
\begin{lemma}\label{tA trans mat r3 c2 lemma}
For the quantum superalgebra associated with \ddthreeA{\cc}{\cc}{\cct}, we have (\ref{tA trans mat r3 eq1}) $\sim$ (\ref{tA trans mat r3 eq6}) where $\stm^{(x)}$ are given by
\begin{align}
\stm^{(2|1)}
=\tdl
,\quad
\stm^{(3|2)}
=\stm^{(23|1)}
=\stm^{(32|1)}
=\stm^{(3|12)}
=\stm^{(3|21)}
=\tdr
.
\label{tA trans mat r3 c2}
\end{align}
\end{lemma}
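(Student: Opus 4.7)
The plan is to replicate the argument of Lemma \ref{tA trans mat r3 c1 lemma}, treating each of the six transition matrices $\stm^{(x)}$ as the pullback of a rank-$2$ transition matrix along an explicit algebra homomorphism into $\UtAp[3|1]$ associated with \ddthreeA{\cc}{\cc}{\cct}. The two rank-$2$ building blocks available are Theorem \ref{KOY13 thm 3dR}, which yields $\tdr$ for the pure even case \ddtwoA{\cc}{\cc}, and Theorem \ref{my 3dL result}, which yields $\tdl$ for \ddtwoA{\cc}{\cct}. Which block applies to a given $\stm^{(x)}$ is dictated by the parities of the two quantum root vectors involved and by the defining relations they satisfy inside $\UtAp[3|1]$.

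For each $\stm^{(x)}$ I would proceed as follows. First, read off the pair of quantum root vectors entering (\ref{tA trans mat r3 eq1})--(\ref{tA trans mat r3 eq6}) together with their parities: here $\alpha_1,\alpha_2$ are even and $\alpha_3$ is odd isotropic, so $\alpha_2+\alpha_3$ and $\alpha_1+\alpha_2+\alpha_3$ are odd isotropic while $\alpha_1+\alpha_2$ remains even. Second, verify that the chosen pair satisfies the defining relations of the claimed rank-$2$ source: this uses the higher-order $q$-Serre identities of Propositions \ref{tA qroot rels} and \ref{tA qroot rels2} together with the nilpotency $e_\beta^2=0$ for isotropic $\beta$ from Lemma \ref{qroot general lemma}(1), and produces an algebra homomorphism $h$ from the rank-$2$ source into $\UtAp[3|1]$. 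Third, confirm that the compound quantum root vectors appearing on both sides of (\ref{tA trans mat r3 eq1})--(\ref{tA trans mat r3 eq6}) correspond under $h$ to the PBW vectors $e_{21},e_{12}$ of the source algebra: the symmetry identities (\ref{tA qroot rel1}) and (\ref{tA qroot rel1 2}) together with Corollary \ref{qcom jacobi coro}(1) handle this whenever one of the brackets is formed over a commuting pair of simple roots. Because the symmetrized Cartan entries $d_\beta$ and parities transfer correctly, the $q$-deformed factorials in the PBW normalizations agree on either side of $h$, and the rank-$2$ equality pulls back to the claimed formula for $\stm^{(x)}$.

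The main obstacle will arise in the cases $\stm^{(23|1)}$, $\stm^{(32|1)}$, $\stm^{(3|12)}$, $\stm^{(3|21)}$, in which one of the two generators is itself a compound quantum root vector such as $e_{23}$ or $e_{12}$. There the required Serre and nilpotency relations for the pair must be reduced to Proposition \ref{tA qroot rels}--\ref{tA qroot rels2} by repeated application of the Jacobi-type identity of Corollary \ref{qcom jacobi coro} and careful tracking of the super signs $(-1)^{p(\alpha)p(\beta)}$ arising from each $q$-commutator rearrangement. Once those relations are in place, the construction of $h$ and the pullback of Theorem \ref{KOY13 thm 3dR} or Theorem \ref{my 3dL result} is purely formal, mirroring the homomorphism used for $\stm^{(23|1)}$ in the proof of Lemma \ref{tA trans mat r3 c1 lemma}.
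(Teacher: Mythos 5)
Your proposal is correct and is essentially the paper's own argument: the paper proves this lemma by the same reduction used for Lemma \ref{tA trans mat r3 c1 lemma}, namely checking via Propositions \ref{tA qroot rels} and \ref{tA qroot rels2} that each pair of quantum root vectors satisfies the defining relations of the appropriate rank-2 nilpotent subalgebra, constructing the corresponding algebra homomorphism, and pulling back Theorem \ref{KOY13 thm 3dR} or Theorem \ref{my 3dL result} (with the $q$-factorial normalizations matching because the $d_\beta$ and parities transfer). One useful by-product of the parity bookkeeping you describe: it yields $\stm^{(2|1)}=\tdr$ (the roots $\alpha_1,\alpha_2,\alpha_1+\alpha_2$ are all even here) and $\tdl$ for the other five matrices (each pairs an even root with an odd isotropic one), so the displayed assignment (\ref{tA trans mat r3 c2}) has $\tdl$ and $\tdr$ interchanged --- as confirmed by the specialization of (\ref{te general}) given immediately after the lemma, which places $\tdr$ at the position of $\stm^{(2|1)}{}_{i_3,i_5,i_6}^{x_3,x_5,x_6}$ and reproduces (\ref{BS06 te}).
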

The phase factors given by (\ref{te phase}) are now $\phaseI=\phaseII=\phaseIII=0$.
Then, (\ref{te general}) is specialized as follows:
\begin{align}
&\sum
\tdl_{x_1,x_2,x_3}^{o_1,o_2,o_3}\tdl_{i_1,x_4,x_5}^{x_1,o_4,o_5}\tdl_{i_2,i_4,x_6}^{x_2,x_4,o_6}\tdr_{i_3,i_5,i_6}^{x_3,x_5,x_6}
=
\sum
\tdr_{x_3,x_5,x_6}^{o_3,o_5,o_6}\tdl_{x_2,x_4,i_6}^{o_2,o_4,x_6}\tdl_{x_1,i_4,i_5}^{o_1,x_4,x_5}\tdl_{i_1,i_2,i_3}^{x_1,x_2,x_3}
,
\end{align}
where $o_k,i_k,x_k\in\{0,1\}{\ }(k=1,2,4)$ and the other indices are defined on $\mathbb{Z}_{\geq 0}$.
This is exactly the tetrahedron equation (\ref{BS06 te}):
\begin{align}
\tdl_{123}\tdl_{145}\tdl_{246}\tdr_{356}=\tdr_{356}\tdl_{246}\tdl_{145}\tdl_{123}
.
\end{align}
We then get the following result:
\begin{corollary}\label{tA r3 mythm2}
The tetrahedron equation (\ref{BS06 te}) is characterized as the identity of the transition matrices of the quantum superalgebra associated with \ddthreeA{\cc}{\cc}{\cct}.
\end{corollary}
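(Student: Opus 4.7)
The plan is to specialize Theorem \ref{te general thm} to the Cartan data associated with \ddthreeA{\cc}{\cc}{\cct}, mirroring the three-step scheme used in Corollary \ref{tA r3 mythm1}: (i) identify each of the six rank-$2$ transition matrices $\stm^{(x)}$ defined by (\ref{tA trans mat r3 eq1})$\sim$(\ref{tA trans mat r3 eq6}) in terms of $\tdr$ and $\tdl$; (ii) check that the parity-dependent phase factors in (\ref{te phase}) vanish; and (iii) read off the resulting identity and recognize it as (\ref{BS06 te}).

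The heart of the argument is the analog of Lemma \ref{tA trans mat r3 c1 lemma}. For each $\stm^{(x)}$, I would exhibit an algebra homomorphism from a suitable rank-$2$ quantum superalgebra into the nilpotent subalgebra of $\UtAp[3|1]$ sending the two Chevalley generators to the pair of effective generators appearing in the defining equation of $\stm^{(x)}$ (for instance $e_1\mapsto e_1$ and $e_2\mapsto e_{23}$ for $\stm^{(23|1)}$), so that the explicit rank-$2$ transition matrix from Section \ref{sec 42} transports into the rank-$3$ setting. In case (II) only $\alpha_3$ is odd, and the subsystem $\{\alpha_1,\alpha_2\}$ is all even while every other effective pair involves at least one isotropic odd root, whether simple ($\alpha_3$) or composite ($\alpha_2+\alpha_3$, $\alpha_1+\alpha_2+\alpha_3$). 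Hence the all-even subsystem is governed by \ddtwoA{\cc}{\cc} and yields $\tdr$ via Theorem \ref{KOY13 thm 3dR}, while each mixed subsystem is governed by \ddtwoA{\cc}{\cct} and yields $\tdl$ via Theorem \ref{my 3dL result}.

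The well-definedness of each of the five mixed homomorphisms reduces to verifying the correct cubic $q$-Serre relation between the two effective generators together with $e_\alpha^2=0$ for the isotropic odd generator. These are exactly what Propositions \ref{tA qroot rels} and \ref{tA qroot rels2} are designed to deliver: the identities (\ref{tA qroot serre 1})$\sim$(\ref{tA qroot serre 4}) supply the mixed cubic relations involving a simple generator and a length-two root vector, while (\ref{tA qroot serre 5}) yields the nilpotency. Once the identifications are in hand, the phases are immediate from the parities of the simple roots: $\phaseI=p(\alpha_1)p(\alpha_3)=0$, $\phaseII=p(\alpha_1+\alpha_2)p(\alpha_2+\alpha_3)=0\cdot 1=0$, and $\phaseIII=p(\alpha_2)p(\alpha_1+\alpha_2+\alpha_3)=0\cdot 1=0$, so (\ref{te general}) specializes to an identity free of sign factors.

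The main obstacle is step (i). Case (II) is more subtle than case (I) because the six subsystems split into the single all-even subsystem $\{\alpha_1,\alpha_2\}$ and five mixed ones, each requiring its own homomorphism check with careful tracking of parities and $q$-commutator twists. However, because Section \ref{tA qroot rels subsec} has already isolated precisely the identities needed, each individual verification should reduce to a one-line citation of Proposition \ref{tA qroot rels} or \ref{tA qroot rels2}. Once step (i) is in hand, matching the four factors on each side of the specialized (\ref{te general}) with those of (\ref{BS06 te}) slot-for-slot becomes a mechanical comparison of indices, closing the proof.
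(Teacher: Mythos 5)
Your proposal is correct and follows the paper's own route essentially verbatim: specialize Theorem \ref{te general thm}, establish the case-(II) analogue of Lemma \ref{tA trans mat r3 c1 lemma} by transporting the rank-2 results of Theorems \ref{KOY13 thm 3dR} and \ref{my 3dL result} along algebra homomorphisms whose well-definedness is exactly what Propositions \ref{tA qroot rels} and \ref{tA qroot rels2} provide, observe $\phaseI=\phaseII=\phaseIII=0$, and read off (\ref{BS06 te}). Incidentally, your identification ($\stm^{(2|1)}=\tdr$ from the all-even subsystem, the five mixed subsystems giving $\tdl$) is the correct one: the displayed formula in the paper's Lemma \ref{tA trans mat r3 c2 lemma} has $\tdr$ and $\tdl$ interchanged, evidently a typo, since only your assignment is consistent with the root parities, with the specialized identity displayed immediately afterwards, and with (\ref{BS06 te}) itself.
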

\subsubsection{The case (III) $\vcenter{\hbox{\protect\includegraphics{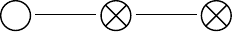}}}$}
In this case, the corresponding symmetrized Cartan matrix is given by
\begin{align}
DA
=
\begin{pmatrix}
2 & -1 & 0 \\
-1 & 0 & 1 \\
0 & 1 & 0
\end{pmatrix}
,
\end{align}
and the corresponding positive roots are given by
\begin{align}
\prer&=\{\alpha_1,\alpha_2+\alpha_3,\alpha_1+\alpha_2+\alpha_3\}
,
\\
\prir&=\{\alpha_2,\alpha_3,\alpha_1+\alpha_2\}
.
\end{align}
Similarly to Lemma \ref{tA trans mat r3 c1 lemma}, by using Proposition \ref{tA qroot rels} and \ref{tA qroot rels2}, we can show the following lemma:
\begin{lemma}\label{tA trans mat r3 c3 lemma}
For the quantum superalgebra associated with \ddthreeA{\cc}{\cct}{\cct}, we have (\ref{tA trans mat r3 eq1}) $\sim$ (\ref{tA trans mat r3 eq6}) where $\stm^{(x)}$ are given by
\begin{align}
\stm^{(2|1)}
=\tdl
,\quad
\stm^{(3|2)}
=\stm^{(3|12)}
=\stm^{(3|21)}
=\tdn(q^{-1})
,\quad
=\stm^{(23|1)}
=\stm^{(32|1)}
=\tdr
.
\label{tA trans mat r3 c3}
\end{align}
\end{lemma}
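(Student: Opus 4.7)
The plan is to mirror the argument used for Lemma \ref{tA trans mat r3 c1 lemma}: for each of the six equations (\ref{tA trans mat r3 eq1})--(\ref{tA trans mat r3 eq6}), I will exhibit an algebra homomorphism from a rank-2 quantum superalgebra into $\UtAp[2|2]$ sending the rank-2 PBW generators to the pair of quantum root vectors on the two sides of that equation. Each $\stm^{(x)}$ will then be identified as the pullback of a rank-2 transition matrix computed in Section \ref{sec 42}. In the rank-3 case (III) the bilinear data on the simple roots are $(\alpha_1,\alpha_1)=2$, $(\alpha_2,\alpha_2)=(\alpha_3,\alpha_3)=0$, $(\alpha_1,\alpha_2)=-1$, $(\alpha_2,\alpha_3)=1$ and $(\alpha_1,\alpha_3)=0$.

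For $\stm^{(2|1)}$ the pair $\{e_1,e_2\}$ reproduces the Cartan data of the rank-2 case (II), so Theorem \ref{my 3dL result} gives $\stm^{(2|1)}=\tdl$. For $\stm^{(23|1)}$ and $\stm^{(32|1)}$ the pairs $\{e_1,e_{23}\}$ and $\{e_1,e_{32}\}$ consist of even roots with $(\alpha_2+\alpha_3,\alpha_2+\alpha_3)=2$ and $(\alpha_1,\alpha_2+\alpha_3)=-1$, reproducing the rank-2 case (I); the Serre-type identities (\ref{tA qroot serre 1})--(\ref{tA qroot serre 2}) of Proposition \ref{tA qroot rels} and (\ref{tA qroot serre 1 2})--(\ref{tA qroot serre 2 2}) of Proposition \ref{tA qroot rels2} supply the defining relations for these composite generators, and since $d_{\alpha_2+\alpha_3}=d_{\alpha_1+\alpha_2+\alpha_3}=1$ the divided-power normalizations match, so Theorem \ref{KOY13 thm 3dR} yields $\stm^{(23|1)}=\stm^{(32|1)}=\tdr$.

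The three remaining matrices $\stm^{(3|2)},\stm^{(3|12)},\stm^{(3|21)}$ come from the pairs $\{e_2,e_3\}$, $\{e_{12},e_3\}$ and $\{e_{21},e_3\}$, all consisting of odd isotropic elements with mutual bilinear coefficient $+1$; by contrast the rank-2 case (IV) \ddtwoA{\cct}{\cct} uses $-1$. Because the $q$-commutator carries the factor $q^{-(\alpha,\beta)}$, this single sign flip is equivalent to the substitution $q\to q^{-1}$ on the rank-2 side. The nilpotencies $e_{12}^{2}=e_{21}^{2}=0$ required for the relations on the composite generators come from Lemma \ref{qroot general lemma}(1), while Corollary \ref{qcom jacobi coro} and the identities of Section \ref{tA qroot rels subsec} rule out any additional obstructions to the homomorphism. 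Pulling back the rank-2 equation of case (IV) at $q^{-1}$ through Theorem \ref{my 3dN result} then gives $\stm^{(3|2)}=\stm^{(3|12)}=\stm^{(3|21)}=\tdn(q^{-1})$.

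The main technical obstacle is the bookkeeping for the composite-root pairs: for each such pair one must check that the map from the rank-2 algebra (at $q$ or $q^{-1}$) is a genuine algebra homomorphism, and that the rank-2 PBW divided-power factorials agree with those used on the rank-3 side. For the pairs involving $e_{12},e_{21},e_{23},e_{32}$ this is routine via Propositions \ref{tA qroot rels} and \ref{tA qroot rels2} together with Corollary \ref{qcom jacobi coro}, using the identity $[n]_{q^{-1}}!=[n]_{q}!$ to reconcile the factorials after the sign-flip substitution, so no new computations beyond those assembled in Section \ref{tA qroot rels subsec} are needed.
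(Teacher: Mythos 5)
Your proposal is correct and follows essentially the same route as the paper, which proves this lemma only by the remark that it goes "similarly to Lemma \ref{tA trans mat r3 c1 lemma}, by using Proposition \ref{tA qroot rels} and \ref{tA qroot rels2}"; your write-up simply makes explicit the homomorphisms, the Cartan-data identifications (including the $(\alpha_2,\alpha_3)=+1$ sign flip forcing $q\to q^{-1}$ for the isotropic pairs, hence $\tdn(q^{-1})$), and the divided-power normalizations that the paper leaves implicit.
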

The phase factors given by (\ref{te phase}) are now $\phaseI=\phaseII=\phaseIII=0$.
Then, (\ref{te general}) is specialized as follows:
\begin{align}
&\sum
\tdn(q^{-1})_{x_1,x_2,x_3}^{o_1,o_2,o_3}\tdn(q^{-1})_{i_1,x_4,x_5}^{x_1,o_4,o_5}\tdr_{i_2,i_4,x_6}^{x_2,x_4,o_6}\tdl_{i_3,i_5,i_6}^{x_3,x_5,x_6}
=
\sum
\tdl_{x_3,x_5,x_6}^{o_3,o_5,o_6}\tdr_{x_2,x_4,i_6}^{o_2,o_4,x_6}\tdn(q^{-1})_{x_1,i_4,i_5}^{o_1,x_4,x_5}\tdn(q^{-1})_{i_1,i_2,i_3}^{x_1,x_2,x_3}
,
\end{align}
where $o_k,i_k,x_k\in\{0,1\}{\ }(k=1,3,5)$ and the other indices are defined on $\mathbb{Z}_{\geq 0}$.
We then get the following result, which gives a new solution to the tetrahedron equation.
\begin{corollary}\label{tA r3 mythm3}
As the identity of the transition matrices of the quantum superalgebra associated with \ddthreeA{\cc}{\cct}{\cct}, we have the tetrahedron equation given by
\begin{align}
\tdn(q^{-1})_{123}\tdn(q^{-1})_{145}\tdr_{246}\tdl_{356}=\tdl_{356}\tdr_{246}\tdn(q^{-1})_{145}\tdn(q^{-1})_{123}
.
\label{my te c3}
\end{align}
\end{corollary}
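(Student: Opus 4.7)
The plan is to obtain (\ref{my te c3}) as a direct specialization of Theorem \ref{te general thm} to the Dynkin diagram \ddthreeA{\cc}{\cct}{\cct}, once Lemma \ref{tA trans mat r3 c3 lemma} is in hand. First, I would compute the sign factors appearing in (\ref{te general}). Reading off from the diagram, $p(\alpha_1)=0$ and $p(\alpha_2)=p(\alpha_3)=1$, hence $p(\alpha_1+\alpha_2)=1$, $p(\alpha_2+\alpha_3)=0$, and $p(\alpha_1+\alpha_2+\alpha_3)=0$. Plugging into (\ref{te phase}) gives $\phaseI=\phaseII=\phaseIII=0$, so every nonlocal sign factor in (\ref{te general}) drops out.

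Next, I would apply Lemma \ref{tA trans mat r3 c3 lemma} to replace each $\stm^{(x)}$ with its explicit matrix. The resulting component identity is
\begin{align*}
&\sum \tdn(q^{-1})_{x_1,x_2,x_3}^{o_1,o_2,o_3}\,\tdn(q^{-1})_{i_1,x_4,x_5}^{x_1,o_4,o_5}\,\tdr_{i_2,i_4,x_6}^{x_2,x_4,o_6}\,\tdl_{i_3,i_5,i_6}^{x_3,x_5,x_6} \\
&\quad =\sum \tdl_{x_3,x_5,x_6}^{o_3,o_5,o_6}\,\tdr_{x_2,x_4,i_6}^{o_2,o_4,x_6}\,\tdn(q^{-1})_{x_1,i_4,i_5}^{o_1,x_4,x_5}\,\tdn(q^{-1})_{i_1,i_2,i_3}^{x_1,x_2,x_3}.
\end{align*}
Reading off the tensor slots on which each factor acts (slots $(1,2,3)$, $(1,4,5)$, $(2,4,6)$, $(3,5,6)$ on the left, and $(3,5,6)$, $(2,4,6)$, $(1,4,5)$, $(1,2,3)$ on the right) packages this component identity into the operator equation (\ref{my te c3}) on $V\otimes F\otimes V\otimes F\otimes V\otimes F$, where the Fermionic slots $1,3,5$ carry the $V$ factors from $\tdn$ and $\tdl$, while the Bosonic slots $2,4,6$ carry the $F$ factors from $\tdn$, $\tdr$, and $\tdl$.

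The main obstacle is the verification of Lemma \ref{tA trans mat r3 c3 lemma}, in particular the appearance of $q^{-1}$ in $\stm^{(3|2)}$, $\stm^{(3|12)}$, and $\stm^{(3|21)}$. The identifications $\stm^{(2|1)}=\tdl$ and $\stm^{(23|1)}=\stm^{(32|1)}=\tdr$ would follow along the lines of Lemma \ref{tA trans mat r3 c1 lemma}, by constructing algebra homomorphisms from the relevant rank-2 quantum superalgebras into subalgebras of the rank-3 one and invoking Theorems \ref{my 3dL result} and \ref{KOY13 thm 3dR}. For the three matrices involving two isotropic odd roots, however, the symmetrized Cartan data of the relevant subdiagram (e.g.\ $(\alpha_2,\alpha_3)=+1$ here) carries the opposite sign from the standalone rank-2 diagram \ddtwoA{\cct}{\cct}, where $(\alpha_1,\alpha_2)=-1$; this is because the convention in Section \ref{sec 22} selects $\theta=0$ for case (III) but forces $\theta=1$ for the isolated rank-2 diagram. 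The $q$-commutators in Definition \ref{qrv} defining $e_{23}$, $e_{32}$, $e_{3(12)}$, $e_{3(21)}$, and $e_{321}$ consequently differ by $q\leftrightarrow q^{-1}$ from their rank-2 counterparts, and the algebra homomorphism from $\UtAp[1|2]$ associated with \ddtwoA{\cct}{\cct} into the appropriate subalgebra respects relations only after this substitution. Theorem \ref{my 3dN result} then yields $\stm^{(3|2)}=\tdn(q^{-1})$, and the analogous argument handles $\stm^{(3|12)}$ and $\stm^{(3|21)}$.
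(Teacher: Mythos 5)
Your proposal is correct and follows the paper's own route exactly: compute the parities to get $\rho_1=\rho_2=\rho_3=0$, identify the rank-2 transition matrices via Lemma \ref{tA trans mat r3 c3 lemma}, and read the resulting component identity of Theorem \ref{te general thm} as the operator equation (\ref{my te c3}) on $V\otimes F\otimes V\otimes F\otimes V\otimes F$. Your explanation of why $\tdn(q^{-1})$ (rather than $\tdn(q)$) appears — the sign flip $(\alpha_2,\alpha_3)=+1$ forced by the $\theta$-convention of Section \ref{sec 22}, versus $(\alpha_1,\alpha_2)=-1$ for the standalone diagram \ddtwoA{\cct}{\cct} — is a correct filling-in of a detail the paper leaves implicit in its "similarly to Lemma \ref{tA trans mat r3 c1 lemma}" statement.
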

\subsubsection{The case (IV) $\vcenter{\hbox{\protect\includegraphics{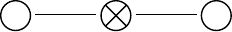}}}$}
In this case, the corresponding symmetrized Cartan matrix is given by
\begin{align}
DA
=
\begin{pmatrix}
2 & -1 & 0 \\
-1 & 0 & 1 \\
0 & 1 & -2
\end{pmatrix}
,
\end{align}
and the corresponding positive roots are given by
\begin{align}
\prer&=\{\alpha_1,\alpha_3\}
,
\\
\prir&=\{\alpha_2,\alpha_1+\alpha_2,\alpha_2+\alpha_3,\alpha_1+\alpha_2+\alpha_3\}
.
\end{align}
Similarly to Lemma \ref{tA trans mat r3 c1 lemma}, by using Proposition \ref{tA qroot rels} and \ref{tA qroot rels2}, we can show the following lemma:
\begin{lemma}\label{tA trans mat r3 c4 lemma}
For the quantum superalgebra associated with \ddthreeA{\cc}{\cct}{\cc}, we have (\ref{tA trans mat r3 eq1}) $\sim$ (\ref{tA trans mat r3 eq6}) where $\stm^{(x)}$ are given by
\begin{align}
\stm^{(2|1)}
=\stm^{(23|1)}
=\stm^{(32|1)}
=\tdl
,\quad
\stm^{(3|2)}
=\stm^{(3|12)}
=\stm^{(3|21)}
=\tdm(q^{-1})
.
\label{tA trans mat r3 c4}
\end{align}
\end{lemma}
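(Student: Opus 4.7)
The plan is to follow the pattern of Lemma \ref{tA trans mat r3 c1 lemma}: reduce each of the six rank-2 transition matrices $\stm^{(x)}$ in (\ref{tA trans mat r3 eq1})--(\ref{tA trans mat r3 eq6}) to an instance of the rank-2 results of Section \ref{sec 42}, by exhibiting an algebra homomorphism from a rank-2 quantum superalgebra into the two-generator subalgebra of $\UtAp[2|2]$ defined by the two quantum root vectors appearing in each equation.

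For $\stm^{(2|1)}, \stm^{(23|1)}, \stm^{(32|1)}$, I would check that the pairs $\{e_{1}, e_{2}\}$, $\{e_{1}, e_{23}\}$, $\{e_{1}, e_{32}\}$ all have the Cartan data of case (II) of Section \ref{sec 42}: the first generator is even with $(\alpha_{1}, \alpha_{1}) = 2$, the second is isotropic odd with $(\cdot, \cdot) = 0$, and their pairing is $-1$. The defining nilpotency and Serre-type relations needed to match the presentation of case (II) rank 2 are provided by the higher-order relations in Propositions \ref{tA qroot rels} and \ref{tA qroot rels2}, together with Corollary \ref{generalized serre} for the compound vectors; the identifications $e_{123} = [e_{1}, e_{23}]_{q}$ and $e_{321} = [e_{32}, e_{1}]_{q}$ (both following from Corollary \ref{qcom jacobi coro}(1) since $[e_{1}, e_{3}] = 0$) then make (\ref{tA trans mat r3 eq3}) and (\ref{tA trans mat r3 eq4}) rank-2 transitions verbatim, and Theorem \ref{my 3dL result} yields $\tdl$ for all three.

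For $\stm^{(3|2)}, \stm^{(3|12)}, \stm^{(3|21)}$, each subalgebra involves $e_{3}$, and the distinguished choice $\theta = 0$ in case (IV) forces $(\alpha_{3}, \alpha_{3}) = -2$ and $d_{3} = -1$. The Cartan data of the pairs $\{e_{2}, e_{3}\}, \{e_{12}, e_{3}\}, \{e_{21}, e_{3}\}$ therefore equals the negative of the case (III) rank 2 data. Negating the bilinear form is precisely the substitution $q \to q^{-1}$ at the level of $q$-commutators $[x, y]_{q} = xy - (-1)^{p(x)p(y)} q^{-(|x|,|y|)} yx$, while the $(q + q^{-1})$ coefficient of the cubic Serre-type relations is invariant under this substitution; hence the three subalgebras are isomorphic to case (III) rank 2 at $q^{-1}$, and Corollary \ref{my 3dL result 2} applied at $q^{-1}$ gives $\tdm(q^{-1})$.

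The main obstacle is the bookkeeping of the $q \leftrightarrow q^{-1}$ substitution: one must verify both that the $q$-commutator structure constants and the PBW normalization factors $[a_{t}]_{p_{t}, (-1)^{p(\beta_{t})}}!$ from Theorem \ref{Yam PBW thm} are consistent under this substitution when translating the rank-3 equations into rank-2 form. The normalization compatibility is automatic since $[n]_{q} = [n]_{q^{-1}}$ in this symmetric convention, but this verification is the one nontrivial check before the rank-2 results may be applied directly.
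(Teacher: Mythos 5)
Your proposal is correct and follows essentially the same route as the paper, which establishes this lemma by the homomorphism-transport argument of Lemma \ref{tA trans mat r3 c1 lemma} using Propositions \ref{tA qroot rels} and \ref{tA qroot rels2}; your observation that the pairs involving $e_3$ carry the negated case (III) Cartan data (since $(\alpha_3,\alpha_3)=-2$), so that the rank-2 result must be invoked at $q^{-1}$ to produce $\tdm(q^{-1})$, is exactly the relevant point. The only cosmetic slips are that the Serre-type relations for the compound vectors come from Propositions \ref{tA qroot rels} and \ref{tA qroot rels2} rather than Corollary \ref{generalized serre}, and that one only needs an algebra homomorphism from the rank-2 algebra into the relevant subalgebra, not an isomorphism.
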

The phase factors given by (\ref{te phase}) are now $\phaseI=0$, $\phaseII=\phaseIII=1$.
Then, (\ref{te general}) is specialized as follows:
\begin{align}
\begin{split}
&\sum
(-1)^{x_2x_5+i_3i_4}\tdm(q^{-1})_{x_1,x_2,x_3}^{o_1,o_2,o_3}\tdm(q^{-1})_{i_1,x_4,x_5}^{x_1,o_4,o_5}\tdl_{i_2,i_4,x_6}^{x_2,x_4,o_6}\tdl_{i_3,i_5,i_6}^{x_3,x_5,x_6}
\\
&=
\sum
(-1)^{x_2x_5+o_3o_4}\tdl_{x_3,x_5,x_6}^{o_3,o_5,o_6}\tdl_{x_2,x_4,i_6}^{o_2,o_4,x_6}\tdm(q^{-1})_{x_1,i_4,i_5}^{o_1,x_4,x_5}\tdm(q^{-1})_{i_1,i_2,i_3}^{x_1,x_2,x_3}
,
\end{split}
\label{my te c4}
\end{align}
where $o_k,i_k,x_k\in\{0,1\}{\ }(k=2,3,4,5)$ and the other indices are defined on $\mathbb{Z}_{\geq 0}$.
As we explained in Remark \ref{LLMM remark}, this equation resembles the tetrahedron equation (\ref{LLMM te}), but we can not eliminate the sign factors at present.
Anyway, we then get the following result:
\begin{corollary}\label{tA r3 mythm4}
As the identity of the transition matrices of the quantum superalgebra associated with \ddthreeA{\cc}{\cct}{\cc}, we have the tetrahedron equation up to sign factors given by (\ref{my te c4}).
\end{corollary}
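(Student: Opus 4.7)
The plan is to specialize Theorem \ref{te general thm} to the Dynkin diagram \ddthreeA{\cc}{\cct}{\cc} by substituting the explicit identifications supplied by Lemma \ref{tA trans mat r3 c4 lemma} and by pinning down the three global sign exponents $\phaseI,\phaseII,\phaseIII$ from the parity of the simple roots. Since Theorem \ref{te general thm} already packages the two rewriting procedures (\ref{tA lhs1 eq1})--(\ref{tA lhs1 eq7}) and (\ref{tA rhs1 eq1})--(\ref{tA rhs1 eq7}), the work left is purely combinatorial book-keeping.

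First I would read off the parities from (IV) in (\ref{tA our dynkin r3}): $p(\alpha_1)=p(\alpha_3)=0$ and $p(\alpha_2)=1$. Substituting into (\ref{te phase}) gives
\begin{align*}
\phaseI &= p(\alpha_1)p(\alpha_3) = 0,\\
\phaseII &= p(\alpha_1+\alpha_2)\,p(\alpha_2+\alpha_3) = 1\cdot 1 = 1,\\
\phaseIII &= p(\alpha_2)\,p(\alpha_1+\alpha_2+\alpha_3) = 1\cdot 1 = 1.
\end{align*}
This is in sharp contrast with cases (I)--(III), where all three phases vanish, and it is precisely what produces the nontrivial sign prefactors in (\ref{my te c4}).

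Next I would plug in Lemma \ref{tA trans mat r3 c4 lemma}, replacing $\stm^{(2|1)},\stm^{(23|1)},\stm^{(32|1)}$ by $\tdl$ and $\stm^{(3|2)},\stm^{(3|12)},\stm^{(3|21)}$ by $\tdm(q^{-1})$ on both sides of (\ref{te general}). The admissible ranges of the indices follow from the root data: the indices attached to the isotropic odd roots $\alpha_2,\alpha_1+\alpha_2,\alpha_2+\alpha_3,\alpha_1+\alpha_2+\alpha_3$ are constrained to $\{0,1\}$, while those attached to the even roots $\alpha_1,\alpha_3$ run over $\mathbb{Z}_{\geq 0}$. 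Matching the positions of these roots to the labels used in (\ref{te general}) yields $o_k,i_k,x_k\in\{0,1\}$ for $k=2,3,4,5$ and the remaining indices in $\mathbb{Z}_{\geq 0}$, which is exactly the index regime displayed in (\ref{my te c4}).

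The main obstacle here is really the preliminary Lemma \ref{tA trans mat r3 c4 lemma}, not the Corollary itself: once the six rank-$2$ transition matrices and the phases are fixed, the Corollary is a substitution. What is conceptually unsatisfying --- and explains the qualifier \emph{up to sign factors} in the statement --- is that the exponent $x_2 x_5 + i_3 i_4$ is nonlocal with respect to the index pairing between the two sides, so it cannot be absorbed into a diagonal gauge transformation of $\tdl$ or $\tdm(q^{-1})$; consequently (\ref{my te c4}) fails to collapse into a clean matrix tetrahedron equation analogous to (\ref{KV94 te}), (\ref{BS06 te}) or (\ref{my te c3}). Whether a reformulation can eliminate these signs is exactly the open question raised in Remark \ref{LLMM remark}, and I would not attempt to resolve it in the proof of the Corollary.
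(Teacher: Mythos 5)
Your proposal is correct and follows exactly the paper's route: specialize Theorem \ref{te general thm} via Lemma \ref{tA trans mat r3 c4 lemma}, compute $\phaseI=0$, $\phaseII=\phaseIII=1$ from the parities of case (IV), and read off the index ranges from $\prer=\{\alpha_1,\alpha_3\}$ and $\prir=\{\alpha_2,\alpha_1+\alpha_2,\alpha_2+\alpha_3,\alpha_1+\alpha_2+\alpha_3\}$, which yields (\ref{my te c4}) verbatim. Your closing remark on the nonlocality of the exponent $x_2x_5+i_3i_4$ matches the discussion in Remark \ref{LLMM remark} and correctly identifies why the result is stated only up to sign factors.
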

\subsubsection{The case (V) $\vcenter{\hbox{\protect\includegraphics{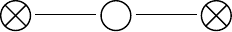}}}$}
In this case, the corresponding symmetrized Cartan matrix is given by
\begin{align}
DA
=
\begin{pmatrix}
0 & -1 & 0 \\
-1 & 2 & -1 \\
0 & -1 & 0
\end{pmatrix}
,
\end{align}
and the corresponding positive roots are given by
\begin{align}
\prer&=\{\alpha_2,\alpha_1+\alpha_2+\alpha_3\}
,
\\
\prir&=\{\alpha_1,\alpha_3,\alpha_1+\alpha_2,\alpha_2+\alpha_3\}
.
\end{align}
Similarly to Lemma \ref{tA trans mat r3 c1 lemma}, by using Proposition \ref{tA qroot rels} and \ref{tA qroot rels2}, we can show the following lemma:
\begin{lemma}\label{tA trans mat r3 c5 lemma}
For the quantum superalgebra associated with \ddthreeA{\cct}{\cc}{\cct}, we have (\ref{tA trans mat r3 eq1}) $\sim$ (\ref{tA trans mat r3 eq6}) where $\stm^{(x)}$ are given by
\begin{align}
\stm^{(2|1)}
=\tdm
,\quad
\stm^{(3|2)}
=\tdl
,\quad
\stm^{(23|1)}
=\stm^{(32|1)}
=\stm^{(3|12)}
=\stm^{(3|21)}
=\tdn
.
\label{tA trans mat r3 c5}
\end{align}
\end{lemma}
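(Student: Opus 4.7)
The plan is to treat each of the six transition matrices in (\ref{tA trans mat r3 eq1})$\sim$(\ref{tA trans mat r3 eq6}) separately, showing that each one reduces to a rank $2$ transition matrix already identified in Section \ref{sec 42}. The strategy, parallel to the proof of Lemma \ref{tA trans mat r3 c1 lemma}, is to identify a pair of (generalized) quantum root vectors $(e_{\gamma}, e_{\delta})$ in $\UtAp[2|2]$ whose subalgebra is governed by the rank $2$ Cartan data of an appropriate Dynkin diagram in (\ref{tA our dynkin}), and then invoke the corresponding rank $2$ theorem (Theorem \ref{KOY13 thm 3dR}, Theorem \ref{my 3dL result}, Corollary \ref{my 3dL result 2}, or Theorem \ref{my 3dN result}) through an algebra homomorphism into $\UtAp[2|2]$.

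First I would handle $\stm^{(2|1)}$ and $\stm^{(3|2)}$, which are the easy cases because they involve only the genuine generators $e_1,e_2$ and $e_2,e_3$. For the oeo diagram (V), the pair $(\alpha_1,\alpha_2)$ carries Cartan data matching \ddtwoA{\cct}{\cc}, so Corollary \ref{my 3dL result 2} gives $\stm^{(2|1)}=\tdm$; the pair $(\alpha_2,\alpha_3)$ matches \ddtwoA{\cc}{\cct}, so Theorem \ref{my 3dL result} gives $\stm^{(3|2)}=\tdl$. Here one only needs to check that the three generator relations of rank $2$ hold inside $\UtAp[2|2]$, which is immediate from (\ref{qs rel3})$\sim$(\ref{qs rel5}) and (\ref{tA qroot serre 5}).

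Next come the four composite cases $\stm^{(23|1)},\stm^{(32|1)},\stm^{(3|12)},\stm^{(3|21)}$. For each of these I would identify the relevant pair of composite quantum root vectors and verify that it satisfies the defining relations of a rank $2$ quantum superalgebra of type \ddtwoA{\cct}{\cct}. For example, for $\stm^{(23|1)}$ the pair is $(e_{23},e_1)$, which consists of two isotropic odd roots (in (V), $\alpha_2+\alpha_3$ and $\alpha_1$ are both isotropic odd), and a direct computation of the bilinear form gives $(\alpha_1,\alpha_2+\alpha_3)=-1$. The nilpotency $e_{23}^2=0$ and $e_1^2=0$ is Lemma \ref{qroot general lemma}(1), and $e_{(23)1}=e_{123}$ is then the third quantum root vector in the sense of Definition \ref{qrv}. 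Thus the subalgebra generated by $e_1,e_{23}$ is a quotient of $\UtAp[2|1]$ with Dynkin diagram \ddtwoA{\cct}{\cct}, and Theorem \ref{my 3dN result} gives $\stm^{(23|1)}=\tdn$. The analysis for $\stm^{(32|1)},\stm^{(3|12)},\stm^{(3|21)}$ is entirely parallel, with the auxiliary relations (\ref{tA qroot rel1}), (\ref{tA qroot rel1 2}), and the Jacobi-type relations of Proposition \ref{tA qroot rels}, \ref{tA qroot rels2} used to match the normalizations.

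The main technical obstacle I anticipate is making the reduction to the rank $2$ subalgebra rigorous: one must show not only that the generators satisfy the expected relations, but that there is actually an injective algebra homomorphism from the rank $2$ algebra $\UtAp[2|1]$ onto the subalgebra generated by the chosen pair, so that the PBW linear independence needed to read off the transition coefficients is preserved. This is handled in the non-super setting by invoking the existence of Lusztig's braid group action, but here the cleanest route is to observe that the relevant $q$-root vectors $(e_{\gamma}, e_{\delta})$ arise as part of a PBW basis of $\UtAp[2|2]$ by Theorem \ref{Yam PBW thm}, so monomials in them remain linearly independent in the rank $3$ algebra. Combined with the rank $2$ theorems of Section \ref{sec 42}, this yields (\ref{tA trans mat r3 c5}).
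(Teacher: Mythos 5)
Your proposal is correct and takes essentially the same route as the paper, whose proof of this lemma is the one-line remark that it follows as in Lemma \ref{tA trans mat r3 c1 lemma} via Propositions \ref{tA qroot rels} and \ref{tA qroot rels2} --- i.e.\ exactly your reduction of each $\stm^{(x)}$ to a rank-$2$ transition matrix ($\tdm$, $\tdl$, or $\tdn$, after checking the parities and pairings of the relevant simple or composite roots) through an algebra homomorphism sending the rank-$2$ generators to the chosen quantum root vectors. The injectivity issue you flag at the end is not actually needed: the lemma only asserts that the identities (\ref{tA trans mat r3 eq1}) $\sim$ (\ref{tA trans mat r3 eq6}) hold with the stated coefficients, which follows from applying the (not necessarily injective) homomorphism to the rank-$2$ identity (\ref{tA trans mat def1}); linear independence is invoked only at the end of Section \ref{sec 43} against the genuine rank-$3$ PBW basis.
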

The phase factors given by (\ref{te phase}) are now $\phaseI=\phaseII=1$, $\phaseIII=0$.
Then, (\ref{te general}) is specialized as follows:
\begin{align}
\begin{split}
&\sum
(-1)^{i_1o_6+x_4+x_2x_5}\tdl_{x_1,x_2,x_3}^{o_1,o_2,o_3}\tdn_{i_1,x_4,x_5}^{x_1,o_4,o_5}\tdn_{i_2,i_4,x_6}^{x_2,x_4,o_6}\tdm_{i_3,i_5,i_6}^{x_3,x_5,x_6}
\\
&=
\sum
(-1)^{o_1i_6+x_4+x_2x_5}\tdm_{x_3,x_5,x_6}^{o_3,o_5,o_6}\tdn_{x_2,x_4,i_6}^{o_2,o_4,x_6}\tdn_{x_1,i_4,i_5}^{o_1,x_4,x_5}\tdl_{i_1,i_2,i_3}^{x_1,x_2,x_3}
,
\end{split}
\label{my te c5}
\end{align}
where $o_k,i_k,x_k\in\{0,1\}{\ }(k=1,2,5,6)$ and the other indices are defined on $\mathbb{Z}_{\geq 0}$.
We then get the following result:
\begin{corollary}\label{tA r3 mythm5}
As the identity of the transition matrices of the quantum superalgebra associated with \ddthreeA{\cct}{\cc}{\cct}, we have the tetrahedron equation up to sign factors given by (\ref{my te c5}).
\end{corollary}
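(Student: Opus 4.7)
The plan is to derive (\ref{my te c5}) as a direct specialization of the master identity (\ref{te general}) from Theorem \ref{te general thm}, once two pieces of data are in hand: the explicit forms of the six rank-2 transition matrices $\stm^{(x)}$, and the values of the phase factors $\phaseI, \phaseII, \phaseIII$. The structure of the proof will mirror Corollaries \ref{tA r3 mythm1}--\ref{tA r3 mythm4}, where the work is packaged into a rank-2 identification lemma plus a phase computation.

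For the phase factors, the parity assignment for case (V) is $p(\alpha_1) = p(\alpha_3) = 1$ and $p(\alpha_2) = 0$, which yields $p(\alpha_1+\alpha_2) = p(\alpha_2+\alpha_3) = 1$ and $p(\alpha_1+\alpha_2+\alpha_3) = 0$. Substituting into (\ref{te phase}) gives $\phaseI = \phaseII = 1$ and $\phaseIII = 0$, producing the sign factors $(-1)^{i_1 o_6 + x_4 + x_2 x_5}$ and $(-1)^{o_1 i_6 + x_4 + x_2 x_5}$ that appear in (\ref{my te c5}).

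For the rank-2 transition matrices, I would establish Lemma \ref{tA trans mat r3 c5 lemma} following the template of Lemma \ref{tA trans mat r3 c1 lemma}. For the two simple pairs $(e_1,e_2)$ and $(e_2,e_3)$, the associated rank-2 subalgebras have Dynkin diagrams \ddtwoA{\cct}{\cc} and \ddtwoA{\cc}{\cct} respectively, so Corollary \ref{my 3dL result 2} and Theorem \ref{my 3dL result} give $\stm^{(2|1)} = \tdm$ and $\stm^{(3|2)} = \tdl$. For each of the four compound pairs $(e_{23},e_1), (e_{32},e_1), (e_3,e_{12}), (e_3,e_{21})$, I would use Propositions \ref{tA qroot rels} and \ref{tA qroot rels2} to verify that these compound generators satisfy the defining relations of the rank-2 quantum superalgebra associated with \ddtwoA{\cct}{\cct}: the compound roots $\alpha_2+\alpha_3$, $\alpha_1+\alpha_2$ are isotropic odd (as $(\alpha_2+\alpha_3,\alpha_2+\alpha_3) = (\alpha_1+\alpha_2,\alpha_1+\alpha_2) = 0$ in case (V)), so the pair must satisfy $e_{23}^2 = e_{12}^2 = 0$ together with the appropriate higher-order Serre-type relation, after which Theorem \ref{my 3dN result} yields $\stm^{(23|1)} = \stm^{(32|1)} = \stm^{(3|12)} = \stm^{(3|21)} = \tdn$. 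Substituting these identifications and the phase factors into (\ref{te general}) then produces (\ref{my te c5}) directly.

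The main obstacle is the verification of the four compound-pair identifications; at each step one must confirm both the correct $q$-commutator structure constants and the Serre-type relations among compound generators, using the full strength of Propositions \ref{tA qroot rels} and \ref{tA qroot rels2}. Apart from this, once Lemma \ref{tA trans mat r3 c5 lemma} is in place, the corollary is immediate from Theorem \ref{te general thm}.
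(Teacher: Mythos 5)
Your proposal is correct and follows essentially the same route as the paper: the paper proves Corollary \ref{tA r3 mythm5} by establishing Lemma \ref{tA trans mat r3 c5 lemma} (with exactly your identifications $\stm^{(2|1)}=\tdm$, $\stm^{(3|2)}=\tdl$, and $\tdn$ for the four compound pairs, verified via Propositions \ref{tA qroot rels} and \ref{tA qroot rels2} in the manner of Lemma \ref{tA trans mat r3 c1 lemma}), computing $\phaseI=\phaseII=1$, $\phaseIII=0$, and specializing (\ref{te general}). Your phase computation and the resulting sign factors agree with (\ref{my te c5}).
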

\subsubsection{The case (VI) $\vcenter{\hbox{\protect\includegraphics{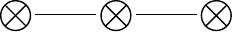}}}$}
In this case, the corresponding symmetrized Cartan matrix is given by
\begin{align}
DA
=
\begin{pmatrix}
0 & 1 & 0 \\
1 & 0 & -1 \\
0 & -1 & 0
\end{pmatrix}
,
\end{align}
and the corresponding positive roots are given by
\begin{align}
\prer&=\{\alpha_2,\alpha_3\}
,
\\
\prir&=\{\alpha_1,\alpha_2,\alpha_3,\alpha_1+\alpha_2+\alpha_3\}
.
\end{align}
Similarly to Lemma \ref{tA trans mat r3 c1 lemma}, by using Proposition \ref{tA qroot rels} and \ref{tA qroot rels2}, we can show the following lemma:
\begin{lemma}\label{tA trans mat r3 c6 lemma}
For the quantum superalgebra associated with \ddthreeA{\cct}{\cct}{\cct}, we have (\ref{tA trans mat r3 eq1}) $\sim$ (\ref{tA trans mat r3 eq6}) where $\stm^{(x)}$ are given by
\begin{align}
\stm^{(2|1)}
=\tdn(q^{-1})
,\quad
\stm^{(3|2)}
=\tdn
,\quad
\stm^{(23|1)}
=\stm^{(32|1)}
=\tdm(q^{-1})
,\quad
\stm^{(3|12)}
=\stm^{(3|21)}
=\tdl
.
\label{tA trans mat r3 c6}
\end{align}
\end{lemma}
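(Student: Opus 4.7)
The plan is to follow the same route as in the proofs of Lemmas \ref{tA trans mat r3 c1 lemma} through \ref{tA trans mat r3 c5 lemma}: for each transition matrix $\stm^{(x)}$ appearing in (\ref{tA trans mat r3 eq1})--(\ref{tA trans mat r3 eq6}), I identify a pair of quantum root vectors in $\UtAp[2|2]$ that satisfies the defining relations of one of the four rank-$2$ quantum superalgebras from Section \ref{sec 42}, construct the corresponding algebra homomorphism from that rank-$2$ algebra (possibly with $q$ replaced by $q^{-1}$) into $\UtAp[2|2]$, and then transport the rank-$2$ transition formula (Theorem \ref{KOY13 thm 3dR}, Theorem \ref{my 3dL result}, Corollary \ref{my 3dL result 2}, or Theorem \ref{my 3dN result}) through this homomorphism. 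The appearance of $q^{-1}$ in $\tdn(q^{-1})$ and $\tdm(q^{-1})$ simply reflects a sign flip in the bilinear form between the ambient rank-$3$ Cartan data and the reference rank-$2$ case.

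For this Dynkin diagram every simple root is isotropic odd, with $(\alpha_1, \alpha_2) = +1$, $(\alpha_2, \alpha_3) = -1$ and $(\alpha_1, \alpha_3) = 0$, so $\alpha_{12}, \alpha_{21}, \alpha_{23}, \alpha_{32}$ are all even while $\alpha_{123}$ is isotropic odd. The pair $(e_1, e_2)$ furnishes a copy of rank-$2$ case (IV) \ddtwoA{\cct}{\cct} but with opposite sign of the bilinear form, giving $\stm^{(2|1)} = \tdn(q^{-1})$; the pair $(e_2, e_3)$ furnishes the same case with the matching sign, giving $\stm^{(3|2)} = \tdn$. For $\stm^{(3|12)}$ and $\stm^{(3|21)}$, the pairs $(e_{12}, e_3)$ and $(e_{21}, e_3)$ realize rank-$2$ case (II) \ddtwoA{\cc}{\cct} with $(\alpha_{12}, \alpha_3) = (\alpha_{21}, \alpha_3) = -1$, yielding $\tdl$ without any substitution. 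Finally $(e_1, e_{23})$ and $(e_1, e_{32})$ realize rank-$2$ case (III) \ddtwoA{\cct}{\cc} but with $(\alpha_1, \alpha_{23}) = +1$, which after the substitution $q \to q^{-1}$ gives $\stm^{(23|1)} = \stm^{(32|1)} = \tdm(q^{-1})$.

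The main technical step, and the step most likely to require care, is the verification that each chosen pair actually satisfies the full set of defining relations of the rank-$2$ superalgebra it is meant to model. For pairs of simple generators this is immediate from the defining relations of $\UtAp[2|2]$. For the pairs involving a composite even root vector, the odd isotropic squaring $e_i^2 = 0$ follows from Lemma \ref{qroot general lemma}(1), while the quartic even--odd Serre relations needed to play the role of the rank-$2$ generators — such as $e_{12}^2 e_3 - (q + q^{-1}) e_{12} e_3 e_{12} + e_3 e_{12}^2 = 0$ and its analogues for $e_{21}, e_{23}, e_{32}$ paired with the adjacent simple generator — are precisely (\ref{tA qroot serre 1}), (\ref{tA qroot serre 4}), (\ref{tA qroot serre 1 2}) and (\ref{tA qroot serre 4 2}) of Propositions \ref{tA qroot rels}--\ref{tA qroot rels2}. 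Once these relations are in hand, the PBW normalization factors $[m]_{p_t, (-1)^{p(\beta_t)}}!$ from Theorem \ref{Yam PBW thm} match on both sides since $[m]_q! = [m]_{q^{-1}}!$ for the even composite roots, and applying the appropriate rank-$2$ transition formula (specialized at $q^{-1}$ where the bilinear form sign requires it) through the homomorphism yields the six claimed identifications.
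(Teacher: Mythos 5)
Your proposal is correct and follows essentially the same route as the paper, which proves this lemma only by reference to the method of Lemma \ref{tA trans mat r3 c1 lemma}: realize each rank-$2$ case inside $\UtAp[2|2]$ via the higher-order relations of Propositions \ref{tA qroot rels} and \ref{tA qroot rels2}, and transport the rank-$2$ transition formulae through the resulting homomorphisms, with $q\to q^{-1}$ wherever the induced sub-Cartan data is the negative of the reference one; your sign bookkeeping for all six identifications agrees with (\ref{tA trans mat r3 c6}). One small citation slip: since $\alpha_1$ and $\alpha_3$ are isotropic here, the relevant quartic relations with the composite even root squared are (\ref{tA qroot serre 2}), (\ref{tA qroot serre 2 2}), (\ref{tA qroot serre 4}), (\ref{tA qroot serre 4 2}) (the relations (\ref{tA qroot serre 1}) and (\ref{tA qroot serre 1 2}) you cite require $\alpha_i\in\prer$ and are replaced by $e_i^2=0$, which you do invoke separately).
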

The phase factors given by (\ref{te phase}) are now $\phaseI=\phaseIII=1$, $\phaseII=0$.
Then, (\ref{te general}) is specialized as follows:
\begin{align}
\begin{split}
&\sum
(-1)^{i_1o_6+x_4+i_3i_4}\tdn_{x_1,x_2,x_3}^{o_1,o_2,o_3}\tdl_{i_1,x_4,x_5}^{x_1,o_4,o_5}\tdm(q^{-1})_{i_2,i_4,x_6}^{x_2,x_4,o_6}\tdn(q^{-1})_{i_3,i_5,i_6}^{x_3,x_5,x_6}
\\
&=
\sum
(-1)^{o_1i_6+x_4+o_3o_4}\tdn(q^{-1})_{x_3,x_5,x_6}^{o_3,o_5,o_6}\tdm(q^{-1})_{x_2,x_4,i_6}^{o_2,o_4,x_6}\tdl_{x_1,i_4,i_5}^{o_1,x_4,x_5}\tdn_{i_1,i_2,i_3}^{x_1,x_2,x_3}
,
\end{split}
\label{my te c6}
\end{align}
where $o_k,i_k,x_k\in\{0,1\}{\ }(k=1,3,4,6)$ and the other indices are defined on $\mathbb{Z}_{\geq 0}$.
We then get the following result:
\begin{corollary}\label{tA r3 mythm6}
As the identity of the transition matrices of the quantum superalgebra associated with \ddthreeA{\cct}{\cct}{\cct}, we have the tetrahedron equation up to sign factors given by (\ref{my te c6}).
\end{corollary}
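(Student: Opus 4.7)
The plan is to deduce (\ref{my te c6}) as a direct specialization of the master equation (\ref{te general}) of Theorem \ref{te general thm}, using Lemma \ref{tA trans mat r3 c6 lemma} to identify each of the six rank-2 transition matrices $\stm^{(x)}$ and the parity data of \ddthreeA{\cct}{\cct}{\cct} to fix the phase factors. This follows the same template already executed in Corollaries \ref{tA r3 mythm1}$\sim$\ref{tA r3 mythm5}; only the numerical inputs change.

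First I would compute the phase factors from (\ref{te phase}). All three simple roots are isotropic odd, so $p(\alpha_1)=p(\alpha_2)=p(\alpha_3)=1$, giving $\phaseI=p(\alpha_1)p(\alpha_3)=1$. Since $\alpha_1+\alpha_2$ and $\alpha_2+\alpha_3$ each have parity $1+1\equiv 0 \pmod 2$, we get $\phaseII=0$. Finally, $p(\alpha_1+\alpha_2+\alpha_3)=1$, so $\phaseIII=p(\alpha_2)\cdot 1=1$. These values reproduce the sign prefactors $(-1)^{i_1 o_6+x_4+i_3 i_4}$ and $(-1)^{o_1 i_6+x_4+o_3 o_4}$ in (\ref{my te c6}).

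Next I would justify Lemma \ref{tA trans mat r3 c6 lemma} by the sub-diagram embedding strategy of Lemma \ref{tA trans mat r3 c1 lemma}: each of (\ref{tA trans mat r3 eq1})$\sim$(\ref{tA trans mat r3 eq6}) is a rank-2 PBW transition identity for the subalgebra in which the relevant pair of quantum root vectors plays the role of the rank-2 simple generators. Reading off parities and $(\cdot,\cdot)$-values of the pairs $\{e_1,e_2\}, \{e_2,e_3\}, \{e_1,e_{23}\}, \{e_1,e_{32}\}, \{e_{12},e_3\}, \{e_{21},e_3\}$ gives rank-2 Dynkin diagrams \ddtwoA{\cct}{\cct}, \ddtwoA{\cct}{\cct}, \ddtwoA{\cct}{\cc}, \ddtwoA{\cct}{\cc}, \ddtwoA{\cc}{\cct}, \ddtwoA{\cc}{\cct}, which by Theorems \ref{my 3dL result}, \ref{my 3dN result} and Corollary \ref{my 3dL result 2} are governed by $\tdn, \tdm, \tdl$ respectively. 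The $q\to q^{-1}$ substitutions in $\stm^{(2|1)}=\tdn(q^{-1})$ and $\stm^{(23|1)}=\stm^{(32|1)}=\tdm(q^{-1})$ come from the sign mismatch between the bilinear form values $(\alpha_i,\alpha_j)$ realized in the ambient rank-3 data and those of the standalone rank-2 case (IV) or (III), which swaps $[e_i,e_j]_q \leftrightarrow [e_i,e_j]_{q^{-1}}$ inside the definition of the composite quantum root vectors; this mirrors the treatment in Lemmas \ref{tA trans mat r3 c3 lemma} and \ref{tA trans mat r3 c5 lemma}.

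The main obstacle will be the bookkeeping in Lemma \ref{tA trans mat r3 c6 lemma}: one must check that the six embeddings into $\UtAp[2|2]$ preserve not only the quadratic relations but also the normalizations of the divided powers $e_\beta^{(a)}$ given by Definition \ref{qrv} and Theorem \ref{Yam PBW thm}, so that the rank-2 theorems apply verbatim to the composite vectors. Once Lemma \ref{tA trans mat r3 c6 lemma} is established and the phases computed, substitution into (\ref{te general}) yields (\ref{my te c6}) by direct term-matching, with all indices constrained by the weight conservation already encoded in the rank-2 building blocks.
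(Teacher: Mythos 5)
Your proposal is correct and follows exactly the paper's route: specialize the master identity (\ref{te general}) of Theorem \ref{te general thm} to the diagram \ddthreeA{\cct}{\cct}{\cct} by computing $\phaseI=\phaseIII=1$, $\phaseII=0$ from (\ref{te phase}) and identifying the six rank-2 factors as in Lemma \ref{tA trans mat r3 c6 lemma}, with the $q\to q^{-1}$ substitutions accounted for by the sign flip of $(\alpha_i,\alpha_j)$ relative to the standalone rank-2 realizations. Your phase computations and the assignment $\stm^{(2|1)}=\tdn(q^{-1})$, $\stm^{(3|2)}=\tdn$, $\stm^{(23|1)}=\stm^{(32|1)}=\tdm(q^{-1})$, $\stm^{(3|12)}=\stm^{(3|21)}=\tdl$ agree with the paper.
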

\section{PBW bases of type B and 3D reflection equation}\label{sec 5}
\subsection{PBW bases of type B of rank 2 and 3}\label{tB qroot rels subsec}
In this section, we focus on quantum superalgebras of type B in the case of rank 2 and 3.
Here, we introduce some notations to briefly describe the PBW bases of the nilpotent subalgebra of them, and show higher-order relations for them.
We define $e'_{ij},e'_{(ij)k},e'_{i(jk)}\in\UtBp$ in the same way as (\ref{tA notations}).
Let us recall the normalization $e'_{\beta}\mapsto e_{\beta}$ given in Definition \ref{qrv}(ii).
Corresponding to the normalization, we use a simplified rule as follows, which is enough to our description.
We set $e_{x}=e'_{x}/(q^{1/2}+q^{-1/2}){\ }(x=ij,(ij)k,i(jk))$ for the case $x$ involves the letter ``$r$'' twice where $r$ is the rank of $\UtBp$ defined in Section \ref{sec 21}, and we set $e_{x}=e'_{x}{\ }(x=ij,(ij)k,i(jk))$ otherwise.
By considering Corollary \ref{qcom jacobi coro}(1), we simply write $e_{ijk}=e_{(ij)k}$ for the case $(\alpha_i,\alpha_k)=0$.
\begin{example}
The indices of the element $e'_{(12)2}\in\UtBp$ for rank $r=m+n=2$ involve the letter ``$2$'' twice.
We then set $e_{(12)2}=e'_{(12)2}/(q^{1/2}+q^{-1/2})$.
\end{example}
\par
We also define elements with more $q$-commutators as well.
For example, we define $e'_{((jk)k)(ji)}\in\UtBp$ by
\begin{align}
e'_{((jk)k)(ji)}
=[[[e_j,e_k]_q,e_k]_q,[e_j,e_i]_q]_q
.
\end{align}
Then, we similarly set $e_{x}=e'_{x}/(q^{1/2}+q^{-1/2})$ for the case $x$ involves two letters ``$r$'', and we set $e_{x}=e'_{x}$ otherwise.
These elements satisfy the following higher-order relations, where we only consider the case of rank 3, which is enough for our purpose in Section \ref{sec 53}.
\begin{proposition}\label{tB qroot rels1}
For the case of rank 3, we have
\begin{align}
&\ltrlo=(-1)^{p(\alpha_1)p(\alpha_3)}\ltolr
\label{tB qroot rel2}
,\\
&\oot=\olltrlrl
\label{tB qroot rel3}
,\\
&\lltrlrlo=\lltolrlr
\label{tB qroot rel4}
,\\
&\loltrllltrl=(-1)^{(p(\alpha_1)+p(\alpha_2)+p(\alpha_3))p(\alpha_2)}\tlllotlrlrl
\label{tB qroot rel5}
,\\
&\ltrllltrlol=(-1)^{p(\alpha_1)p(\alpha_3)+(p(\alpha_1)+p(\alpha_2))(p(\alpha_2)+p(\alpha_3))}\ltollltrlrl
\label{tB qroot rel6}
,\\
&\lltrlrlltol=\tlrlrltolll
\label{tB qroot rel7}
.
\end{align}
\end{proposition}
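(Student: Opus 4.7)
The relations split naturally into two groups: the first three, (\ref{tB qroot rel2})--(\ref{tB qroot rel4}), are single-commutator reorderings directly analogous to (\ref{tA qroot rel1})--(\ref{tA qroot rel3}) of the type A case, while the last three, (\ref{tB qroot rel5})--(\ref{tB qroot rel7}), are identities among degree-two expressions built from the root vectors $e_{23}$, $e_{(23)3}$ and the ``long'' vectors $e_{12}$, $e_{21}$, etc. My overall plan mirrors the proof of Proposition \ref{tA qroot rels}, using throughout Lemma \ref{qcom jacobi} (Jacobi) together with Corollary \ref{qcom jacobi coro}, Corollary \ref{generalized serre} and Lemma \ref{qroot general lemma}. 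As a preliminary remark, in rank $3$ the simple roots $\alpha_1$ and $\alpha_3$ are non-adjacent, so $a_{13}=0$, $(\alpha_1,\alpha_3)=0$, and (\ref{qs rel5}) gives $[e_1,e_3]=0$; consequently $[e_1,e_3]_q=0$, a fact used repeatedly below. I also observe that within each identity the letter ``$3$'' appears the same number of times on both sides, so the normalization factors $q^{1/2}+q^{-1/2}$ introduced in Definition \ref{qrv}(ii) cancel across the equation, and it suffices to prove each identity for the corresponding unnormalized primed elements.

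For (\ref{tB qroot rel2}), I apply Corollary \ref{qcom jacobi coro}(2) with $x=e_2$, $y=e_3$, $z=e_1$; the hypotheses $[e_3,e_1]=0$ and $(\alpha_3,\alpha_1)=0$ hold by the remark above. For (\ref{tB qroot rel3}), I apply Corollary \ref{qcom jacobi coro}(1) twice, each time using $[e_1,e_3]_q=0$, to move the outer $e_1$ across two successive brackets by $e_3$. For (\ref{tB qroot rel4}), I first apply Corollary \ref{qcom jacobi coro}(2) to the outermost bracket (moving $e_1$ past $e_3$), then use the already-proven (\ref{tB qroot rel2}) on the remaining inner bracket; the two parity signs combine to $1$. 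Each of these is a transparent copy of the reasoning behind (\ref{tA qroot rel1})--(\ref{tA qroot rel3}).

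For (\ref{tB qroot rel5})--(\ref{tB qroot rel7}), each side is a $q$-commutator $[u,v]_q$ with both $u$ and $v$ themselves bracketed. The strategy is to expand one side via Lemma \ref{qcom jacobi}, choosing the inner split so that the \textit{main} term matches the other side after relabeling via the already-proven (\ref{tB qroot rel2})--(\ref{tB qroot rel4}), while every \textit{correction} term of the form $[X,Z]_q Y$ or $Y[X,Z]_q$ is killed either by Corollary \ref{generalized serre} (Serre written as a vanishing iterated $q$-commutator), by Lemma \ref{qroot general lemma}(1)--(2) (vanishing of the square of an isotropic quantum root vector, and vanishing of $[e_\alpha,e_i]_q$ when $\alpha+\alpha_i\notin\prr$), or by (\ref{qs rel5}) combined with $a_{13}=0$. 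The main obstacle, and the only real difficulty in the proof, is clerical: correctly tracking the signs $(-1)^{p(\beta)p(\gamma)}$ and the $q$-powers $q^{-(\beta,\gamma)}$ that proliferate under Jacobi, given that the compound weights $\alpha_1+\alpha_2+\alpha_3$ and $\alpha_2+\alpha_3$ carry nontrivial parities and that $p(\alpha_3)$ depends on which rank-$3$ type B Dynkin diagram is under consideration. Once these are organized, the identities fall out as direct combinations of the lemmas listed above.
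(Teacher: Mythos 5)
Your handling of (\ref{tB qroot rel2})--(\ref{tB qroot rel4}) matches the paper's argument, and the observation that the $q^{1/2}+q^{-1/2}$ normalizations cancel across each identity is sound. The genuine gap is in the mechanism you describe for (\ref{tB qroot rel5}) and (\ref{tB qroot rel6}): there the roles of ``main term'' and ``correction terms'' in the Jacobi expansion are exactly reversed from what you claim, and your plan as stated cannot be executed. Writing the left side of (\ref{tB qroot rel5}) as $[e_{123},[e_{2},e_{3}]_{q}]_{q}$ (up to the common normalization) and applying Lemma \ref{qcom jacobi} with $x=e_{123}$, $y=e_{2}$, $z=e_{3}$, it is the associativity term $[[e_{123},e_{2}]_{q},e_{3}]_{q}$ that \emph{vanishes} --- because $[e_{2},e_{123}]=0$ (relation (\ref{tB qroot rel8}), which is absent from your toolkit) together with $(\alpha_1+\alpha_2+\alpha_3,\alpha_2)=(\BB_1,\BB_2-\BB_3)=0$ forces $[e_{123},e_{2}]_{q}=0$ --- while the two correction terms $[e_{123},e_{3}]_{q}e_{2}$ and $e_{2}[e_{123},e_{3}]_{q}$ are nonzero and are precisely what reassemble into $(-1)^{(p(\alpha_1)+p(\alpha_2)+p(\alpha_3))p(\alpha_2)}[e_{2},e_{((12)3)3}]_{q}$, i.e.\ the right-hand side. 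The same happens for (\ref{tB qroot rel6}), with $[e_{23},e_{21}]_{q}=0$ playing the role of the vanishing main term. No choice of inner split realizes your scheme: the outermost bracketing of the right-hand side of (\ref{tB qroot rel5}) is by $e_{2}$, which does not occur as an outermost factor on the left, so it can only be produced by the correction terms of the Jacobi identity, never by a re-association.

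Relatedly, the last step is not merely ``clerical'': gluing the two surviving terms into a single $q$-commutator requires the specific identities $(\alpha_1+\alpha_2,\alpha_2)=(\BB_3,\BB_3)=-(\alpha_1+\alpha_2+2\alpha_3,\alpha_2)$ together with the matching parity computation, and this is where the sign $(-1)^{(p(\alpha_1)+p(\alpha_2)+p(\alpha_3))p(\alpha_2)}$ in the statement actually comes from. Your outline is adequate for (\ref{tB qroot rel7}), where the paper indeed re-associates via Corollary \ref{qcom jacobi coro}(1) with corrections killed by $[e_{21},e_{23}]=0$ and Lemma \ref{qroot general lemma}(2) (after first reducing to the mirror identity $e_{(12)(3(32))}=e_{(((12)3)3)2}$ with the anti-automorphism $\chi$), but for (\ref{tB qroot rel5}) and (\ref{tB qroot rel6}) you need to replace ``corrections vanish'' by ``the associativity term vanishes and the corrections are the answer.''
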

\begin{proof}
(\ref{tB qroot rel2}) is obtained by Corollary \ref{qcom jacobi coro}(2) because $[e_{1},e_{3}]=0$ and $(\alpha_{1},\alpha_{3})=0$.
(\ref{tB qroot rel3}) is obtained by Corollary \ref{qcom jacobi coro}(1).
(\ref{tB qroot rel4}) is obtained by
\begin{align}
[[[e_{2},e_{3}]_{q},e_{3}]_{q},e_{1}]_{q}
=
(-1)^{p(\alpha_1)p(\alpha_3)}[[[e_{2},e_{3}]_{q},e_{1}]_{q},e_{3}]_{q}
=
[[[e_{2},e_{1}]_{q},e_{3}]_{q},e_{3}]_{q}
,
\label{tB qroot rels1 proof1}
\end{align}
where we used Corollary \ref{qcom jacobi coro}(2).
For (\ref{tB qroot rel7}), it is sufficient to show $e_{(12)(3(32))}=e_{(((12)3)3)2}$ by considering the anti-algebra automorphism $\chi$ given by (\ref{anti chi}).
Actually, we have $e_{(12)(3(32))}=e_{((12)3)(32)}=e_{(((12)3)3)2}$ where we used Corollary \ref{qcom jacobi coro}(1), (\ref{tB qroot rel9}) and Lemma \ref{qroot general lemma}(2).
(\ref{tB qroot rel5}) and (\ref{tB qroot rel6}) are shown in the same way.
Here, we only present the proof for (\ref{tB qroot rel5}).
We can calculate the left hand side of (\ref{tB qroot rel5}) as
\begin{align}
\begin{split}
[e_{123},[e_{2},e_{3}]_{q}]_{q}/(q^{1/2}+q^{-1/2})
=&
(-1)^{(p(\alpha_1)+p(\alpha_2)+p(\alpha_3))p(\alpha_2)}q^{-(\alpha_1+\alpha_2+\alpha_3,\alpha_2)}e_{2}e_{((12)3)3}
\\
&-(-1)^{p(\alpha_2)p(\alpha_3)}q^{-(\alpha_2,\alpha_3)}e_{((12)3)3}e_{2}
\end{split}
\\
\begin{split}
=&
(-1)^{(p(\alpha_1)+p(\alpha_2)+p(\alpha_3))p(\alpha_2)}
q^{-(\alpha_1+\alpha_2+\alpha_3,\alpha_2)}
\\
&\times
\left(
e_{2}e_{((12)3)3}
-(-1)^{(p(\alpha_1)+p(\alpha_2))p(\alpha_2)}q^{(\alpha_1+\alpha_2,\alpha_2)}e_{((12)3)3}e_{2}
\right)
,
\end{split}
\label{tB qroot rels1 proof2}
\end{align}
where we used Lemma \ref{qcom jacobi} and (\ref{tB qroot rel8}).
One can see $(\alpha_1+\alpha_2+\alpha_3,\alpha_2)=(\BB_1,\BB_2-\BB_3)=0$ and $(\alpha_1+\alpha_2,\alpha_2)=(\BB_3,\BB_3)=-(\alpha_1+\alpha_2+2\alpha_3,\alpha_2)$.
Then, the right hand side of (\ref{tB qroot rels1 proof2}) is exactly the right hand side of (\ref{tB qroot rel5}).
\end{proof}
Here, we cite a lemma from \cite[Lemma 6.3.1(i)]{Yam94} used below.
\begin{lemma}\label{Yama lemma1}
For $1\leq i\leq r-1$, $[e_{\BB_{i}},e_{\BB_{i}+\BB_{r}}]_{q}=0$.
Especially, if $r=3$, this gives $[e_{123},e_{((12)3)3}]_{q}=0$ for $i=1$ and $[e_{23},e_{(23)3}]_{q}=0$ for $i=2$.
\end{lemma}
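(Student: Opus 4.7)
The statement is quoted from \cite[Lemma 6.3.1(i)]{Yam94}, so in a minimal write-up one could simply cite the source. For a self-contained verification, my plan is the following.

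The first move is to rewrite $e_{\BB_i+\BB_r}$ using Definition \ref{qrv}(ii). Since $\BB_i+\BB_r=\BB_i+\alpha_r$ with $g(\BB_i)=i<r$, the normalization case (activated by $\GG=\tB$, $j=r$, $\alpha=\BB_i$) yields
$$e_{\BB_i+\BB_r} \;=\; \frac{1}{q^{1/2}+q^{-1/2}}\,[e_r,e_{\BB_i}]_q .$$
Hence it suffices to prove $[e_{\BB_i},[e_r,e_{\BB_i}]_q]_q=0$.

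I would proceed by induction on $r-i$. For the base case $i=r-1$, unpack $e_{\BB_{r-1}}=[e_{r-1},e_r]_q$ and apply the super-Jacobi identity (Lemma \ref{qcom jacobi}) to $[e_{\BB_{r-1}},[e_r,e_{\BB_{r-1}}]_q]_q$. This brings the expression into terms of two kinds: nested brackets in $e_{r-1},e_r$ alone that are multiples of the Serre-type commutator $[e_r,[e_r,[e_r,e_{r-1}]_q]_q]_q$, which vanishes by Corollary \ref{generalized serre} since $|a_{r,r-1}|=2$; and a residue proportional to $[e_{\BB_{r-1}},e_{\BB_{r-1}}]_q$. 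The residue is $0$ when $\BB_{r-1}\in\ri$ by Lemma \ref{qroot general lemma}(1), and in the purely even case it reduces, after one further Jacobi expansion, to the same Serre relation. For the inductive step, write $e_{\BB_i}=[e_i,e_{\BB_{i+1}}]_q$ and use Corollary \ref{qcom jacobi coro}(1) together with $[e_i,e_r]=0$ (valid since $r\ge i+2$) to push $e_i$ through the nested brackets, reducing the identity to the case $i{+}1$, which is the inductive hypothesis. In parallel, for the two explicit rank-$3$ specializations, one can just run Step 2 ($i=r-1=2$) and then Step 3 ($i=1$) by peeling $e_1$ out via Corollary \ref{qcom jacobi coro}(1) applied to $[e_1,e_3]=0$.

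The main obstacle will be the sign bookkeeping, not the algebraic skeleton. Each Jacobi expansion in the super setting introduces prefactors of the form $(-1)^{p(\BB_i)p(\BB_r)}$ and $q^{-(\BB_i,\cdot)}$ whose values depend on whether $\BB_i$ lies in $\re$, $\ri$, or $\ra$; the same is true for the residues involving $[e_{\BB_i},e_{\BB_i}]_q$, whose value changes sharply between the isotropic-odd case (where Lemma \ref{qroot general lemma}(1) forces a collapse) and the other cases (where one must absorb the result into a Serre relation). Verifying that in every case these phases conspire so that the residue actually vanishes, rather than producing a nonzero element of weight $2\BB_i+\BB_r$, is the delicate point of the argument, and is precisely the content of Yamane's original proof.
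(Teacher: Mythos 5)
The paper gives no proof of this lemma at all: it is imported directly as a citation to \cite[Lemma 6.3.1(i)]{Yam94}, which is precisely your opening suggestion, so your approach matches the paper's. Your supplementary induction sketch is a plausible outline of how the identity could be verified with the tools of Section 2 (the Jacobi-type identity, the generalized Serre relations, and Lemma \ref{qroot general lemma}), but since you explicitly defer the decisive sign verification to Yamane's original argument, the citation remains the operative proof in both your write-up and the paper's.
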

\begin{proposition}\label{tB qroot rels2}
For the case of rank 3, we have
\begin{align}
&[e_{\jj},\ooo]=0
\label{tB qroot rel8}
,\\
&[\toX,\poo]=0
\label{tB qroot rel9}
,\\
&[\poo,\oot]=0
\label{tB qroot rel10}
,\\
&[\ltrlo,\pot]=0
\label{tB qroot rel11}
,\\
&[e_{\kk},\lltrlrlltol]=0
\label{tB qroot rel12}
.
\end{align}
\end{proposition}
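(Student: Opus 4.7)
The plan is to reduce each of the five super-commutator vanishings in Proposition \ref{tB qroot rels2} to previously established identities, using the super Jacobi identity (Lemma \ref{qcom jacobi}) and its corollaries (Corollary \ref{qcom jacobi coro}), the generalized Serre relations (Corollary \ref{generalized serre}), Lemma \ref{qroot general lemma}(2), Lemma \ref{Yama lemma1}, and the higher-order identities already collected in Propositions \ref{tA qroot rels}, \ref{tA qroot rels2}, and \ref{tB qroot rels1}.

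The first step is to verify, for each of (\ref{tB qroot rel8})--(\ref{tB qroot rel12}), that the two weights involved on the left-hand side are orthogonal in the $\BB$-basis: for instance $(\alpha_{\jj},\alpha_{\ii}+\alpha_{\jj}+\alpha_{\kk})=(\BB_{\jj}-\BB_{\kk},\BB_{\ii})=0$ for (\ref{tB qroot rel8}), and analogous direct computations handle the remaining four cases (all hold since $(\BB_{i},\BB_{j})=\pm\delta_{ij}$). Orthogonality means that for weight-homogeneous $x,y$ of those weights one has $[x,y]=[x,y]_{q}$, so in every case it is enough to establish the vanishing of the corresponding $q$-super-commutator.

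For (\ref{tB qroot rel8})--(\ref{tB qroot rel10}) I would first rewrite the longer root vector in a convenient form via Corollary \ref{qcom jacobi coro}(1): since $[e_{\ii},e_{\kk}]_{q}=0$ one has $e_{\ooo}=e_{\oltrl}$, while (\ref{tB qroot rel3}) already provides $e_{\oot}=e_{\olltrlrl}$. A single application of the super Jacobi identity then produces correction terms of the form $[e_{\jj},e_{\poo}]_{q}$ or $[e_{\poo},e_{\jj}]_{q}$, each of which vanishes by Corollary \ref{generalized serre}, and the desired $q$-super-commutator is the remaining term. These three relations are tightly coupled: one Jacobi manipulation converts one into another modulo a Serre term, so they should be treated together. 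For (\ref{tB qroot rel11}) and (\ref{tB qroot rel12}) the same scheme applies but with more iterations; one first invokes the equivalences of Proposition \ref{tB qroot rels1} (in particular (\ref{tB qroot rel4}) and (\ref{tB qroot rel7})) to rewrite the complicated root vectors, and then uses Lemma \ref{Yama lemma1} to dispose of the intermediate quantities $[e_{\ooo},e_{\oot}]_{q}$ or $[e_{\poo},e_{\pot}]_{q}$ that appear during the Jacobi expansion. The anti-algebra automorphism $\chi$ of (\ref{anti chi}) can be used to pair some of these relations with mirror versions and halve the amount of computation.

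The main obstacle is the sign and $q$-power bookkeeping. Because the Dynkin diagram is left general, the parities of $\alpha_{\ii},\alpha_{\jj},\alpha_{\kk}$ are unspecified, and every Jacobi application produces two correction terms carrying independent $(-1)^{p(\cdot)p(\cdot)}q^{-(\cdot,\cdot)}$ prefactors. Verifying that the right combination of correction terms really reduces to Serre or to Lemma \ref{qroot general lemma}(2) in every parity configuration is the delicate step. A helpful simplification is that whenever the inner product of the two outer weights vanishes, which is exactly our setting, the $q$-powers in the Jacobi correction terms collapse to $q^{0}=1$, leaving only parity signs to be tracked, and this makes the reduction tractable in practice.
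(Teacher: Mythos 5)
Your toolkit and general line of attack --- observe that the two weights are orthogonal so the plain super-commutator coincides with the $q$-commutator, then run Jacobi expansions whose correction terms die by Serre or by Lemma \ref{Yama lemma1}, with $\chi$ and Proposition \ref{tB qroot rels1} used to recycle work --- is indeed the paper's. The genuine gap is that, as described, your Jacobi manipulations for (\ref{tB qroot rel8})--(\ref{tB qroot rel11}) only convert one target identity into another: writing $e_{123}=e_{1(23)}$ turns $[e_{2},e_{123}]_{q}$ into $[e_{21},e_{23}]_{q}$ modulo multiples of $[e_{2},e_{23}]_{q}=0$, and writing $e_{((12)3)3}=e_{1((23)3)}$ turns $[e_{23},e_{((12)3)3}]_{q}$ into $[e_{(23)1},e_{(23)3}]_{q}$ modulo multiples of $[e_{23},e_{(23)3}]_{q}=0$. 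Each chain is circular until some member is killed by an input external to the list, and you never say what that input is. The paper's anchor is Lemma \ref{qroot general lemma}(2) (transported with $\chi$ when the relevant root vector is built under $O_{2}$): $[e_{(12)3},e_{2}]_{q}=0$ closes (\ref{tB qroot rel8})/(\ref{tB qroot rel9}), and $[e_{(((12)3)3)2},e_{3}]_{q}=0$ closes (\ref{tB qroot rel10}) through the single rewriting $[e_{((12)3)3},e_{23}]_{q}=[e_{(((12)3)3)2},e_{3}]_{q}$ of Corollary \ref{qcom jacobi coro}(1); only then does (\ref{tB qroot rel11}) follow from Lemma \ref{qcom jacobi} together with Lemma \ref{Yama lemma1} for $i=2$, (\ref{tB qroot rel3}) and (\ref{tB qroot rel10}). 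You list Lemma \ref{qroot general lemma}(2) among your tools but never apply it, and without it the reduction does not terminate.

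Two further inaccuracies. The correction terms arising for (\ref{tB qroot rel10}) under your rewriting are multiples of $[e_{23},e_{(23)3}]_{q}$, which requires Lemma \ref{Yama lemma1}, not terms of the form $[e_{2},e_{23}]_{q}$; and for (\ref{tB qroot rel12}) Lemma \ref{Yama lemma1} is of no use, since the weights involved are not of the shape $\BB_{i}$, $\BB_{i}+\BB_{r}$ --- the paper instead applies (\ref{tB qroot rel7}), then $\chi$, then Lemma \ref{qroot general lemma}(2) to $[e_{(((12)3)3)2},e_{3}]$. Finally, orthogonality of the two outer weights does \emph{not} make the Jacobi prefactors collapse to $q^{0}$: Lemma \ref{qcom jacobi} produces $q^{-(\beta,\gamma)}$ and $q^{-(\alpha,\beta)}$ separately, and for instance the factor $q^{-(\alpha_{1},\alpha_{2}+\alpha_{3})}$ appearing in the computation for (\ref{tB qroot rel11}) is not $1$. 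This error is harmless here only because those terms multiply commutators that vanish identically, but the simplification you invoke to justify the bookkeeping is not available.
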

\begin{proof}
(\ref{tB qroot rel8}) and (\ref{tB qroot rel9}) are obtained  exactly in the same way as (\ref{tA qroot rel2}) and (\ref{tA qroot rel3}), where we use the anti-algebra automorphism $\chi$ given by (\ref{anti chi}).
(\ref{tB qroot rel10}) is obtained by $[e_{((12)3)3},e_{23}]=[e_{(((12)3)3)2},e_{3}]=0$ where we used Corollary \ref{qcom jacobi coro}(1) and Lemma \ref{qroot general lemma}(2).
(\ref{tB qroot rel11}) is obtained by
\begin{align}
\begin{split}
[[e_{23},e_{1}]_{q},e_{(23)3}]_{q}
=&
[e_{23},[e_{1},e_{(23)3}]]
+(-1)^{p(\alpha_1)p(\alpha_2)}q^{-(\alpha_1,\alpha_2+\alpha_3)}
[e_{23},e_{(23)3}]e_{1}
\\
&+(-1)^{p(\alpha_1)(p(\alpha_2)+p(\alpha_3))}q^{-(\alpha_1,\alpha_2+\alpha_3)}
e_{1}[e_{23},e_{(23)3}]
,
\end{split}
\end{align}
where we used Lemma \ref{qcom jacobi}.
This is actually equal to $0$ by Lemma \ref{Yama lemma1} for $i=2$, (\ref{tB qroot rel3}) and (\ref{tB qroot rel10}).
(\ref{tB qroot rel12}) is obtained by
\begin{align}
\chi([e_{\kk},\lltrlrlltol])
=\chi([e_{\kk},\tlrlrltolll])
=[e_{(((12)3)3)2},e_{\kk}]
=0,
\end{align}
where we used (\ref{tB qroot rel7}) and Lemma \ref{qroot general lemma}(2).
\end{proof}
\begin{proposition}\label{tB qroot rels3}
For the case of rank 3, we have
\begin{align}
&e_{\ii}^2\poo-(q+q^{-1})e_{\ii}\poo e_{\ii}+\poo e_{\ii}^2=0
\quad (\alpha_{\ii}\in\prer)
\label{tB qroot serre 1}
,\\
&\poo^3e_{\ii}-(q+1+q^{-1})\poo^2 e_{\ii}\poo+(q+1+q^{-1})\poo e_{\ii}\poo^2-e_{\ii}\poo^3=0
\quad (\alpha_{\jj}+\alpha_{\kk}\in\prer)
\label{tB qroot serre 2}
,\\
&e_{\kk}^3\toX-(q+1+q^{-1})e_{\kk}^2\toX e_{\kk}+(q+1+q^{-1})e_{\kk}\toX e_{\kk}^2-\toX e_{\kk}^3=0
\quad (\alpha_{\kk}\in\prer)
\label{tB qroot serre 3}
,\\
&\toX^2e_{\kk}-(q+q^{-1})\toX e_{\kk}\toX+e_{\kk}\toX^2=0
\quad (\alpha_{\ii}+\alpha_{\jj}\in\prer)
\label{tB qroot serre 4}
,\\
&e_{\ii}^2\pot-(q+q^{-1})e_{\ii}\pot e_{\ii}+\pot e_{\ii}^2=0
\quad (\alpha_{\ii}\in\prer)
\label{tB qroot serre 5}
,\\
&\pot^2e_{\ii}-(q+q^{-1})\pot e_{\ii}\pot+e_{\ii}\pot^2=0
\quad (\alpha_{\jj}+2\alpha_{\kk}\in\prer)
\label{tB qroot serre 6}
,\\
&e_{\jj}^2\oot-(q+q^{-1})e_{\jj}\oot e_{\jj}+\oot e_{\jj}^2=0
\quad (\alpha_{\jj}\in\prer)
\label{tB qroot serre 7}
,\\
&\oot^2e_{\jj}-(q+q^{-1})\oot e_{\jj}\oot +e_{\jj}\oot^2=0
\quad (\alpha_{\ii}+\alpha_{\jj}+2\alpha_{\kk}\in\prer)
\label{tB qroot serre 8}
,\\
&\toX^2\pot-(q+q^{-1})\toX\pot\toX+\pot\toX=0
\quad (\alpha_{\ii}+\alpha_{\jj}\in\prer)
\label{tB qroot serre 9}
,\\
&\pot^2\toX-(q+q^{-1})\pot\toX\pot+\toX\pot^2=0
\quad (\alpha_{\jj}+2\alpha_{\kk}\in\prer)
\label{tB qroot serre 10}
.
\end{align}
\end{proposition}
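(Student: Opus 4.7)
The plan is to prove each of the ten identities by reducing it to either a standard Serre-type relation from Corollary~\ref{generalized serre} or one of the higher-order relations established in Propositions~\ref{tB qroot rels1} and \ref{tB qroot rels2}. The central tool is the Jacobi-like identity of Lemma~\ref{qcom jacobi} together with its two special cases in Corollary~\ref{qcom jacobi coro}, which let us transfer $q$-commutators across generators that are orthogonal (and actually commute). In addition, the anti-algebra automorphism $\chi$ of (\ref{anti chi}) pairs each identity with its opposite, so it suffices to treat roughly half of the ten relations by hand.

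For the quadratic Serre relations (\ref{tB qroot serre 1}), (\ref{tB qroot serre 4}), (\ref{tB qroot serre 5}), (\ref{tB qroot serre 6}), (\ref{tB qroot serre 7}), (\ref{tB qroot serre 8}), (\ref{tB qroot serre 9}), (\ref{tB qroot serre 10}), I would first recast the left hand side as a doubly nested $q$-commutator of the form $[[y,x]_{q},x]_{q}$ or $[x,[x,y]_{q}]_{q}$, where $y$ is the relevant composite root vector (one of $\poo,\pot,\oot,\toX$) and $x$ is a generator or itself a short composite. Then I would expand $y$ via its defining $q$-commutator and use Corollary~\ref{qcom jacobi coro}(2) to slide the outer copy of $x$ past whichever generator inside $y$ is orthogonal to $x$. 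After one or two such moves the expression collapses to $[[e_a,x]_{q},x]_{q}$ or its mirror, which is killed by Corollary~\ref{generalized serre}. For instance, for (\ref{tB qroot serre 1}) we have $[[\poo,e_{\ii}]_{q},e_{\ii}]_{q}=[[[e_{\jj},e_{\kk}]_{q},e_{\ii}]_{q},e_{\ii}]_{q}$; since $[e_{\kk},e_{\ii}]=0$ and $(\alpha_{\ii},\alpha_{\kk})=0$, Corollary~\ref{qcom jacobi coro}(2) lets us swap $e_{\ii}$ past $e_{\kk}$, landing us at $(-1)^{\bullet}[[[e_{\jj},e_{\ii}]_{q},e_{\ii}]_{q},e_{\kk}]_{q}=0$. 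The relations of Proposition~\ref{tB qroot rels1} enter whenever two different expressions for the same quantum root vector need to be matched.

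For the cubic Serre relations (\ref{tB qroot serre 2}) and (\ref{tB qroot serre 3}) the same principle applies but with a triply nested $q$-commutator $[[[y,x]_{q},x]_{q},x]_{q}=0$. Case (\ref{tB qroot serre 3}) is direct: writing $\toX=[e_{\jj},e_{\ii}]_{q}$, we slide each outer $e_{\kk}$ past $e_{\ii}$ via Corollary~\ref{qcom jacobi coro}(2) and land on the cubic Serre relation for the pair $(e_{\jj},e_{\kk})$ with $|a_{\jj\kk}|=2$, which vanishes by Corollary~\ref{generalized serre}. Case (\ref{tB qroot serre 2}) is more delicate because $\poo$ is itself composite and appears three times; here I would use (\ref{tB qroot rel2}) to rearrange matching copies and again reduce to a cubic Serre relation among generators.

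The main obstacle is careful bookkeeping of the sign factors $(-1)^{p(\alpha)p(\beta)}$ and the $q$-powers $q^{-(\alpha,\beta)}$ produced at every application of Lemma~\ref{qcom jacobi} or Corollary~\ref{qcom jacobi coro}(2), together with the normalization factor $(q^{1/2}+q^{-1/2})$ that enters whenever two copies of the letter ``$r$'' appear among the indices of a composite vector. The relations that involve $\pot$ or $\oot$ are the ones most sensitive to this normalization, so those are the places where the most care is needed; everywhere else the argument is essentially a two- or three-step shuffle followed by an appeal to Serre.
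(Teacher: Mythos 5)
Your general framework---recast each left-hand side as a nested $q$-commutator, move factors with Lemma \ref{qcom jacobi} / Corollary \ref{qcom jacobi coro}, use $\chi$ to halve the work, and appeal to Corollary \ref{generalized serre}---is the right one, and it does dispose of (\ref{tB qroot serre 1}), (\ref{tB qroot serre 3}), (\ref{tB qroot serre 4}) and (\ref{tB qroot serre 5}) exactly as in the type A proofs of (\ref{tA qroot serre 1}) and (\ref{tA qroot serre 2}). But there is a genuine gap: for the relations involving $\pot=e_{(\jj\kk)\kk}$ and $\oot=e_{((\ii\jj)\kk)\kk}$, namely (\ref{tB qroot serre 2}), (\ref{tB qroot serre 6}), (\ref{tB qroot serre 7}), (\ref{tB qroot serre 8}) and (\ref{tB qroot serre 10}), the ``slide past an orthogonal generator and land on a Serre relation'' mechanism does not apply, because $e_{\jj}$ and $e_{\kk}$ are orthogonal to none of the constituents of these composites. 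What actually closes these cases in the paper are two tools you never invoke: Lemma \ref{qroot general lemma}(2), which kills a $q$-bracket $[e_{\alpha},e_{i}]_{q}$ whenever $\alpha+\alpha_{i}$ fails to be a (reduced) root, and Lemma \ref{Yama lemma1}, the nontrivial Yamane identity $[e_{\BB_{i}},e_{\BB_{i}+\BB_{r}}]_{q}=0$. For instance, (\ref{tB qroot serre 7}) is not a shuffle at all: its left-hand side is $[e_{(((\ii\jj)\kk)\kk)\jj},e_{\jj}]_{q}$, which vanishes only because $\alpha_{\ii}+3\alpha_{\jj}+2\alpha_{\kk}$ is not a root (Lemma \ref{qroot general lemma}(2)); (\ref{tB qroot serre 6}) needs (\ref{tB qroot rel11}), whose own proof rests on Lemma \ref{Yama lemma1}; and (\ref{tB qroot serre 8}) requires Lemma \ref{Yama lemma1} twice (for $i=1$ and $i=2$) combined with (\ref{tB qroot rel5}) and (\ref{tB qroot rel3}).

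Relatedly, your plan for (\ref{tB qroot serre 2}) --- rearrange with (\ref{tB qroot rel2}) and ``reduce to a cubic Serre relation among generators'' --- will not terminate: $[[[e_{\ii},\poo]_{q},\poo]_{q},\poo]_{q}$ cannot be brought to a Serre relation in $e_{\ii},e_{\jj},e_{\kk}$ by sliding, since $e_{\ii}$ commutes with neither constituent of $\poo$. The paper instead passes through the identity (\ref{tB qroot rel5}) relating $e_{(\ii(\jj\kk))(\jj\kk)}$ to $e_{\jj(((\ii\jj)\kk)\kk)}$ and then finishes with Corollary \ref{qcom jacobi coro}(1), Corollary \ref{generalized serre} and Lemma \ref{qroot general lemma}(2). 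So the difficulty in the hard half is not, as you suggest, merely the bookkeeping of signs and the $q^{1/2}+q^{-1/2}$ normalization; it is that a different vanishing mechanism (non-roothood of $\alpha+\alpha_{i}$, plus the Yamane lemma) must be supplied, and without it several of the ten identities cannot be reached by your outlined steps.
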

\begin{proof}
The proof of this proposition will be presented together with the next proposition.
\end{proof}
\begin{proposition}\label{tB qroot rels4}
For the case of rank 3, we have
\begin{align}
&\toX^2=0
\quad (\alpha_{\ii}+\alpha_{\jj}\in\prir)
\label{tB qroot serre 11}
,\\
&\pot^2=0
\quad (\alpha_{\jj}+2\alpha_{\kk}\in\prir)
\label{tB qroot serre 12}
,\\
&\oot^2=0
\quad (\alpha_{\ii}+\alpha_{\jj}+2\alpha_{\kk}\in\prir)
\label{tB qroot serre 13}
,\\
\begin{split}
&\poo^3e_{\ii}+(-1)^{p(\alpha_{\ii})}(1-q-q^{-1})\poo^2 e_{\ii}\poo
\\
&+(1-q-q^{-1})\poo e_{\ii}\poo^2+(-1)^{p(\alpha_{\ii})}e_{\ii}\poo^3=0
\quad (\alpha_{\jj}+\alpha_{\kk}\in\prar)
,
\end{split}
\label{tB qroot serre 14}
\\
\begin{split}
&e_{\kk}^3\toX+(-1)^{p(\alpha_{\ii})+p(\alpha_{\jj})}(1-q-q^{-1})e_{\kk}^2\toX e_{\kk}
\\
&+(1-q-q^{-1})e_{\kk}\toX e_{\kk}^2+(-1)^{p(\alpha_{\ii})+p(\alpha_{\jj})}\toX e_{\kk}^3=0
\quad (\alpha_{\kk}\in\prar)
.
\end{split}
\label{tB qroot serre 15}
\end{align}
\end{proposition}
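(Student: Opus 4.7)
The proposition mixes two qualitatively different kinds of identities, and I would treat them by complementary techniques. The vanishing squares (\ref{tB qroot serre 11})--(\ref{tB qroot serre 13}) attached to isotropic odd roots reduce to Lemma \ref{qroot general lemma}(1). The cubic $q$-Serre-type relations (\ref{tB qroot serre 14})--(\ref{tB qroot serre 15}) I would derive by recasting each side as a nested $q$-commutator and peeling it off via Lemma \ref{qcom jacobi} and the zero commutators of Proposition \ref{tB qroot rels2}.

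For the first group, the elements $\toX$, $\pot$, $\oot$ realise quantum root vectors in the sense of Definition \ref{qrv} for the reduced positive roots $\alpha_\ii+\alpha_\jj$, $\alpha_\jj+2\alpha_\kk$, $\alpha_\ii+\alpha_\jj+2\alpha_\kk$ respectively. Here $\toX=[e_\jj,e_\ii]_q$ is already in the ``new simple on the left'' pattern of the order $O_1$ (take $\alpha=\alpha_\ii$, $i=\jj$ in Definition \ref{qrv}(ii)), so (\ref{tB qroot serre 11}) is a direct application of Lemma \ref{qroot general lemma}(1). The elements $\pot$ and $\oot$ follow instead the ``composite on the left'' pattern of the order $O_2$, and I would handle these by the anti-algebra automorphism $\chi$ of (\ref{anti chi}): a short induction using $\chi([e_i,e_\alpha]_q)=[\chi(e_\alpha),e_i]_q$ shows that $\chi$ sends each $O_1$ quantum root vector to the $O_2$ quantum root vector for the same root, and since $\chi(x^2)=\chi(x)^2$, the $O_1$-vanishing of Lemma \ref{qroot general lemma}(1) transfers to (\ref{tB qroot serre 12})--(\ref{tB qroot serre 13}).

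For (\ref{tB qroot serre 14}), my plan is to recognise the left-hand side as the expansion of $[[[e_\ii,\poo]_q,\poo]_q,\poo]_q$ up to a common scalar: under the anisotropic-odd hypothesis one has $(\alpha_\ii,\alpha_\jj+\alpha_\kk)=-1$ and $(\alpha_\jj+\alpha_\kk,\alpha_\jj+\alpha_\kk)\neq 0$, so the effective Cartan datum of the pair $(\poo,e_\ii)$ is the $|a|=2$, $\alpha_i\in\prar$ case of the cubic Serre template listed after (\ref{qs rel7}). I would then show the nested triple $q$-bracket vanishes by successive Jacobi expansions via Lemma \ref{qcom jacobi}, reducing to monomials in $e_\ii,e_\jj,e_\kk$ in which the cross terms cancel via $[e_\ii,e_\kk]=0$, the zero commutators (\ref{tB qroot rel9}) and (\ref{tB qroot rel11}), and the base Serre relations among simple generators. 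Identity (\ref{tB qroot serre 15}) is handled by the same template with $e_\kk$ in the role of the anisotropic odd generator (its hypothesis is $\alpha_\kk\in\prar$) and $\toX$ in place of $e_\ii$, again invoking (\ref{tB qroot rel9}) together with the base cubic Serre relation for the pair $(e_\kk,e_\jj)$.

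\textbf{Main obstacle.} The delicate step is the sign and $q$-factor bookkeeping in Part two: each Jacobi expansion introduces a $(-1)^{p(\cdot)p(\cdot)}q^{-(\cdot,\cdot)}$ coefficient, and one must verify that the aggregate matches exactly the $(-1)^{p(\alpha_\ii)}(1-q-q^{-1})$ and $\pm(-1)^{p(\alpha_\ii)}$ weights appearing in (\ref{tB qroot serre 14}), and analogously for (\ref{tB qroot serre 15}). The anisotropic-odd hypothesis forces $p(\alpha_\jj)+p(\alpha_\kk)\equiv 1\pmod 2$ (resp.\ $p(\alpha_\kk)\equiv 1$), which pins down the parity of the inner generators and streamlines the accounting. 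Part one, in contrast, is essentially formal once the $\chi$-duality between $O_1$- and $O_2$-quantum root vectors has been recorded.
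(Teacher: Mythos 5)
Your treatment of (\ref{tB qroot serre 11})--(\ref{tB qroot serre 13}) matches the paper: they are corollaries of Lemma \ref{qroot general lemma}(1), and your explicit remark that $\pot$ and $\oot$ are quantum root vectors for the order $O_2$, so that the vanishing must be transported from $O_1$ by the anti-automorphism $\chi$ of (\ref{anti chi}) via $\chi([e_i,e_\alpha]_q)=[\chi(e_\alpha),e_i]_q$, is a point the paper leaves implicit. Your opening move for the cubic relations is also the paper's: the left-hand sides of (\ref{tB qroot serre 14}) and (\ref{tB qroot serre 15}) are the nested brackets $[[[e_{\ii},\poo]_{q},\poo]_{q},\poo]_{q}$ and $[[[\toX,e_{\kk}]_{q},e_{\kk}]_{q},e_{\kk}]_{q}$ (written uniformly with (\ref{tB qroot serre 2}) and (\ref{tB qroot serre 3})), and (\ref{tB qroot serre 15}) is finished essentially as you describe --- by commuting $e_{\ii}$ past $e_{\kk}$ with Corollary \ref{qcom jacobi coro}(2) (the relevant vanishing is $[e_{\ii},e_{\kk}]=0$ rather than (\ref{tB qroot rel9})) and then applying the cubic Serre relation of Corollary \ref{generalized serre} to the pair $(e_{\kk},e_{\jj})$.

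The gap is in your closing of (\ref{tB qroot serre 14}). ``Successive Jacobi expansions reducing to monomials in $e_{\ii},e_{\jj},e_{\kk}$ whose cross terms cancel via $[e_{\ii},e_{\kk}]=0$, (\ref{tB qroot rel9}), (\ref{tB qroot rel11}) and the base Serre relations'' is a plan rather than a proof, and it is not clear the listed ingredients suffice: in the subcase of $\alpha_{\jj}+\alpha_{\kk}\in\prar$ where $\alpha_{\jj}$ is isotropic, the presentation of $\UGGp$ also contains the quartic relation (\ref{qs rel6}), which a monomial-level cancellation would have to use and which you never invoke. The paper avoids the expansion entirely: one bracket gives $(q^{1/2}+q^{-1/2})\,[e_{(\ii(\jj\kk))(\jj\kk)},\poo]_{q}$, then the higher-order identity (\ref{tB qroot rel5}) rewrites $e_{(\ii(\jj\kk))(\jj\kk)}$ as $\pm\,\tlllotlrlrl$, after which Corollary \ref{qcom jacobi coro}(1), Corollary \ref{generalized serre} and Lemma \ref{qroot general lemma}(2) reduce everything to $[e_{\jj},[\oot,\poo]_{q}]_{q}=0$. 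That pivot through (\ref{tB qroot rel5}) is the missing idea; without it, or an equivalent structural identity, the sign-and-$q$-factor bookkeeping you flag as the ``main obstacle'' is not merely delicate but has no anchor to cancel against.
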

\begin{proof}
(\ref{tB qroot serre 11}), (\ref{tB qroot serre 12}) and (\ref{tB qroot serre 13}) are cororallies of Lemma \ref{qroot general lemma}(1).
(\ref{tB qroot serre 3}) and (\ref{tB qroot serre 15}) can be written together as $[[[e_{21},e_{3}]_{q},e_{3}]_{q},e_{3}]_{q}$.
Then, (\ref{tB qroot serre 1}), (\ref{tB qroot serre 3}), (\ref{tB qroot serre 15}) and (\ref{tB qroot serre 5}) are obtained exactly in the same way as (\ref{tA qroot serre 1}).
(\ref{tB qroot serre 4}) and (\ref{tB qroot serre 6}) are obtained in the same way as (\ref{tA qroot serre 2}), where we use (\ref{tB qroot rel11}) for (\ref{tB qroot serre 6}).
The left hand side of (\ref{tB qroot serre 7}) can be written as $[[e_{((12)3)3},e_{2}]_{q},e_{2}]_{q}=[e_{(((12)3)3)2},e_{2}]_{q}$.
This is equal to $0$ by Lemma \ref{qroot general lemma}.
\par
(\ref{tB qroot serre 2}) and (\ref{tB qroot serre 14}) can be written together as $[[[e_{1},e_{23}]_{q},e_{23}]_{q},e_{23}]_{q}$.
Then, we have
\begin{align}
[[[e_{1},e_{23}]_{q},e_{23}]_{q},e_{23}]_{q}/(q^{1/2}+q^{-1/2})
=&
[e_{(1(23))(23)},e_{23}]_{q}
\\
=&
(-1)^{(p(\alpha_1)+p(\alpha_2)+p(\alpha_3))p(\alpha_2)}
[e_{2(((12)3)3)},e_{23}]_{q}
\\
=&
(-1)^{(p(\alpha_1)+p(\alpha_2)+p(\alpha_3))p(\alpha_2)}
[e_{2},[e_{((12)3)3},e_{23}]_{q}]_{q}
=
0
,
\end{align}
where we used (\ref{tB qroot rel5}), Corollary \ref{qcom jacobi coro}(1), Corollary \ref{generalized serre} and Lemma \ref{qroot general lemma}(2).
\par
The left hand side of (\ref{tB qroot serre 8}) can be written as $[e_{2(((12)3)3)},e_{((12)3)3}]_{q}$.
Then we have
\begin{align}
[e_{2(((12)3)3)},e_{((12)3)3}]_{q}
=&
(-1)^{(p(\alpha_1)+p(\alpha_2)+p(\alpha_3))p(\alpha_2)}
[e_{(1(23))(23)},e_{((12)3)3}]_{q}
\\
=&
(-1)^{(p(\alpha_1)+p(\alpha_2)+p(\alpha_3))p(\alpha_2)}
[e_{123},[e_{23},e_{((12)3)3}]_{q}]_{q}/(q^{1/2}+q^{-1/2})
\\
=&
(-1)^{(p(\alpha_1)+p(\alpha_2)+p(\alpha_3))p(\alpha_2)}
[e_{123},[e_{23},e_{1((23)3)}]_{q}]_{q}/(q^{1/2}+q^{-1/2})
\\
=&
(-1)^{(p(\alpha_1)+p(\alpha_2)+p(\alpha_3))p(\alpha_2)}
[e_{123},[e_{(23)1},e_{(23)3}]_{q}]_{q}/(q^{1/2}+q^{-1/2})
=0
,
\end{align}
where we used (\ref{tB qroot rel5}), Corollary \ref{qcom jacobi coro}(1), Lemma \ref{Yama lemma1} for $i=1$, (\ref{tB qroot rel3}), Lemma \ref{Yama lemma1} for $i=2$ and (\ref{tB qroot rel11}).
\par
The left hand side of (\ref{tB qroot serre 9}) can be written as $[e_{21},e_{(21)((23)3)}]_{q}$.
Then we have
\begin{align}
[e_{21},e_{(21)((23)3)}]_{q}
=&
(-1)^{(p(\alpha_1)+p(\alpha_2))(p(\alpha_2)+p(\alpha_3))}
[e_{21},e_{(23)((23)1)}]_{q}]_{q}
\\
\begin{split}
=&
1/(q^{1/2}+q^{-1/2})\left(
(-1)^{(p(\alpha_1)+p(\alpha_2))(p(\alpha_2)+p(\alpha_3))}
(
[[e_{21},e_{23}]_{q},e_{(23)1}]_{q}
\right.
\\
&-(-1)^{(p(\alpha_2)+p(\alpha_3))(p(\alpha_1)+p(\alpha_2)+p(\alpha_3))}q^{-(\alpha_2+\alpha_3,\alpha_1+\alpha_2+\alpha_3)}[e_{21},e_{(23)1}]_{q}e_{23}
\\
&+
\left.
(-1)^{(p(\alpha_1)+p(\alpha_2))(p(\alpha_2)+p(\alpha_3))}
q^{-(\alpha_1+\alpha_2,\alpha_2+\alpha_3)}
e_{23}[e_{21},e_{(23)1}]_{q}
)
\right)
,
\end{split}
\end{align}
where we used (\ref{tB qroot rel6}) and Lemma \ref{qcom jacobi}.
This is actually equal to $0$ because we have $[e_{21},e_{23}]_{q}=0$ by (\ref{tB qroot rel9}) and $[e_{21},e_{(23)1}]_{q}=[[e_{21},e_{23}]_{q},e_{1}]_{q}=0$ by Corollary \ref{qcom jacobi coro}(1), Corollary \ref{generalized serre} and (\ref{tB qroot rel9}).
\par
We prove (\ref{tB qroot serre 10 chi}) instead of (\ref{tB qroot serre 10}).
The left hand side of (\ref{tB qroot serre 10 chi}) can be written as $[[e_{12},e_{3(32)}]_{q},e_{3(32)}]_{q}$.
Here, we have $[e_{12},e_{3(32)}]_{q}=[e_{123},e_{32}]_{q}/(q^{1/2}+q^{-1/2})=e_{(((12)3)3)2}$ where we used Corollary \ref{qcom jacobi coro}(1), (\ref{tB qroot rel9 chi}) and (\ref{tB qroot rel8}).
Then, it is sufficient to show $[e_{(((12)3)3)2},e_{3(32)}]_{q}=0$.
This actually holds by repeated use of Lemma \ref{qcom jacobi} and Lemma \ref{qroot general lemma}(2).
\end{proof}
Also, by applying the anti-algebra automorphism $\chi$ given by (\ref{anti chi}) on the above propositions, we obtain the following relations:
\begin{proposition}\label{tB qroot rels1 chi}
For the case of rank 3, we have
\begin{align}
&e_{\ii(\kk\jj)}=(-1)^{p(\alpha_1)p(\alpha_3)}e_{\kk(\ii\jj)}
\label{tB qroot rel2 chi}
,\\
&e_{\kk(\kk(\jj\ii))}=e_{(\kk(\kk\jj))\ii}
\label{tB qroot rel3 chi}
,\\
&e_{\ii(\kk(\kk\jj))}=e_{\kk(\kk(\ii\jj))}
\label{tB qroot rel4 chi}
,\\
&e_{(\kk\jj)((\kk\jj)\ii)}=(-1)^{(p(\alpha_1)+p(\alpha_2)+p(\alpha_3))p(\alpha_2)}e_{(\kk(\kk(\jj\ii)))\jj}
\label{tB qroot rel5 chi}
,\\
&e_{(\ii(\kk\jj))(\kk\jj)}=(-1)^{p(\alpha_1)p(\alpha_3)+(p(\alpha_1)+p(\alpha_2))(p(\alpha_2)+p(\alpha_3))}e_{(\kk(\kk\jj))(\ii\jj)}
\label{tB qroot rel6 chi}
,\\
&e_{(\ii\jj)(\kk(\kk\jj))}=e_{(((\ii\jj)\kk)\kk)\jj}
\label{tB qroot rel7 chi}
.
\end{align}
\end{proposition}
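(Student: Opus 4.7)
The plan is to apply the anti-algebra automorphism $\chi$ from (\ref{anti chi}) termwise to each relation in Proposition \ref{tB qroot rels1}, exactly in the spirit of the one-line proof of Proposition \ref{tA qroot rels2}.

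First I would record the key behavior of $\chi$ on $q$-commutators. For weight vectors $x\in\UGGp_{\alpha}$ and $y\in\UGGp_{\beta}$, writing out both sides from the definition and using $(\alpha,\beta)=(\beta,\alpha)$ together with the commutativity of parities gives the identity
\begin{align}
\chi([x,y]_{q})=[\chi(y),\chi(x)]_{q}.
\end{align}
Iterating this with $\chi(e_{i})=e_{i}$ shows that $\chi$ sends any quantum root vector $e_{w}$ labeled by a bracketed index string $w$ to the vector labeled by the \emph{reversed} string with correspondingly reversed bracketing; for instance $\chi(e_{(ij)k})=e_{k(ji)}$, $\chi(e_{((ij)k)k})=e_{k(k(ji))}$, and $\chi(e_{(1(23))(23)})=e_{(32)((32)1)}$.

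Next I would verify that the normalization convention of Definition \ref{qrv}(ii) is preserved under this reversal: the rule ``divide by $q^{1/2}+q^{-1/2}$ when the letter $r$ appears twice in the index string'' depends only on the multiset of indices, not their order or bracketing. Hence $\chi$ carries the normalized $e_{w}$ to the normalized $e_{w^{\mathrm{rev}}}$ with no additional scalar factors.

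Finally I would apply $\chi$ to each of (\ref{tB qroot rel2})--(\ref{tB qroot rel7}) in turn. Since $\chi$ is $\mathbb{C}$-linear, the sign factors $(-1)^{p(\alpha_{1})p(\alpha_{3})}$, $(-1)^{(p(\alpha_{1})+p(\alpha_{2})+p(\alpha_{3}))p(\alpha_{2})}$ and the analogous coefficients on the right-hand sides pass through unchanged. Using the reversal rule above, the LHS of (\ref{tB qroot rel2}) is sent to $e_{\ii(\kk\jj)}$ and its RHS to $e_{\kk(\ii\jj)}$, which is exactly (\ref{tB qroot rel2 chi}); the same mechanical check produces (\ref{tB qroot rel3 chi})--(\ref{tB qroot rel7 chi}) from (\ref{tB qroot rel3})--(\ref{tB qroot rel7}). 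There is no substantial obstacle here: once the action of $\chi$ on $q$-commutators and on normalized root vectors has been recorded, each relation follows by a symbolic reversal, so the argument is as short as the proof of Proposition \ref{tA qroot rels2}.
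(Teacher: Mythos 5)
Your proposal is correct and follows the paper's own route: the paper derives Proposition \ref{tB qroot rels1 chi} precisely by applying the anti-algebra automorphism $\chi$ of (\ref{anti chi}) to Proposition \ref{tB qroot rels1}, just as in the proof of Proposition \ref{tA qroot rels2}. Your explicit verification that $\chi([x,y]_{q})=[\chi(y),\chi(x)]_{q}$ and that the $q^{1/2}+q^{-1/2}$ normalization is insensitive to reversal of the index string supplies exactly the details the paper leaves implicit.
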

\begin{proposition}\label{tB qroot rels2 chi}
For the case of rank 3, we have
\begin{align}
&[e_{\jj},e_{\kk\jj\ii}]=0
\label{tB qroot rel8 chi}
,\\
&[e_{\ii\jj},e_{\kk\jj}]=0
\label{tB qroot rel9 chi}
,\\
&[e_{\kk\jj},e_{\kk(\kk(\jj\ii))}]=0
\label{tB qroot rel10 chi}
,\\
&[e_{\kk(\kk\jj)},e_{\ii(\kk\jj)}]=0
\label{tB qroot rel11 chi}
,\\
&[e_{\kk},e_{(\ii\jj)(\kk(\kk\jj))}]=0
\label{tB qroot rel12 chi}
.
\end{align}
\end{proposition}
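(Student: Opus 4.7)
The plan is simple: obtain each identity in Proposition~\ref{tB qroot rels2 chi} by applying the anti-algebra automorphism $\chi$ from (\ref{anti chi}) to the corresponding identity in Proposition~\ref{tB qroot rels2}, exactly as was done to derive Proposition~\ref{tB qroot rels2} from (\ref{tA qroot rel2}) and following, or Proposition~\ref{tA qroot rels2} from Proposition~\ref{tA qroot rels}.

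First I would establish the general principle that governs how $\chi$ acts on $q$-commutators. Since $\chi$ is an anti-algebra automorphism with $\chi(e_i)=e_i$, for $x\in\UGGp_{\alpha}$ and $y\in\UGGp_{\beta}$ one computes
\begin{align}
\chi([x,y]_{q})
=\chi(y)\chi(x)-(-1)^{p(\alpha)p(\beta)}q^{-(\alpha,\beta)}\chi(x)\chi(y)
=[\chi(y),\chi(x)]_{q},
\end{align}
and in particular $\chi([x,y])=-[\chi(x),\chi(y)]$. Since the normalization factors $(q^{1/2}+q^{-1/2})$ appearing in Definition~\ref{qrv}(ii) are scalars, iterating this rule shows that $\chi$ sends each quantum root vector to the one obtained by reversing the order of its nested brackets; for instance $\chi(e_{\jj\kk})=e_{\kk\jj}$, $\chi(e_{\ii\jj\kk})=e_{\kk\jj\ii}$, $\chi(e_{(\jj\kk)\kk})=e_{\kk(\kk\jj)}$, $\chi(e_{((\ii\jj)\kk)\kk})=e_{\kk(\kk(\jj\ii))}$, and $\chi(e_{(((\ii\jj)\kk)\kk)\jj})=e_{\jj(\kk(\kk(\jj\ii)))}$.

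Next I would apply $\chi$ termwise to the five identities (\ref{tB qroot rel8})--(\ref{tB qroot rel12}). Vanishing of an ordinary commutator $[A,B]=0$ implies $[\chi(B),\chi(A)]=0$, and vanishing of a $q$-commutator $[A,B]_{q}=0$ implies $[\chi(B),\chi(A)]_{q}=0$, so each relation in Proposition~\ref{tB qroot rels2} maps to an analogous vanishing statement. Reading off the transformed elements via the dictionary above yields precisely (\ref{tB qroot rel8 chi})--(\ref{tB qroot rel12 chi}): for example, (\ref{tB qroot rel12}) $[e_{\kk},e_{((\ii\jj)\kk)\kk)\jj}]=0$ maps under $\chi$ to $[e_{\jj(\kk(\kk(\jj\ii)))},e_{\kk}]=0$, which after using (\ref{tB qroot rel7 chi}) to rewrite the first factor becomes (\ref{tB qroot rel12 chi}).

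The only bookkeeping obstacle is making sure that the images of the compound quantum root vectors under $\chi$ match exactly the symbols used on the left- and right-hand sides of Proposition~\ref{tB qroot rels2 chi}; in one or two cases this match is mediated by the already-proved identities of Proposition~\ref{tB qroot rels1 chi} (notably (\ref{tB qroot rel7 chi})), but no new algebraic input is required. Hence the proof reduces to a one-line invocation: apply $\chi$ to Proposition~\ref{tB qroot rels2}.
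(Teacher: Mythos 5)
Your proposal is correct and coincides with the paper's own derivation: the paper presents Proposition \ref{tB qroot rels2 chi} (together with Propositions \ref{tB qroot rels1 chi}, \ref{tB qroot rels3 chi} and \ref{tB qroot rels4 chi}) as an immediate consequence of applying the anti-automorphism $\chi$ to Proposition \ref{tB qroot rels2}, using exactly the rule $\chi([x,y]_{q})=[\chi(y),\chi(x)]_{q}$ that you spell out. (Only your illustrative example for (\ref{tB qroot rel12 chi}) misquotes the source: (\ref{tB qroot rel12}) involves $e_{((\jj\kk)\kk)(\jj\ii)}$, whose image under $\chi$ is directly $e_{(\ii\jj)(\kk(\kk\jj))}$, so no detour through (\ref{tB qroot rel7 chi}) is needed there.)
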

\begin{proposition}\label{tB qroot rels3 chi}
For the case of rank 3, we have
\begin{align}
&e_{\ii}^2e_{\kk\jj}-(q+q^{-1})e_{\ii}e_{\kk\jj} e_{\ii}+e_{\kk\jj} e_{\ii}^2=0
\quad (\alpha_{\ii}\in\prer)
\label{tB qroot serre 1 chi}
,\\
&e_{\kk\jj}^3e_{\ii}-(q+1+q^{-1})e_{\kk\jj}^2 e_{\ii}e_{\kk\jj}+(q+1+q^{-1})e_{\kk\jj} e_{\ii}e_{\kk\jj}^2-e_{\ii}e_{\kk\jj}^3=0
\quad (\alpha_{\jj}+\alpha_{\kk}\in\prer)
\label{tB qroot serre 2 chi}
,\\
&e_{\kk}^3e_{\ii\jj}-(q+1+q^{-1})e_{\kk}^2e_{\ii\jj} e_{\kk}+(q+1+q^{-1})e_{\kk}e_{\ii\jj} e_{\kk}^2-e_{\ii\jj} e_{\kk}^3=0
\quad (\alpha_{\kk}\in\prer)
\label{tB qroot serre 3 chi}
,\\
&e_{\ii\jj}^2e_{\kk}-(q+q^{-1})e_{\ii\jj} e_{\kk}e_{\ii\jj}+e_{\kk}e_{\ii\jj}^2=0
\quad (\alpha_{\ii}+\alpha_{\jj}\in\prer)
\label{tB qroot serre 4 chi}
,\\
&e_{\ii}^2e_{\kk(\kk\jj)}-(q+q^{-1})e_{\ii}e_{\kk(\kk\jj)} e_{\ii}+e_{\kk(\kk\jj)} e_{\ii}^2=0
\quad (\alpha_{\ii}\in\prer)
\label{tB qroot serre 5 chi}
,\\
&e_{\kk(\kk\jj)}^2e_{\ii}-(q+q^{-1})e_{\kk(\kk\jj)} e_{\ii}e_{\kk(\kk\jj)}+e_{\ii}e_{\kk(\kk\jj)}^2=0
\quad (\alpha_{\jj}+2\alpha_{\kk}\in\prer)
\label{tB qroot serre 6 chi}
,\\
&e_{\jj}^2e_{\kk(\kk(\jj\ii))}-(q+q^{-1})e_{\jj}e_{\kk(\kk(\jj\ii))} e_{\jj}+e_{\kk(\kk(\jj\ii))} e_{\jj}^2=0
\quad (\alpha_{\jj}\in\prer)
\label{tB qroot serre 7 chi}
,\\
&e_{\kk(\kk(\jj\ii))}^2e_{\jj}-(q+q^{-1})e_{\kk(\kk(\jj\ii))} e_{\jj}e_{\kk(\kk(\jj\ii))} +e_{\jj}e_{\kk(\kk(\jj\ii))}^2=0
\quad (\alpha_{\ii}+\alpha_{\jj}+2\alpha_{\kk}\in\prer)
\label{tB qroot serre 8 chi}
,\\
&e_{\ii\jj}^2e_{\kk(\kk\jj)}-(q+q^{-1})e_{\ii\jj}e_{\kk(\kk\jj)}e_{\ii\jj}+e_{\kk(\kk\jj)}e_{\ii\jj}=0
\quad (\alpha_{\ii}+\alpha_{\jj}\in\prer)
\label{tB qroot serre 9 chi}
,\\
&e_{\kk(\kk\jj)}^2e_{\ii\jj}-(q+q^{-1})e_{\kk(\kk\jj)}e_{\ii\jj}e_{\kk(\kk\jj)}+e_{\ii\jj}e_{\kk(\kk\jj)}^2=0
\quad (\alpha_{\jj}+2\alpha_{\kk}\in\prer)
\label{tB qroot serre 10 chi}
.
\end{align}
\end{proposition}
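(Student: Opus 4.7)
The plan is entirely parallel to the proofs of Proposition \ref{tA qroot rels2}, \ref{tB qroot rels1 chi}, and \ref{tB qroot rels2 chi}: apply the anti-algebra automorphism $\chi$ from (\ref{anti chi}) to each of the ten relations in Proposition \ref{tB qroot rels3}.

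The key algebraic fact underlying the argument is that, for $x\in\UtBp_{\alpha}$ and $y\in\UtBp_{\beta}$, one has $\chi([x,y]_{q}) = [\chi(y),\chi(x)]_{q}$, since $\chi$ reverses products, fixes each $e_{i}$, and $(\alpha,\beta)=(\beta,\alpha)$. Iterating this identity on the nested $q$-commutators of Definition \ref{qrv} yields that $\chi$ acts on quantum root vectors by reversing the index string, e.g.\ $\chi(e_{i_{1}i_{2}\cdots i_{s}})=e_{i_{s}\cdots i_{2}i_{1}}$. The normalization factor $q^{1/2}+q^{-1/2}$ introduced in Definition \ref{qrv}(ii) is preserved because it is determined solely by how many times the letter $r$ appears in the index string, which is invariant under reversal. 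In particular, $\chi(\poo)=e_{\kk\jj}$, $\chi(\toX)=e_{\ii\jj}$, $\chi(\pot)=e_{\kk(\kk\jj)}$, and $\chi(\oot)=e_{\kk(\kk(\jj\ii))}$, which are precisely the quantum root vectors that appear in \ref{tB qroot rels3 chi}.

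With these identifications in hand, applying $\chi$ to each of (\ref{tB qroot serre 1})--(\ref{tB qroot serre 10}) simply reverses the order of the three or four monomials in the relation while leaving the scalar coefficients in $\mathbb{Q}(q)$ unchanged. After rewriting the terms in their standard order, the resulting identities are exactly (\ref{tB qroot serre 1 chi})--(\ref{tB qroot serre 10 chi}). The parity hypotheses attached to each relation (such as $\alpha_{\ii}\in\prer$ or $\alpha_{\jj}+2\alpha_{\kk}\in\prer$) transfer without modification, since $\chi$ is a $\mathbb{C}$-algebra map and does not alter the Cartan data $(A,p)$. There is no substantive obstacle here: the entire proof reduces to a routine bookkeeping check that, in each of the ten cases, the image of the relation under $\chi$ matches the claimed identity term for term.
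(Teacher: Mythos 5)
Your proposal is correct and coincides with the paper's own argument: the paper derives Proposition \ref{tB qroot rels3 chi} in a single sentence by applying the anti-algebra automorphism $\chi$ of (\ref{anti chi}) to Proposition \ref{tB qroot rels3}, exactly as you do. Your additional verification that $\chi([x,y]_{q})=[\chi(y),\chi(x)]_{q}$ reverses index strings of quantum root vectors while preserving the normalization and the parity hypotheses is the routine bookkeeping the paper leaves implicit.
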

\begin{proposition}\label{tB qroot rels4 chi}
For the case of rank 3, we have
\begin{align}
&e_{\ii\jj}^2=0
\quad (\alpha_{\ii}+\alpha_{\jj}\in\prir)
\label{tB qroot serre 11 chi}
,\\
&e_{\kk(\kk\jj)}^2=0
\quad (\alpha_{\jj}+2\alpha_{\kk}\in\prir)
\label{tB qroot serre 12 chi}
,\\
&e_{\kk(\kk(\jj\ii))}^2=0
\quad (\alpha_{\ii}+\alpha_{\jj}+2\alpha_{\kk}\in\prir)
\label{tB qroot serre 13 chi}
,\\
\begin{split}
&e_{\kk\jj}^3e_{\ii}+(-1)^{p(\alpha_{\ii})}(1-q-q^{-1})e_{\kk\jj}^2 e_{\ii}e_{\kk\jj}
\\
&+(1-q-q^{-1})e_{\kk\jj} e_{\ii}e_{\kk\jj}^2+(-1)^{p(\alpha_{\ii})}e_{\ii}e_{\kk\jj}^3=0
\quad (\alpha_{\jj}+\alpha_{\kk}\in\prar)
,
\end{split}
\label{tB qroot serre 14 chi}
\\
\begin{split}
&e_{\kk}^3e_{\ii\jj}+(-1)^{p(\alpha_{\ii})+p(\alpha_{\jj})}(1-q-q^{-1})e_{\kk}^2e_{\ii\jj} e_{\kk}
\\
&+(1-q-q^{-1})e_{\kk}e_{\ii\jj} e_{\kk}^2+(-1)^{p(\alpha_{\ii})+p(\alpha_{\jj})}e_{\ii\jj} e_{\kk}^3=0
\quad (\alpha_{\kk}\in\prar)
.
\end{split}
\label{tB qroot serre 15 chi}
\end{align}
\end{proposition}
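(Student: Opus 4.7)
The plan is to obtain Proposition \ref{tB qroot rels4 chi} by applying the anti-algebra automorphism $\chi$ of (\ref{anti chi}) term-by-term to each relation of Proposition \ref{tB qroot rels4}, in exact parallel with the proof of Proposition \ref{tA qroot rels2}. The key preliminary observation is that $\chi$ interacts cleanly with $q$-commutators, in the sense that
\begin{align*}
\chi([x,y]_q) = [\chi(y),\chi(x)]_q
\end{align*}
for any homogeneous $x \in \UGGp_\alpha$, $y \in \UGGp_\beta$. This follows directly from writing $[x,y]_q = xy - (-1)^{p(\alpha)p(\beta)}q^{-(\alpha,\beta)} yx$, applying the anti-algebra property $\chi(xy) = \chi(y)\chi(x)$, and using the symmetry of both the parity product and the bilinear form $(\cdot,\cdot)$ to recognize the result as $[\chi(y),\chi(x)]_q$ (rather than a $q^{-1}$-commutator).

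From this identity, an easy induction on the depth of the bracketing shows that, up to the normalization scalars of Definition \ref{qrv}(ii), $\chi$ reverses the reading order of the indices of each quantum root vector: $\chi(e_{i_1 i_2 \cdots i_k})=e_{i_k \cdots i_2 i_1}$, with matching nesting. The normalization factor $1/(q^{1/2}+q^{-1/2})$ is scalar, hence preserved by $\chi$, and the condition triggering this normalization (the letter $r$ appearing twice in the index word) is invariant under reversing the order. In particular, $\chi(\toX)=e_{\ii\jj}$, $\chi(\poo)=e_{\kk\jj}$, $\chi(\pot)=e_{\kk(\kk\jj)}$, and $\chi(\oot)=e_{\kk(\kk(\jj\ii))}$.

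With this translation rule in hand, the proof reduces to routine book-keeping. For the monomial relations (\ref{tB qroot serre 11}), (\ref{tB qroot serre 12}), (\ref{tB qroot serre 13}) applying $\chi$ gives $\chi(e_x^2)=\chi(e_x)^2 = 0$, which immediately yields (\ref{tB qroot serre 11 chi}), (\ref{tB qroot serre 12 chi}), (\ref{tB qroot serre 13 chi}). For the quartic relations (\ref{tB qroot serre 14}) and (\ref{tB qroot serre 15}), the anti-algebra property reverses the order of each monomial in the four-term polynomial; relabeling via $\poo \mapsto e_{\kk\jj}$, $\toX \mapsto e_{\ii\jj}$ and then reordering each summand places the resulting identity in the exact form of (\ref{tB qroot serre 14 chi}) and (\ref{tB qroot serre 15 chi}), with the signs $(-1)^{p(\alpha_\ii)}$ and $(-1)^{p(\alpha_\ii)+p(\alpha_\jj)}$ unchanged (these signs depend only on parities, which $\chi$ preserves).

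No significant obstacle is anticipated: the argument is entirely formal once the compatibility $\chi([x,y]_q)=[\chi(y),\chi(x)]_q$ is in place. The only real care needed is in handling the normalization constants coming from Definition \ref{qrv}(ii), where one must check that the elements appearing on both sides of each relation carry matching $1/(q^{1/2}+q^{-1/2})$ factors; but since each relation is homogeneous in these normalization factors and $\chi$ acts identically on each occurrence, these factors cancel consistently.
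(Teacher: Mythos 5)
Your proposal is correct and follows exactly the paper's own route: the paper derives Proposition \ref{tB qroot rels4 chi} simply by applying the anti-algebra automorphism $\chi$ of (\ref{anti chi}) to Proposition \ref{tB qroot rels4}, which is what you do, and your supporting identity $\chi([x,y]_q)=[\chi(y),\chi(x)]_q$ (valid since $\chi$ preserves root spaces and both $p(\cdot)p(\cdot)$ and $(\cdot,\cdot)$ are symmetric) correctly justifies the index-reversal rule and the preservation of the $1/(q^{1/2}+q^{-1/2})$ normalizations that the paper leaves implicit. The only cosmetic point is that for (\ref{tB qroot serre 14 chi}) and (\ref{tB qroot serre 15 chi}) the $\chi$-image agrees with the stated relation only after multiplying the whole identity by the overall unit $(-1)^{p(\alpha_{\ii})}$ (resp. $(-1)^{p(\alpha_{\ii})+p(\alpha_{\jj})}$), which is immaterial since the right-hand side is zero.
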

\par
By writing down quantum root vectors given by Definition \ref{qrv} for the case of rank 2, we find they are given by
\begin{align}
&B_{1}:
\quad e_{\beta_1}=e_{1}
,\quad e_{\beta_2}=e_{21}
,\quad e_{\beta_3}=e_{2(21)}
,\quad e_{\beta_3}=e_{2}
,
\label{tB r2 qroot vec1}
\\
&B_{2}:
\quad e_{\beta_1}=e_{2}
,\quad e_{\beta_2}=e_{(12)2}
,\quad e_{\beta_3}=e_{12}
,\quad e_{\beta_3}=e_{1}
,
\label{tB r2 qroot vec2}
\end{align}
where $\beta_{t}{\ }(t=1,\cdots,4)$ are the same as Theorem \ref{Yam PBW thm}.
For non-super case, (\ref{tB r2 qroot vec1}) and (\ref{tB r2 qroot vec2}) concide with quantum root vectors given by (\ref{qroot Lus}) with the reduced expressions $w_0=s_{1}s_{2}s_{1}s_{2},s_{2}s_{1}s_{2}s_{1}$ of the longest element of the Weyl group, respectively.
\par
Similarly, by writing down quantum root vectors given by Definition \ref{qrv} for the case of rank 3, we find they are given by
\begin{align}
\begin{split}
B_{1}:
\quad &e_{\beta_1}=e_{1}
,\quad e_{\beta_2}=\toX
,\quad e_{\beta_3}=\rltol
,\quad e_{\beta_4}=\rlrltoll
,\quad e_{\beta_5}=\tlrlrltolll
,
\\
&e_{\beta_6}=e_{2}
,\quad e_{\beta_7}=\rtX
,\quad e_{\beta_8}=\rlrtl
,\quad e_{\beta_9}=e_{3}
,
\end{split}
\label{tB r3 qroot vec1}
\\
\begin{split}
B_{2}:
\quad &e_{\beta_1}=e_{3}
,\quad e_{\beta_2}=\pot
,\quad e_{\beta_3}=\poo
,\quad e_{\beta_4}=e_{2}
,\quad e_{\beta_5}=\ott
,
\\
&e_{\beta_6}=\oot
,\quad e_{\beta_7}=\ooo
,\quad e_{\beta_8}=\oop
,\quad e_{\beta_9}=e_{1}
,
\end{split}
\label{tB r3 qroot vec2}
\end{align}
where $\beta_{t}{\ }(t=1,\cdots,9)$ are the same as Theorem \ref{Yam PBW thm}.
For non-super case, (\ref{tB r3 qroot vec1}) and (\ref{tB r3 qroot vec2}) concide with quantum root vectors given by (\ref{qroot Lus}) with the reduced expressions $w_0=s_{1}s_{2}s_{3}s_{2}s_{1}s_{2}s_{3}s_{2}s_{3}$, $s_{3}s_{2}s_{3}s_{2}s_{1}s_{2}s_{3}s_{2}s_{1}$ of the longest element of the Weyl group, respectively.
\subsection{Transition matrices of PBW bases of type B of rank 2}
In this section, we consider transition matrices of the PBW bases of $\UtBp$ of rank 2, so $m+n=2$.
All possible Dynkin diagrams associated with admissible realizations are given in Table \ref{tB all dynkin}, where they are distinguished except \ddtwoB{\cct}{\ccb}, in the sense defined in Section \ref{sec 22}.
\begin{table}[htb]
\centering
\caption{}
\label{tB all dynkin}
\vspace{2mm}
\begin{tabular}{c|c}\hline
$\mathfrak{g}$ & Dynkin diagram \\\hline
$osp(5|0)$ & \ddtwoWul{Rightarrow}{\cc}{\cc}{\epsilon_1-\epsilon_2}{\epsilon_2} \\\hline
$osp(3|2)$ & \ddtwoWul{Rightarrow}{\cct}{\cc}{\delta_1-\epsilon_2}{\epsilon_2}
\quad
\ddtwoWul{Rightarrow}{\cct}{\ccb}{\epsilon_1-\delta_2}{\delta_2} \\\hline
$osp(1|4)$ & \ddtwoWul{Rightarrow}{\cc}{\ccb}{\delta_1-\delta_2}{\delta_2} \\\hline
\end{tabular}
\end{table}
For the case of rank 2, quantum root vecotrs are given by (\ref{tB r2 qroot vec1}),(\ref{tB r2 qroot vec2}), so the transition matrix (\ref{tm1}) is given as follows:
\begin{align}
e_{2}^{(a)}e_{(12)2}^{(b)}e_{12}^{(c)}e_{1}^{(d)}
&=\sum_{i,j,k,l}
\tm_{i,j,k,l}^{a,b,c,d}
e_{1}^{(l)}e_{21}^{(k)}e_{2(21)}^{(j)}e_{2}^{(i)}
,
\label{tB trans mat def1}
\end{align}
where the domain of indices is specified below.
Hereafter, we consider each case.
Sometimes, we abbreviate simple roots for Dynkin diagrams, but we always assume that they are given as Table \ref{tB all dynkin}.
\subsubsection{The case (I) $\vcenter{\hbox{\protect\includegraphics{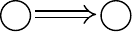}}}$}
In this case, the corresponding symmetrized Cartan matrix is given by
\begin{align}
DA
=
\begin{pmatrix}
2 & -1 \\
-1 & 1
\end{pmatrix}
,
\end{align}
and the corresponding positive roots are given by
\begin{align}
\prer&=\{\alpha_1,\alpha_2,\alpha_1+\alpha_2,\alpha_1+2\alpha_2\}
,
\\
\prir&=\{\}
,\\
\prar&=\{\}
.
\end{align}
Then, indices are specified as $i,j,k,l,a,b,c,d\in\mathbb{Z}_{\geq 0}$ for (\ref{tB trans mat def1}).
The transition matrix in (\ref{tB trans mat def1}) is explicitly given as the consequence of the Kuniba-Okado-Yamada theorem\cite{KOY13}:
\begin{theorem}[\cite{KO13,KOY13}]\label{KOY13 thm 3dJ}
For the quantum superalgebra associated with \ddtwoB{\cc}{\cc}, the transition matrix in (\ref{tB trans mat def1}) is given by
\begin{align}
\tm_{i,j,k,l}^{a,b,c,d}
=\tdj_{i,j,k,l}^{a,b,c,d}
\label{KOY13 3dJ result}
,
\end{align}
where $\tdj$ is the 3D J given by (\ref{3dJ mat el}).
\end{theorem}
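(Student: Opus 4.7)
The plan is to recognize this as a direct corollary of the Kuniba-Okado-Yamada theorem \cite{KOY13} applied to the non-super setting, combined with the explicit intertwiner formula of \cite{KO13}. Since the diagram \ddtwoB{\cc}{\cc} carries no crossed nodes, we are looking at $\UtBp[2|0] = U_q^+(\mathfrak{so}(5)) = U_q^+(B_2)$, a standard quantum algebra with no isotropic or anisotropic odd roots. In this purely non-super case, the quantum root vectors (\ref{tB r2 qroot vec1}) and (\ref{tB r2 qroot vec2}) of Definition \ref{qrv} coincide, as remarked just below (\ref{tB r2 qroot vec2}), with Lusztig's quantum root vectors (\ref{qroot Lus}) associated to the reduced expressions $w_0 = s_1 s_2 s_1 s_2$ and $w_0 = s_2 s_1 s_2 s_1$ of the longest element of the Weyl group of $B_2$. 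Hence the transition matrix appearing in (\ref{tB trans mat def1}) is exactly the change-of-basis matrix between these two Lusztig PBW bases of $U_q^+(B_2)$ attached to two reduced words for $w_0$ that are related by the longest (rank 2) braid move.

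Next, I would invoke the Kuniba-Okado-Yamada theorem, which identifies such a transition matrix with the matrix coefficients of the intertwiner of irreducible representations of the quantum coordinate ring $A_q(B_2)$ corresponding to the same pair of reduced words. The intertwiner in question was computed in closed form in \cite{KO13}, where the resulting operator is precisely the 3D J described by (\ref{3dJ mat el}). Assembling these two ingredients yields the asserted equality (\ref{KOY13 3dJ result}).

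The one delicate point, which I regard as the main obstacle, is the matching of normalizations. Definition \ref{qrv}(ii) rescales the quantum root vector for the long root $\alpha_1 + 2\alpha_2$ by a factor $q^{1/2} + q^{-1/2}$, and Theorem \ref{Yam PBW thm} uses divided powers with $p_t = q^{d_{\beta_t}}$, whereas both Lusztig's braid group construction and the conventions of \cite{KO13} fix the normalization by other, a priori different, choices. I would pin this compatibility down in the usual way: argue via weight conservation (as in the proof of Theorem \ref{my 3dL result}) that both $\tm_{i,j,k,l}^{a,b,c,d}$ and $\tdj_{i,j,k,l}^{a,b,c,d}$ vanish outside the same finite support, then check the base case $\tm_{0,0,0,0}^{0,0,0,0} = 1 = \tdj_{0,0,0,0}^{0,0,0,0}$ together with one or two boundary matrix elements (for instance $(a,b,c,d) = (0,0,0,1)$ and $(a,b,c,d)=(1,0,0,0)$), after which the identification on all other matrix elements follows from the common recursions satisfied by both sides, or equivalently from the uniqueness clause of the Kuniba-Okado-Yamada theorem.
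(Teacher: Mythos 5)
Your proposal is correct and matches the paper's treatment: the paper gives no independent proof of this statement, but presents it precisely as a consequence of the Kuniba--Okado--Yamada theorem combined with the explicit $A_q(B_2)$ intertwiner computed in \cite{KO13}, after noting (just below (\ref{tB r2 qroot vec2})) that in the non-super case the quantum root vectors of Definition \ref{qrv} coincide with Lusztig's for the reduced words $s_1s_2s_1s_2$ and $s_2s_1s_2s_1$. Your extra care about the normalization of the long-root vector is a reasonable refinement but is already absorbed into the cited identification.
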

\subsubsection{The case (II) $\vcenter{\hbox{\protect\includegraphics{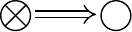}}}$}
In this case, the corresponding symmetrized Cartan matrix is given by
\begin{align}
DA
=
\begin{pmatrix}
0 & -1 \\
-1 & 1
\end{pmatrix}
,
\end{align}
and the corresponding positive roots are given by
\begin{align}
\prer&=\{\alpha_2\}
,
\\
\prir&=\{\alpha_1,\alpha_1+2\alpha_2\}
,\\
\prar&=\{\alpha_1+\alpha_2\}
.
\end{align}
Then, indices are specified as $i,k,a,c\in\mathbb{Z}_{\geq 0},{\ }j,l,b,d\in\{0,1\}$ for (\ref{tB trans mat def1}).
We set $\tdx(q)\in\mathrm{End}(F\otimes V\otimes F\otimes V)$ by
\begin{align}
&\tdx(q)(\ket{i}\otimes u_j\otimes \ket{k}\otimes u_l)
=\sum_{a,c\in\mathbb{Z}_{\geq 0},b,d\in\{0,1\}}
\tdx(q)_{i,j,k,l}^{a,b,c,d}
\ket{a}\otimes u_b\otimes \ket{c}\otimes u_d
,
\label{3dX}
\\
&\tdx(q)_{i,0,k,0}^{a,0,c,0}
=\delta_{i,a}\delta_{k,c}\left(1-(1-(-q)^{c})q^a\right)
,
\label{3dX mat el 1}
\\
&\tdx(q)_{i,1,k,0}^{a,0,c,0}
=\delta_{i,a-1}\delta_{k,c-1}(-1)^{c}q^{(a+c-1)/2}(1+q)
,
\\
&\tdx(q)_{i,0,k,1}^{a,0,c,0}
=\delta_{i,a+1}\delta_{k,c-1}(-1)^{c+1}q^{(a+c-1)/2}(1+q)(1-q^{a+1})
,\\
&\tdx(q)_{i,1,k,1}^{a,0,c,0}
=\delta_{i,a}\delta_{k,c-2}q^{a+c-1}(1+q)^2
,\\
&\tdx(q)_{i,0,k,0}^{a,1,c,0}
=\delta_{i,a+1}\delta_{k,c+1}(-1)^{c+1}q^{(a-c+1)/2}\frac{(1-q^{a+1})(1-(-q)^{c+1})}{1+q}
,\\
&\tdx(q)_{i,1,k,0}^{a,1,c,0}
=\delta_{i,a}\delta_{k,c}q^{a+1}
,\\
&\tdx(q)_{i,0,k,1}^{a,1,c,0}
=\delta_{i,a+2}\delta_{k,c}(1-q^{a+1})(1-q^{a+2})
,\\
&\tdx(q)_{i,1,k,1}^{a,1,c,0}
=\delta_{i,a+1}\delta_{k,c-1}(-1)^{c}q^{(a+c+1)/2}(1+q)(1-q^{a+1})
,\\
&\tdx(q)_{i,0,k,0}^{a,0,c,1}
=\delta_{i,a-1}\delta_{k,c+1}(-1)^{c}q^{(a-c-1)/2}\frac{1-(-q)^{c+1}}{1+q}
,\\
&\tdx(q)_{i,1,k,0}^{a,0,c,1}
=\delta_{i,a-2}\delta_{k,c}
,\\
&\tdx(q)_{i,0,k,1}^{a,0,c,1}
=\delta_{i,a}\delta_{k,c}q^a
,\\
&\tdx(q)_{i,1,k,1}^{a,0,c,1}
=\delta_{i,a-1}\delta_{k,c-1}(-1)^{c+1}q^{(a+c-1)/2}(1+q)
,\\
&\tdx(q)_{i,0,k,0}^{a,1,c,1}
=\delta_{i,a}\delta_{k,c+2}q^{a-c}\frac{(1-(-q)^{c+1})(1-(-q)^{c+2})}{(1+q)^2}
,\\
&\tdx(q)_{i,1,k,0}^{a,1,c,1}
=\delta_{i,a-1}\delta_{k,c+1}(-1)^{c+1}q^{(a-c+1)/2}\frac{1-(-q)^{c+1}}{1+q}
,\\
&\tdx(q)_{i,0,k,1}^{a,1,c,1}
=\delta_{i,a+1}\delta_{k,c+1}(-1)^{c}q^{(a-c+1)/2}\frac{(1-q^{a+1})(1-(-q)^{c+1})}{1+q}
,\\
&\tdx(q)_{i,1,k,1}^{a,1,c,1}
=\delta_{i,a}\delta_{k,c}\left(1-(1-(-q)^{c+1})q^{a+1}\right)
.
\label{3dX mat el 8}
\end{align}
For simplicity, we also use the abbreviated notation $\tdx=\tdx(q)$.
We simply call $\tdx$ as the 3D X.
Then, the transition matrix in (\ref{tB trans mat def1}) is explicitly given as follows:
\begin{theorem}\label{my 3dX result}
For the quantum superalgebra associated with \ddtwoB{\cct}{\cc}, the transition matrix in (\ref{tB trans mat def1}) is given by
\begin{align}
\tm_{i,j,k,l}^{a,b,c,d}
=\tdx_{i,j,k,l}^{a,b,c,d}
,
\end{align}
where $\tdx$ is the 3D X given by (\ref{3dX mat el 1}) $\sim$ (\ref{3dX mat el 8}).
\end{theorem}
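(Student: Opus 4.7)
The plan is to mimic the proof pattern of Theorem \ref{my 3dL result}: establish weight conservation, reduce to boundary strata in $(b,d)$, and then bootstrap. Conjugating both sides of (\ref{tB trans mat def1}) by $k_1$ and $k_2$ and invoking linear independence of the target PBW basis (Theorem \ref{Yam PBW thm}), I would first show that $\tm^{a,b,c,d}_{i,j,k,l}$ vanishes unless the two weight identities $b + c + d = j + k + l$ and $a + 2b + c = i + 2j + k$ hold. Moreover, the isotropic relations $e_1^{2} = 0$ (from $\alpha_1 \in \prir$) and $e_{(12)2}^{2} = e_{2(21)}^{2} = 0$ (from $\alpha_1 + 2\alpha_2 \in \prir$) automatically force $b, d, j, l \in \{0,1\}$, so the statement only needs to be verified on the index set specified in the theorem.

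Next I would stratify by $(b,d) \in \{0,1\}^{2}$. The case $(b,d) = (0,0)$ amounts to expressing $e_2^{(a)} e_{12}^{(c)}$ in the target basis; commuting $e_2$ past $e_{12}$ via the anisotropic-odd Serre relation (\ref{tB qroot serre 15}) together with $e_{12} = [e_1,e_2]_q$ and its companion relations from Proposition \ref{tB qroot rels3 chi} generates $e_{21}$, $e_{2(21)}$ and $e_1$ on the right and yields a recurrence in $(a,c)$. For $(b,d) = (1,1)$ one exploits $[e_{12}, e_{(12)2}]_q = 0$ (Lemma \ref{Yama lemma1} specialized to $i=1$, $r=2$) and Lemma \ref{qroot general lemma}(2) to peel off the outer $e_{(12)2}$, then perform an induction on $c$ in the spirit of (\ref{my 3dL result proof5}) to handle the trailing $e_1$. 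The two mixed cases $(b,d) = (0,1)$ and $(1,0)$ are intermediate and each require peeling off only a single boundary factor before falling back on the $(0,0)$ recursion.

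Once the recurrences are derived stratum by stratum, it remains to verify that the explicit 3D X formulae (\ref{3dX mat el 1})--(\ref{3dX mat el 8}) satisfy them with initial value $\tm^{0,0,0,0}_{0,0,0,0} = 1$, which reduces to direct substitution. The main obstacle I anticipate is the anisotropic-odd character of $\alpha_1 + \alpha_2$: the associated cubic Serre relations (\ref{tB qroot serre 14})--(\ref{tB qroot serre 15}) carry non-standard signs, which propagate into the recurrence coefficients and prevent any naive reduction to the non-super 3D J or to an elementary $q \mapsto -q$ substitution of the 3D R. Assembling the sixteen sub-formulae of (\ref{3dX mat el 1})--(\ref{3dX mat el 8}) into a coherent solution---in particular reconciling the boundary behavior at small $a, c$ in the $(b,d) = (1,1)$ stratum with the recursion coming from the $(0,0)$ stratum---is where the bulk of the case-work will sit, and is presumably what occupies Appendix \ref{app 3dX}.
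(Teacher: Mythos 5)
Your overall template --- weight conservation, nilpotency of the isotropic root vectors to bound the fermionic exponents, stratification over the $\{0,1\}$-valued indices, and recurrences checked against the explicit formulae --- has the same shape as the paper's Appendix \ref{app 3dX}, and your two weight identities $b+c+d=j+k+l$, $a+2b+c=i+2j+k$ are the correct analogue of (\ref{wc appGa}). However, the central step of your plan is asserted rather than supplied, and that is exactly where the difficulty sits. Within a fixed stratum $(b,d)$ and for fixed $(a,c)$, weight conservation still leaves a one-parameter family of contributing target monomials $e_{1}^{(l)}e_{21}^{(k)}e_{2(21)}^{(j)}e_{2}^{(i)}$, so "commuting $e_2$ past $e_{12}$ via the Serre relation" does not by itself produce a closed recurrence determining all of these coefficients: you also need to know how multiplication by the generators acts on the \emph{target} PBW basis, which no single Serre relation gives you. (Also, for \ddtwoB{\cct}{\cc} the generator-level Serre relation is the ordinary even one as in (\ref{my3dY proof4}), since $\alpha_2\in\prer$; the anisotropy of $\alpha_1+\alpha_2$ enters through the composite root vectors, not through an anisotropic Serre relation between $e_1$ and $e_2$.) Similarly, peeling $e_{(12)2}$ off the source word using $[e_{12},e_{(12)2}]_q=0$ leaves a word still containing $e_{(12)2}$ rather than the target root vector $e_{2(21)}$, so it is not clear your strata close under the proposed moves.

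The paper's proof supplies precisely the missing engine. Via (\ref{tm rel}) it passes to the inverse transition matrix $\tmt$ and to unnormalized monomials (\ref{tB trans mat def2 app}); it then computes the left multiplication of each of $e_1$, $e_{21}$, $e_2$ on \emph{both} ordered monomials, recovering the action on the target basis from right multiplication on the source basis through the anti-automorphism $\chi$ (this is how the $e_{21}$-action on $F_2$ is obtained). Comparing coefficients and using linear independence (Theorem \ref{Yam PBW thm}) yields three closed recurrences (\ref{rel b1}), (\ref{rel b2}), (\ref{rel b4}), which are solved in two stages for each of the sixteen values of the isotropic indices: first a one-parameter family (one generic exponent, the other minimal) from the $e_{21}$-recursion, then the full two-parameter family from the $e_2$- and $e_1$-recursions, seeded by the trivial coefficient $1$. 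To make your outline into a proof you would at minimum need to derive the analogues of these three recurrences; your instinct that no naive substitution from the 3D J or 3D R can work is correct --- the $q\mapsto-q$ shortcut is what the paper uses to get the 3D Y \emph{from} the 3D X in Theorem \ref{my 3dY result}, but the 3D X itself has to be computed from scratch.
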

The proof of Theorem \ref{my 3dX result}. is available in Appendix \ref{app 3dX}.
\begin{corollary}\label{my inv cor 3dX}
\begin{align}
\tdx^{-1}=\tdx
.
\end{align}
\end{corollary}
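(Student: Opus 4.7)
The plan is to mirror the strategy used for Corollary~\ref{my inv cor}. The key inputs are Theorem~\ref{my 3dX result}, which identifies the transition matrix $\gamma = \tdx$ for the quantum superalgebra associated with \ddtwoB{\cct}{\cc}, the inverse transition relation (\ref{tm rel}), and the linear independence of the PBW basis furnished by Theorem~\ref{Yam PBW thm}.

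First, by Theorem~\ref{my 3dX result}, (\ref{tB trans mat def1}) gives
\[
e_{2}^{(a)}e_{(12)2}^{(b)}e_{12}^{(c)}e_{1}^{(d)}
=\sum_{i,j,k,l}\tdx_{i,j,k,l}^{a,b,c,d}
e_{1}^{(l)}e_{21}^{(k)}e_{2(21)}^{(j)}e_{2}^{(i)}.
\]
Going in the opposite direction, the reverse transition matrix $\tmt$ from (\ref{tm2}) expresses each element of the first PBW basis in terms of the second. By (\ref{tm rel}), its coefficients are obtained from $\tdx$ by reversing all index strings, namely
\[
\tmt_{p,q,r,s}^{l,k,j,i}=\tdx_{s,r,q,p}^{i,j,k,l}.
\]

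Next, I would substitute this reverse expansion into the forward expansion above, yielding
\[
e_{2}^{(a)}e_{(12)2}^{(b)}e_{12}^{(c)}e_{1}^{(d)}
=\sum_{i,j,k,l}\sum_{p,q,r,s}
\tdx_{i,j,k,l}^{a,b,c,d}\,\tdx_{s,r,q,p}^{i,j,k,l}\,
e_{2}^{(s)}e_{(12)2}^{(r)}e_{12}^{(q)}e_{1}^{(p)}.
\]
Since $\{e_{2}^{(a)}e_{(12)2}^{(b)}e_{12}^{(c)}e_{1}^{(d)}\}$ is a basis by Theorem~\ref{Yam PBW thm}, comparing coefficients yields
\[
\sum_{i,j,k,l}
\tdx_{i,j,k,l}^{a,b,c,d}\,\tdx_{s,r,q,p}^{i,j,k,l}
=\delta_{a,s}\delta_{b,r}\delta_{c,q}\delta_{d,p}.
\]
Reading this as an identity of matrix elements of an operator on $F\otimes V\otimes F\otimes V$, it is precisely $\tdx^{2}=\mathrm{id}$, which gives the claim. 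Since every step is parallel to Corollary~\ref{my inv cor} and merely adds one extra tensor factor, there is no real obstacle; the only thing to double-check is that reversing the four-tuple of indices (rather than a three-tuple) correctly matches both (\ref{tm rel}) and the direction of composition in the matrix product, which is a purely bookkeeping matter.
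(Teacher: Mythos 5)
Your proposal is correct and follows essentially the same route as the paper, which proves this corollary ``exactly in the same way as Corollary~\ref{my inv cor}'': compose the forward expansion (\ref{tB trans mat def1}) with the reverse one obtained from (\ref{tm rel}), then compare coefficients using the linear independence of the PBW basis from Theorem~\ref{Yam PBW thm}. Your index bookkeeping $\tmt_{p,q,r,s}^{l,k,j,i}=\tdx_{s,r,q,p}^{i,j,k,l}$ and the resulting orthogonality relation correctly encode $\tdx^{2}=\mathrm{id}$.
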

\begin{proof}
This is shown exactly in the same way as Corollary \ref{my inv cor}.
\end{proof}
\subsubsection{The case (III) $\vcenter{\hbox{\protect\includegraphics{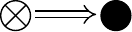}}}$}
In this case, the corresponding symmetrized Cartan matrix is given by
\begin{align}
DA
=
\begin{pmatrix}
0 & -1 \\
-1 & 1
\end{pmatrix}
,
\end{align}
and the corresponding positive roots are given by
\begin{align}
\prer&=\{\alpha_1+\alpha_2\}
,
\\
\prir&=\{\alpha_1,\alpha_1+2\alpha_2\}
,\\
\prar&=\{\alpha_2\}
.
\end{align}
Then, indices are specified as $i,k,a,c\in\mathbb{Z}_{\geq 0},{\ }j,l,b,d\in\{0,1\}$ for (\ref{tB trans mat def1}).
We set $\tdy(q)\in\mathrm{End}(F\otimes V\otimes F\otimes V)$ by
\begin{align}
&\tdy(q)(\ket{i}\otimes u_j\otimes \ket{k}\otimes u_l)
=\sum_{a,c\in\mathbb{Z}_{\geq 0},b,d\in\{0,1\}}
\tdy(q)_{i,j,k,l}^{a,b,c,d}
\ket{a}\otimes u_b\otimes \ket{c}\otimes u_d
,
\label{3dY}
\\
&\tdy(q)_{i,0,k,0}^{a,0,c,0}
=\delta_{i,a}\delta_{k,c}\left(1-(1-q^{c})(-q)^{a}\right)
,
\label{3dY mat el 1}
\\
&\tdy(q)_{i,1,k,0}^{a,0,c,0}
=\delta_{i,a-1}\delta_{k,c-1}q^{(a+c-1)/2}(1+q)
,
\\
&\tdy(q)_{i,0,k,1}^{a,0,c,0}
=\delta_{i,a+1}\delta_{k,c-1}(-1)^{a}q^{(a+c-1)/2}(1-q)(1-(-q)^{a+1})
,\\
&\tdy(q)_{i,1,k,1}^{a,0,c,0}
=\delta_{i,a}\delta_{k,c-2}(-1)^{a}q^{a+c-1}(1-q^{2})
,\\
&\tdy(q)_{i,0,k,0}^{a,1,c,0}
=\delta_{i,a+1}\delta_{k,c+1}(-1)^{a+1}q^{(a-c+1)/2}\frac{(1-(-q)^{a+1})(1-q^{c+1})}{1+q}
,\\
&\tdy(q)_{i,1,k,0}^{a,1,c,0}
=\delta_{i,a}\delta_{k,c}(-q)^{a+1}
,\\
&\tdy(q)_{i,0,k,1}^{a,1,c,0}
=\delta_{i,a+2}\delta_{k,c}(-1)^{a}\frac{(1-q)(1-(-q)^{a+1})(1-(-q)^{a+2})}{1+q}
,\\
&\tdy(q)_{i,1,k,1}^{a,1,c,0}
=\delta_{i,a+1}\delta_{k,c-1}(-1)^{a}q^{(a+c+1)/2}(1-q)(1-(-q)^{a+1})
,\\
&\tdy(q)_{i,0,k,0}^{a,0,c,1}
=\delta_{i,a-1}\delta_{k,c+1}q^{(a-c-1)/2}\frac{1-q^{c+1}}{1-q}
,\\
&\tdy(q)_{i,1,k,0}^{a,0,c,1}
=\delta_{i,a-2}\delta_{k,c}(-1)^{a}\frac{1+q}{1-q}
,\\
&\tdy(q)_{i,0,k,1}^{a,0,c,1}
=\delta_{i,a}\delta_{k,c}(-q)^{a}
,\\
&\tdy(q)_{i,1,k,1}^{a,0,c,1}
=-\delta_{i,a-1}\delta_{k,c-1}q^{(a+c-1)/2}(1+q)
,\\
&\tdy(q)_{i,0,k,0}^{a,1,c,1}
=\delta_{i,a}\delta_{k,c+2}(-1)^{a+1}q^{a-c}\frac{(1-q^{c+1})(1-q^{c+2})}{1-q^{2}}
,\\
&\tdy(q)_{i,1,k,0}^{a,1,c,1}
=\delta_{i,a-1}\delta_{k,c+1}q^{(a-c+1)/2}\frac{1-q^{c+1}}{1-q}
,\\
&\tdy(q)_{i,0,k,1}^{a,1,c,1}
=\delta_{i,a+1}\delta_{k,c+1}(-1)^{a}q^{(a-c+1)/2}\frac{(1-(-q)^{a+1})(1-q^{c+1})}{1+q}
,\\
&\tdy(q)_{i,1,k,1}^{a,1,c,1}
=\delta_{i,a}\delta_{k,c}\left(1-(1-q^{c+1})(-q)^{a+1}\right)
.
\label{3dY mat el 8}
\end{align}
For simplicity, we also use the abbreviated notation $\tdy=\tdy(q)$.
We simply call $\tdy$ as the 3D Y.
Then, the transition matrix in (\ref{tB trans mat def1}) is explicitly given as follows:
\begin{theorem}\label{my 3dY result}
For the quantum superalgebra associated with \ddtwoB{\cct}{\ccb}, the transition matrix in (\ref{tB trans mat def1}) is given by
\begin{align}
\tm_{i,j,k,l}^{a,b,c,d}
=\tdy_{i,j,k,l}^{a,b,c,d}
,
\end{align}
where $\tdy$ is the 3D Y given by (\ref{3dY mat el 1}) $\sim$ (\ref{3dY mat el 8}).
\end{theorem}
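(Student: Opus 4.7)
The plan is to follow the template of Theorem \ref{my 3dL result} and Theorem \ref{my 3dX result}, adapted to the relations of \UtBp[3|2] for the non-distinguished Cartan data \ddtwoB{\cct}{\ccb}. The technical crux will be an odd-reflection relation, to be recorded during the proof as (\ref{my3dY proof9}), between the transition matrix for \ddtwoB{\cct}{\ccb} and that for the distinguished presentation \ddtwoB{\cct}{\cc} already computed in Theorem \ref{my 3dX result}. This is precisely the general phenomenon flagged in Remark \ref{3dl 3dn rem}.

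First, conjugate both sides of (\ref{tB trans mat def1}) by $k_1$ and $k_2$ and invoke the linear independence of the PBW basis via Theorem \ref{Yam PBW thm}, as in (\ref{my 3dL result proof1})--(\ref{my 3dL result proof2}). This yields the weight-conservation constraints which force $\tm_{i,j,k,l}^{a,b,c,d}$ to vanish unless both the $\alpha_1$- and $\alpha_2$-weights match. These constraints carve out exactly the non-vanishing blocks listed in (\ref{3dY mat el 1})--(\ref{3dY mat el 8}) and reduce the verification to a finite number of cases indexed by $(j,l,b,d)\in\{0,1\}^{4}$, with $(i,k,a,c)$ determined by the weight conditions up to two free parameters.

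Next, record the defining relations among $e_1,e_2$ for the data \ddtwoB{\cct}{\ccb}: namely $e_1^{2}=0$ from Lemma \ref{qroot general lemma}(1), together with the anisotropic cubic Serre relation of case (4) in (\ref{qs rel3}), and expand $e_{12}, e_{21}, e_{(12)2}, e_{2(21)}$ per Definition \ref{qrv}, with the $q^{1/2}+q^{-1/2}$ normalization absorbed into $e_{(12)2}$ and $e_{2(21)}$. The key step is then to exploit the fact that under the odd reflection $s_{\alpha_1}$, the roles of $\alpha_2$ (anisotropic odd for \ddtwoB{\cct}{\ccb}) and $\alpha_1+\alpha_2$ (even for \ddtwoB{\cct}{\ccb}) are interchanged with the corresponding roots for \ddtwoB{\cct}{\cc} (where $\alpha_2$ is even and $\alpha_1+\alpha_2$ is anisotropic odd). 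This lets us re-express the products in (\ref{tB trans mat def1}) as suitable linear combinations of the PBW monomials appearing in the case \ddtwoB{\cct}{\cc}, up to explicit $q$-power and sign corrections. Equating coefficients yields (\ref{my3dY proof9}), from which the explicit formulas (\ref{3dY mat el 1})--(\ref{3dY mat el 8}) for $\tdy$ follow by direct substitution into the known formulas (\ref{3dX mat el 1})--(\ref{3dX mat el 8}) for $\tdx$.

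The main obstacle will be pinning down the explicit odd-reflection substitution: although the positive roots of the two Cartan data match bijectively via $s_{\alpha_1}$, the normalization conventions of Definition \ref{qrv}(ii) and the signs arising from $q$-commutators of odd elements must be tracked with care. Verifying that these corrections combine to reproduce precisely the factors $(-q)^{a}$, $(1+q)/(1-q)$, and the various $(1-(-q)^{a+1})$-type numerators that distinguish $\tdy$ from $\tdx$ is where most of the computational effort will lie.
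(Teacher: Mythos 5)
Your opening steps (weight conservation, reduction to the defining relations, and the intention to leverage Theorem \ref{my 3dX result} through a relation of the form (\ref{my3dY proof9})) point in the right direction, but the mechanism you propose for obtaining that relation is where the argument breaks down. You want to use the odd reflection $s_{\alpha_1}$, which swaps the parities of $\alpha_2$ and $\alpha_1+\alpha_2$ between the two Cartan data, to ``re-express the products in (\ref{tB trans mat def1}) as suitable linear combinations of the PBW monomials appearing in the case \ddtwoB{\cct}{\cc}.'' No such re-expression is available: the nilpotent subalgebras attached to \ddtwoB{\cct}{\cc} and \ddtwoB{\cct}{\ccb} are genuinely different algebras (their cubic Serre relations carry coefficients $1+q+q^{-1}$ versus $1-q-q^{-1}$), and an odd reflection does not induce a map between positive parts, since it sends a simple root to its negative. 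The paper itself flags in Remark \ref{3dl 3dn rem} that realizing odd reflections at the level of PBW bases is an open problem; the relation (\ref{my3dY proof9}) is \emph{not} derived that way.

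What actually makes the proof work is more elementary: after writing out the relations (\ref{my3dY proof1}) and the quantum root vectors (\ref{my3dY proof2})--(\ref{my3dY proof3}) for \ddtwoB{\cct}{\ccb}, one observes that they coincide \emph{verbatim} with the relations (\ref{my3dY proof4}) and root vectors (\ref{my3dY proof5})--(\ref{my3dY proof6}) for \ddtwoB{\cct}{\cc} under the substitution $q\to-q$ (up to the divided-power normalizations $[\cdot]_{q^{\pm 1/2},\pm 1}!$ and the factor $q^{1/2}+q^{-1/2}$). Hence the two instances (\ref{my3dY proof7}) and (\ref{my3dY proof8}) of the transition-matrix identity are literally the same identity, and comparing them yields $\tm_{i,j,k,l}^{a,b,c,d}$ as $\tdx(-q)_{i,j,k,l}^{a,b,c,d}$ times the explicit factorial ratios of (\ref{my3dY proof9})--(\ref{my3dY proof10}); evaluating these via (\ref{my3dY proof11})--(\ref{my3dY proof13}) produces exactly the signs and $(1+q)/(1-q)$ factors distinguishing $\tdy$ from $\tdx$. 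Note also that your fallback of checking ``a finite number of cases'' is not available here: only $j,l,b,d$ are confined to $\{0,1\}$, while $i,k,a,c$ run over $\mathbb{Z}_{\geq 0}$, so a direct verification would require the kind of recurrence analysis carried out for the 3D X in Appendix \ref{app 3dX}, not a case check.
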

\begin{proof}
The relations $e_1,e_2$ satisfy are
\begin{align}
e_{1}^2=0
,\quad
e_{2}^3e_{1}
-(1-q-q^{-1})e_{2}^2e_{1}e_{2}
+(1-q-q^{-1})e_{2}e_{1}e_{2}^2
-e_{1}e_{2}^3
=0
,
\label{my3dY proof1}
\end{align}
Also, the quantum root vectors are given by
\begin{alignat}{2}
&e_{12}
=e_{1}e_{2}+qe_{2}e_{1}
,&\quad
&e_{21}
=e_{2}e_{1}+qe_{1}e_{2}
\label{my3dY proof2}
,\\
&e_{(12)2}
=\frac{e_{12}e_{2}-e_{2}e_{12}}{q^{1/2}+q^{-1/2}}
,&\quad
&e_{2(21)}
=\frac{e_{2}e_{21}-e_{21}e_{2}}{q^{1/2}+q^{-1/2}}
.
\label{my3dY proof3}
\end{alignat}
On the other hand, for the case (II) \ddtwoB{\cct}{\cc}, the relations for the generators are
\begin{align}
\geneBoe_{1}^2=0
,\quad
\geneBoe_{2}^3\geneBoe_{1}
-(1+q+q^{-1})\geneBoe_{2}^2\geneBoe_{1}\geneBoe_{2}
+(1+q+q^{-1})\geneBoe_{2}\geneBoe_{1}\geneBoe_{2}^2
-\geneBoe_{1}\geneBoe_{2}^3
=0
,
\label{my3dY proof4}
\end{align}
where we write the generators by $\geneBoe_{i}$ instead of $e_{i}$ to avoid confusion.
Also, the quantum root vectors for the case (II) \ddtwoB{\cct}{\cc} are given by
\begin{alignat}{2}
&\geneBoe_{12}
=\geneBoe_{1}\geneBoe_{2}-q\geneBoe_{2}\geneBoe_{1}
,&\quad
&\geneBoe_{21}
=\geneBoe_{2}\geneBoe_{1}-q\geneBoe_{1}\geneBoe_{2}
\label{my3dY proof5}
,\\
&\geneBoe_{(12)2}
=\frac{\geneBoe_{12}\geneBoe_{2}-\geneBoe_{2}\geneBoe_{12}}{q^{1/2}+q^{-1/2}}
,&\quad
&\geneBoe_{2(21)}
=\frac{\geneBoe_{2}\geneBoe_{21}-\geneBoe_{21}\geneBoe_{2}}{q^{1/2}+q^{-1/2}}
.
\label{my3dY proof6}
\end{alignat}
Apparently, (\ref{my3dY proof4}),(\ref{my3dY proof5}) and the numerators of (\ref{my3dY proof6}) correspond to (\ref{my3dY proof1}),(\ref{my3dY proof2}) and the numerators of (\ref{my3dY proof3}) with a replacement $q\to -q$.
\par
Here, (\ref{tB trans mat def1}) for the case (II) \ddtwoB{\cct}{\cc} and (III) \ddtwoB{\cct}{\ccb} are explicitly written as follows:
\begin{gather}
\begin{split}
&
\frac{\geneBoe_{2}^{a}}{[a]_{q^{1/2}}!}
\left(
\frac{\geneBoe_{12}\geneBoe_{2}-\geneBoe_{2}\geneBoe_{12}}{q^{1/2}+q^{-1/2}}
\right)^{{\!\!}b}
\frac{(\geneBoe_{1}\geneBoe_{2}-q\geneBoe_{2}\geneBoe_{1})^{c}}{[c]_{q^{-1/2},(-1)}!}
\geneBoe_{1}^{d}
\\
&=
\sum_{i,j,k,l}
\tdx_{i,j,k,l}^{a,b,c,d}
\geneBoe_{1}^{l}
\frac{(\geneBoe_{2}\geneBoe_{1}-q\geneBoe_{1}\geneBoe_{2})^{k}}{[k]_{q^{-1/2},(-1)}!}
\left(
\frac{\geneBoe_{2}\geneBoe_{21}-\geneBoe_{21}\geneBoe_{2}}{q^{1/2}+q^{-1/2}}
\right)^{{\!\!}j}
\frac{\geneBoe_{2}^{i}}{[i]_{q^{1/2}}!}
.
\end{split}
\label{my3dY proof7}
\\
\begin{split}
&
\frac{e_{2}^{a}}{[a]_{q^{1/2},(-1)}!}
\left(
\frac{e_{12}e_{2}-e_{2}e_{12}}{q^{1/2}+q^{-1/2}}
\right)^{{\!\!}b}
\frac{(e_{1}e_{2}-qe_{2}e_{1})^{c}}{[c]_{q^{-1/2}}!}
e_{1}^{d}
\\
&=
\sum_{i,j,k,l}
\tm_{i,j,k,l}^{a,b,c,d}
e_{1}^{l}
\frac{(e_{2}e_{1}-qe_{1}e_{2})^{k}}{[k]_{q^{-1/2}}!}
\left(
\frac{e_{2}e_{21}-e_{21}e_{2}}{q^{1/2}+q^{-1/2}}
\right)^{{\!\!}j}
\frac{e_{2}^{i}}{[i]_{q^{1/2},(-1)}!}
.
\end{split}
\label{my3dY proof8}
\end{gather}
Comparing (\ref{my3dY proof7}) with $q\to -q$ and (\ref{my3dY proof8}), we obtain the following relation:
\begin{align}
\begin{split}
\tm_{i,j,k,l}^{a,b,c,d}
=&
\left.
\left(
\frac{1}{q^{1/2}+q^{-1/2}}
\right)^{{\!\! j-b}}
\frac{[a]_{q^{1/2}}![c]_{q^{-1/2},(-1)}!}{[i]_{q^{1/2}}![k]_{q^{-1/2},(-1)}!}
\tdx_{i,j,k,l}^{a,b,c,d}
\right|_{q\to -q}
\\
&\times
(q^{1/2}+q^{-1/2})^{j-b}
\frac{[i]_{q^{1/2},(-1)}![k]_{q^{-1/2}}!}{[a]_{q^{1/2},(-1)}![c]_{q^{-1/2}}!}
\end{split}
\label{my3dY proof9}
\\
=&
(-1)^{i(i-1)/4-k(k-1)/4-a(a-1)/4+c(c-1)/4+j/2-b/2}
\left(
\frac{1+q}{1-q}
\right)^{{\!\!}j-b}
\tdx(-q)_{i,j,k,l}^{a,b,c,d}
,
\label{my3dY proof10}
\end{align}
where we used
\begin{align}
&[m]_{(-q)^{1/2}}!
=(-1)^{m(m-1)/4}[m]_{q^{1/2},(-1)}!
\label{my3dY proof11}
,\\
&[m]_{(-q)^{-1/2},(-1)}!
=(-1)^{-m(m-1)/4}[m]_{q^{-1/2}}!
\label{my3dY proof12}
,\\
&\frac{1}{((-q)^{1/2}+(-q)^{-1/2})^{m}}
=(-1)^{m/2}
\left(
\frac{1+q}{1-q}
\right)^{{\!\!}m}
\frac{1}{(q^{1/2}+q^{-1/2})^{m}}
.
\label{my3dY proof13}
\end{align}
We then obtain the desired result by direct calculations.
We note that (\ref{my3dY proof10}) involves $(-1)^{1/2}$, but no matrix elements of the 3D Y involve it.
\end{proof}
\begin{corollary}\label{my inv cor 3dY}
\begin{align}
\tdy^{-1}=\tdy
.
\end{align}
\end{corollary}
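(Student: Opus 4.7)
The plan is to imitate verbatim the argument used for Corollary~\ref{my inv cor}, only with the four-index PBW basis of $\UtBp[1|2]$ associated with \ddtwoB{\cct}{\ccb} in place of the three-index basis for \ddtwoA{\cc}{\cct}. The essential ingredients are (i) Theorem~\ref{my 3dY result}, which identifies $\tdy$ with the transition matrix $\tm$ in~(\ref{tB trans mat def1}), and (ii) the general identity~(\ref{tm rel}) between the two transition matrices $\tm$ and $\tmt$ for reverse PBW expansions.

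Concretely, I would first write down the forward expansion
\begin{align*}
e_{2}^{(a)}e_{(12)2}^{(b)}e_{12}^{(c)}e_{1}^{(d)}
=\sum_{i,j,k,l}
\tdy_{i,j,k,l}^{a,b,c,d}\,
e_{1}^{(l)}e_{21}^{(k)}e_{2(21)}^{(j)}e_{2}^{(i)},
\end{align*}
which is Theorem~\ref{my 3dY result}. Then, using the inverse PBW expansion (\ref{tm2}) together with~(\ref{tm rel}) applied to the ordered multi-indices $(a,b,c,d)$ and $(i,j,k,l)$, each monomial on the right-hand side can be re-expanded back in the $B_{2}$-ordered basis as
\begin{align*}
e_{1}^{(l)}e_{21}^{(k)}e_{2(21)}^{(j)}e_{2}^{(i)}
=\sum_{w,x,y,z}\tdy_{z,y,x,w}^{i,j,k,l}\,
e_{2}^{(z)}e_{(12)2}^{(y)}e_{12}^{(x)}e_{1}^{(w)}.
\end{align*}
Substituting this back into the first display and invoking the linear independence of the PBW basis from Theorem~\ref{Yam PBW thm} gives
\begin{align*}
\sum_{i,j,k,l}\tdy_{i,j,k,l}^{a,b,c,d}\,\tdy_{z,y,x,w}^{i,j,k,l}
=\delta_{a,z}\delta_{b,y}\delta_{c,x}\delta_{d,w},
\end{align*}
which is precisely $\tdy^{-1}=\tdy$.

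There is no real obstacle here: the argument is entirely parallel to Corollary~\ref{my inv cor} (and to Corollaries~\ref{my inv cor 3dN}, \ref{my inv cor 3dX}), and relies solely on the already-established Theorem~\ref{my 3dY result} together with the formal relation~(\ref{tm rel}) and the linear independence of PBW monomials. The only very mild care needed is bookkeeping the reversal of four indices $(a,b,c,d)\mapsto(d,c,b,a)$ instead of three, but this affects nothing substantive in the derivation.
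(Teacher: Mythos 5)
Your proposal is correct and is exactly the paper's argument: the paper proves Corollary \ref{my inv cor 3dY} by declaring it ``shown exactly in the same way as Corollary \ref{my inv cor},'' and your four-index adaptation (re-expanding via (\ref{tm2}) and (\ref{tm rel}) and invoking linear independence from Theorem \ref{Yam PBW thm}) is precisely that same computation. The index bookkeeping $\tmt_{(w,x,y,z)}^{(l,k,j,i)}=\tdy_{z,y,x,w}^{i,j,k,l}$ is handled correctly, so nothing is missing.
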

\begin{proof}
This is shown exactly in the same way as Corollary \ref{my inv cor}.
\end{proof}
\subsubsection{The case (IV) $\vcenter{\hbox{\protect\includegraphics{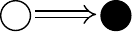}}}$}
In this case, the corresponding symmetrized Cartan matrix is given by
\begin{align}
DA
=
\begin{pmatrix}
2 & -1 \\
-1 & 1
\end{pmatrix}
,
\end{align}
and the corresponding positive roots are given by
\begin{align}
\prer&=\{\alpha_1,\alpha_1+2\alpha_2\}
,
\\
\prir&=\{\}
,\\
\prar&=\{\alpha_2,\alpha_1+\alpha_2\}
.
\end{align}
Then, the indices are specified as $i,j,k,l,a,b,c,d\in\mathbb{Z}_{\geq 0}$ for (\ref{tB trans mat def1}).
We write matrix elements of the transition matrix in (\ref{tB trans mat def1}) by
\begin{align}
\tm_{i,j,k,l}^{a,b,c,d}
=\tdz(q)_{i,j,k,l}^{a,b,c,d}
.
\label{3dZ mat el}
\end{align}
For simplicity, we also use the abbreviated notation $\tdz=\tdz(q)$.
We simply call $\tdz$ as the 3D Z.
\par
At present, an explict formula for the 3D Z is yet to be constructed.
In Appendix \ref{app 3dZ}, we present recurrence equations for the 3D Z.
We can calculate any matrix elements of the 3D Z by using a computer program via these equations.
\begin{example}\label{3dZ example}
The following is the list of all the non-zero elements of $\tdz_{0,1,1,2}^{a,b,c,d}$:
\begin{align}
&\tdz_{0,1,1,2}^{0,0,3,1}
=q^4(1+q)^2(1+q^2)
,\\
&\tdz_{0,1,1,2}^{0,1,1,2}
=-q^2(1-q^4-q^7)
,\\
&\tdz_{0,1,1,2}^{1,0,2,2}
=-q^3(1+q)(1+q+q^2+q^3+q^5)
,\\
&\tdz_{0,1,1,2}^{1,1,0,3}
=1-q^4-q^7
,\\
&\tdz_{0,1,1,2}^{2,0,1,3}
=-\frac{q^5(1+q^3)}{1-q}
,\\
&\tdz_{0,1,1,2}^{3,0,0,4}
=\frac{q^2(1+q)}{1-q}
.
\end{align}
\end{example}
\subsection{Transition matrices of PBW bases of type B of rank 3 and 3D reflection equation}\label{sec 53}
In this section, we condider the transition matrices of the PBW bases of $\UtBp$ of rank 3, so $m+n=3$.
All possible Dynkin diagrams associated with admissible realizations are given in Table \ref{tB all dynkin r3}.
\begin{table}[htb]
\centering
\caption{}
\label{tB all dynkin r3}
\begin{tabular}{c|c}\hline
$\GG$ & Dynkin diagram \\\hline
$\tB[7|0]$ &  \ddthreeBWul{\cc}{\cc}{\cc}{\ee_1-\ee_2}{\ee_2-\ee_3}{\ee_3}\\\hline
$\tB[5|2]$ &  \ddthreeBWul{\cc}{\cct}{\ccb}{\ee_1-\ee_2}{\ee_2-\dd_3}{\dd_3}
\quad \ddthreeBWul{\cct}{\cct}{\cc}{\ee_1-\dd_2}{\dd_2-\ee_3}{\ee_3}
\quad \ddthreeBWul{\cct}{\cc}{\cc}{\dd_1-\ee_2}{\ee_2-\ee_3}{\ee_3}
\\\hline
$\tB[3|4]$ &  \ddthreeBWul{\cct}{\cc}{\ccb}{\ee_1-\dd_2}{\dd_2-\dd_3}{\dd_3}
\quad \ddthreeBWul{\cct}{\cct}{\ccb}{\dd_1-\ee_2}{\ee_2-\dd_3}{\dd_3}
\quad \ddthreeBWul{\cc}{\cct}{\cc}{\dd_1-\dd_2}{\dd_2-\ee_3}{\ee_3}
\\\hline
$\tB[1|6]$ &  \ddthreeBWul{\cc}{\cc}{\ccb}{\dd_1-\dd_2}{\dd_2-\dd_3}{\dd_3}\\\hline
\end{tabular}
\end{table}
For the case of rank 3, quantum root vectors are given by (\ref{tB r3 qroot vec1}) and (\ref{tB r3 qroot vec2}), so the transition matrix in (\ref{tm1}) is given as follows:
\begin{align}
\begin{split}
&e_{3}^{(o_1)}
\pot^{(o_2)}
\poo^{(o_3)}
e_{2}^{(o_4)}
\ott^{(o_5)}
\oot^{(o_6)}
\ooo^{(o_7)}
\oop^{(o_8)}
e_{1}^{(o_9)}
\\
&=
\sum_{i_1,i_2,i_3,i_4,i_5,i_6,i_7,i_8,i_9}
\gamma_{i_1,i_2,i_3,i_4,i_5,i_6,i_7,i_8,i_9}^{o_1,o_2,o_3,o_4,o_5,o_6,o_7,o_8,o_9}
e_{1}^{(i_9)}
\toX^{(i_8)}
\rltol^{(i_7)}
\rlrltoll^{(i_6)}
\tlrlrltolll^{(i_5)}
e_{2}^{(i_4)}
\rtX^{(i_3)}
\rlrtl^{(i_2)}
e_{3}^{(i_1)}
,
\end{split}
\label{tB trans mat def r3}
\end{align}
where the domain of indices is specified below.
In order to attribute the transition matrix in (\ref{tB trans mat def r3}) to a composition of transition matrices of rank 2, we exploit the following transition matrices $\stmB^{(x)}$ and $\stmC^{(y)}$:
\begin{align}
e_{2}^{(a)}e_{12}^{(b)}e_{1}^{(c)}
&=\sum_{i,j,k}
\stmB^{(2|1)}{}_{i,j,k}^{a,b,c}
e_{1}^{(k)}e_{21}^{(j)}e_{2}^{(i)}
,
\label{tB trans mat r3 eq1}
\\
e_{2}^{(a)}e_{(((12)3)3)2}^{(b)}e_{((12)3)3}^{(c)}
&=\sum_{i,j,k}
\stmB^{(2|1233)}{}_{i,j,k}^{a,b,c}
e_{((12)3)3}^{(k)}e_{2(((12)3)3)}^{(j)}e_{2}^{(i)}
,
\label{tB trans mat r3 eq2}
\\
e_{2}^{(a)}e_{((3(32))1)2}^{(b)}e_{(3(32))1}^{(c)}
&=\sum_{i,j,k}
\stmB^{(2|3321)}{}_{i,j,k}^{a,b,c}
e_{(3(32))1}^{(k)}e_{2((3(32))1)}^{(j)}e_{2}^{(i)}
,
\label{tB trans mat r3 eq3}
\\
e_{(23)3}^{(a)}e_{1((23)3)}^{(b)}e_{1}^{(c)}
&=\sum_{i,j,k}
\stmB^{(233|1)}{}_{i,j,k}^{a,b,c}
e_{1}^{(k)}e_{((23)3)1}^{(j)}e_{(23)3}^{(i)}
,
\label{tB trans mat r3 eq4}
\\
e_{3(32)}^{(a)}e_{1(3(32))}^{(b)}e_{1}^{(c)}
&=\sum_{i,j,k}
\stmB^{(332|1)}{}_{i,j,k}^{a,b,c}
e_{1}^{(k)}e_{(3(32))1}^{(j)}e_{3(32)}^{(i)}
,
\label{tB trans mat r3 eq5}
\\
e_{(23)3}^{(a)}e_{(21)((23)3)}^{(b)}e_{21}^{(c)}
&=\sum_{i,j,k}
\stmB^{(233|21)}{}_{i,j,k}^{a,b,c}
e_{21}^{(k)}e_{((23)3)(21)}^{(j)}e_{(23)3}^{(i)}
,
\label{tB trans mat r3 eq6}
\\
e_{3(32)}^{(a)}e_{(12)(3(32))}^{(b)}e_{12}^{(c)}
&=\sum_{i,j,k}
\stmB^{(332|12)}{}_{i,j,k}^{a,b,c}
e_{12}^{(k)}e_{(3(32))(12)}^{(j)}e_{3(32)}^{(i)}
,
\label{tB trans mat r3 eq7}
\\
e_{3}^{(a)}e_{(23)3}^{(b)}e_{23}^{(c)}e_{2}^{(d)}
&=\sum_{i,j,k,l}
\stmC^{(3|2)}{}_{i,j,k,l}^{a,b,c,d}
e_{2}^{(l)}e_{32}^{(k)}e_{3(32)}^{(j)}e_{3}^{(i)}
,
\label{tB trans mat r3 eq8}
\\
e_{3}^{(a)}e_{((21)3)3}^{(b)}e_{(21)3}^{(c)}e_{21}^{(d)}
&=\sum_{i,j,k,l}
\stmC^{(3|21)}{}_{i,j,k,l}^{a,b,c,d}
e_{21}^{(l)}e_{3(21)}^{(k)}e_{3(3(21))}^{(j)}e_{3}^{(i)}
,
\label{tB trans mat r3 eq9}
\\
e_{3}^{(a)}e_{((12)3)3}^{(b)}e_{(12)3}^{(c)}e_{12}^{(d)}
&=\sum_{i,j,k,l}
\stmC^{(3|12)}{}_{i,j,k,l}^{a,b,c,d}
e_{12}^{(l)}e_{3(12)}^{(k)}e_{3(3(12))}^{(j)}e_{3}^{(i)}
,
\label{tB trans mat r3 eq10}
\\
e_{23}^{(a)}e_{(1(23))(23)}^{(b)}e_{1(23)}^{(c)}e_{1}^{(d)}
&=\sum_{i,j,k,l}
\stmC^{(23|1)}{}_{i,j,k,l}^{a,b,c,d}
e_{1}^{(l)}e_{(23)1}^{(k)}e_{(23)((23)1)}^{(j)}e_{23}^{(i)}
,
\label{tB trans mat r3 eq11}
\\
e_{32}^{(a)}e_{(1(32))(32)}^{(b)}e_{1(32)}^{(c)}e_{1}^{(d)}
&=\sum_{i,j,k,l}
\stmC^{(32|1)}{}_{i,j,k,l}^{a,b,c,d}
e_{1}^{(l)}e_{(32)1}^{(k)}e_{(32)((32)1)}^{(j)}e_{32}^{(i)}
,
\label{tB trans mat r3 eq12}
\end{align}
where the domain of indices will be specified.
For each case in Table \ref{tB all dynkin r3}, $\stmB^{(x)}$ and $\stmC^{(y)}$ are identified with the 3D operators we have already introduced.
\par
Then, by using $\stmB^{(x)},\stmC^{(y)}$, Proposition \ref{tB qroot rels1}, \ref{tB qroot rels2} and Proposition \ref{tB qroot rels1 chi}, \ref{tB qroot rels2 chi}, we can construct the transition matrix in (\ref{tB trans mat def r3}) in two ways.
The first way is given by
\begin{align}
&e_{3}^{(o_1)}
\pot^{(o_2)}
\poo^{(o_3)}
\underline{
e_{2}^{(o_4)}
\ott^{(o_5)}
\oot^{(o_6)}
}
\ooo^{(o_7)}
\oop^{(o_8)}
e_{1}^{(o_9)}
\label{tB lhs1 eq1}
\\
&=\sum
\stmB^{(2|1233)}{}_{y_4,y_5,y_6}^{o_4,o_5,o_6}
e_{3}^{(o_1)}
\pot^{(o_2)}
\underline{
\poo^{(o_3)}
\oot^{(y_6)}
}
\tlllotlrlrl^{(y_5)}
\underline{
e_{2}^{(y_4)}
\ooo^{(o_7)}
}
\oop^{(o_8)}
e_{1}^{(o_9)}
\label{tB lhs1 eq2}
\\
&=\sum
(-1)^{\phaseIII y_4o_7+\phaseIIb o_3y_6}
\stmB^{(2|1233)}{}_{y_4,y_5,y_6}^{o_4,o_5,o_6}
e_{3}^{(o_1)}
\pot^{(o_2)}
\oot^{(y_6)}
\poo^{(o_3)}
\underline{
\tlllotlrlrl^{(y_5)}
}
\ooo^{(o_7)}
\underline{
e_{2}^{(y_4)}
\oop^{(o_8)}
e_{1}^{(o_9)}
}
\label{tB lhs1 eq3}
\\
\begin{split}
&=\sum
(-1)^{\phaseIII (y_4o_7+y_5)+\phaseIIb o_3y_6}
\stmB^{(2|1233)}{}_{y_4,y_5,y_6}^{o_4,o_5,o_6}
\stmB^{(2|1)}{}_{x_4,y_8,y_9}^{y_4,o_8,o_9}
\\
&\spaceDb
\times
e_{3}^{(o_1)}
\pot^{(o_2)}
{\ }
\underline{
\oot^{(y_6)}
}
{\ }
\underline{
\poo^{(o_3)}
e_{(1(23))(23)}^{(y_5)}
\ooo^{(o_7)}
e_{1}^{(y_9)}
}
\toX^{(y_8)}
e_{2}^{(x_4)}
\end{split}
\label{tB lhs1 eq4}
\\
\begin{split}
&=\sum
(-1)^{\phaseIII (y_4o_7+y_5)+\phaseIIb o_3y_6}
\stmB^{(2|1233)}{}_{y_4,y_5,y_6}^{o_4,o_5,o_6}
\stmB^{(2|1)}{}_{x_4,y_8,y_9}^{y_4,o_8,o_9}
\stmC^{(23|1)}{}_{x_3,x_5,x_7,x_9}^{o_3,y_5,o_7,y_9}
\\
&\spaceDb
\times
e_{3}^{(o_1)}
\underline{
\pot^{(o_2)}
e_{1((23)3)}^{(y_6)}
e_{1}^{(x_9)}
}
\ltrlo^{(x_7)}
\ltrllltrlol^{(x_5)}
\poo^{(x_3)}
\toX^{(y_8)}
e_{2}^{(x_4)}
\end{split}
\label{tB lhs1 eq5}
\\
\begin{split}
&=\sum
(-1)^{\phaseIII (y_4o_7+y_5)+\phaseIIb o_3y_6}
\stmB^{(2|1233)}{}_{y_4,y_5,y_6}^{o_4,o_5,o_6}
\stmB^{(2|1)}{}_{x_4,y_8,y_9}^{y_4,o_8,o_9}
\stmC^{(23|1)}{}_{x_3,x_5,x_7,x_9}^{o_3,y_5,o_7,y_9}
\stmB^{(233|1)}{}_{y_2,x_6,i_9}^{o_2,y_6,x_9}
\\
&\spaceDb
\times
\underline{
e_{3}^{(o_1)}
e_{1}^{(i_9)}
}
\lltrlrlo^{(x_6)}
\underline{
\pot^{(y_2)}
\ltrlo^{(x_7)}
}
{\ }
\underline{
\ltrllltrlol^{(x_5)}
}
{\ }
\underline{
\poo^{(x_3)}
\toX^{(y_8)}
}
e_{2}^{(x_4)}
\end{split}
\label{tB lhs1 eq6}
\\
\begin{split}
&=\sum
(-1)^{\phaseI (o_1i_9+x_5)+\phaseII (x_3y_8+x_5)+\phaseIII (y_4o_7+y_5)+\phaseIIb o_3y_6+\phaseIIIb y_2x_7}
\\
&\spaceDb
\times
\stmB^{(2|1233)}{}_{y_4,y_5,y_6}^{o_4,o_5,o_6}
\stmB^{(2|1)}{}_{x_4,y_8,y_9}^{y_4,o_8,o_9}
\stmC^{(23|1)}{}_{x_3,x_5,x_7,x_9}^{o_3,y_5,o_7,y_9}
\stmB^{(233|1)}{}_{y_2,x_6,i_9}^{o_2,y_6,x_9}
\\
&\spaceDb
\times
e_{1}^{(i_9)}
e_{3}^{(o_1)}
{\ }
\underline{
\lltrlrlo^{(x_6)}
}
{\ }
\underline{
\ltrlo^{(x_7)}
}
{\ }
\underline{
\pot^{(y_2)}
e_{(21)((23)3)}^{(x_5)}
\toX^{(y_8)}
}
\poo^{(x_3)}
e_{2}^{(x_4)}
\end{split}
\label{tB lhs1 eq7}
\\
\begin{split}
&=\sum
(-1)^{\phaseI (o_1i_9+x_5+x_7)+\phaseII (x_3y_8+x_5)+\phaseIII (y_4o_7+y_5)+\phaseIIb o_3y_6+\phaseIIIb y_2x_7}
\\
&\spaceDb
\times
\stmB^{(2|1233)}{}_{y_4,y_5,y_6}^{o_4,o_5,o_6}
\stmB^{(2|1)}{}_{x_4,y_8,y_9}^{y_4,o_8,o_9}
\stmC^{(23|1)}{}_{x_3,x_5,x_7,x_9}^{o_3,y_5,o_7,y_9}
\stmB^{(233|1)}{}_{y_2,x_6,i_9}^{o_2,y_6,x_9}
\stmB^{(233|21)}{}_{x_2,i_5,x_8}^{y_2,x_5,y_8}
\\
&\spaceDb
\times
e_{1}^{(i_9)}
\underline{
e_{3}^{(o_1)}
e_{((21)3)3}^{(x_6)}
e_{(21)3}^{(x_7)}
\toX^{(x_8)}
}
\lltrlrlltol^{(i_5)}
\pot^{(x_2)}
\poo^{(x_3)}
e_{2}^{(x_4)}
\end{split}
\label{tB lhs1 eq8}
\\
\begin{split}
&=\sum
(-1)^{\phaseI (o_1i_9+x_5+x_7)+\phaseII (x_3y_8+x_5)+\phaseIII (y_4o_7+y_5)+\phaseIIb o_3y_6+\phaseIIIb y_2x_7}
\stmB^{(2|1233)}{}_{y_4,y_5,y_6}^{o_4,o_5,o_6}
\stmB^{(2|1)}{}_{x_4,y_8,y_9}^{y_4,o_8,o_9}
\\
&\spaceDb
\times
\stmC^{(23|1)}{}_{x_3,x_5,x_7,x_9}^{o_3,y_5,o_7,y_9}
\stmB^{(233|1)}{}_{y_2,x_6,i_9}^{o_2,y_6,x_9}
\stmB^{(233|21)}{}_{x_2,i_5,x_8}^{y_2,x_5,y_8}
\stmC^{(3|21)}{}_{x_1,i_6,i_7,i_8}^{o_1,x_6,x_7,x_8}
\\
&\spaceDb
\times
e_{1}^{(i_9)}
\toX^{(i_8)}
\rltol^{(i_7)}
\rlrltoll^{(i_6)}
\underline{
e_{3}^{(x_1)}
\lltrlrlltol^{(i_5)}
}
\pot^{(x_2)}
\poo^{(x_3)}
e_{2}^{(x_4)}
\end{split}
\label{tB lhs1 eq9}
\\
\begin{split}
&=\sum
(-1)^{\phaseI (o_1i_9+x_5+x_7)+\phaseII (x_3y_8+x_5)+\phaseIII (y_4o_7+y_5)+\phaseIb x_1i_5+\phaseIIb o_3y_6+\phaseIIIb y_2x_7}
\stmB^{(2|1233)}{}_{y_4,y_5,y_6}^{o_4,o_5,o_6}
\stmB^{(2|1)}{}_{x_4,y_8,y_9}^{y_4,o_8,o_9}
\\
&\spaceDb
\times
\stmC^{(23|1)}{}_{x_3,x_5,x_7,x_9}^{o_3,y_5,o_7,y_9}
\stmB^{(233|1)}{}_{y_2,x_6,i_9}^{o_2,y_6,x_9}
\stmB^{(233|21)}{}_{x_2,i_5,x_8}^{y_2,x_5,y_8}
\stmC^{(3|21)}{}_{x_1,i_6,i_7,i_8}^{o_1,x_6,x_7,x_8}
\\
&\spaceDb
\times
e_{1}^{(i_9)}
\toX^{(i_8)}
\rltol^{(i_7)}
\rlrltoll^{(i_6)}
\lltrlrlltol^{(i_5)}
\underline{
e_{3}^{(x_1)}
\pot^{(x_2)}
\poo^{(x_3)}
e_{2}^{(x_4)}
}
\end{split}
\label{tB lhs1 eq10}
\\
\begin{split}
&=\sum
(-1)^{\phaseI (o_1i_9+x_5+x_7)+\phaseII (x_3y_8+x_5)+\phaseIII (y_4o_7+y_5)+\phaseIb x_1i_5+\phaseIIb o_3y_6+\phaseIIIb y_2x_7}
\stmB^{(2|1233)}{}_{y_4,y_5,y_6}^{o_4,o_5,o_6}
\stmB^{(2|1)}{}_{x_4,y_8,y_9}^{y_4,o_8,o_9}
\\
&\spaceDb
\times
\stmC^{(23|1)}{}_{x_3,x_5,x_7,x_9}^{o_3,y_5,o_7,y_9}
\stmB^{(233|1)}{}_{y_2,x_6,i_9}^{o_2,y_6,x_9}
\stmB^{(233|21)}{}_{x_2,i_5,x_8}^{y_2,x_5,y_8}
\stmC^{(3|21)}{}_{x_1,i_6,i_7,i_8}^{o_1,x_6,x_7,x_8}
\stmC^{(3|2)}{}_{i_1,i_2,i_3,i_4}^{x_1,x_2,x_3,x_4}
\\
&\spaceDb
\times
e_{1}^{(i_9)}
\toX^{(i_8)}
\rltol^{(i_7)}
\rlrltoll^{(i_6)}
\underline{
\lltrlrlltol^{(i_5)}
}
e_{2}^{(i_4)}
\rtX^{(i_3)}
\rlrtl^{(i_2)}
e_{3}^{(i_1)}
\end{split}
\label{tB lhs1 eq11}
\\
\begin{split}
&=\sum
(-1)^{\phaseI (o_1i_9+x_5+x_7)+\phaseII (x_3y_8+x_5)+\phaseIII (y_4o_7+y_5)+\phaseIb x_1i_5+\phaseIIb o_3y_6+\phaseIIIb y_2x_7}
\stmB^{(2|1233)}{}_{y_4,y_5,y_6}^{o_4,o_5,o_6}
\stmB^{(2|1)}{}_{x_4,y_8,y_9}^{y_4,o_8,o_9}
\\
&\spaceDb
\times
\stmC^{(23|1)}{}_{x_3,x_5,x_7,x_9}^{o_3,y_5,o_7,y_9}
\stmB^{(233|1)}{}_{y_2,x_6,i_9}^{o_2,y_6,x_9}
\stmB^{(233|21)}{}_{x_2,i_5,x_8}^{y_2,x_5,y_8}
\stmC^{(3|21)}{}_{x_1,i_6,i_7,i_8}^{o_1,x_6,x_7,x_8}
\stmC^{(3|2)}{}_{i_1,i_2,i_3,i_4}^{x_1,x_2,x_3,x_4}
\\
&\spaceDb
\times
e_{1}^{(i_9)}
\toX^{(i_8)}
\rltol^{(i_7)}
\rlrltoll^{(i_6)}
\tlrlrltolll^{(i_5)}
e_{2}^{(i_4)}
\rtX^{(i_3)}
\rlrtl^{(i_2)}
e_{3}^{(i_1)}
,
\end{split}
\label{tB lhs1 eq12}
\end{align}
where summations are taken on $i_k,x_k{\ }(k=1,\cdots,9)$, $y_k{\ }(k=2,4,5,6,8,9)$.
We have put the underlines to the parts to be rewritten.
We used $\phaseI,\phaseII,\phaseIII$ given by (\ref{te phase}) and we have also set
\begin{align}
\begin{split}
\phaseIb
&=p(\alpha_3)p(\alpha_1+2\alpha_2+2\alpha_3)
,\\
\phaseIIb
&=p(\alpha_2+\alpha_3)p(\alpha_1+\alpha_2+2\alpha_3)
,\\
\phaseIIIb
&=p(\alpha_1+\alpha_2+\alpha_3)p(\alpha_2+2\alpha_3)
.
\end{split}
\label{tre phase}
\end{align}
We note that $\phaseIb$, $\phaseIIb$ and $\phaseIIIb$ are actually equal to $\phaseI$, $\phaseII$ and $\phaseIII$, respectively because the parts whose coefficients are 2 do not contribute the parity.
We exploit both of them for a better understanding.
The details of the above procedure are as follows.
For (\ref{tB lhs1 eq1}), we used (\ref{tB trans mat r3 eq2}).
For (\ref{tB lhs1 eq2}), we used (\ref{tB qroot rel10}) and (\ref{tB qroot rel8}).
For (\ref{tB lhs1 eq3}), we used (\ref{tB trans mat r3 eq1}) and (\ref{tB qroot rel4}).
For (\ref{tB lhs1 eq4}), we used (\ref{tB trans mat r3 eq11}) and (\ref{tB qroot rel3}).
For (\ref{tB lhs1 eq5}), we used (\ref{tB trans mat r3 eq4}).
For (\ref{tB lhs1 eq6}), we used $[e_1,e_3]=0$, (\ref{tB qroot rel6}), (\ref{tB qroot rel9}) and (\ref{tB qroot rel11}).
For (\ref{tB lhs1 eq7}), we used (\ref{tB qroot rel2}), (\ref{tB qroot rel4}) and (\ref{tB trans mat r3 eq6}).
For (\ref{tB lhs1 eq8}), we used (\ref{tB trans mat r3 eq9}).
For (\ref{tB lhs1 eq9}), we used (\ref{tB qroot rel12}).
For (\ref{tB lhs1 eq10}), we used (\ref{tB trans mat r3 eq8}).
For (\ref{tB lhs1 eq11}), we used (\ref{tB qroot rel7}).
\par
Similarly, the second way is given by
\begin{align}
&\underline{
e_{3}^{(o_1)}
\pot^{(o_2)}
\poo^{(o_3)}
e_{2}^{(o_4)}
}
\ott^{(o_5)}
\oot^{(o_6)}
\ooo^{(o_7)}
\oop^{(o_8)}
e_{1}^{(o_9)}
\label{tB rhs1 eq1}
\\
&=\sum
\stmC^{(3|2)}{}_{x_1,y_2,x_3,y_4}^{o_1,o_2,o_3,o_4}
e_{2}^{(y_4)}
\rtX^{(x_3)}
\rlrtl^{(y_2)}
\underline{
e_{3}^{(x_1)}
\ott^{(o_5)}
}
\oot^{(o_6)}
\ooo^{(o_7)}
\oop^{(o_8)}
e_{1}^{(o_9)}
\label{tB rhs1 eq2}
\\
&=\sum
(-1)^{\phaseIb x_1o_5}
\stmC^{(3|2)}{}_{x_1,y_2,x_3,y_4}^{o_1,o_2,o_3,o_4}
e_{2}^{(y_4)}
\rtX^{(x_3)}
\rlrtl^{(y_2)}
{\ }
\underline{
\ott^{(o_5)}
}
{\ }
\underline{
e_{3}^{(x_1)}
\oot^{(o_6)}
\ooo^{(o_7)}
\oop^{(o_8)}
}
e_{1}^{(o_9)}
\label{tB rhs1 eq3}
\\
\begin{split}
&=\sum
(-1)^{\phaseIb x_1o_5}
\stmC^{(3|2)}{}_{x_1,y_2,x_3,y_4}^{o_1,o_2,o_3,o_4}
\stmC^{(3|12)}{}_{i_1,y_6,x_7,y_8}^{x_1,o_6,o_7,o_8}
\\
&\spaceDb
\times
e_{2}^{(y_4)}
\rtX^{(x_3)}
\underline{
\rlrtl^{(y_2)}
e_{(12)(3(32))}^{(o_5)}
\oop^{(y_8)}
}
\rlotl^{(x_7)}
\rlrlotll^{(y_6)}
e_{3}^{(i_1)}
e_{1}^{(o_9)}
\end{split}
\label{tB rhs1 eq4}
\\
\begin{split}
&=\sum
(-1)^{\phaseIb x_1o_5}
\stmC^{(3|2)}{}_{x_1,y_2,x_3,y_4}^{o_1,o_2,o_3,o_4}
\stmC^{(3|12)}{}_{i_1,y_6,x_7,y_8}^{x_1,o_6,o_7,o_8}
\stmB^{(332|12)}{}_{x_2,y_5,x_8}^{y_2,o_5,y_8}
\\
&\spaceDb
\times
e_{2}^{(y_4)}
\underline{
\rtX^{(x_3)}
\oop^{(x_8)}
}
\lrlrtlllotl^{(y_5)}
\underline{
\rlrtl^{(x_2)}
\rlotl^{(x_7)}
}
{\ }
\underline{
\rlrlotll^{(y_6)}
}
{\ }
\underline{
e_{3}^{(i_1)}
e_{1}^{(o_9)}
}
\end{split}
\label{tB rhs1 eq5}
\\
\begin{split}
&=\sum
(-1)^{\phaseI i_1o_9+\phaseII x_3x_8+\phaseIb x_1o_5+\phaseIIIb x_2x_7}
\stmC^{(3|2)}{}_{x_1,y_2,x_3,y_4}^{o_1,o_2,o_3,o_4}
\stmC^{(3|12)}{}_{i_1,y_6,x_7,y_8}^{x_1,o_6,o_7,o_8}
\stmB^{(332|12)}{}_{x_2,y_5,x_8}^{y_2,o_5,y_8}
\\
&\spaceDb
\times
e_{2}^{(y_4)}
\oop^{(x_8)}
\rtX^{(x_3)}
{\ }
\underline{
\lrlrtlllotl^{(y_5)}
}
{\ }
\underline{
\rlotl^{(x_7)}
}
{\ }
\underline{
\rlrtl^{(x_2)}
e_{1(3(32))}^{(y_6)}
e_{1}^{(o_9)}
}
e_{3}^{(i_1)}
\end{split}
\label{tB rhs1 eq6}
\\
\begin{split}
&=\sum
(-1)^{\phaseI (i_1o_9+x_7+y_5)+\phaseII (x_3x_8+y_5)+\phaseIb x_1o_5+\phaseIIIb x_2x_7}
\\
&\spaceDb
\times
\stmC^{(3|2)}{}_{x_1,y_2,x_3,y_4}^{o_1,o_2,o_3,o_4}
\stmC^{(3|12)}{}_{i_1,y_6,x_7,y_8}^{x_1,o_6,o_7,o_8}
\stmB^{(332|12)}{}_{x_2,y_5,x_8}^{y_2,o_5,y_8}
\stmB^{(332|1)}{}_{i_2,x_6,y_9}^{x_2,y_6,o_9}
\\
&\spaceDb
\times
e_{2}^{(y_4)}
\oop^{(x_8)}
\underline{
\rtX^{(x_3)}
e_{(1(32))(32)}^{(y_5)}
e_{1(32)}^{(x_7)}
e_{1}^{(y_9)}
}
\lrlrtllo^{(x_6)}
\rlrtl^{(i_2)}
e_{3}^{(i_1)}
\end{split}
\label{tB rhs1 eq7}
\\
\begin{split}
&=\sum
(-1)^{\phaseI (i_1o_9+x_7+y_5)+\phaseII (x_3x_8+y_5)+\phaseIb x_1o_5+\phaseIIIb x_2x_7}
\\
&\spaceDb
\times
\stmC^{(3|2)}{}_{x_1,y_2,x_3,y_4}^{o_1,o_2,o_3,o_4}
\stmC^{(3|12)}{}_{i_1,y_6,x_7,y_8}^{x_1,o_6,o_7,o_8}
\stmB^{(332|12)}{}_{x_2,y_5,x_8}^{y_2,o_5,y_8}
\stmB^{(332|1)}{}_{i_2,x_6,y_9}^{x_2,y_6,o_9}
\stmC^{(32|1)}{}_{i_3,x_5,i_7,x_9}^{x_3,y_5,x_7,y_9}
\\
&\spaceDb
\times
\underline{
e_{2}^{(y_4)}
\oop^{(x_8)}
e_{1}^{(x_9)}
}
\lrtlo^{(i_7)}
\lrtlllrtlol^{(x_5)}
\rtX^{(i_3)}
\lrlrtllo^{(x_6)}
\rlrtl^{(i_2)}
e_{3}^{(i_1)}
\end{split}
\label{tB rhs1 eq8}
\\
\begin{split}
&=\sum
(-1)^{\phaseI (i_1o_9+x_7+y_5)+\phaseII (x_3x_8+y_5)+\phaseIb x_1o_5+\phaseIIIb x_2x_7}
\stmC^{(3|2)}{}_{x_1,y_2,x_3,y_4}^{o_1,o_2,o_3,o_4}
\stmC^{(3|12)}{}_{i_1,y_6,x_7,y_8}^{x_1,o_6,o_7,o_8}
\\
&\spaceDb
\times
\stmB^{(332|12)}{}_{x_2,y_5,x_8}^{y_2,o_5,y_8}
\stmB^{(332|1)}{}_{i_2,x_6,y_9}^{x_2,y_6,o_9}
\stmC^{(32|1)}{}_{i_3,x_5,i_7,x_9}^{x_3,y_5,x_7,y_9}
\stmB^{(2|1)}{}_{x_4,i_8,i_9}^{y_4,x_8,x_9}
\\
&\spaceDb
\times
e_{1}^{(i_9)}
\toX^{(i_8)}
\underline{
e_{2}^{(x_4)}
\lrtlo^{(i_7)}
}
{\ }
\underline{
\lrtlllrtlol^{(x_5)}
}
{\ }
\underline{
\rtX^{(i_3)}
\lrlrtllo^{(x_6)}
}
\rlrtl^{(i_2)}
e_{3}^{(i_1)}
\end{split}
\label{tB rhs1 eq9}
\\
\begin{split}
&=\sum
(-1)^{\phaseI (i_1o_9+x_7+y_5)+\phaseII (x_3x_8+y_5)+\phaseIII (x_4i_7+x_5)+\phaseIb x_1o_5+\phaseIIb i_3x_6+\phaseIIIb x_2x_7}
\stmC^{(3|2)}{}_{x_1,y_2,x_3,y_4}^{o_1,o_2,o_3,o_4}
\\
&\spaceDb
\times
\stmC^{(3|12)}{}_{i_1,y_6,x_7,y_8}^{x_1,o_6,o_7,o_8}
\stmB^{(332|12)}{}_{x_2,y_5,x_8}^{y_2,o_5,y_8}
\stmB^{(332|1)}{}_{i_2,x_6,y_9}^{x_2,y_6,o_9}
\stmC^{(32|1)}{}_{i_3,x_5,i_7,x_9}^{x_3,y_5,x_7,y_9}
\stmB^{(2|1)}{}_{x_4,i_8,i_9}^{y_4,x_8,x_9}
\\
&\spaceDb
\times
e_{1}^{(i_9)}
\toX^{(i_8)}
\lrtlo^{(i_7)}
\underline{
e_{2}^{(x_4)}
e_{((3(32))1)2}^{(x_5)}
\lrlrtllo^{(x_6)}
}
\rtX^{(i_3)}
\rlrtl^{(i_2)}
e_{3}^{(i_1)}
\end{split}
\label{tB rhs1 eq10}
\\
\begin{split}
&=\sum
(-1)^{\phaseI (i_1o_9+x_7+y_5)+\phaseII (x_3x_8+y_5)+\phaseIII (x_4i_7+x_5)+\phaseIb x_1o_5+\phaseIIb i_3x_6+\phaseIIIb x_2x_7}
\stmC^{(3|2)}{}_{x_1,y_2,x_3,y_4}^{o_1,o_2,o_3,o_4}
\\
&\spaceDb
\times
\stmC^{(3|12)}{}_{i_1,y_6,x_7,y_8}^{x_1,o_6,o_7,o_8}
\stmB^{(332|12)}{}_{x_2,y_5,x_8}^{y_2,o_5,y_8}
\stmB^{(332|1)}{}_{i_2,x_6,y_9}^{x_2,y_6,o_9}
\stmC^{(32|1)}{}_{i_3,x_5,i_7,x_9}^{x_3,y_5,x_7,y_9}
\stmB^{(2|1)}{}_{x_4,i_8,i_9}^{y_4,x_8,x_9}
\\
&\spaceDb
\times
\stmB^{(2|3321)}{}_{i_4,i_5,i_6}^{x_4,x_5,x_6}
e_{1}^{(i_9)}
\toX^{(i_8)}
\lrtlo^{(i_7)}
{\ }
\underline{
\lrlrtllo^{(i_6)}
}
{\ }
\underline{
\tllrlrtllol^{(i_5)}
}
e_{2}^{(i_4)}
\rtX^{(i_3)}
\rlrtl^{(i_2)}
e_{3}^{(i_1)}
\end{split}
\label{tB rhs1 eq11}
\\
\begin{split}
&=\sum
(-1)^{\phaseI (i_1o_9+x_7+y_5)+\phaseII (x_3x_8+y_5)+\phaseIII (x_4i_7+x_5)+\phaseIb x_1o_5+\phaseIIb i_3x_6+\phaseIIIb x_2x_7}
\stmC^{(3|2)}{}_{x_1,y_2,x_3,y_4}^{o_1,o_2,o_3,o_4}
\\
&\spaceDb
\times
\stmC^{(3|12)}{}_{i_1,y_6,x_7,y_8}^{x_1,o_6,o_7,o_8}
\stmB^{(332|12)}{}_{x_2,y_5,x_8}^{y_2,o_5,y_8}
\stmB^{(332|1)}{}_{i_2,x_6,y_9}^{x_2,y_6,o_9}
\stmC^{(32|1)}{}_{i_3,x_5,i_7,x_9}^{x_3,y_5,x_7,y_9}
\stmB^{(2|1)}{}_{x_4,i_8,i_9}^{y_4,x_8,x_9}
\\
&\spaceDb
\times
\stmB^{(2|3321)}{}_{i_4,i_5,i_6}^{x_4,x_5,x_6}
e_{1}^{(i_9)}
\toX^{(i_8)}
\rltol^{(i_7)}
\rlrltoll^{(i_6)}
\tlrlrltolll^{(i_5)}
e_{2}^{(i_4)}
\rtX^{(i_3)}
\rlrtl^{(i_2)}
e_{3}^{(i_1)}
,
\end{split}
,
\label{tB rhs1 eq12}
\end{align}
where summations are taken on $i_k,x_k{\ }(k=1,\cdots,9)$, $y_k{\ }(k=2,4,5,6,8,9)$.
Again, we have put the underlines to the parts to be rewritten.
The details of the above procedure are as follows.
For (\ref{tB rhs1 eq1}), we used (\ref{tB trans mat r3 eq8}).
For (\ref{tB rhs1 eq2}), we used (\ref{tB qroot rel7 chi}) and (\ref{tB qroot rel12 chi}).
For (\ref{tB rhs1 eq3}), we used (\ref{tB trans mat r3 eq10}) and (\ref{tB qroot rel7 chi}).
For (\ref{tB rhs1 eq4}), we used (\ref{tB trans mat r3 eq7}).
For (\ref{tB rhs1 eq5}), we used $[e_1,e_3]=0$, (\ref{tB qroot rel2 chi}), (\ref{tB qroot rel4 chi}), (\ref{tB qroot rel9 chi}) and (\ref{tB qroot rel11 chi}).
For (\ref{tB rhs1 eq6}), we used (\ref{tB trans mat r3 eq5}), (\ref{tB qroot rel2 chi}) and (\ref{tB qroot rel6 chi}).
For (\ref{tB rhs1 eq7}), we used (\ref{tB trans mat r3 eq12}).
For (\ref{tB rhs1 eq8}), we used (\ref{tB trans mat r3 eq1}).
For (\ref{tB rhs1 eq9}), we used (\ref{tB qroot rel2 chi}), (\ref{tB qroot rel3 chi}) and (\ref{tB qroot rel5 chi}), (\ref{tB qroot rel8 chi}) and (\ref{tB qroot rel10 chi}).
For (\ref{tB rhs1 eq10}), we used (\ref{tB trans mat r3 eq3}).
For (\ref{tB rhs1 eq11}), we used (\ref{tB qroot rel3 chi}).
\par
Now, $\{e_{1}^{(i_9)}\toX^{(i_8)}\rltol^{(i_7)}\rlrltoll^{(i_6)}\tlrlrltolll^{(i_5)}e_{2}^{(i_4)}\rtX^{(i_3)}\rlrtl^{(i_2)}e_{3}^{(i_1)}\}$ are linearly independent by Theorem \ref{Yam PBW thm}.
Then, by comparing (\ref{tB lhs1 eq12}) and (\ref{tB rhs1 eq12}), we obtain the following result:
\begin{theorem}\label{tre general thm}
As the identity of transition matrices of quantum superalgebras associated with type B, we have
\begin{align}
\begin{split}
&\sum
(-1)^{\phaseI (o_1i_9+x_5+x_7)+\phaseII (x_3y_8+x_5)+\phaseIII (y_4o_7+y_5)+\phaseIb x_1i_5+\phaseIIb o_3y_6+\phaseIIIb y_2x_7}
\stmB^{(2|1233)}{}_{y_4,y_5,y_6}^{o_4,o_5,o_6}
\stmB^{(2|1)}{}_{x_4,y_8,y_9}^{y_4,o_8,o_9}
\\
&\spaceDb
\times
\stmC^{(23|1)}{}_{x_3,x_5,x_7,x_9}^{o_3,y_5,o_7,y_9}
\stmB^{(233|1)}{}_{y_2,x_6,i_9}^{o_2,y_6,x_9}
\stmB^{(233|21)}{}_{x_2,i_5,x_8}^{y_2,x_5,y_8}
\stmC^{(3|21)}{}_{x_1,i_6,i_7,i_8}^{o_1,x_6,x_7,x_8}
\stmC^{(3|2)}{}_{i_1,i_2,i_3,i_4}^{x_1,x_2,x_3,x_4}
\\
&=\sum
(-1)^{\phaseI (i_1o_9+x_7+y_5)+\phaseII (x_3x_8+y_5)+\phaseIII (x_4i_7+x_5)+\phaseIb x_1o_5+\phaseIIb i_3x_6+\phaseIIIb x_2x_7}
\stmC^{(3|2)}{}_{x_1,y_2,x_3,y_4}^{o_1,o_2,o_3,o_4}
\stmC^{(3|12)}{}_{i_1,y_6,x_7,y_8}^{x_1,o_6,o_7,o_8}
\\
&\spaceDb
\times
\stmB^{(332|12)}{}_{x_2,y_5,x_8}^{y_2,o_5,y_8}
\stmB^{(332|1)}{}_{i_2,x_6,y_9}^{x_2,y_6,o_9}
\stmC^{(32|1)}{}_{i_3,x_5,i_7,x_9}^{x_3,y_5,x_7,y_9}
\stmB^{(2|1)}{}_{x_4,i_8,i_9}^{y_4,x_8,x_9}
\stmB^{(2|3321)}{}_{i_4,i_5,i_6}^{x_4,x_5,x_6}
.
\end{split}
\label{tre general}
\end{align}
where summations are taken on $x_k{\ }(k=1,\cdots,9)$ and $y_k{\ }(k=2,4,5,6,8,9)$.
\end{theorem}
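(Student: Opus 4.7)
The plan is to establish the identity by computing the transition from one PBW basis to the other in two distinct sequences, then invoking linear independence to match coefficients. Concretely, I would start with the element $e_{3}^{(o_1)}\pot^{(o_2)}\poo^{(o_3)}e_{2}^{(o_4)}\ott^{(o_5)}\oot^{(o_6)}\ooo^{(o_7)}\oop^{(o_8)}e_{1}^{(o_9)}$ (an element of the PBW basis $B_2$ of rank 3) and rewrite it, by two different recipes of local moves, as a linear combination of monomials $e_{1}^{(i_9)}\toX^{(i_8)}\rltol^{(i_7)}\rlrltoll^{(i_6)}\tlrlrltolll^{(i_5)}e_{2}^{(i_4)}\rtX^{(i_3)}\rlrtl^{(i_2)}e_{3}^{(i_1)}$ in the PBW basis $B_1$. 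Since Theorem \ref{Yam PBW thm} guarantees the linear independence of $B_1$, equating the coefficients of both expansions produces the identity (\ref{tre general}).

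For the first recipe (the left-hand side of \eqref{tre general}), the idea is to process the word from a block in the middle outward. I would first apply $\stmB^{(2|1233)}$ to the sub-word $e_{2}^{(o_4)}\ott^{(o_5)}\oot^{(o_6)}$, then move the resulting $e_2$ past the right tail by $\stmB^{(2|1)}$, then pull $e_1$ through the remaining roots by $\stmC^{(23|1)}$, and similarly apply $\stmB^{(233|1)}$, $\stmB^{(233|21)}$, $\stmC^{(3|21)}$, and finally $\stmC^{(3|2)}$. Between these steps I would commute quantum root vectors past one another using the relations collected in Propositions \ref{tB qroot rels1} and \ref{tB qroot rels2}, each swap contributing a local sign factor built from the parities and a $q$-weight predicted by the $q$-commutator. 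For the second recipe (the right-hand side), I would reverse the overall order of moves: first apply $\stmC^{(3|2)}$ to the left block $e_{3}^{(o_1)}\pot^{(o_2)}\poo^{(o_3)}e_{2}^{(o_4)}$, then push the result rightward through $\stmC^{(3|12)}$, $\stmB^{(332|12)}$, $\stmB^{(332|1)}$, $\stmC^{(32|1)}$, $\stmB^{(2|1)}$, and $\stmB^{(2|3321)}$, invoking the dual set of higher-order relations in Propositions \ref{tB qroot rels1 chi} and \ref{tB qroot rels2 chi} (which are the images of the first set under the anti-algebra automorphism $\chi$ from (\ref{anti chi})).

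The crux of the argument is bookkeeping rather than new algebraic input: every use of Lemma \ref{qcom jacobi} or Corollary \ref{qcom jacobi coro} contributes a sign determined by the parities of the roots involved, and every invocation of a non-trivial identity such as (\ref{tB qroot rel5}), (\ref{tB qroot rel6}), or their $\chi$-conjugates inserts a further parity-dependent sign. The cumulative signs turn out to be nonlocal, which is why they must be organized through the composite phases $\phaseI,\phaseII,\phaseIII,\phaseIb,\phaseIIb,\phaseIIIb$ introduced in (\ref{te phase}) and (\ref{tre phase}). The main obstacle is to certify that, at each stage of both recipes, the partially rewritten word is in a form to which the next rank-2 transition matrix genuinely applies, and that the cited higher-order relation is the correct one for the given adjacency. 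This bookkeeping mirrors the rank-3 type A derivation in Section \ref{sec 43} leading to Theorem \ref{te general thm}, but is substantially heavier because the longest element of $W(\mathfrak{osp})$ has nine reduced factors rather than six, and several of the intermediate quantum root vectors occurring in (\ref{tB trans mat r3 eq1})--(\ref{tB trans mat r3 eq12}) have height three or four. Once both expansions have been produced in the target basis, linear independence yields (\ref{tre general}) without further computation.
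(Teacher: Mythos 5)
Your proposal is correct and follows essentially the same route as the paper: the paper likewise expands the $B_2$ monomial $e_{3}^{(o_1)}\pot^{(o_2)}\poo^{(o_3)}e_{2}^{(o_4)}\ott^{(o_5)}\oot^{(o_6)}\ooo^{(o_7)}\oop^{(o_8)}e_{1}^{(o_9)}$ in the basis $B_1$ by two sequences of rank-2 moves, applying $\stmB^{(2|1233)},\stmB^{(2|1)},\stmC^{(23|1)},\stmB^{(233|1)},\stmB^{(233|21)},\stmC^{(3|21)},\stmC^{(3|2)}$ in the first and $\stmC^{(3|2)},\stmC^{(3|12)},\stmB^{(332|12)},\stmB^{(332|1)},\stmC^{(32|1)},\stmB^{(2|1)},\stmB^{(2|3321)}$ in the second, interleaved with the higher-order relations of Propositions \ref{tB qroot rels1}--\ref{tB qroot rels2} and their $\chi$-images, and concludes by the linear independence guaranteed by Theorem \ref{Yam PBW thm}. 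The sign bookkeeping via $\phaseI,\dots,\phaseIIIb$ that you describe is exactly how the nonlocal phases in (\ref{tre general}) arise in the paper's calculation.
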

The above equation (\ref{tre general}) generally involves nonlocal sign factors.
Here, we group the Dynkin diagrams given by Table \ref{tB all dynkin r3} into the two families.
In Table \ref{tB all dynkin r3}, we have $\phaseI=\phaseII=\phaseIII=\phaseIb=\phaseIIb=\phaseIIIb=0$ for the following Dynkin diagrams given by (\ref{tB our dynkin r3 1}):
\begin{alignat}{4}
&\mathrm{(I)}&{\ }&\ddthreeBWul{\cc}{\cc}{\cc}{\ee_1-\ee_2}{\ee_2-\ee_3}{\ee_3}
&\qquad
&\mathrm{(II)}&{\ }&\ddthreeBWul{\cct}{\cc}{\cc}{\dd_1-\ee_2}{\ee_2-\ee_3}{\ee_3}
\nonumber
\\
&\mathrm{(III)}&{\ }&\ddthreeBWul{\cct}{\cct}{\cc}{\ee_1-\dd_2}{\dd_2-\ee_3}{\ee_3}
&\qquad
&\mathrm{(IV)}&{\ }&\ddthreeBWul{\cc}{\cc}{\ccb}{\ee_1-\ee_2}{\ee_2-\dd_3}{\dd_3}
\label{tB our dynkin r3 1}
\end{alignat}
\vspace{-1em}
\begin{align}
\mathrm{(V)}{\ }\ddthreeBWul{\cc}{\cct}{\ccb}{\ee_1-\ee_2}{\ee_2-\dd_3}{\dd_3}
\nonumber
\end{align}
where (I) (II) and (IV) are distinguished, in the sense defined in Section \ref{sec 22}.
For each case of (\ref{tB our dynkin r3 1}), (\ref{tre general}) exactly gives the 3D reflection equation.
On the other hand, there are non-trivial nonlocal sign factors for the following Dynkin diagrams given by (\ref{tB our dynkin r3 2}):
\begin{align}
\begin{gathered}
\mathrm{(VI)}{\ }\ddthreeBWul{\cc}{\cct}{\cc}{\dd_1-\dd_2}{\dd_2-\ee_3}{\ee_3}
\qquad
\mathrm{(VII)}{\ }\ddthreeBWul{\cct}{\cc}{\ccb}{\ee_1-\dd_2}{\dd_2-\dd_3}{\dd_3}
\\
\mathrm{(VIII)}{\ }\ddthreeBWul{\cct}{\cct}{\ccb}{\dd_1-\ee_2}{\ee_2-\dd_3}{\dd_3}
\end{gathered}
\label{tB our dynkin r3 2}
\end{align}
where (VI) is distinguished.
Then, as we will see later, the associated equations are the 3D reflection equation up to sign factors.
Hereafter, we specialize Theorem \ref{tre general thm} for each case given in (\ref{tB our dynkin r3 1}) and (\ref{tB our dynkin r3 2}).
\subsubsection{The case (I) $\vcenter{\hbox{\protect\includegraphics{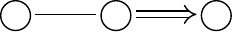}}}$}
In this case, the corresponding symmetrized Cartan matrix is given by
\begin{align}
DA
=
\begin{pmatrix}
2 & -1 & 0 \\
-1 & 2 & -1 \\
0 & -1 & 1
\end{pmatrix}
,
\end{align}
and the corresponding positive roots are given by
\begin{align}
\begin{split}
\prer=\{&\alpha_1,\alpha_2,\alpha_3,\alpha_1+\alpha_2,\alpha_2+\alpha_3,\alpha_2+2\alpha_3,
\\
&\alpha_1+\alpha_2+\alpha_3,\alpha_1+\alpha_2+2\alpha_3,\alpha_1+2\alpha_2+2\alpha_3\}
,
\end{split}
\\
\prir=\{&\}
,
\\
\prar=\{&\}
.
\end{align}
\par
Now, $\stmB^{(x)},\stmC^{(y)}$ defined by (\ref{tB trans mat r3 eq1}) $\sim$ (\ref{tB trans mat r3 eq12}) are specified as follows:
\begin{lemma}\label{tB trans mat r3 c1 lemma}
For the quantum superalgebra associated with \ddthreeB{\cc}{\cc}{\cc}, we have (\ref{tB trans mat r3 eq1}) $\sim$ (\ref{tB trans mat r3 eq12}) where $\stmB^{(x)},\stmC^{(y)}$ are given by
\begin{align}
\begin{split}
\stmB^{(x)}
&=\tdr
\quad
(x=2|1,2|1233,2|3321,233|1,332|1,233|21,332|12)
,\\
\stmC^{(y)}
&=\tdj
\quad
(y=3|2,3|21,3|12,23|1,32|1)
.
\end{split}
\label{tB trans mat r3 c1}
\end{align}
\end{lemma}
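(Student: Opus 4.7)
The plan is to reduce each of the twelve identities in (\ref{tB trans mat r3 c1}) to the rank 2 results given by Theorem \ref{KOY13 thm 3dR} (for the identities $\stmB^{(x)} = \tdr$) and Theorem \ref{KOY13 thm 3dJ} (for the identities $\stmC^{(y)} = \tdj$), following the template of Lemma \ref{tA trans mat r3 c1 lemma}. The identities for $\stmB^{(2|1)}$ and $\stmC^{(3|2)}$ are immediate: the first is literally the rank 2 type A statement, while for $\stmC^{(3|2)}$ the elements $e_2, e_3$ generate a subalgebra isomorphic to $\UtBp[5]$ associated with \ddtwoB{\cc}{\cc}, and Theorem \ref{KOY13 thm 3dJ} applies directly.

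For each of the remaining ten cases I would exhibit two (for $\stmB$) or four (for $\stmC$) compound elements of $\UtBp[7]$ whose $q$-commutation relations match those of the generators of $\UtAp[3]$ of type \ddtwoA{\cc}{\cc} or $\UtBp[5]$ of type \ddtwoB{\cc}{\cc}. Concretely, $\stmB^{(2|1233)}$ and $\stmB^{(2|3321)}$ use $e_2$ paired with $e_{((12)3)3}$ or $e_{(3(32))1}$; $\stmB^{(233|1)}$ and $\stmB^{(332|1)}$ pair $e_{(23)3}$ or $e_{3(32)}$ with $e_1$; and $\stmB^{(233|21)}$ and $\stmB^{(332|12)}$ pair these same long elements with $e_{21}$ or $e_{12}$. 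For each $\stmC^{(y)}$ the four generators are read off analogously: $\stmC^{(3|21)}$ from $(e_3, e_{21})$, $\stmC^{(3|12)}$ from $(e_3, e_{12})$, $\stmC^{(23|1)}$ from $(e_{23}, e_1)$, and $\stmC^{(32|1)}$ from $(e_{32}, e_1)$, noting that each such pair has symmetrized Cartan matrix $\begin{pmatrix} 2 & -1 \\ -1 & 1\end{pmatrix}$. The required Serre and quartic relations are already assembled in Propositions \ref{tB qroot rels3}, \ref{tB qroot rels4} and their $\chi$-versions \ref{tB qroot rels3 chi}, \ref{tB qroot rels4 chi}, so an algebra homomorphism from the appropriate rank 2 $U_q^+$ into $\UtBp[7]$ exists sending the standard generators to our compound elements.

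Under this homomorphism, Definition \ref{qrv} produces quantum root vectors in the rank 2 source that map precisely to the quantum root vectors appearing on both sides of (\ref{tB trans mat r3 eq1}) $\sim$ (\ref{tB trans mat r3 eq12}); pulling back the rank 2 transition identity then yields the desired formula. The main obstacle is the bookkeeping of normalizations, especially for type B: Definition \ref{qrv}(ii) inserts a factor $(q^{1/2}+q^{-1/2})^{-1}$ whenever the index involves the rank letter twice, and one must check that this factor appears consistently on both sides of each identification, together with agreement of the symmetrizing coefficients $d_\alpha$ (so that the $q$-factorials $[m]_{q^{d_\alpha}}!$ in Theorem \ref{Yam PBW thm} match). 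For example in $\stmB^{(233|1)}$ the compound generator $e_{(23)3}$ has root $\alpha_2 + 2\alpha_3$ with $d_{\alpha_2+2\alpha_3} = 1$, which matches $d = 1$ for the corresponding simple generator of $\UtAp[3]$ of type \ddtwoA{\cc}{\cc}; an analogous check has to be performed for each identity. Once these dictionaries are fixed, no new higher-order computations are needed and the twelve identities all follow from Theorems \ref{KOY13 thm 3dR} and \ref{KOY13 thm 3dJ}.
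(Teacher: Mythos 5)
Your proposal follows essentially the same route as the paper: reduce each of the twelve identities to the rank 2 results (Theorems \ref{KOY13 thm 3dR} and \ref{KOY13 thm 3dJ}) via algebra homomorphisms whose existence is guaranteed by the higher-order relations of Propositions \ref{tB qroot rels3}, \ref{tB qroot rels4} and their $\chi$-versions, while tracking the $(q^{1/2}+q^{-1/2})$ normalization and the matching of the $d_\alpha$. The paper writes out only the representative case $\stmB^{(233|1)}$, where the stray factor $(q^{1/2}+q^{-1/2})^{i+j-a-b}$ is disposed of by the weight conservation of the 3D R, which is the concrete form of the consistency check you flag.
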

\begin{proof}
$\stmB^{(2|1)}$ and $\stmC^{(y)}$ can be obtained in the same way as Lemma \ref{tA trans mat r3 c1 lemma} via the propositions given in Section \ref{tB qroot rels subsec}.
The remaining cases are also obtained almost in the same way, but we have to care the normalization factor $q^{1/2}+q^{-1/2}$ of quantum root vectors, which is given by the begining of Section \ref{tB qroot rels subsec}.
Here, we only present the proof for $\stmB^{(233|1)}$.
Similarly to Lemma \ref{tA trans mat r3 c1 lemma}, by considering (\ref{tB qroot serre 5}) and (\ref{tB qroot serre 6}), $h:\UtAp[3]\to\UtBpn[5]$ defined by $e_1\mapsto e_{1}$, $e_{2}\mapsto e_{(23)3}$ gives an algebra homomorphism.
Also, $d_{\alpha_2+2\alpha_3}=d_{\alpha_2}$ is satisfied where the left hand side is for $\UtBpn[5]$ and the right hand side is for $\UtAp[3]$, so $[m]_{q^{d_{\alpha_2+\alpha_3}}}!=[m]_{q^{d_{\alpha_2}}}$! holds.
Therefore, by applying $h$ on (\ref{tA trans mat def1}) for the case \ddtwoA{\cc}{\cc}, we obtain
\begin{align}
e_{(23)3}^{(a)}e_{1((23)3)}^{(b)}e_{1}^{(c)}
&=\sum_{i,j,k}
(q^{1/2}+q^{-1/2})^{i+j-a-b}
\tdr_{i,j,k}^{a,b,c}
e_{1}^{(k)}e_{((23)3)1}^{(j)}e_{(23)3}^{(i)}
,
\label{tB trans mat r3 eq4 homo}
\end{align}
The 3D R satisfies the weight conservation law: $\tdr_{i,j,k}^{a,b,c}=0$ if $i+j\neq a+b$ or $j+k\neq b+c$.
We then obtain (\ref{tB trans mat r3 eq4}) for $\stmB^{(233|1)}=\tdr$.
\end{proof}
The phase factors given by (\ref{te phase}) and (\ref{tre phase}) are now $\phaseI=\phaseII=\phaseIII=0$ and $\phaseIb=\phaseIIb=\phaseIIIb=0$.
Then, (\ref{tre general}) is specialized as follows:
\begin{align}
\begin{split}
&\sum
\tdr_{y_4,y_5,y_6}^{o_4,o_5,o_6}
\tdr_{x_4,y_8,y_9}^{y_4,o_8,o_9}
\tdj_{x_3,x_5,x_7,x_9}^{o_3,y_5,o_7,y_9}
\tdr_{y_2,x_6,i_9}^{o_2,y_6,x_9}
\tdr_{x_2,i_5,x_8}^{y_2,x_5,y_8}
\tdj_{x_1,i_6,i_7,i_8}^{o_1,x_6,x_7,x_8}
\tdj_{i_1,i_2,i_3,i_4}^{x_1,x_2,x_3,x_4}
\\
&=\sum
\tdj_{x_1,y_2,x_3,y_4}^{o_1,o_2,o_3,o_4}
\tdj_{i_1,y_6,x_7,y_8}^{x_1,o_6,o_7,o_8}
\tdr_{x_2,y_5,x_8}^{y_2,o_5,y_8}
\tdr_{i_2,x_6,y_9}^{x_2,y_6,o_9}
\tdj_{i_3,x_5,i_7,x_9}^{x_3,y_5,x_7,y_9}
\tdr_{x_4,i_8,i_9}^{y_4,x_8,x_9}
\tdr_{i_4,i_5,i_6}^{x_4,x_5,x_6}
,
\end{split}
\end{align}
where all indices are defined on $\mathbb{Z}_{\geq 0}$.
This is exactly the 3D reflection equation (\ref{KO13 tre}):
\begin{align}
\tdr_{456}
\tdr_{489}
\tdj_{3579}
\tdr_{269}
\tdr_{258}
\tdj_{1678}
\tdj_{1234}
=
\tdj_{1234}
\tdj_{1678}
\tdr_{258}
\tdr_{269}
\tdj_{3579}
\tdr_{489}
\tdr_{456}
.
\end{align}
We then get the following result:
\begin{corollary}\label{tB r3 mythm1}
The 3D reflection equation (\ref{KO13 tre}) is characterized as the identity of the transition matrices of the quantum superalgebra associated with \ddthreeB{\cc}{\cc}{\cc}.
\end{corollary}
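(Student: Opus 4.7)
The plan is to deduce the corollary directly from Theorem \ref{tre general thm} by specializing it to the case at hand. First I would invoke Lemma \ref{tB trans mat r3 c1 lemma}, which identifies each of the rank-2 transition matrices $\stmB^{(x)}$ and $\stmC^{(y)}$ appearing in (\ref{tre general}) with the 3D R and 3D J respectively for the Dynkin diagram \ddthreeB{\cc}{\cc}{\cc}. This reduces the problem to substituting $\stmB^{(x)}=\tdr$ and $\stmC^{(y)}=\tdj$ into the general identity.

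Next I would verify that all six phase factors defined by (\ref{te phase}) and (\ref{tre phase}) vanish. Since every node of the Dynkin diagram is $\cc$, every simple root $\alpha_i$ is even, so $p(\alpha_i)=0$ for all $i\in I$, and by additivity of the parity, every positive root appearing in the definitions of $\phaseI,\phaseII,\phaseIII,\phaseIb,\phaseIIb,\phaseIIIb$ is even as well. Consequently every product of parities vanishes, and the sign factors $(-1)^{\cdots}$ on both sides of (\ref{tre general}) all collapse to $1$.

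Having removed the phases and substituted the operators, the identity (\ref{tre general}) becomes
\begin{align*}
&\sum
\tdr_{y_4,y_5,y_6}^{o_4,o_5,o_6}
\tdr_{x_4,y_8,y_9}^{y_4,o_8,o_9}
\tdj_{x_3,x_5,x_7,x_9}^{o_3,y_5,o_7,y_9}
\tdr_{y_2,x_6,i_9}^{o_2,y_6,x_9}
\tdr_{x_2,i_5,x_8}^{y_2,x_5,y_8}
\tdj_{x_1,i_6,i_7,i_8}^{o_1,x_6,x_7,x_8}
\tdj_{i_1,i_2,i_3,i_4}^{x_1,x_2,x_3,x_4}
\\
&=
\sum
\tdj_{x_1,y_2,x_3,y_4}^{o_1,o_2,o_3,o_4}
\tdj_{i_1,y_6,x_7,y_8}^{x_1,o_6,o_7,o_8}
\tdr_{x_2,y_5,x_8}^{y_2,o_5,y_8}
\tdr_{i_2,x_6,y_9}^{x_2,y_6,o_9}
\tdj_{i_3,x_5,i_7,x_9}^{x_3,y_5,x_7,y_9}
\tdr_{x_4,i_8,i_9}^{y_4,x_8,x_9}
\tdr_{i_4,i_5,i_6}^{x_4,x_5,x_6},
\end{align*}
which, after matching the summation labels $(x_\ast,y_\ast)$ with the nine tensor slots in the standard way, is precisely the matrix form (\ref{KO13 tre}). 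Since this proof is really just a verification, there is no serious obstacle: the only point requiring any care is the bookkeeping when translating the component identity into the slot-indexed matrix equation, and confirming that the summed-over labels on the two sides occupy the same six ``internal'' tensor slots (\#2, \#4, \#5, \#6, \#8, \#9) expected from Figure 1 of \cite{KO13}.
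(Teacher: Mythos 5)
Your proposal is correct and follows essentially the same route as the paper: invoke Lemma \ref{tB trans mat r3 c1 lemma} to set $\stmB^{(x)}=\tdr$ and $\stmC^{(y)}=\tdj$, observe that all parities (hence all six phase factors) vanish for the all-even diagram, and read off that the specialization of (\ref{tre general}) is exactly the component form of (\ref{KO13 tre}). The displayed identity you obtain coincides with the one in the paper, so nothing further is needed.
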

We note that although Corollary \ref{tB r3 mythm1} is a corollary of the Kuniba-Okado-Yamada theorem\cite{KOY13}, the above calculation gives a direct derivation of the 3D reflection equation (\ref{KO13 tre}) without using any results for quantum coordinate rings.
\subsubsection{The case (II) $\vcenter{\hbox{\protect\includegraphics{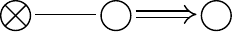}}}$}
In this case, the corresponding symmetrized Cartan matrix is given by
\begin{align}
DA
=
\begin{pmatrix}
0 & -1 & 0 \\
-1 & 2 & -1 \\
0 & -1 & 1
\end{pmatrix}
,
\end{align}
and the corresponding positive roots are given by
\begin{align}
\prer&=\{\alpha_2,\alpha_3,\alpha_2+\alpha_3,\alpha_2+2\alpha_3\}
,
\\
\prir&=\{\alpha_1,\alpha_1+\alpha_2,\alpha_1+\alpha_2+2\alpha_3\}
,
\\
\prar&=\{\alpha_1+\alpha_2+\alpha_3,\alpha_1+2\alpha_2+2\alpha_3\}
.
\end{align}
Similarly to Lemma \ref{tB trans mat r3 c1 lemma}., by using the propositions given in Section \ref{tB qroot rels subsec}., we can show the following lemma:
\begin{lemma}\label{tB trans mat r3 c2 lemma}
For the quantum superalgebra associated with \ddthreeB{\cct}{\cc}{\cc}, we have (\ref{tB trans mat r3 eq1}) $\sim$ (\ref{tB trans mat r3 eq12}) where $\stmB^{(x)},\stmC^{(y)}$ are given by
\begin{align}
\begin{split}
\stmB^{(x)}
&=
\tdm
\quad (x=2|1,2|1233,2|3321,233|1,332|1,233|21,332|12)
,\\
\stmC^{(y)}
&=
\begin{cases}
\tdj
\quad &(y=3|2),
\\
\tdx
\quad &(y=3|21,3|12,23|1,32|1).
\end{cases}
\end{split}
\label{tB trans mat r3 c2}
\end{align}
\end{lemma}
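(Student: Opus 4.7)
The plan is to mirror the proof of Lemma \ref{tB trans mat r3 c1 lemma} case by case, identifying an appropriate algebra homomorphism from a rank-2 quantum superalgebra for each $\stmB^{(x)}$ and $\stmC^{(y)}$, and pulling back the known rank-2 transition matrix identity. The claimed targets suggest the pattern: every rank-2 subalgebra generated by the relevant pair turns out to be isomorphic to \ddtwoA{\cct}{\cc} (giving $\tdm$, by Corollary \ref{my 3dL result 2}) for the $\stmB^{(x)}$, while for the $\stmC^{(y)}$ the subalgebras are either \ddtwoB{\cc}{\cc} (for $y=3|2$, giving $\tdj$ by Theorem \ref{KOY13 thm 3dJ}) or \ddtwoB{\cct}{\cc} (for the remaining four, giving $\tdx$ by Theorem \ref{my 3dX result}).

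For each $\stmB^{(x)}$, I would construct the homomorphism $h$ from the rank-2 algebra of \ddtwoA{\cct}{\cc} by sending its generator $e_1$ (odd isotropic) to the odd isotropic element of the pair and its $e_2$ (even) to the even element; concretely the pairs are $(e_1, e_2)$ for $2|1$, $(e_{((12)3)3}, e_2)$ for $2|1233$, $(e_{(3(32))1}, e_2)$ for $2|3321$, $(e_1, e_{(23)3})$ for $233|1$, $(e_1, e_{3(32)})$ for $332|1$, $(e_{21}, e_{(23)3})$ for $233|21$, and $(e_{12}, e_{3(32)})$ for $332|12$. To show $h$ is well-defined I need to check the two defining relations of \ddtwoA{\cct}{\cc}: the first-generator-squared-is-zero relation, which follows from Lemma \ref{qroot general lemma}(1) since each first element lies on an isotropic odd root, and the cubic Serre relation for the even generator, which in each case is provided by an appropriate instance of Proposition \ref{tB qroot rels3} or \ref{tB qroot rels3 chi}. (For example, $\stmB^{(233|1)}$ uses relation (\ref{tB qroot serre 6}).)

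For the $\stmC^{(y)}$ I would do the analogous type-B construction. The pair $(e_2, e_3)$ in case (II) has both generators even with inner product $-1$ and symmetrized Cartan $\left(\begin{smallmatrix}2&-1\\-1&1\end{smallmatrix}\right)$, giving \ddtwoB{\cc}{\cc}. The remaining pairs $(e_{21}, e_3)$, $(e_{12}, e_3)$, $(e_1, e_{23})$, $(e_1, e_{32})$ all have a first odd-isotropic generator and a second even short generator, yielding symmetrized Cartan $\left(\begin{smallmatrix}0&-1\\-1&1\end{smallmatrix}\right)$ which is \ddtwoB{\cct}{\cc}. Checking the relations requires the isotropic squared-zero identity from Lemma \ref{qroot general lemma}(1) together with the two Serre-like relations listed in Propositions \ref{tB qroot rels3}--\ref{tB qroot rels4 chi} (e.g.~relations involving $e_{\kk(\kk\jj)}$ and compound elements). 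Applying each homomorphism to the rank-2 transition identity of the corresponding case immediately yields the rank-3 relations (\ref{tB trans mat r3 eq1})--(\ref{tB trans mat r3 eq12}) with the prescribed $\stmB^{(x)}$, $\stmC^{(y)}$.

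The main obstacle will be the bookkeeping of the normalization factor $(q^{1/2}+q^{-1/2})$ built into any rank-3 quantum root vector whose index string contains the letter $3$ twice; as in the proof of Lemma \ref{tB trans mat r3 c1 lemma}, pushing the rank-2 identity forward through $h$ introduces overall powers of $(q^{1/2}+q^{-1/2})^{i+j-a-b}$ (or analogous exponents for the type-B cases), and these must be absorbed. The key observation is that $\tdm$, $\tdj$, and $\tdx$ each satisfy a weight conservation law — the matrix element vanishes unless $i+j=a+b$ (and $j+k=b+c$) — so the offending exponent is forced to be zero whenever the coefficient is nonzero, exactly as in (\ref{tB trans mat r3 eq4 homo}). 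A secondary care point is ensuring, for the compound pairs such as $(e_{((12)3)3}, e_2)$, that $h([e_2,e_{((12)3)3}]_q)$ reproduces the element $e_{2(((12)3)3)}$ of (\ref{tB trans mat r3 eq2}); this reduces to tracking one factor of $(q^{1/2}+q^{-1/2})$ and does not affect the structural identification.
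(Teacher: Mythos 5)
Your proposal is correct and follows essentially the same route as the paper: the paper proves this lemma exactly by the method of Lemma \ref{tB trans mat r3 c1 lemma}, namely identifying each pair of (composite) root vectors with the generators of a rank-2 quantum superalgebra via an algebra homomorphism whose well-definedness is checked against the higher-order relations of Propositions \ref{tB qroot rels3}--\ref{tB qroot rels4 chi} and Lemma \ref{qroot general lemma}(1), and then absorbing the $(q^{1/2}+q^{-1/2})$ normalization discrepancies through the weight conservation of $\tdm$, $\tdj$ and $\tdx$. Your identifications of the pairs with \ddtwoA{\cct}{\cc} (hence $\tdm$), \ddtwoB{\cc}{\cc} (hence $\tdj$) and \ddtwoB{\cct}{\cc} (hence $\tdx$) match the parities and inner products of the corresponding roots for the Cartan data of \ddthreeB{\cct}{\cc}{\cc}, so no gap remains.
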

The phase factors given by (\ref{te phase}) and (\ref{tre phase}) are now $\phaseI=\phaseII=\phaseIII=0$ and $\phaseIb=\phaseIIb=\phaseIIIb=0$.
Then, (\ref{tre general}) is specialized as follows:
\begin{align}
\begin{split}
&
\sum
\tdm_{y_4,y_5,y_6}^{o_4,o_5,o_6}
\tdm_{x_4,y_8,y_9}^{y_4,o_8,o_9}
\tdx_{x_3,x_5,x_7,x_9}^{o_3,y_5,o_7,y_9}
\tdm_{y_2,x_6,i_9}^{o_2,y_6,x_9}
\tdm_{x_2,i_5,x_8}^{y_2,x_5,y_8}
\tdx_{x_1,i_6,i_7,i_8}^{o_1,x_6,x_7,x_8}
\tdj_{i_1,i_2,i_3,i_4}^{x_1,x_2,x_3,x_4}
\\
&=
\sum
\tdj_{x_1,y_2,x_3,y_4}^{o_1,o_2,o_3,o_4}
\tdx_{i_1,y_6,x_7,y_8}^{x_1,o_6,o_7,o_8}
\tdm_{x_2,y_5,x_8}^{y_2,o_5,y_8}
\tdm_{i_2,x_6,y_9}^{x_2,y_6,o_9}
\tdx_{i_3,x_5,i_7,x_9}^{x_3,y_5,x_7,y_9}
\tdm_{x_4,i_8,i_9}^{y_4,x_8,x_9}
\tdm_{i_4,i_5,i_6}^{x_4,x_5,x_6}
,
\end{split}
\end{align}
where $o_k,i_k,x_k,y_k\in\{0,1\}{\ }(k=5,6,8,9)$ and the other indices are defined on $\mathbb{Z}_{\geq 0}$.
We then get the following result, which gives a new solution to the 3D reflection equation.
\begin{corollary}\label{tB r3 mythm2}
As the identity of the transition matrices of the quantum superalgebra associated with \ddthreeB{\cct}{\cc}{\cc}, we have the 3D reflection equation given by
\begin{align}
\tdm_{456}
\tdm_{489}
\tdx_{3579}
\tdm_{269}
\tdm_{258}
\tdx_{1678}
\tdj_{1234}
=
\tdj_{1234}
\tdx_{1678}
\tdm_{258}
\tdm_{269}
\tdx_{3579}
\tdm_{489}
\tdm_{456}
.
\label{my tre can crys}
\end{align}
\end{corollary}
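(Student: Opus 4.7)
The plan is to derive the identity as a direct specialization of Theorem \ref{tre general thm}, using the specific identifications of the rank-2 transition matrices provided by Lemma \ref{tB trans mat r3 c2 lemma}. First, I would record the Cartan data for the Dynkin diagram \ddthreeB{\cct}{\cc}{\cc}, compute the parities of all positive roots, and verify that every phase factor in (\ref{tre general}) vanishes. Since $p(\alpha_1)=1$ and $p(\alpha_2)=p(\alpha_3)=0$, one immediately checks $\phaseI=p(\alpha_1)p(\alpha_3)=0$, $\phaseII=p(\alpha_1+\alpha_2)p(\alpha_2+\alpha_3)=0$, $\phaseIII=p(\alpha_2)p(\alpha_1+\alpha_2+\alpha_3)=0$, and similarly $\phaseIb=\phaseIIb=\phaseIIIb=0$. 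In particular, even though $\alpha_1+\alpha_2+\alpha_3$ and $\alpha_1+2\alpha_2+2\alpha_3$ are anisotropic odd roots, the relevant pairings in the phase factors still evaluate to zero.

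Next, I would substitute the identifications from Lemma \ref{tB trans mat r3 c2 lemma} into (\ref{tre general}): each $\stmB^{(x)}$ becomes $\tdm$, $\stmC^{(3|2)}$ becomes $\tdj$, and the remaining $\stmC^{(y)}$ for $y\in\{3|21,3|12,23|1,32|1\}$ become $\tdx$. With the phase factors gone, (\ref{tre general}) reduces to the component-wise identity
\begin{align*}
&\sum
\tdm_{y_4,y_5,y_6}^{o_4,o_5,o_6}
\tdm_{x_4,y_8,y_9}^{y_4,o_8,o_9}
\tdx_{x_3,x_5,x_7,x_9}^{o_3,y_5,o_7,y_9}
\tdm_{y_2,x_6,i_9}^{o_2,y_6,x_9}
\tdm_{x_2,i_5,x_8}^{y_2,x_5,y_8}
\tdx_{x_1,i_6,i_7,i_8}^{o_1,x_6,x_7,x_8}
\tdj_{i_1,i_2,i_3,i_4}^{x_1,x_2,x_3,x_4}
\\
&=\sum
\tdj_{x_1,y_2,x_3,y_4}^{o_1,o_2,o_3,o_4}
\tdx_{i_1,y_6,x_7,y_8}^{x_1,o_6,o_7,o_8}
\tdm_{x_2,y_5,x_8}^{y_2,o_5,y_8}
\tdm_{i_2,x_6,y_9}^{x_2,y_6,o_9}
\tdx_{i_3,x_5,i_7,x_9}^{x_3,y_5,x_7,y_9}
\tdm_{x_4,i_8,i_9}^{y_4,x_8,x_9}
\tdm_{i_4,i_5,i_6}^{x_4,x_5,x_6}.
\end{align*}

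Finally, I would recognize this as the matrix form (\ref{my tre can crys}) by tracking which tensor slots each operator acts on, reading off the index conventions from (\ref{3dM}), (\ref{3dJ}), and (\ref{3dX}), and confirming that the summations on the shared indices $x_k, y_k$ correspond to the standard matrix products in the nine-fold tensor product $F\otimes V\otimes F\otimes F\otimes V\otimes F\otimes F\otimes V\otimes F$ (with fermionic slots at positions $3,5,7$). The main obstacle is purely bookkeeping: one must carefully confirm that the leg assignments $(456), (489), (3579), (269), (258), (1678), (1234)$ read off from the index pattern of (\ref{tre general}) match those in (\ref{my tre can crys}), and that the domains of the summation indices (bosonic or fermionic) are consistent with the parities dictated by Lemma \ref{tB trans mat r3 c2 lemma}. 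Once these bookkeeping checks are in place, no further computation is required.
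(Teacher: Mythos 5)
Your proposal follows essentially the same route as the paper: specialize Theorem \ref{tre general thm} using the identifications of Lemma \ref{tB trans mat r3 c2 lemma}, check that all six phase factors $\phaseI,\phaseII,\phaseIII,\phaseIb,\phaseIIb,\phaseIIIb$ vanish for $p(\alpha_1)=1$, $p(\alpha_2)=p(\alpha_3)=0$, and read off the matrix form. One bookkeeping slip: the fermionic slots of the nine-fold tensor product are $5,6,8,9$, not $3,5,7$ --- the space is $F\otimes F\otimes F\otimes F\otimes V\otimes V\otimes F\otimes V\otimes V$, consistent with $\tdm\in\mathrm{End}(F\otimes V\otimes V)$ acting on $(456),(489),(269),(258)$ and $\tdx\in\mathrm{End}(F\otimes V\otimes F\otimes V)$ acting on $(3579),(1678)$, and with the paper's statement that the indices with $k=5,6,8,9$ range over $\{0,1\}$; this does not affect the validity of the argument.
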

\subsubsection{The case (III) $\vcenter{\hbox{\protect\includegraphics{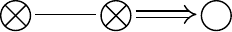}}}$}
In this case, the corresponding symmetrized Cartan matrix is given by
\begin{align}
DA
=
\begin{pmatrix}
0 & 1 & 0 \\
1 & 0 & -1 \\
0 & -1 & 1
\end{pmatrix}
,
\end{align}
and the corresponding positive roots are given by
\begin{align}
\prer&=\{\alpha_3,\alpha_1+\alpha_2,\alpha_1+\alpha_2+\alpha_3,\alpha_1+\alpha_2+2\alpha_3\}
,
\\
\prir&=\{\alpha_1,\alpha_2,\alpha_2+2\alpha_3,\alpha_1+2\alpha_2+2\alpha_3\}
,
\\
\prar&=\{\alpha_2+\alpha_3\}
.
\end{align}
Similarly to Lemma \ref{tB trans mat r3 c1 lemma}, by using the propositions given in Section \ref{tB qroot rels subsec}, we can show the following lemma:
\begin{lemma}\label{tB trans mat r3 c3 lemma}
For the quantum superalgebra associated with \ddthreeB{\cct}{\cct}{\cc}, we have (\ref{tB trans mat r3 eq1}) $\sim$ (\ref{tB trans mat r3 eq12}) where $\stmB^{(x)},\stmC^{(y)}$ are given by
\begin{align}
\begin{split}
\stmB^{(x)}
&=
\begin{cases}
\tdl
\quad &(x=2|1233,2|3321,233|21,332|12),
\\
\tdn(q^{-1})
\quad &(x=2|1,233|1,332|1),
\end{cases}
\\
\stmC^{(y)}
&=
\begin{cases}
\tdj
\quad &(y=3|21,3|12),
\\
\tdx
\quad &(y=3|2),
\\
\tdy(q^{-1})
\quad &(y=23|1,32|1).
\end{cases}
\end{split}
\label{tB trans mat r3 c3}
\end{align}
\end{lemma}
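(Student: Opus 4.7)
The proof will follow the template established in Lemma \ref{tB trans mat r3 c1 lemma}: for each of the seven identifications $\stmB^{(x)}$ and each of the five identifications $\stmC^{(y)}$, the plan is to construct an algebra homomorphism $h$ from a suitable rank~2 quantum superalgebra $U_q^{+}(\mathfrak{g}')$ into $U_q^{+}(\GG)$ mapping the two generators $\tilde e_1, \tilde e_2$ of $U_q^{+}(\mathfrak{g}')$ to the relevant pair of quantum root vectors, then apply the rank~2 results of Section~\ref{sec 42} together with (\ref{tm rel}). The Serre-type and quadratic-nilpotent relations needed to verify that $h$ is well-defined are precisely the higher-order relations collected in Propositions \ref{tB qroot rels3}, \ref{tB qroot rels4}, \ref{tB qroot rels3 chi}, \ref{tB qroot rels4 chi}, while the fact that the compound generators give the correct quantum root vectors (with the right normalization) is controlled by the propositions of Section~\ref{tB qroot rels subsec} together with Definition \ref{qrv}.

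Concretely, for the Dynkin diagram \ddthreeB{\cct}{\cct}{\cc} I will analyze each pair as follows. For $\stmB^{(2|1)}$ the relevant subdiagram involves $\alpha_1, \alpha_2 \in \prir$, so the target is the rank~2 algebra of type \ddtwoA{\cct}{\cct}; after checking the sign of $(\alpha_1,\alpha_2)$ the result is identified with $\tdn(q^{-1})$ via Theorem \ref{my 3dN result}. For $\stmB^{(2|1233)}, \stmB^{(2|3321)}, \stmB^{(233|21)}, \stmB^{(332|12)}$ the two compound roots together with their sum form a type A rank~2 subsystem with two odd isotropic simple roots and an even sum, which after using the relevant relations from Propositions \ref{tB qroot rels3}, \ref{tB qroot rels4} yields $\tdl$ by Theorem \ref{my 3dL result}. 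Analogously $\stmB^{(233|1)}, \stmB^{(332|1)}$ pair an anisotropic/even compound root with an isotropic odd root and reduce to $\tdn(q^{-1})$. For the $\stmC^{(y)}$ identifications one uses the rank~2 type B results: $\stmC^{(3|2)}$ involves a pair with one odd simple root and one even simple root giving the root system of type \ddtwoB{\cct}{\cc}, hence $\tdx$ by Theorem \ref{my 3dX result}; the cases $\stmC^{(3|21)}, \stmC^{(3|12)}$ pair the compound even root $\alpha_1+\alpha_2$ with $\alpha_3$ to obtain the rank~2 diagram \ddtwoB{\cc}{\cc}, hence $\tdj$ by Theorem \ref{KOY13 thm 3dJ}; finally $\stmC^{(23|1)}, \stmC^{(32|1)}$ pair an anisotropic odd compound root with an isotropic odd simple root, producing the rank~2 diagram \ddtwoB{\cct}{\ccb} and giving $\tdy(q^{-1})$ by Theorem \ref{my 3dY result}.

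The non-routine points, and hence where I would be most careful, are threefold. First, the normalization convention of Definition \ref{qrv}(ii) introduces the factor $q^{1/2}+q^{-1/2}$ whenever a root $\BB_i \pm \BB_r$ appears, so one must track which compound roots involve two letters ``$3$'' and insert or cancel these factors as was done at the end of the proof of Lemma \ref{tB trans mat r3 c1 lemma} (using the weight conservation of the rank~2 transition matrix). Second, each homomorphism $h$ sends $q$ to $q^{d}$ for the relevant $d$, but some subsystems have the off-diagonal entry of the rank~2 Cartan matrix with the opposite sign from the reference rank~2 case (e.g.\ $(\alpha_1,\alpha_2)=+1$ rather than $-1$), which forces the replacement $q \mapsto q^{-1}$, exactly accounting for the $\tdn(q^{-1})$ and $\tdy(q^{-1})$ entries in (\ref{tB trans mat r3 c3}). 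Third, the degree equalities $d_{\alpha+\beta}=d_\alpha$ and $[m]_{q^{d_{\alpha+\beta}},(-1)^{p(\alpha+\beta)}}!=[m]_{q^{d_\alpha},(-1)^{p(\alpha)}}!$ used to transport the divided-power normalization across $h$ must be verified case by case.

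The main obstacle I anticipate is the bookkeeping for the last two points combined: matching the parity data $(-1)^{p(\cdot)}$ in the definition of the divided powers, the normalization factor $q^{1/2}+q^{-1/2}$, and the sign of $q$ in the Cartan entry, all at once, so that the rank~2 transition matrix appears unchanged in the identification (\ref{tB trans mat r3 c3}). In principle no new ideas beyond Lemma \ref{tB trans mat r3 c1 lemma} are required; rather, the proof is a systematic case analysis, and the genuine content lies entirely in verifying that the higher-order relations of Section~\ref{tB qroot rels subsec} exactly reproduce the defining relations of the target rank~2 quantum superalgebra in each of the twelve cases.
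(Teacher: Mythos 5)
Your overall strategy is exactly the paper's: the paper gives no separate argument for this lemma beyond the instruction to proceed ``similarly to Lemma \ref{tB trans mat r3 c1 lemma}'', i.e., to build algebra homomorphisms from rank~2 nilpotent subalgebras sending the generators to the relevant pairs of quantum root vectors, to verify the defining relations via the higher-order relations of Section \ref{tB qroot rels subsec}, and to track the $q^{1/2}+q^{-1/2}$ normalization, the divided-power factorials, and the sign of $q$. Your three ``non-routine points'' are precisely the ones the paper's model proof (for $\stmB^{(233|1)}$ in Lemma \ref{tB trans mat r3 c1 lemma}) addresses.

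However, your middle paragraph contains concrete misidentifications of the rank~2 subsystems that, followed literally, would assign the wrong 3D operator. For \ddthreeB{\cct}{\cct}{\cc} one has $\alpha_1=\ee_1-\dd_2$, $\alpha_2=\dd_2-\ee_3$, $\alpha_3=\ee_3$, so $\alpha_1+\alpha_2+2\alpha_3=\ee_1+\ee_3$ and $\alpha_1+\alpha_2=\ee_1-\ee_3$ are \emph{even}, while $\alpha_2+2\alpha_3=\dd_2+\ee_3$ and $\alpha_1+2\alpha_2+2\alpha_3=\ee_1+\dd_2$ are odd isotropic. Hence the pairs underlying $\stmB^{(2|1233)}$, $\stmB^{(2|3321)}$, $\stmB^{(233|21)}$, $\stmB^{(332|12)}$ each consist of one even and one odd isotropic root with odd isotropic sum --- the \ddtwoA{\cc}{\cct} pattern of Theorem \ref{my 3dL result}, which is why they give $\tdl$ --- and \emph{not} ``two odd isotropic simple roots with an even sum'' as you write; that latter pattern is \ddtwoA{\cct}{\cct} and would give $\tdn$. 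Conversely, the pairs underlying $\stmB^{(233|1)}$ and $\stmB^{(332|1)}$ are $\{\alpha_2+2\alpha_3,\alpha_1\}$, i.e.\ \emph{two odd isotropic} roots with even sum (not ``an anisotropic/even compound root with an isotropic odd root''), which is exactly what forces $\tdn(q^{-1})$ via Theorem \ref{my 3dN result}. You have swapped the parity characterizations of the $\tdl$ cases and the $\tdn(q^{-1})$ cases; the conclusions you state agree with the lemma, but the justifications as written contradict them and must be corrected before the case-by-case verification can go through. The $\stmC^{(y)}$ identifications in your sketch are correct.
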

The phase factors given by (\ref{te phase}) and (\ref{tre phase}) are now $\phaseI=\phaseII=\phaseIII=0$ and $\phaseIb=\phaseIIb=\phaseIIIb=0$.
Then, (\ref{tre general}) is specialized as follows:
\begin{align}
\begin{split}
&
\sum
\tdl_{y_4,y_5,y_6}^{o_4,o_5,o_6}
\tdn(q^{-1})_{x_4,y_8,y_9}^{y_4,o_8,o_9}
\tdy(q^{-1})_{x_3,x_5,x_7,x_9}^{o_3,y_5,o_7,y_9}
\tdn(q^{-1})_{y_2,x_6,i_9}^{o_2,y_6,x_9}
\tdl_{x_2,i_5,x_8}^{y_2,x_5,y_8}
\tdj_{x_1,i_6,i_7,i_8}^{o_1,x_6,x_7,x_8}
\tdx_{i_1,i_2,i_3,i_4}^{x_1,x_2,x_3,x_4}
\\
&=
\sum
\tdx_{x_1,y_2,x_3,y_4}^{o_1,o_2,o_3,o_4}
\tdj_{i_1,y_6,x_7,y_8}^{x_1,o_6,o_7,o_8}
\tdl_{x_2,y_5,x_8}^{y_2,o_5,y_8}
\tdn(q^{-1})_{i_2,x_6,y_9}^{x_2,y_6,o_9}
\tdy(q^{-1})_{i_3,x_5,i_7,x_9}^{x_3,y_5,x_7,y_9}
\tdn(q^{-1})_{x_4,i_8,i_9}^{y_4,x_8,x_9}
\tdl_{i_4,i_5,i_6}^{x_4,x_5,x_6}
,
\end{split}
\end{align}
where $o_k,i_k,x_k,y_k\in\{0,1\}{\ }(k=2,4,5,9)$ and the other indices are defined on $\mathbb{Z}_{\geq 0}$.
We then get the following result, which gives a new solution to the 3D reflection equation.
\begin{corollary}\label{tB r3 mythm3}
As the identity of the transition matrices of the quantum superalgebra associated with \ddthreeB{\cct}{\cct}{\cc}, we have the 3D reflection equation given by
\begin{align}
\begin{split}
&\tdl_{456}
\tdn(q^{-1})_{489}
\tdy(q^{-1})_{3579}
\tdn(q^{-1})_{269}
\tdl_{258}
\tdj_{1678}
\tdx_{1234}
\\
&=
\tdx_{1234}
\tdj_{1678}
\tdl_{258}
\tdn(q^{-1})_{269}
\tdy(q^{-1})_{3579}
\tdn(q^{-1})_{489}
\tdl_{456}
.
\end{split}
\label{my tre c3}
\end{align}
\end{corollary}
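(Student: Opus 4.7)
The plan is to derive (\ref{my tre c3}) as a direct specialization of Theorem \ref{tre general thm}, following exactly the template established by the proofs of Corollary \ref{tB r3 mythm1} and Corollary \ref{tB r3 mythm2}. The three ingredients required are: (i) the identification of each rank-$2$ transition matrix $\stmB^{(x)}$ and $\stmC^{(y)}$ with one of the 3D operators $\tdl$, $\tdn(q^{-1})$, $\tdj$, $\tdx$, $\tdy(q^{-1})$, which is the content of Lemma \ref{tB trans mat r3 c3 lemma}; (ii) the vanishing of all six sign-phase factors $\phaseI,\phaseII,\phaseIII,\phaseIb,\phaseIIb,\phaseIIIb$ defined in (\ref{te phase}) and (\ref{tre phase}); and (iii) the translation of the resulting tensor-indexed identity into the matrix form (\ref{my tre c3}) on tensor components labelled $1,\ldots,9$.

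First, I would confirm step (ii). For \ddthreeB{\cct}{\cct}{\cc}, the parities of the simple roots are $p(\alpha_1)=p(\alpha_2)=1$ and $p(\alpha_3)=0$, so $p(\alpha_1+\alpha_2)=0$, $p(\alpha_2+\alpha_3)=1$, $p(\alpha_1+\alpha_2+\alpha_3)=0$, $p(\alpha_2+2\alpha_3)=1$, $p(\alpha_1+\alpha_2+2\alpha_3)=0$, and $p(\alpha_1+2\alpha_2+2\alpha_3)=1$. Substituting these into the definitions yields $\phaseI=p(\alpha_1)p(\alpha_3)=0$, $\phaseII=p(\alpha_1+\alpha_2)p(\alpha_2+\alpha_3)=0$, $\phaseIII=p(\alpha_2)p(\alpha_1+\alpha_2+\alpha_3)=0$, $\phaseIb=p(\alpha_3)p(\alpha_1+2\alpha_2+2\alpha_3)=0$, $\phaseIIb=p(\alpha_2+\alpha_3)p(\alpha_1+\alpha_2+2\alpha_3)=0$, and $\phaseIIIb=p(\alpha_1+\alpha_2+\alpha_3)p(\alpha_2+2\alpha_3)=0$, as asserted in the text preceding the Dynkin diagrams (\ref{tB our dynkin r3 1}).

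Next, taking Lemma \ref{tB trans mat r3 c3 lemma} as given (it is proven exactly as Lemma \ref{tB trans mat r3 c1 lemma} by invoking the relations of Section \ref{tB qroot rels subsec} together with algebra homomorphisms from rank-$2$ quantum superalgebras, and by tracking the normalization factor $q^{1/2}+q^{-1/2}$ of the quantum root vectors), I would substitute the assignments (\ref{tB trans mat r3 c3}) into (\ref{tre general}). Because every phase factor is zero, all sign factors in the summand collapse to $1$. The resulting componentwise identity is precisely the summand expansion of (\ref{my tre c3}) once one matches tensor labels against the index pattern $(o_1,\ldots,o_9)\mapsto(i_1,\ldots,i_9)$; the shared summation variables $x_k,y_k$ coincide with the intermediate tensor legs in the matrix notation $\tdl_{456},\tdn(q^{-1})_{489},\ldots$, where the subscripts track the tensor factors on which each operator acts.

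The main obstacle is primarily bookkeeping: one must align the index positions $(a,b,c,d)\leftrightarrow(i,j,k,l)$ for each operator with the slot conventions used in definitions (\ref{3dL mat el}), (\ref{3dN mat el}), (\ref{3dJ mat el}), (\ref{3dX mat el 1})--(\ref{3dX mat el 8}), (\ref{3dY mat el 1})--(\ref{3dY mat el 8}), so that the summation pattern of (\ref{tre general}) reads off as the composition on the left of (\ref{my tre c3}) on one side and on the right on the other. This is exactly the routine verification carried out for Corollary \ref{tB r3 mythm2}, and no new algebraic identity is needed beyond those already assembled in Section \ref{tB qroot rels subsec}.
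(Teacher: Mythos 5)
Your proposal is correct and follows essentially the same route as the paper: verify that all six phase factors $\phaseI,\phaseII,\phaseIII,\phaseIb,\phaseIIb,\phaseIIIb$ vanish for \ddthreeB{\cct}{\cct}{\cc} (your parity computations agree with the paper's), invoke Lemma \ref{tB trans mat r3 c3 lemma} to identify the rank-2 transition matrices, and specialize Theorem \ref{tre general thm} to obtain (\ref{my tre c3}). No further comment is needed.
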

\subsubsection{The case (IV) $\vcenter{\hbox{\protect\includegraphics{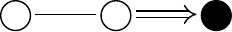}}}$}
In this case, the corresponding symmetrized Cartan matrix is given by
\begin{align}
DA
=
\begin{pmatrix}
2 & -1 & 0 \\
-1 & 2 & -1 \\
0 & -1 & 1
\end{pmatrix}
,
\end{align}
and the corresponding positive roots are given by
\begin{align}
\prer&=\{\alpha_1,\alpha_2,\alpha_1+\alpha_2,\alpha_2+2\alpha_3,\alpha_1+\alpha_2+2\alpha_3,\alpha_1+2\alpha_2+2\alpha_3\}
,
\\
\prir&=\{\}
,
\\
\prar&=\{\alpha_3,\alpha_2+\alpha_3,\alpha_1+\alpha_2+\alpha_3\}
.
\end{align}
Similarly to Lemma \ref{tB trans mat r3 c1 lemma}, by using the propositions given in Section \ref{tB qroot rels subsec}, we can show the following lemma:
\begin{lemma}\label{tB trans mat r3 c4 lemma}
For the quantum superalgebra associated with \ddthreeB{\cc}{\cc}{\ccb}, we have (\ref{tB trans mat r3 eq1}) $\sim$ (\ref{tB trans mat r3 eq12}) where $\stmB^{(x)},\stmC^{(y)}$ are given by
\begin{align}
\begin{split}
\stmB^{(x)}
&=\tdr
\quad
(x=2|1,2|1233,2|3321,233|1,332|1,233|21,332|12)
,\\
\stmC^{(y)}
&=\tdz
\quad
(y=3|2,3|21,3|12,23|1,32|1)
.
\end{split}
\label{tB trans mat r3 c4}
\end{align}
\end{lemma}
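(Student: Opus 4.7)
The plan is to follow the strategy of Lemma \ref{tB trans mat r3 c1 lemma} essentially verbatim, with the single modification that, whenever a pair of quantum root vectors under consideration involves node~3 (now anisotropic odd rather than even), the reference rank-2 algebra shifts from the one attached to \ddtwoB{\cc}{\cc} to the one attached to \ddtwoB{\cc}{\ccb}. First I would identify each of the twelve transition matrices in (\ref{tB trans mat r3 eq1})$\sim$(\ref{tB trans mat r3 eq12}) as the pullback of the corresponding rank-2 transition matrix along a suitable algebra homomorphism into $\UtBp[1|6]$. For $\stmB^{(2|1)}$ the pair $(e_1,e_2)$ generates a copy of $\UtAp[3]$ attached to \ddtwoA{\cc}{\cc}, so Theorem \ref{KOY13 thm 3dR} immediately yields $\stmB^{(2|1)}=\tdr$; and for $\stmC^{(3|2)}$ the pair $(e_2,e_3)$ generates a copy of $\UtBp[1|4]$ attached to \ddtwoB{\cc}{\ccb}, so the very definition of the 3D Z in (\ref{3dZ mat el}) gives $\stmC^{(3|2)}=\tdz$.

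Next I would treat the six remaining $\stmB^{(x)}$ and the four remaining $\stmC^{(y)}$ by the compound-vector versions of the same argument. For each compound pair (for example $(e_1,e_{(23)3})$ for $\stmB^{(233|1)}$ and $(e_1,e_{23})$ for $\stmC^{(23|1)}$) I would verify, using Propositions \ref{tB qroot rels1}$\sim$\ref{tB qroot rels4} together with their $\chi$-images \ref{tB qroot rels1 chi}$\sim$\ref{tB qroot rels4 chi}, that the pair and its intermediate compound element satisfy the full defining relations of either $\UtAp[3]$ attached to \ddtwoA{\cc}{\cc} (giving $\stmB^{(x)}=\tdr$) or $\UtBp[1|4]$ attached to \ddtwoB{\cc}{\ccb} (giving $\stmC^{(y)}=\tdz$). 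The matching of symmetrized Cartan data is forced by the explicit realization $\BB_i=\dd_i$ with $\theta=1$: every even compound root $\beta$ appearing in the $\stmB$-pairs satisfies $(\beta,\beta)=2$ with inner product $-1$ to its partner, and every anisotropic odd compound root $\beta$ appearing in the $\stmC$-pairs satisfies $(\beta,\beta)=1$ with inner product $-1$ to its even partner.

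The hard part will be the book-keeping of the normalization factor $q^{1/2}+q^{-1/2}$ that decorates every quantum root vector whose subscript contains the letter~``3'' twice. After pulling back the rank-2 transition identity, extra powers of $(q^{1/2}+q^{-1/2})$ appear on both sides of each equation in (\ref{tB trans mat r3 eq2})$\sim$(\ref{tB trans mat r3 eq12}), and I would cancel them globally via the weight conservation law: non-vanishing entries $\tdr_{i,j,k}^{a,b,c}$ force $i+j=a+b$ and $j+k=b+c$, and $\tdz_{i,j,k,l}^{a,b,c,d}$ satisfies analogous linear conservation constraints. These constraints force the total exponent of $(q^{1/2}+q^{-1/2})$ attached to ``doubled-3'' vectors on the two sides of each transition identity to agree, exactly as in equation (\ref{tB trans mat r3 eq4 homo}) from the proof of Lemma \ref{tB trans mat r3 c1 lemma}. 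No identity beyond those already collected in Section \ref{tB qroot rels subsec} will be needed, so aside from this cancellation no genuine new obstacle is expected.
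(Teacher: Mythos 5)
Your proposal is correct and follows essentially the same route as the paper, which proves this lemma simply by declaring it ``similar to Lemma \ref{tB trans mat r3 c1 lemma}'': pull back each rank-2 transition identity along an algebra homomorphism justified by the higher-order relations of Section \ref{tB qroot rels subsec}, match the symmetrized Cartan data (your computation with $\BB_i=\dd_i$, $\theta=1$ is right), and absorb the $(q^{1/2}+q^{-1/2})$ normalization via the weight conservation of $\tdr$ and $\tdz$, exactly as in (\ref{tB trans mat r3 eq4 homo}).
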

The phase factors given by (\ref{te phase}) and (\ref{tre phase}) are now $\phaseI=\phaseII=\phaseIII=0$ and $\phaseIb=\phaseIIb=\phaseIIIb=0$.
Then, (\ref{tre general}) is specialized as follows:
\begin{align}
\begin{split}
&
\sum
\tdr_{y_4,y_5,y_6}^{o_4,o_5,o_6}
\tdr_{x_4,y_8,y_9}^{y_4,o_8,o_9}
\tdz_{x_3,x_5,x_7,x_9}^{o_3,y_5,o_7,y_9}
\tdr_{y_2,x_6,i_9}^{o_2,y_6,x_9}
\tdr_{x_2,i_5,x_8}^{y_2,x_5,y_8}
\tdz_{x_1,i_6,i_7,i_8}^{o_1,x_6,x_7,x_8}
\tdz_{i_1,i_2,i_3,i_4}^{x_1,x_2,x_3,x_4}
\\
&=
\sum
\tdz_{x_1,y_2,x_3,y_4}^{o_1,o_2,o_3,o_4}
\tdz_{i_1,y_6,x_7,y_8}^{x_1,o_6,o_7,o_8}
\tdr_{x_2,y_5,x_8}^{y_2,o_5,y_8}
\tdr_{i_2,x_6,y_9}^{x_2,y_6,o_9}
\tdz_{i_3,x_5,i_7,x_9}^{x_3,y_5,x_7,y_9}
\tdr_{x_4,i_8,i_9}^{y_4,x_8,x_9}
\tdr_{i_4,i_5,i_6}^{x_4,x_5,x_6}
,
\end{split}
\end{align}
where all the indices are defined on $\mathbb{Z}_{\geq 0}$.
We then get the following result, which gives a new solution to the 3D reflection equation.
\begin{corollary}\label{tB r3 mythm4}
As the identity of the transition matrices of the quantum superalgebra associated with \ddthreeB{\cc}{\cc}{\ccb}, we have the 3D reflection equation given by
\begin{align}
\tdr_{456}
\tdr_{489}
\tdz_{3579}
\tdr_{269}
\tdr_{258}
\tdz_{1678}
\tdz_{1234}
=
\tdz_{1234}
\tdz_{1678}
\tdr_{258}
\tdr_{269}
\tdz_{3579}
\tdr_{489}
\tdr_{456}
.
\label{my tre c4}
\end{align}
\end{corollary}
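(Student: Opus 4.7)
The plan is to specialize the general identity \eqref{tre general} of Theorem \ref{tre general thm} to the Dynkin diagram \ddthreeB{\cc}{\cc}{\ccb}, in parallel with the treatment of cases (I)--(III). The parities of the simple roots here are $p(\alpha_1)=p(\alpha_2)=0$ and $p(\alpha_3)=1$, so a substitution into \eqref{te phase} and \eqref{tre phase} gives $\phaseI=\phaseII=\phaseIII=0$: each product contains either a factor $p(\alpha_i)$ with $i\in\{1,2\}$ or the vanishing factor $p(\alpha_1+\alpha_2)=0$. The remaining phases $\phaseIb,\phaseIIb,\phaseIIIb$ differ from $\phaseI,\phaseII,\phaseIII$ only by coefficients of two inside the weights, which do not affect the parity, so they vanish as well.

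Next, Lemma \ref{tB trans mat r3 c4 lemma} identifies every $\stmB^{(x)}$ with $\tdr$ and every $\stmC^{(y)}$ with $\tdz$. Its proof proceeds exactly as in Lemma \ref{tB trans mat r3 c1 lemma}: for each pair of quantum root vectors appearing in \eqref{tB trans mat r3 eq1}--\eqref{tB trans mat r3 eq12}, one constructs an algebra homomorphism from the appropriate rank-2 quantum superalgebra (either \ddtwoA{\cc}{\cc} or \ddtwoB{\cc}{\ccb}) into the rank-3 algebra using the higher-order relations of Propositions \ref{tB qroot rels1}--\ref{tB qroot rels4 chi}, and pushes forward the known rank-2 transition formula (Theorem \ref{KOY13 thm 3dR} for $\tdr$, and the definition \eqref{3dZ mat el} for $\tdz$). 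Care is required with the normalization factor $q^{1/2}+q^{-1/2}$ of Definition \ref{qrv}(ii) whenever the image of a rank-2 generator involves the letter $3$ twice, but weight conservation for $\tdr$ and $\tdz$ ensures these factors cancel, as in \eqref{tB trans mat r3 eq4 homo}.

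Combining both ingredients, \eqref{tre general} collapses to a componentwise equality involving only entries of $\tdr$ and $\tdz$; reading it as an operator identity on the nine-fold tensor product of Fock spaces gives \eqref{my tre c4}. The main obstacle is the case-by-case verification inside Lemma \ref{tB trans mat r3 c4 lemma}, since no closed-form expression for $\tdz$ is available and the identification must therefore be made purely at the level of the defining relations, without recourse to explicit matrix elements.
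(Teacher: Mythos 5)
Your proposal is correct and follows essentially the same route as the paper: the phases $\phaseI,\dots,\phaseIIIb$ all vanish for the realization $\alpha_1=\dd_1-\dd_2$, $\alpha_2=\dd_2-\dd_3$, $\alpha_3=\dd_3$, and Lemma \ref{tB trans mat r3 c4 lemma} (proved by the same homomorphism-plus-normalization argument as Lemma \ref{tB trans mat r3 c1 lemma}, with the $\stmC^{(y)}$ identified with $\tdz$ directly through its defining relation (\ref{3dZ mat el}) rather than through explicit matrix elements) reduces (\ref{tre general}) to (\ref{my tre c4}). Your closing remark about the absence of a closed form for $\tdz$ being immaterial here is exactly the point the paper relies on.
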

\subsubsection{The case (V) $\vcenter{\hbox{\protect\includegraphics{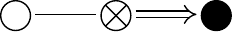}}}$}
In this case, the corresponding symmetrized Cartan matrix is given by
\begin{align}
DA
=
\begin{pmatrix}
2 & -1 & 0 \\
-1 & 0 & 1 \\
0 & 1 & -1
\end{pmatrix}
,
\end{align}
and the corresponding positive roots are given by
\begin{align}
\prer&=\{\alpha_1,\alpha_2+\alpha_3,\alpha_1+2\alpha_2+2\alpha_3,\alpha_1+\alpha_2+\alpha_3\}
,
\\
\prir&=\{\alpha_2,\alpha_1+\alpha_2,\alpha_2+2\alpha_3,\alpha_1+\alpha_2+2\alpha_3\}
,
\\
\prar&=\{\alpha_3\}
.
\end{align}
Similarly to Lemma \ref{tB trans mat r3 c1 lemma}, by using the propositions given in Section \ref{tB qroot rels subsec}, we can show the following lemma:
\begin{lemma}\label{tB trans mat r3 c5 lemma}
For the quantum superalgebra associated with \ddthreeB{\cc}{\cct}{\ccb}, we have (\ref{tB trans mat r3 eq1}) $\sim$ (\ref{tB trans mat r3 eq12}) where $\stmB^{(x)},\stmC^{(y)}$ are given by
\begin{align}
\begin{split}
\stmB^{(x)}
&=
\begin{cases}
\tdl
\quad &(x=2|1,233|1,332|1),
\\
\tdn(q^{-1})
\quad &(x=2|1233,2|3321,233|21,332|12),
\end{cases}
\\
\stmC^{(y)}
&=
\begin{cases}
\tdj
\quad &(y=23|1,32|1),
\\
\tdy(q^{-1})
\quad &(y=3|2,3|21,3|12).
\end{cases}
\end{split}
\label{tB trans mat r3 c5}
\end{align}
\end{lemma}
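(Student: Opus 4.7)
The plan is to proceed exactly as in Lemma \ref{tB trans mat r3 c1 lemma}, namely by exhibiting, for each of the twelve transition matrices $\stmB^{(x)}$ and $\stmC^{(y)}$, an algebra homomorphism from a rank 2 quantum superalgebra whose Dynkin diagram appears in Table \ref{tA all dynkin} or Table \ref{tB all dynkin}, and then invoking the corresponding rank 2 result. Concretely, for each pair of ``effective generators'' (such as $(e_1,e_2)$ for $\stmB^{(2|1)}$, $(e_1,e_{(23)3})$ for $\stmB^{(233|1)}$, $(e_2, e_{((12)3)3})$ for $\stmB^{(2|1233)}$, $(e_3, e_{23})$ for $\stmC^{(3|2)}$, etc.), I will compute the induced parities and the values of the bilinear form on the relevant weights, recognize the resulting Cartan data as one of the rank 2 diagrams, verify that the effective generators satisfy the defining relations of the corresponding rank 2 algebra using the higher-order relations collected in Propositions \ref{tB qroot rels1}--\ref{tB qroot rels4 chi}, and then apply Theorem \ref{my 3dL result}, Theorem \ref{my 3dN result}, Theorem \ref{KOY13 thm 3dJ}, Theorem \ref{my 3dY result}, or the definition of $\tdz$ as appropriate.

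To be concrete, for $\stmB^{(2|1)}$ the pair $(e_1,e_2)$ gives precisely the rank 2 Cartan data of \ddtwoA{\cc}{\cct}, so $\stmB^{(2|1)}=\tdl$ by Theorem \ref{my 3dL result}; the same reasoning for the pairs $(e_1,e_{(23)3})$ and $(e_1,e_{3(32)})$, after checking via (\ref{tB qroot serre 5 chi}) and (\ref{tB qroot serre 6 chi}) that the appropriate Serre-type relations hold, gives $\stmB^{(233|1)}=\stmB^{(332|1)}=\tdl$. For the four cases assigned to $\tdn(q^{-1})$, each pair of effective generators (both of whose weights are isotropic odd) produces Cartan data matching \ddtwoA{\cct}{\cct} \emph{up to a sign flip in the bilinear form} relative to the normalization chosen in case (IV) of Section \ref{sec 42}; propagating this sign through the $q$-commutators that define the quantum root vectors forces the substitution $q\to q^{-1}$ and hence yields $\tdn(q^{-1})$. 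For $\stmC^{(23|1)}$ and $\stmC^{(32|1)}$ the effective roots are $\alpha_1$ and $\alpha_2+\alpha_3$, which are both even with $(\alpha_2+\alpha_3,\alpha_2+\alpha_3)=1$ and $(\alpha_1,\alpha_2+\alpha_3)=-1$, reproducing the Cartan data of \ddtwoB{\cc}{\cc} and giving $\tdj$ via Theorem \ref{KOY13 thm 3dJ}. Finally, the three cases $y=3|2,3|21,3|12$ correspond to effective Cartan data of shape \ddtwoB{\cct}{\ccb} but with the bilinear form carrying the opposite sign to the normalization of case (III) of Section \ref{sec 53}, which after applying Theorem \ref{my 3dY result} forces the substitution $q\to q^{-1}$.

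The main technical obstacle will be the careful bookkeeping of two independent effects on top of the rank 2 identifications: first, the $q^{1/2}+q^{-1/2}$ normalization factors introduced in Definition \ref{qrv}(ii) whenever an effective generator is itself normalized (e.g., $e_{(23)3}$ or $e_{3(32)}$), which must be tracked through both sides of each of (\ref{tB trans mat r3 eq1})--(\ref{tB trans mat r3 eq12}) and checked to cancel after using the weight conservation of the rank 2 transition matrices; and second, the potential sign flip in $(\cdot,\cdot)$ between the sub-root system under consideration and the reference normalization of the rank 2 theorem, which converts into the $q\to q^{-1}$ substitution that distinguishes $\tdn(q^{-1})$ from $\tdn$ and $\tdy(q^{-1})$ from $\tdy$. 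Once these are handled in the manner indicated, each identification reduces to an application of a rank 2 theorem proved earlier in Sections \ref{sec 42} and \ref{sec 53}, precisely as in the proofs of Lemmas \ref{tB trans mat r3 c1 lemma}--\ref{tB trans mat r3 c4 lemma}.
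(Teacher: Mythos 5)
Your proposal follows essentially the same route as the paper, which proves this lemma only by the remark ``similarly to Lemma \ref{tB trans mat r3 c1 lemma}'': identify the effective rank 2 Cartan data for each pair of effective generators, verify the corresponding defining relations via the higher-order relations of Section \ref{tB qroot rels subsec}, apply the rank 2 theorems, and track both the $q^{1/2}+q^{-1/2}$ normalization of Definition \ref{qrv}(ii) and the sign of the bilinear form responsible for the $q\to q^{-1}$ substitutions. One small correction: for $\stmB^{(233|1)}$ and $\stmB^{(332|1)}$ the root $\alpha_2+2\alpha_3$ is isotropic odd in this case, so the relevant relations are (\ref{tB qroot serre 5})/(\ref{tB qroot serre 5 chi}) together with (\ref{tB qroot serre 12})/(\ref{tB qroot serre 12 chi}) rather than the even Serre relations (\ref{tB qroot serre 6 chi}) you cite, but this does not affect the argument.
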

The phase factors given by (\ref{te phase}) and (\ref{tre phase}) are now $\phaseI=\phaseII=\phaseIII=0$ and $\phaseIb=\phaseIIb=\phaseIIIb=0$.
Then, (\ref{tre general}) is specialized as follows:
\begin{align}
\begin{split}
&
\sum
\tdn(q^{-1})_{y_4,y_5,y_6}^{o_4,o_5,o_6}
\tdl_{x_4,y_8,y_9}^{y_4,o_8,o_9}
\tdj_{x_3,x_5,x_7,x_9}^{o_3,y_5,o_7,y_9}
\tdl_{y_2,x_6,i_9}^{o_2,y_6,x_9}
\tdn(q^{-1})_{x_2,i_5,x_8}^{y_2,x_5,y_8}
\tdy(q^{-1})_{x_1,i_6,i_7,i_8}^{o_1,x_6,x_7,x_8}
\tdy(q^{-1})_{i_1,i_2,i_3,i_4}^{x_1,x_2,x_3,x_4}
\\
&=
\sum
\tdy(q^{-1})_{x_1,y_2,x_3,y_4}^{o_1,o_2,o_3,o_4}
\tdy(q^{-1})_{i_1,y_6,x_7,y_8}^{x_1,o_6,o_7,o_8}
\tdn(q^{-1})_{x_2,y_5,x_8}^{y_2,o_5,y_8}
\tdl_{i_2,x_6,y_9}^{x_2,y_6,o_9}
\tdj_{i_3,x_5,i_7,x_9}^{x_3,y_5,x_7,y_9}
\tdl_{x_4,i_8,i_9}^{y_4,x_8,x_9}
\tdn(q^{-1})_{i_4,i_5,i_6}^{x_4,x_5,x_6}
,
\end{split}
\end{align}
where $o_k,i_k,x_k,y_k\in\{0,1\}{\ }(k=2,4,6,8)$ and the other indices are defined on $\mathbb{Z}_{\geq 0}$.
We then get the following result, which gives a new solution to the 3D reflection equation.
\begin{corollary}\label{tB r3 mythm5}
As the identity of the transition matrices of the quantum superalgebra associated with \ddthreeB{\cc}{\cct}{\ccb}, we have the 3D reflection equation given by
\begin{align}
\begin{split}
&\tdn(q^{-1})_{456}
\tdl_{489}
\tdj_{3579}
\tdl_{269}
\tdn(q^{-1})_{258}
\tdy(q^{-1})_{1678}
\tdy(q^{-1})_{1234}
\\
&=
\tdy(q^{-1})_{1234}
\tdy(q^{-1})_{1678}
\tdn(q^{-1})_{258}
\tdl_{269}
\tdj_{3579}
\tdl_{489}
\tdn(q^{-1})_{456}
.
\end{split}
\label{my tre c5}
\end{align}
\end{corollary}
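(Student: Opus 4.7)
The plan is to establish the corollary as an immediate specialization of Theorem \ref{tre general thm}, following exactly the template used for cases (I)--(IV). The proof decomposes into three components: (1) identifying each of the seven $\stmB^{(x)}$ and five $\stmC^{(y)}$ rank-2 transition matrices occurring in (\ref{tre general}) with specific 3D operators; (2) verifying that all six phase factors defined in (\ref{te phase}) and (\ref{tre phase}) vanish for this Dynkin diagram; (3) reading off the resulting index-wise identity as a matrix equation on the nine-fold tensor product of $F$'s and $V$'s.

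For step (1), which is the content of Lemma \ref{tB trans mat r3 c5 lemma}, I would proceed as in the proof of Lemma \ref{tB trans mat r3 c1 lemma}. Each rank-2 transition matrix corresponds to a pair of generators (possibly composite quantum root vectors such as $e_{((12)3)3}$ or $e_{1(23)}$) whose mutual commutation relations define an ``effective'' rank-2 Dynkin diagram. Using the higher-order Serre-type relations established in Propositions \ref{tB qroot rels3} and \ref{tB qroot rels4} together with their $\chi$-versions (Propositions \ref{tB qroot rels3 chi}, \ref{tB qroot rels4 chi}), I would construct explicit algebra homomorphisms from the relevant rank-2 quantum superalgebras into the rank-3 algebra, and then invoke the rank-2 identification Theorems \ref{my 3dL result}, \ref{my 3dN result}, \ref{KOY13 thm 3dJ}, and \ref{my 3dY result}. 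The normalization factor $(q^{1/2}+q^{-1/2})$ from Definition \ref{qrv}(ii) that appears when composite vectors involve the letter ``$3$'' twice has to be tracked, but it is absorbed by the weight-conservation condition built into each 3D operator, as in (\ref{tB trans mat r3 eq4 homo}).

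For step (2), direct substitution of $p(\alpha_1)=0$, $p(\alpha_2)=1$, $p(\alpha_3)=1$ gives $\phaseI=p(\alpha_1)p(\alpha_3)=0$, $\phaseII=p(\alpha_1+\alpha_2)p(\alpha_2+\alpha_3)=1\cdot 0=0$, $\phaseIII=p(\alpha_2)p(\alpha_1+\alpha_2+\alpha_3)=1\cdot 0=0$, and similarly $\phaseIb=\phaseIIb=\phaseIIIb=0$; note in particular that $p(\alpha_2+\alpha_3)=0$ and $p(\alpha_1+\alpha_2+\alpha_3)=0$. Thus (\ref{tre general}) specializes into an identity of products of $\tdn(q^{-1})$, $\tdl$, $\tdj$, and $\tdy(q^{-1})$ with no sign factors, and step (3) interprets this as precisely (\ref{my tre c5}).

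The main obstacle lies in step (1), specifically in justifying the $q \mapsto q^{-1}$ specializations that appear in the identifications $\stmB^{(x)}=\tdn(q^{-1})$ for $x \in \{2|1233,\,2|3321,\,233|21,\,332|12\}$ and $\stmC^{(y)}=\tdy(q^{-1})$ for $y \in \{3|2,\,3|21,\,3|12\}$. These $q$-inversions originate from the fact that the effective Dynkin diagrams formed by composite roots such as $((12)3)3$ and $2$ are read in the reverse orientation relative to the rank-2 reference cases, so the $q$-commutator brackets flip sign in the exponent of $q$, exactly as in the mechanism made explicit in the derivation (\ref{my3dY proof9})--(\ref{my3dY proof13}) of Theorem \ref{my 3dY result}. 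Verifying each of the twelve identifications therefore reduces to a careful bookkeeping exercise in $q$-powers, normalizations, and parity signs, but no fundamentally new algebraic input beyond the higher-order relations of Section \ref{tB qroot rels subsec} and the rank-2 theorems is needed.
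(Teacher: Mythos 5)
Your proposal is correct and follows essentially the same route as the paper: the corollary is obtained by specializing Theorem \ref{tre general thm}, with the twelve rank-2 identifications you list agreeing exactly with Lemma \ref{tB trans mat r3 c5 lemma} (the remaining ones being $\tdl$ and $\tdj$, as you imply by citing Theorems \ref{my 3dL result} and \ref{KOY13 thm 3dJ}), and your parity computation showing $\phaseI=\phaseII=\phaseIII=\phaseIb=\phaseIIb=\phaseIIIb=0$ is the same check the paper performs. Your explanation of the $q\to q^{-1}$ substitutions as arising from the reversed sign of the bilinear form on the relevant sub-root-system (fixed by the global choice of $\theta$) is also the correct mechanism behind the appearance of $\tdn(q^{-1})$ and $\tdy(q^{-1})$.
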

\subsubsection{The case (VI) $\vcenter{\hbox{\protect\includegraphics{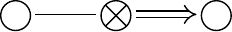}}}$}
In this case, the corresponding symmetrized Cartan matrix is given by
\begin{align}
DA
=
\begin{pmatrix}
2 & -1 & 0 \\
-1 & 0 & 1 \\
0 & 1 & -1
\end{pmatrix}
,
\end{align}
and the corresponding positive roots are given by
\begin{align}
\prer&=\{\alpha_1,\alpha_3,\alpha_1+2\alpha_2+2\alpha_3\}
,
\\
\prir&=\{\alpha_2,\alpha_1+\alpha_2,\alpha_2+2\alpha_3,\alpha_1+\alpha_2+2\alpha_3\}
,
\\
\prar&=\{\alpha_2+\alpha_3,\alpha_1+\alpha_2+\alpha_3\}
.
\end{align}
Similarly to Lemma \ref{tB trans mat r3 c1 lemma}, by using the propositions given in Section \ref{tB qroot rels subsec}, we can show the following lemma:
\begin{lemma}\label{tB trans mat r3 c6 lemma}
For the quantum superalgebra associated with \ddthreeB{\cc}{\cct}{\cc}, we have (\ref{tB trans mat r3 eq1}) $\sim$ (\ref{tB trans mat r3 eq12}) where $\stmB^{(x)},\stmC^{(y)}$ are given by
\begin{align}
\begin{split}
\stmB^{(x)}
&=
\begin{cases}
\tdl
\quad &(x=2|1,233|1,332|1),
\\
\tdn(q^{-1})
\quad &(x=2|1233,2|3321,233|21,332|12),
\end{cases}
\\
\stmC^{(y)}
&=
\begin{cases}
\tdx(q^{-1})
\quad &(y=3|2,3|21,3|12),
\\
\tdz
\quad &(y=23|1,32|1).
\end{cases}
\end{split}
\label{tB trans mat r3 c6}
\end{align}
\end{lemma}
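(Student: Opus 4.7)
The plan is to follow the strategy used in Lemma~\ref{tB trans mat r3 c1 lemma}: for each of the twelve transition matrices $\stmB^{(x)}$ and $\stmC^{(y)}$, identify the subalgebra generated by the quantum root vectors appearing in (\ref{tB trans mat r3 eq1})--(\ref{tB trans mat r3 eq12}) with a suitable rank-2 quantum superalgebra, and then invoke the corresponding rank-2 transition matrix theorem.

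Concretely, I would first compute for the Cartan data of \ddthreeB{\cc}{\cct}{\cc} (so $p(\alpha_1)=p(\alpha_3)=0$, $p(\alpha_2)=1$, with inner products read from the symmetrized Cartan matrix above) the parity and self-inner-product of every composite weight that occurs as a quantum root in (\ref{tB trans mat r3 eq1})--(\ref{tB trans mat r3 eq12}), together with the pairwise inner products. For instance, for $\stmC^{(3|2)}$ the effective rank-2 ``simple roots'' are $(\alpha_3,\alpha_2)$ with $(\alpha_3,\alpha_3)=-1$, $(\alpha_2,\alpha_2)=0$, and $(\alpha_3,\alpha_2)=1$; these are the negatives of the corresponding inner products for \ddtwoB{\cct}{\cc}, so the relations satisfied by the relevant quantum root vectors match those of the nilpotent part associated with \ddtwoB{\cct}{\cc} under $q\mapsto q^{-1}$, which produces $\tdx(q^{-1})$ via Theorem~\ref{my 3dX result}. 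The same sign flip in inner products explains the appearance of $\tdn(q^{-1})$ in place of $\tdn$ for the four $\stmB^{(x)}$ in the second line of (\ref{tB trans mat r3 c6}).

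Next, Propositions~\ref{tB qroot rels1}--\ref{tB qroot rels4} and \ref{tB qroot rels1 chi}--\ref{tB qroot rels4 chi} (together with Propositions~\ref{tA qroot rels} and \ref{tA qroot rels2} when the effective rank-2 subalgebra is of type A) will be used to verify that each triple (for $\stmB^{(x)}$) or quadruple (for $\stmC^{(y)}$) of quantum root vectors satisfies precisely the defining relations of the claimed rank-2 quantum superalgebra, up to the normalization factor $q^{1/2}+q^{-1/2}$ introduced by Definition~\ref{qrv}(ii). The assignment sending the abstract rank-2 simple generators to these quantum root vectors is then an algebra homomorphism, and pulling back the rank-2 transition matrices from Theorems~\ref{my 3dL result}, \ref{my 3dN result}, \ref{my 3dX result}, \ref{my 3dY result} and from the definition (\ref{3dZ mat el}) will yield the identifications (\ref{tB trans mat r3 c6}).

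The hard part will be bookkeeping. Whenever a quantum root vector involves the letter ``$3$'' twice (e.g.\ $e_{((12)3)3}$, $e_{(3(32))1}$, $e_{(1(23))(23)}$, $e_{(1(32))(32)}$), Definition~\ref{qrv}(ii) introduces the factor $q^{1/2}+q^{-1/2}$, which must be tracked through the homomorphism exactly as in the derivation of (\ref{tB trans mat r3 eq4 homo}) inside the proof of Lemma~\ref{tB trans mat r3 c1 lemma}. Likewise, the substitution $q\to q^{-1}$ requires a careful conversion of the $q$-factorial denominators $[m]_{q^{d_\alpha},(-1)^{p(\alpha)}}!$ used to normalize $e_\beta^{(a)}$, along the lines of the passage from Theorem~\ref{my 3dX result} to Theorem~\ref{my 3dY result} via the identities (\ref{my3dY proof7})--(\ref{my3dY proof10}). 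Once all normalization constants and sign conventions are in place, the weight-conservation argument of Theorem~\ref{my 3dL result} reduces each of the twelve identifications to the rank-2 case already established.
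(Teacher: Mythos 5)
Your proposal is correct and follows essentially the same route as the paper: the paper proves this lemma only by reference to the method of Lemma \ref{tB trans mat r3 c1 lemma}, namely verifying via the higher-order relations of Section \ref{tB qroot rels subsec} that the relevant quantum root vectors generate a copy of the appropriate rank-2 nilpotent subalgebra (with the sign of the inner products accounting for the $q\to q^{-1}$ substitutions), tracking the $q^{1/2}+q^{-1/2}$ and $q$-factorial normalizations, and then pulling back the rank-2 transition matrices. Your identification of the effective rank-2 Cartan data for $\stmC^{(3|2)}$ and the explanation of the $\tdx(q^{-1})$ and $\tdn(q^{-1})$ occurrences match the paper's intent.
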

The phase factors given by (\ref{te phase}) and (\ref{tre phase}) are now $\phaseI=\phaseIb=0$ and $\phaseII=\phaseIII=\phaseIIb=\phaseIIIb=1$.
Then, (\ref{tre general}) is specialized as follows:
\begin{align}
\begin{split}
&
\sum
(-1)^{x_3y_8+x_5+y_4o_7+y_5+o_3y_6+y_2x_7}
\\
&\spaceDb
\times
\tdn(q^{-1})_{y_4,y_5,y_6}^{o_4,o_5,o_6}
\tdl_{x_4,y_8,y_9}^{y_4,o_8,o_9}
\tdz_{x_3,x_5,x_7,x_9}^{o_3,y_5,o_7,y_9}
\tdl_{y_2,x_6,i_9}^{o_2,y_6,x_9}
\tdn(q^{-1})_{x_2,i_5,x_8}^{y_2,x_5,y_8}
\tdx(q^{-1})_{x_1,i_6,i_7,i_8}^{o_1,x_6,x_7,x_8}
\tdx(q^{-1})_{i_1,i_2,i_3,i_4}^{x_1,x_2,x_3,x_4}
\\
&=
\sum
(-1)^{x_3x_8+y_5+x_4i_7+x_5+i_3x_6+x_2x_7}
\\
&\spaceDb
\times
\tdx(q^{-1})_{x_1,y_2,x_3,y_4}^{o_1,o_2,o_3,o_4}
\tdx(q^{-1})_{i_1,y_6,x_7,y_8}^{x_1,o_6,o_7,o_8}
\tdn(q^{-1})_{x_2,y_5,x_8}^{y_2,o_5,y_8}
\tdl_{i_2,x_6,y_9}^{x_2,y_6,o_9}
\tdz_{i_3,x_5,i_7,x_9}^{x_3,y_5,x_7,y_9}
\tdl_{x_4,i_8,i_9}^{y_4,x_8,x_9}
\tdn(q^{-1})_{i_4,i_5,i_6}^{x_4,x_5,x_6}
,
\end{split}
\label{my tre c6}
\end{align}
where $o_k,i_k,x_k,y_k\in\{0,1\}{\ }(k=2,4,6,8)$ and the other indices are defined on $\mathbb{Z}_{\geq 0}$.
We then get the following result:
\begin{corollary}\label{tB r3 mythm6}
As the identity of the transition matrices of the quantum superalgebra associated with \ddthreeB{\cc}{\cct}{\cc}, we have the 3D reflection equation up to sign factors given by (\ref{my tre c6}).
\end{corollary}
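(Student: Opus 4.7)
The proof plan follows the template already established for cases (I)--(V) in Section \ref{sec 53}, specializing the universal identity (\ref{tre general}) of Theorem \ref{tre general thm} to the Cartan data of the Dynkin diagram \ddthreeB{\cc}{\cct}{\cc}. My first step will be to read off the symmetrized Cartan matrix and the lists $\prer$, $\prir$, $\prar$ directly from the Dynkin data and the bilinear form (\ref{ambi sp}), choosing $\theta=0$ by the convention of Example \ref{theta example}. With $p(\alpha_1)=p(\alpha_3)=0$ and $p(\alpha_2)=1$, the six phase factors defined in (\ref{te phase}) and (\ref{tre phase}) will compute to $\phaseI=\phaseIb=0$ and $\phaseII=\phaseIII=\phaseIIb=\phaseIIIb=1$, matching the sign factors written in (\ref{my tre c6}).

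Next, I will establish the analog of Lemma \ref{tB trans mat r3 c1 lemma} identifying each of the twelve rank-2 transition matrices $\stmB^{(x)}$ and $\stmC^{(y)}$ defined in (\ref{tB trans mat r3 eq1})--(\ref{tB trans mat r3 eq12}) with one of the known 3D operators $\tdr,\tdl,\tdm,\tdn,\tdj,\tdx,\tdy,\tdz$. The method is identical to the proof of Lemma \ref{tB trans mat r3 c1 lemma}: for each triple (or quadruple) of generators appearing in the domain of $\stmB^{(x)}$ or $\stmC^{(y)}$, I will show via the higher-order relations of Section \ref{tB qroot rels subsec} that they satisfy exactly the defining relations of the generators of some $\UtBp[m'|n']$ of rank 2 with a specific Dynkin data, inducing an algebra homomorphism. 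Then Theorems \ref{KOY13 thm 3dJ}, \ref{my 3dX result}, \ref{my 3dY result} or the results of Section \ref{sec 42} transport the transition matrix. The homomorphism argument must also account for the normalization conventions in Definition \ref{qrv}(ii) and match the $q$-integer factorials; this is straightforward bookkeeping but must be done case by case. Based on the parity pattern and the positive root structure, I expect
\begin{align*}
\stmB^{(2|1)}&=\stmB^{(233|1)}=\stmB^{(332|1)}=\tdl,\\
\stmB^{(2|1233)}&=\stmB^{(2|3321)}=\stmB^{(233|21)}=\stmB^{(332|12)}=\tdn(q^{-1}),\\
\stmC^{(3|2)}&=\stmC^{(3|21)}=\stmC^{(3|12)}=\tdx(q^{-1}),\\
\stmC^{(23|1)}&=\stmC^{(32|1)}=\tdz,
\end{align*}
although the exact assignment should be verified by checking the Cartan data of the induced rank-2 subalgebra in each case.

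Finally, with the lemma in hand, I will substitute these identifications and the computed phase factors into Theorem \ref{tre general thm}, yielding the equation (\ref{my tre c6}) up to the sign factors $(-1)^{x_3y_8+x_5+y_4o_7+y_5+o_3y_6+y_2x_7}$ and $(-1)^{x_3x_8+y_5+x_4i_7+x_5+i_3x_6+x_2x_7}$ coming from the nontrivial $\phaseII,\phaseIII,\phaseIIb,\phaseIIIb$. The specification of the index ranges $o_k,i_k,x_k,y_k\in\{0,1\}$ for $k=2,4,6,8$ follows from the weight structure: the components on which the isotropic positive roots $\alpha_2, \alpha_1+\alpha_2, \alpha_2+2\alpha_3, \alpha_1+\alpha_2+2\alpha_3$ act correspond to a fermionic Fock space $V$, whereas the even and anisotropic positive roots contribute the bosonic Fock space $F$.

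The main obstacle, in my view, is not the substitution step — which is mechanical once the lemma is established — but rather correctly identifying $\stmB^{(x)}$ and $\stmC^{(y)}$ for those $(x,y)$ where the relevant composite generators generate a rank-2 subalgebra whose Dynkin diagram is one of the nondistinguished type B diagrams of Table \ref{tB all dynkin}. In particular, distinguishing whether $\stmC^{(23|1)}$ and $\stmC^{(32|1)}$ should equal $\tdz$ (the case (IV) transition matrix) or some modification involving the $q\mapsto q^{-1}$ substitution requires careful comparison of the relations satisfied by the pairs $(e_{23}, e_1)$ and $(e_{32}, e_1)$ against the defining relations of the $\UtBp[m'|n']$ associated with \ddtwoB{\cc}{\ccb} under the order $O_1$, together with the normalization factor $q^{1/2}+q^{-1/2}$ that appears in quantum root vectors with a repeated index $r$.
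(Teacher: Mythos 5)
Your proposal is correct and follows essentially the same route as the paper: compute the Cartan data, positive roots and phase factors ($\phaseI=\phaseIb=0$, $\phaseII=\phaseIII=\phaseIIb=\phaseIIIb=1$), establish the analog of Lemma \ref{tB trans mat r3 c1 lemma} by inducing rank-2 algebra homomorphisms via the higher-order relations of Section \ref{tB qroot rels subsec}, and substitute into Theorem \ref{tre general thm}. Your conjectured assignments of $\stmB^{(x)}$ and $\stmC^{(y)}$ coincide exactly with the paper's Lemma \ref{tB trans mat r3 c6 lemma}, and your identification of the fermionic components $k=2,4,6,8$ with the isotropic positive roots is the same justification the paper uses for the index ranges in (\ref{my tre c6}).
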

\subsubsection{The case (VII) $\vcenter{\hbox{\protect\includegraphics{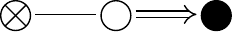}}}$}
In this case, the corresponding symmetrized Cartan matrix is given by
\begin{align}
DA
=
\begin{pmatrix}
0 & -1 & 0 \\
-1 & 2 & -1 \\
0 & -1 & 1
\end{pmatrix}
,
\end{align}
and the corresponding positive roots are given by
\begin{align}
\prer&=\{\alpha_2,\alpha_2+2\alpha_3,\alpha_1+\alpha_2+\alpha_3\}
,
\\
\prir&=\{\alpha_1,\alpha_1+\alpha_2,\alpha_1+\alpha_2+2\alpha_3,\alpha_1+2\alpha_2+2\alpha_3\}
,
\\
\prar&=\{\alpha_3,\alpha_2+\alpha_3\}
.
\end{align}
Similarly to Lemma \ref{tB trans mat r3 c1 lemma}, by using the propositions given in Section \ref{tB qroot rels subsec}, we can show the following lemma:
\begin{lemma}\label{tB trans mat r3 c7 lemma}
For the quantum superalgebra associated with \ddthreeB{\cct}{\cc}{\ccb}, we have (\ref{tB trans mat r3 eq1}) $\sim$ (\ref{tB trans mat r3 eq12}) where $\stmB^{(x)},\stmC^{(y)}$ are given by
\begin{align}
\begin{split}
\stmB^{(x)}
&=
\tdm
\quad (x=2|1,2|1233,2|3321,233|1,332|1,233|21,332|12)
,\\
\stmC^{(y)}
&=
\begin{cases}
\tdz
\quad &(y=3|2),
\\
\tdy
\quad &(y=3|21,3|12,23|1,32|1).
\end{cases}
\end{split}
\label{tB trans mat r3 c7}
\end{align}
\end{lemma}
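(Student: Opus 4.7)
The plan is to follow the strategy of the proof of Lemma \ref{tB trans mat r3 c1 lemma}: for each of the twelve equations (\ref{tB trans mat r3 eq1})$\sim$(\ref{tB trans mat r3 eq12}), I would construct an algebra homomorphism $h$ from a suitable rank-$2$ quantum superalgebra into the rank-$3$ $\UtBp$ associated with \ddthreeB{\cct}{\cc}{\ccb} whose image is generated by the quantum root vectors appearing in the equation, and then invoke the corresponding rank-$2$ transition matrix result.

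First, taking $\theta=1$, I would compute the parities and norms of all the quantum root vectors entering those equations. A direct calculation gives, among others, that $e_{(23)3}$ has root $\alpha_2+2\alpha_3=\dd_2+\dd_3$ (even), $e_{21}$ has root $\alpha_1+\alpha_2=\ee_1-\dd_3$ (isotropic odd), and $e_{((12)3)3}$ has root $\alpha_1+\alpha_2+2\alpha_3=\ee_1+\dd_3$ (isotropic odd). From this one reads that each of the seven generator pairs entering $\stmB^{(x)}$ consists of one isotropic odd element and one even element, matching the Cartan data of \ddtwoA{\cct}{\cc} (case (III) of type A, with transition matrix $\tdm$ by Corollary \ref{my 3dL result 2}); the generator triple entering $\stmC^{(3|2)}$ matches \ddtwoB{\cc}{\ccb} (case (IV) of type B, whose transition matrix is $\tdz$ by definition (\ref{3dZ mat el})); and each of the four remaining generator triples entering $\stmC^{(y)}$ matches \ddtwoB{\cct}{\ccb} (case (III) of type B, with transition matrix $\tdy$ by Theorem \ref{my 3dY result}).

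Second, I would verify that these assignments really extend to algebra homomorphisms. The rank-$2$ defining relations to be checked are $e^2=0$ for isotropic generators, the quadratic or cubic Serre relation on the even side, and, for the two type B source algebras, the anisotropic Serre relation. These identities among the chosen images are precisely the content of Propositions \ref{tB qroot rels1}$\sim$\ref{tB qroot rels4} together with their $\chi$-twisted counterparts Propositions \ref{tB qroot rels1 chi}$\sim$\ref{tB qroot rels4 chi}: (\ref{tB qroot serre 11})$\sim$(\ref{tB qroot serre 13}) and their $\chi$-images supply the nilpotency of isotropic images, (\ref{tB qroot serre 1})$\sim$(\ref{tB qroot serre 10}) and their $\chi$-images supply the Serre identities on the even side, and (\ref{tB qroot serre 14})$\sim$(\ref{tB qroot serre 15}) together with (\ref{tB qroot serre 14 chi})$\sim$(\ref{tB qroot serre 15 chi}) supply the anisotropic Serre relations used in the $\tdz$-case, the signs $(-1)^{p(\alpha_i)+p(\alpha_j)}$ collapsing to the exact coefficients appearing in (\ref{my3dY proof1}).

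Third, once the homomorphism is in place, the pullback of the rank-$2$ transition matrix equation coincides with the rank-$3$ equation up to the normalization convention of Definition \ref{qrv}(ii). Since $h$ sends the rank-$2$ ``last simple root'' to one of $\alpha_2$, $\alpha_2+2\alpha_3$, or $\alpha_3$, one verifies case by case that the $q^{1/2}+q^{-1/2}$ factors on the two sides either match outright or can be absorbed via the weight conservation satisfied by $\tdm$, $\tdy$, $\tdz$, exactly as in (\ref{tB trans mat r3 eq4 homo}) of Lemma \ref{tB trans mat r3 c1 lemma}. The main obstacle is the volume of bookkeeping across the twelve equations rather than any conceptual difficulty; one mild further subtlety is that no closed formula for $\tdz$ is available, so its identification must rest purely on the universal property that defines it via (\ref{3dZ mat el}) as the rank-$2$ transition matrix for \ddtwoB{\cc}{\ccb}.
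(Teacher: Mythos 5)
Your proposal is correct and follows essentially the same route as the paper, which proves this lemma exactly as Lemma \ref{tB trans mat r3 c1 lemma}: one builds algebra homomorphisms from the appropriate rank-$2$ quantum superalgebras using the higher-order relations of Section \ref{tB qroot rels subsec}, matches the Cartan data (with $\theta=1$) of the relevant quantum root vectors to select $\tdm$, $\tdy$ or $\tdz$ (this matching also settles the $q$ versus $q^{-1}$ issue), and tracks the $q^{1/2}+q^{-1/2}$ normalization via weight conservation. One small mislabeling: the alternating-sign anisotropic Serre relations (\ref{tB qroot serre 14})--(\ref{tB qroot serre 15}) and their $\chi$-images, which specialize to the coefficients of (\ref{my3dY proof1}), are what justify the $\tdy$ identifications ($\stmC^{(y)}$ for $y=3|21,3|12,23|1,32|1$), whereas $\stmC^{(3|2)}=\tdz$ requires no higher-order relation at all, since $e_2,e_3$ are simple generators whose restricted defining relations realize \ddtwoB{\cc}{\cc\!}\,--\,type data for \ddtwoB{\cc}{\ccb} and (\ref{tB trans mat def1}) applies verbatim.
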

The phase factors given by (\ref{te phase}) and (\ref{tre phase}) are now $\phaseIII=\phaseIIIb=0$ and $\phaseI=\phaseII=\phaseIb=\phaseIIb=1$.
Then, (\ref{tre general}) is specialized as follows:
\begin{align}
\begin{split}
&
\sum
(-1)^{o_1i_9+x_5+x_7+x_3y_8+x_5+x_1i_5+o_3y_6}
\\
&\spaceDb
\times
\tdm_{y_4,y_5,y_6}^{o_4,o_5,o_6}
\tdm_{x_4,y_8,y_9}^{y_4,o_8,o_9}
\tdy_{x_3,x_5,x_7,x_9}^{o_3,y_5,o_7,y_9}
\tdm_{y_2,x_6,i_9}^{o_2,y_6,x_9}
\tdm_{x_2,i_5,x_8}^{y_2,x_5,y_8}
\tdy_{x_1,i_6,i_7,i_8}^{o_1,x_6,x_7,x_8}
\tdz_{i_1,i_2,i_3,i_4}^{x_1,x_2,x_3,x_4}
\\
&=
\sum
(-1)^{i_1o_9+x_7+y_5+x_3x_8+y_5+x_1o_5+i_3x_6}
\\
&\spaceDb
\times
\tdz_{x_1,y_2,x_3,y_4}^{o_1,o_2,o_3,o_4}
\tdy_{i_1,y_6,x_7,y_8}^{x_1,o_6,o_7,o_8}
\tdm_{x_2,y_5,x_8}^{y_2,o_5,y_8}
\tdm_{i_2,x_6,y_9}^{x_2,y_6,o_9}
\tdy_{i_3,x_5,i_7,x_9}^{x_3,y_5,x_7,y_9}
\tdm_{x_4,i_8,i_9}^{y_4,x_8,x_9}
\tdm_{i_4,i_5,i_6}^{x_4,x_5,x_6}
,
\end{split}
\label{my tre c7}
\end{align}
where $o_k,i_k,x_k,y_k\in\{0,1\}{\ }(k=5,6,8,9)$ and the other indices are defined on $\mathbb{Z}_{\geq 0}$.
We then get the following result:
\begin{corollary}\label{tB r3 mythm7}
As the identity of the transition matrices of the quantum superalgebra associated with \ddthreeB{\cct}{\cc}{\ccb}, we have the 3D reflection equation up to sign factors given by (\ref{my tre c7}).
\end{corollary}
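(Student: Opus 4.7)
The proof will follow exactly the template used for cases (I) through (VI): first establish an analog of Lemma \ref{tB trans mat r3 c1 lemma} identifying each of the twelve rank-2 transition matrices $\stmB^{(x)}$ and $\stmC^{(y)}$ attached to the Dynkin diagram \ddthreeB{\cct}{\cc}{\ccb}, then compute the six phase factors of (\ref{te phase}) and (\ref{tre phase}), and finally substitute everything into the master identity (\ref{tre general}) of Theorem \ref{tre general thm}.

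For the Lemma, the strategy is identical to the proof of Lemma \ref{tB trans mat r3 c1 lemma}. For each of the twelve equations (\ref{tB trans mat r3 eq1})--(\ref{tB trans mat r3 eq12}) I will identify the effective rank-2 Cartan data governing the triple or quadruple of quantum root vectors appearing there, by reading off the parities and the bilinear-form values of the relevant "effective simple roots". This pins down the applicable rank-2 Dynkin diagram from Table \ref{tB all dynkin} (or Table \ref{tA all dynkin} when the sub-system is of type A), and the corresponding 3D operator from Section \ref{sec 5} is then imported via an algebra homomorphism, taking care of the normalization $q^{1/2}+q^{-1/2}$ prescribed in Definition \ref{qrv} whenever the letter ``$r$'' occurs twice among the indices, exactly as was done to obtain (\ref{tB trans mat r3 eq4 homo}). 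For this case I expect the type-A-like sub-systems to be of shape \ddtwoA{\cct}{\cc}, yielding $\stmB^{(x)}=\tdm$ by Corollary \ref{my 3dL result 2}, the sub-system \ddtwoB{\cc}{\ccb} yielding $\stmC^{(3|2)}=\tdz$ by the definition (\ref{3dZ mat el}), and the remaining type-B sub-systems of shape \ddtwoB{\cct}{\ccb} yielding $\stmC^{(y)}=\tdy$ by Theorem \ref{my 3dY result}, which is the content of Lemma \ref{tB trans mat r3 c7 lemma}.

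For the phase factors, a direct computation from (\ref{te phase}) and (\ref{tre phase}) using $p(\alpha_1)=p(\alpha_3)=1$, $p(\alpha_2)=0$ gives $\phaseI=\phaseII=\phaseIb=\phaseIIb=1$ and $\phaseIII=\phaseIIIb=0$. Substituting the transition matrices and these phase values into (\ref{tre general}) produces (\ref{my tre c7}) directly, with indices in $\{0,1\}$ coming from those quantum root vectors $\beta_t\in\prir$ (the four isotropic odd roots) and the remaining indices ranging over $\mathbb{Z}_{\geq 0}$.

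The main obstacle is the uniform and correct identification of the effective rank-2 sub-system for each of the twelve composite transition matrices such as $\stmB^{(233|21)}$ or $\stmC^{(3|12)}$, where the quantum root vectors involve non-simple roots and one must also track the normalization rule of Definition \ref{qrv}. Once this bookkeeping is done, the remainder of the argument is purely mechanical substitution into the already-established Theorem \ref{tre general thm}.
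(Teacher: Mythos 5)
Your proposal is correct and follows essentially the same route as the paper: the paper's proof of this corollary consists precisely of Lemma \ref{tB trans mat r3 c7 lemma} (identifying $\stmB^{(x)}=\tdm$ for all seven $x$ and $\stmC^{(3|2)}=\tdz$, $\stmC^{(y)}=\tdy$ for the remaining $y$, proved by the same effective-rank-2 reduction as Lemma \ref{tB trans mat r3 c1 lemma}), followed by the phase computation $\phaseI=\phaseII=\phaseIb=\phaseIIb=1$, $\phaseIII=\phaseIIIb=0$ and substitution into (\ref{tre general}). Your identifications of the operators, the phase values, and the index domains (the $\{0,1\}$-valued indices at positions $5,6,8,9$ corresponding to the four isotropic odd roots) all agree with the paper.
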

\subsubsection{The case (VIII) $\vcenter{\hbox{\protect\includegraphics{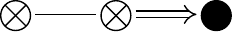}}}$}
In this case, the corresponding symmetrized Cartan matrix is given by
\begin{align}
DA
=
\begin{pmatrix}
0 & 1 & 0 \\
1 & 0 & -1 \\
0 & -1 & 1
\end{pmatrix}
,
\end{align}
and the corresponding positive roots are given by
\begin{align}
\prer&=\{\alpha_1+\alpha_2,\alpha_2+\alpha_3,\alpha_1+\alpha_2+2\alpha_3\}
,
\\
\prir&=\{\alpha_1,\alpha_2,\alpha_2+2\alpha_3,\alpha_1+2\alpha_2+2\alpha_3\}
,
\\
\prar&=\{\alpha_3,\alpha_1+\alpha_2+\alpha_3\}
.
\end{align}
Similarly to Lemma \ref{tB trans mat r3 c1 lemma}, by using the propositions given in Section \ref{tB qroot rels subsec}, we can show the following lemma:
\begin{lemma}\label{tB trans mat r3 c8 lemma}
For the quantum superalgebra associated with \ddthreeB{\cct}{\cct}{\ccb}, we have (\ref{tB trans mat r3 eq1}) $\sim$ (\ref{tB trans mat r3 eq12}) where $\stmB^{(x)},\stmC^{(y)}$ are given by
\begin{align}
\begin{split}
\stmB^{(x)}
&=
\begin{cases}
\tdl
\quad &(x=2|1233,2|3321,233|21,332|12),
\\
\tdn(q^{-1})
\quad &(x=2|1,233|1,332|1),
\end{cases}
\\
\stmC^{(y)}
&=
\begin{cases}
\tdx(q^{-1})
\quad &(y=23|1,32|1),
\\
\tdy
\quad &(y=3|2),
\\
\tdz
\quad &(y=3|21,3|12).
\end{cases}
\end{split}
\label{tB trans mat r3 c8}
\end{align}
\end{lemma}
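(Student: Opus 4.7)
The plan is to follow the same template as Lemmas \ref{tB trans mat r3 c1 lemma} through \ref{tB trans mat r3 c7 lemma}, treating each of the twelve transition matrices separately. The diagram \ddthreeB{\cct}{\cct}{\ccb} has simple roots $\alpha_1=\dd_1-\ee_2$, $\alpha_2=\ee_2-\dd_3$, $\alpha_3=\dd_3$; choosing $\theta=1$ via Example \ref{theta example} fixes the bilinear form. For each pair of quantum root vectors appearing in (\ref{tB trans mat r3 eq1})--(\ref{tB trans mat r3 eq12}), I would identify the Cartan data of the rank-$2$ subalgebra they generate by computing parities and squared norms, match it to one of the rank-$2$ Dynkin diagrams classified in Tables \ref{tA all dynkin} and \ref{tB all dynkin}, and then invoke the corresponding rank-$2$ theorem.

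Concretely, for $\stmB^{(2|1)}$, $\stmB^{(233|1)}$, $\stmB^{(332|1)}$ the pair of generators is $(e_1,e_2)$, $(e_1,e_{(23)3})$, $(e_1,e_{3(32)})$: in each case both weights are isotropic odd and the induced diagram is \ddtwoA{\cct}{\cct}, so Theorem \ref{my 3dN result} identifies the transition matrix with $\tdn(q^{-1})$ after matching the partial order. For $\stmB^{(2|1233)}$, $\stmB^{(2|3321)}$, $\stmB^{(233|21)}$, $\stmB^{(332|12)}$ the pair is again isotropic/isotropic but with the reversed ordering dictated by $O_1$, so Theorem \ref{my 3dL result} yields $\tdl$. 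The pair $(e_2,e_3)$ underlying $\stmC^{(3|2)}$ is isotropic odd / anisotropic odd, giving the subdiagram \ddtwoB{\cct}{\ccb} and hence $\tdy$ by Theorem \ref{my 3dY result}. For $\stmC^{(3|21)}$ and $\stmC^{(3|12)}$ the pair is even / anisotropic odd, giving \ddtwoB{\cc}{\ccb} and thus $\tdz$ by the definition (\ref{3dZ mat el}). Finally, for $\stmC^{(23|1)}$ and $\stmC^{(32|1)}$ the pair is isotropic odd / even, giving \ddtwoB{\cct}{\cc} and hence $\tdx(q^{-1})$ by Theorem \ref{my 3dX result}, the $q\to q^{-1}$ arising from the flipped orientation as in (\ref{my3dY proof9}).

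The algebra homomorphisms needed to transport each rank-$2$ identification into $\UtBp[3|4]$ are set up exactly as in the proof of Lemma \ref{tB trans mat r3 c1 lemma}: one checks that the images of the rank-$2$ generators satisfy the defining Serre and additional relations of the target, which amounts to verifying the appropriate instances of (\ref{tB qroot serre 1})--(\ref{tB qroot serre 15 chi}) in Propositions \ref{tB qroot rels3}--\ref{tB qroot rels4 chi} for the parity assignment of \ddthreeB{\cct}{\cct}{\ccb}. As in the previous lemmas, one then pushes the normalization factor $q^{1/2}+q^{-1/2}$ attached to quantum root vectors whose label contains the letter $\kk$ twice through the transition matrix and absorbs it using the weight conservation law satisfied by each of the 3D operators.

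The principal obstacle is purely combinatorial bookkeeping: one must carefully track the $q\mapsto -q$ or $q\mapsto q^{-1}$ substitutions that accompany reversals of orientation (cf.\ the derivation of (\ref{my3dY proof10})) and the $q^{1/2}+q^{-1/2}$ normalizations coming from Definition \ref{qrv} for the quantum root vectors $e_{((12)3)3}$, $e_{(1(23))(23)}$, $e_{((3(32))1)2}$ and their counterparts under $\chi$. Once these scalar factors are matched uniformly across the twelve cases, no new algebraic input beyond Propositions \ref{tB qroot rels1}--\ref{tB qroot rels4 chi} and the rank-$2$ Theorems \ref{my 3dL result}, \ref{my 3dN result}, \ref{my 3dX result}, \ref{my 3dY result} together with the definition (\ref{3dZ mat el}) is required, and the lemma follows.
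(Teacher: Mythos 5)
Your overall strategy coincides with the paper's: the paper proves only Lemma \ref{tB trans mat r3 c1 lemma} in detail and disposes of the remaining cases, including this one, by the same mechanism --- for each pair of quantum root vectors one reads off the rank-2 Cartan data from parities and inner products, matches it against one of the rank-2 cases, and transports the corresponding rank-2 theorem through an algebra homomorphism, minding the $q^{1/2}+q^{-1/2}$ normalizations and the weight conservation law. Most of your individual identifications (the three $\tdn(q^{-1})$ entries, $\stmC^{(3|2)}=\tdy$, $\stmC^{(3|21)}=\stmC^{(3|12)}=\tdz$, $\stmC^{(23|1)}=\stmC^{(32|1)}=\tdx(q^{-1})$) are correct.

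There is, however, a concrete error in the block you assign to $\tdl$. For $\stmB^{(2|1233)}$, $\stmB^{(2|3321)}$, $\stmB^{(233|21)}$ and $\stmB^{(332|12)}$ you assert that the relevant pair of roots is isotropic/isotropic and then invoke Theorem \ref{my 3dL result}. These two claims are incompatible: Theorem \ref{my 3dL result} governs \ddtwoA{\cc}{\cct}, i.e.\ one even and one isotropic root, whereas an isotropic/isotropic pair would fall under Theorem \ref{my 3dN result} and force $\tdn(q^{\pm1})$, contradicting (\ref{tB trans mat r3 c8}) and derailing the resulting 3D reflection equation. The correct count for \ddthreeB{\cct}{\cct}{\ccb} (with $\alpha_1=\dd_1-\ee_2$, $\alpha_2=\ee_2-\dd_3$, $\alpha_3=\dd_3$) is that $\alpha_2$ and $\alpha_2+2\alpha_3=\ee_2+\dd_3$ are indeed isotropic odd, but their partners $\alpha_1+\alpha_2+2\alpha_3=\dd_1+\dd_3$ and $\alpha_1+\alpha_2=\dd_1-\dd_3$ have even parity and nonzero norm, so they lie in $\prer$; each of the four pairs is therefore (isotropic, even) with the isotropic generator leftmost in the PBW word, which is exactly the \ddtwoA{\cc}{\cct} configuration, whence $\tdl$, and the restricted symmetrized Cartan matrix agrees with that of case (II) of Section \ref{sec 42} with no overall sign flip, so no $q\to q^{-1}$ is needed. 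A smaller inaccuracy: the $q\to q^{-1}$ appearing in $\tdx(q^{-1})$ and $\tdn(q^{-1})$ is caused by the restricted bilinear form being the negative of the one used in the rank-2 sections (e.g.\ $(\alpha_1,\alpha_2+\alpha_3)=+1$ here versus $-1$ in the rank-2 case \ddtwoB{\cct}{\cc}), not by the $q\to-q$ mechanism of (\ref{my3dY proof9}), which relates the 3D X and 3D Y.
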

The phase factors given by (\ref{te phase}) and (\ref{tre phase}) are now $\phaseII=\phaseIIb=0$ and $\phaseI=\phaseIII=\phaseIb=\phaseIIIb=1$.
Then, (\ref{tre general}) is specialized as follows:
\begin{align}
\begin{split}
&
\sum
(-1)^{o_1i_9+x_5+x_7+y_4o_7+y_5+x_1i_5+y_2x_7}
\\
&\spaceDb
\times
\tdl_{y_4,y_5,y_6}^{o_4,o_5,o_6}
\tdn(q^{-1})_{x_4,y_8,y_9}^{y_4,o_8,o_9}
\tdx(q^{-1})_{x_3,x_5,x_7,x_9}^{o_3,y_5,o_7,y_9}
\tdn(q^{-1})_{y_2,x_6,i_9}^{o_2,y_6,x_9}
\tdl_{x_2,i_5,x_8}^{y_2,x_5,y_8}
\tdz_{x_1,i_6,i_7,i_8}^{o_1,x_6,x_7,x_8}
\tdy_{i_1,i_2,i_3,i_4}^{x_1,x_2,x_3,x_4}
\\
&=
\sum
(-1)^{i_1o_9+x_7+y_5+x_4i_7+x_5+x_1o_5+x_2x_7}
\\
&\spaceDb
\times
\tdy_{x_1,y_2,x_3,y_4}^{o_1,o_2,o_3,o_4}
\tdz_{i_1,y_6,x_7,y_8}^{x_1,o_6,o_7,o_8}
\tdl_{x_2,y_5,x_8}^{y_2,o_5,y_8}
\tdn(q^{-1})_{i_2,x_6,y_9}^{x_2,y_6,o_9}
\tdx(q^{-1})_{i_3,x_5,i_7,x_9}^{x_3,y_5,x_7,y_9}
\tdn(q^{-1})_{x_4,i_8,i_9}^{y_4,x_8,x_9}
\tdl_{i_4,i_5,i_6}^{x_4,x_5,x_6}
,
\end{split}
\label{my tre c8}
\end{align}
where $o_k,i_k,x_k,y_k\in\{0,1\}{\ }(k=2,4,5,9)$ and the other indices are defined on $\mathbb{Z}_{\geq 0}$.
We then get the following result:
\begin{corollary}\label{tB r3 mythm8}
As the identity of the transition matrices of the quantum superalgebra associated with \ddthreeB{\cct}{\cct}{\ccb}, we have the 3D reflection equation up to sign factors given by (\ref{my tre c8}).
\end{corollary}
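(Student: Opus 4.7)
The plan is to follow the template that established Corollaries \ref{tB r3 mythm1} through \ref{tB r3 mythm7}: the proof reduces to (a) identifying the rank-2 transition matrices $\stmB^{(x)}$ and $\stmC^{(y)}$ via Lemma \ref{tB trans mat r3 c8 lemma}, (b) evaluating the phase factors in (\ref{te phase}) and (\ref{tre phase}) for the specific Cartan data, and (c) substituting into the general identity (\ref{tre general}) of Theorem \ref{tre general thm}. Since Theorem \ref{tre general thm} is already established, no new higher-order relation needs to be proved at this stage; everything is a matter of specialization.

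First I would verify that the parities of the simple roots for \ddthreeB{\cct}{\cct}{\ccb}, namely $\alpha_1 = \dd_1-\ee_2$, $\alpha_2 = \ee_2-\dd_3$, $\alpha_3 = \dd_3$, all equal $1$, and then use (\ref{te phase}) and (\ref{tre phase}) to compute $\phaseI = p(\alpha_1)p(\alpha_3) = 1$, $\phaseII = p(\alpha_1+\alpha_2)p(\alpha_2+\alpha_3) = 0$, $\phaseIII = p(\alpha_2)p(\alpha_1+\alpha_2+\alpha_3) = 1$, and similarly $\phaseIb = \phaseIIIb = 1$, $\phaseIIb = 0$. This matches exactly the phase assignment used to derive (\ref{my tre c8}).

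Next I would prove Lemma \ref{tB trans mat r3 c8 lemma} by going through the twelve rank-2 equations (\ref{tB trans mat r3 eq1})--(\ref{tB trans mat r3 eq12}) one at a time, in the style of Lemma \ref{tB trans mat r3 c1 lemma}. For each equation, I would use the higher-order relations collected in Propositions \ref{tB qroot rels1}--\ref{tB qroot rels4 chi} to check which of the standard rank-2 relation-sets the relevant pair of quantum root vectors obey (\ddtwoA{\cc}{\cct}, \ddtwoA{\cct}{\cct}, \ddtwoB{\cc}{\cct}, \ddtwoB{\cct}{\ccb}, \ddtwoB{\cc}{\ccb}, etc.) and invoke the rank-2 results (Theorems \ref{my 3dL result}, \ref{my 3dN result}, \ref{my 3dX result}, \ref{my 3dY result}, and the recurrence characterization of $\tdz$ in Appendix \ref{app 3dZ}) to identify $\stmB^{(x)}$ and $\stmC^{(y)}$ with $\tdl$, $\tdn(q^{-1})$, $\tdx(q^{-1})$, $\tdy$, or $\tdz$ as listed in (\ref{tB trans mat r3 c8}). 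Care must be taken when a quantum root vector's symbol contains the letter ``$3$'' (i.e.\ the rank index) twice, since Definition \ref{qrv}(ii) prescribes a normalization by $q^{1/2}+q^{-1/2}$ which influences whether the resulting rank-2 transition matrix matches the ``$\ccb$''-type operator $\tdy$ or $\tdz$ versus the ``$\cct$''-type $\tdx$, and also whether $q$ or $q^{-1}$ is the correct spectral assignment (compare the $q \to -q$ analysis in the proof of Theorem \ref{my 3dY result}).

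Finally I would substitute the identifications from Lemma \ref{tB trans mat r3 c8 lemma} and the phase factor values into (\ref{tre general}), specify which indices take values in $\{0,1\}$ versus $\mathbb{Z}_{\geq 0}$ based on the isotropic/anisotropic/even classification of the positive roots, and read off (\ref{my tre c8}). The main obstacle is the bookkeeping in step two: each of the twelve $\stmB^{(x)}, \stmC^{(y)}$ must be matched with the correct $q$-specialization, and in particular correctly distinguishing between the anisotropic ``\ccb''-type rank-2 subalgebra (which produces $\tdy$ or $\tdz$ depending on whether the second simple root is isotropic or even) and the isotropic ``\cct''-type subalgebra (which produces $\tdx$) requires tracking the parity and norm of the compound roots $\alpha_i + \alpha_j$ and $\alpha_j + 2\alpha_3$ through each embedding of a rank-2 subdiagram into the rank-3 structure. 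Once this identification is pinned down correctly, the remainder of the argument is a direct specialization.
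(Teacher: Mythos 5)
Your proposal follows exactly the paper's route: Lemma \ref{tB trans mat r3 c8 lemma} is established case-by-case from the higher-order relations of Section \ref{tB qroot rels subsec} together with the rank-2 theorems, the phase factors are evaluated to $\phaseI=\phaseIII=\phaseIb=\phaseIIIb=1$, $\phaseII=\phaseIIb=0$ (matching your computation), and Theorem \ref{tre general thm} is then specialized to yield (\ref{my tre c8}). The only quibble is a slip in your prose: whether a \ddtwoB{\cctn}{\ccb}-type subsystem yields $\tdy$ or $\tdz$ is governed by the parity of the \emph{first} (long) simple root of that rank-2 subsystem, not the second, but this does not affect the validity of your strategy.
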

\section{Crystal limit}\label{sec 6}
\subsection{Crystal limit of transition matrices of rank 2}\label{sec 61}
In this section, we consider some transition matrices obtained in Section \ref{sec 4} and \ref{sec 5} at $q=0$, which is known as the \textit{crystal limit}\cite{Kas91}.
First, we note that the crystal limit of transition matrices for non-super cases reproduces so-called transition maps of Lusztig's parametrizations of the canonical basis\cite{Lus90,BZ01}.
For type A and B, we set the crystal limits of the 3D R and 3D J by
\begin{align}
\tdrC_{i,j,k}^{a,b,c}=\lim_{q\to 0}\tdr(q)_{i,j,k}^{a,b,c}
,\quad
\tdjC_{i,j,k,l}^{a,b,c,d}=\lim_{q\to 0}\tdj(q)_{i,j,k,l}^{a,b,c,d}
.
\end{align}
Then, these elements are explicitly given as follows\cite{BZ01}:
\begin{align}
&\tdrC_{i,j,k}^{a,b,c}
=
\delta_{a,i+j-\min(i,k)}
\delta_{b,\min(i,k)}
\delta_{c,j+k-\min(i,k)}
,
\label{3dR crys limit}
\\
&\tdjC_{i,j,k,l}^{a,b,c,d}
=
\delta_{a,i+2j+k-x_1}
\delta_{b,x_1-x_2}
\delta_{c,2x_2-x_1}
\delta_{d,j+k+l-x_2}
,
\label{3dJ crys limit}
\end{align}
where $x_1=\min(i+2\min(j,l),k+2l)$, $x_2=\min(i+\min(j,l),k+l)$.
(\ref{3dR crys limit}) and (\ref{3dJ crys limit}) follow from the fact that diagonal elements of transition matrices from PBW bases to the canonical basis is $1$ and off-diagonal elements are in $q\mathbb{Z}[q]$.
They define the non-trivial bijections on $(\mathbb{Z}_{\geq 0})^{3}$ and $(\mathbb{Z}_{\geq 0})^{4}$, respectively.
There also exists the crystal limit of the tetrahedron equation (\ref{BS06 te}) and the 3D reflection equation themselves, so they gives the \textit{combinatorial} solutions to them.
See also related results given in \cite{KO12}.
\par
Here, we present a super analog of these results.
Let us begin with the case of type A of rank 2.
We set the crystal limits of the 3D L, M and N by
\begin{align}
\tdlC_{i,j,k}^{a,b,c}
&=\lim_{q\to 0}\tdl(q)_{i,j,k}^{a,b,c}
\label{3dL crys setting}
,\\
\tdmC_{i,j,k}^{a,b,c}
&=\lim_{q\to 0}\tdm(q)_{i,j,k}^{a,b,c}
\label{3dM crys setting}
,\\
\tdnC_{i,j,k}^{a,b,c}
&=\lim_{q\to 0}\left(\frac{[b]_{q}!}{[j]_{q}!}\tdn(q)_{i,j,k}^{a,b,c}\right)
\label{3dN crys setting}
.
\end{align}
We note that the normalization change in (\ref{3dN crys setting}) corresponds to use \textit{unnormalized} PBW bases for \ddtwoA{\cct}{\cct}.
This is consistent with earlier observations given in \cite[Section 5.3]{Cla16}.
Then, we have the following results by direct calculations:
\begin{proposition}\label{3dL crys prop}
The crystal limit of the 3D L defines a non-trivial bijection on $\{0,1\}^{2}\times \mathbb{Z}_{\geq 0}$.
The elements are given by
\begin{align}
\tdlC_{0,0,k}^{0,0,c}=\tdlC_{1,1,k}^{1,1,c}=\delta_{k,c}
,\quad
\tdlC_{0,1,k}^{1,0,c}=\delta_{k+1,c}
,\quad
\tdlC_{1,0,0}^{1,0,0}=1
,\quad
\tdlC_{1,0,k}^{0,1,c}=\delta_{k-1,c}
,
\label{3dL mat el crys}
\end{align}
where $\tdlC_{i,j,k}^{a,b,c}=0$ other than (\ref{3dL mat el crys}).
\end{proposition}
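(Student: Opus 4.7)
The plan is to prove this proposition by direct computation from the definition (\ref{3dL mat el}), since the only $q$-dependence in $\tdl(q)$ sits in a handful of explicit monomials. First I would go through each of the five nonzero families in (\ref{3dL mat el}) and extract the $q\to 0$ limit. The elements $\tdl(q)_{0,0,k}^{0,0,c}=\tdl(q)_{1,1,k}^{1,1,c}=\delta_{k,c}$ and $\tdl(q)_{0,1,k}^{1,0,c}=\delta_{k+1,c}$ carry no $q$-dependence and survive unchanged. The factor $-q^{k+1}$ in $\tdl(q)_{0,1,k}^{0,1,c}$ always kills the term since $k+1\geq 1$, so this family vanishes in the limit. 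The factor $q^k$ in $\tdl(q)_{1,0,k}^{1,0,c}$ kills all entries except $k=c=0$, which gives the single isolated value $\tdlC_{1,0,0}^{1,0,0}=1$. Finally, the factor $(1-q^{2k})$ in $\tdl(q)_{1,0,k}^{0,1,c}$ tends to $1$ for $k\geq 1$, leaving $\delta_{k-1,c}$; for $k=0$ the Kronecker factor $\delta_{k-1,c}$ already vanishes, so no issue arises. These computations directly produce the list (\ref{3dL mat el crys}).

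Next I would verify that the resulting matrix $\tdlC$ defines a bijection on $\{0,1\}^{2}\times\mathbb{Z}_{\geq 0}$ by reading off the map $(i,j,k)\mapsto(a,b,c)$ from the surviving entries:
\begin{align*}
(0,0,k)\mapsto (0,0,k)
,\quad
(1,1,k)\mapsto (1,1,k)
,\quad
(0,1,k)\mapsto (1,0,k+1)
,
\end{align*}
\begin{align*}
(1,0,0)\mapsto (1,0,0)
,\quad
(1,0,k)\mapsto (0,1,k-1)
\quad (k\geq 1)
.
\end{align*}
Injectivity and surjectivity are checked by partitioning the target set by the $(a,b)$ value and noting that each slice is hit by exactly one source: the slices $(0,0,\cdot)$ and $(1,1,\cdot)$ are fixed pointwise, the slice $(0,1,\cdot)$ is the image of $(1,0,\cdot)$ with $k\geq 1$, and the slice $(1,0,\cdot)$ decomposes as $(1,0,0)$ (fixed) together with the shifted image of $(0,1,\cdot)$. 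Non-triviality is visible from the shift on the $(0,1,\cdot)\leftrightarrow(1,0,\cdot)$ block. Alternatively, bijectivity follows abstractly from Corollary \ref{my inv cor}, since the identity $\tdl^{-1}=\tdl$ is preserved under the crystal limit provided each entry has a well-defined $q\to 0$ value, which we have just verified.

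Since all computations are elementary limits and the bijection is read off from a finite list of cases, there is no real obstacle; the only mild care needed is bookkeeping the edge case $k=0$ in the $\tdl(q)_{1,0,k}^{0,1,c}$ and $\tdl(q)_{1,0,k}^{1,0,c}$ families so that the surviving entry $\tdlC_{1,0,0}^{1,0,0}=1$ is neither double-counted nor overlooked.
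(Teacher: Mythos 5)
Your proposal is correct and takes essentially the same route as the paper, which simply asserts the proposition "by direct calculations" from the explicit matrix elements (\ref{3dL mat el}); your case-by-case evaluation of the $q\to 0$ limits and the explicit check of bijectivity fill in exactly those calculations, including the correct handling of the $k=0$ edge cases.
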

\begin{corollary}
The crystal limit of the 3D M defines a non-trivial bijection on $\mathbb{Z}_{\geq 0}\times\{0,1\}^{2}$.
The elements are given by $\tdmC_{i,j,k}^{a,b,c}=\tdlC_{k,j,i}^{c,b,a}$.
\end{corollary}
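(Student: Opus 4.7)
The plan is to derive this corollary as an immediate consequence of Proposition \ref{3dL crys prop} together with the defining relation $\tdm(q)_{i,j,k}^{a,b,c}=\tdl(q)_{k,j,i}^{c,b,a}$ recorded in (\ref{3dM mat el}). Since the right-hand side of that relation involves no renormalizing factor in $q$ (in contrast to the normalization change imposed on $\tdn$ in (\ref{3dN crys setting})), the limit $q\to 0$ passes through the index-reversal, and one obtains
\begin{align}
\tdmC_{i,j,k}^{a,b,c}
=\lim_{q\to 0}\tdm(q)_{i,j,k}^{a,b,c}
=\lim_{q\to 0}\tdl(q)_{k,j,i}^{c,b,a}
=\tdlC_{k,j,i}^{c,b,a},
\end{align}
which is the stated formula.

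To establish the bijection claim, I would observe that the index-reversal $(i,j,k)\mapsto(k,j,i)$ is a bijection between the sets $\mathbb{Z}_{\geq 0}\times\{0,1\}^{2}$ and $\{0,1\}^{2}\times\mathbb{Z}_{\geq 0}$, and likewise for the output indices $(a,b,c)\mapsto(c,b,a)$. Composing these two bijections with the bijection on $\{0,1\}^{2}\times\mathbb{Z}_{\geq 0}$ furnished by Proposition \ref{3dL crys prop} yields a bijection on $\mathbb{Z}_{\geq 0}\times\{0,1\}^{2}$, namely that induced by $\tdmC$. Non-triviality is inherited from the non-triviality of the $\tdlC$ bijection, as witnessed for example by the shift entry $\tdmC_{c-1,1,0}^{k,0,1}=\tdlC_{0,1,c-1}^{1,0,k}=\delta_{c,k+1}$.

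There is no serious obstacle here; the statement is essentially a rewriting of the previous proposition under the symmetry already built into the definition of $\tdm$. The only point requiring any care is to confirm that no $q$-dependent normalization is hidden in (\ref{3dM mat el}), so that the crystal limit genuinely commutes with the transposition of indices, which one can read off directly from the formula.
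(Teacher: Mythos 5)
Your argument is exactly the intended one: the corollary follows immediately from the definition $\tdm(q)_{i,j,k}^{a,b,c}=\tdl(q)_{k,j,i}^{c,b,a}$ in (\ref{3dM mat el}), which carries no $q$-dependent normalization, so the crystal limit commutes with the index reversal and the bijectivity is inherited from Proposition \ref{3dL crys prop}. (One trivial slip in your illustrative witness: from $\tdlC_{0,1,k}^{1,0,c}=\delta_{k+1,c}$ one gets $\tdmC_{c-1,1,0}^{k,0,1}=\tdlC_{0,1,c-1}^{1,0,k}=\delta_{c,k}$ rather than $\delta_{c,k+1}$, which still witnesses non-triviality.)
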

\begin{proposition}\label{3dN crys prop}
The crystal limit of the 3D N defines a non-trivial bijection on $\{0,1\}\times \mathbb{Z}_{\geq 0}\times \{0,1\}$.
The elements are given by
\begin{align}
\tdnC_{0,j,1}^{0,b,1}=\tdnC_{1,j,0}^{1,b,0}=\delta_{j,b}
,\quad
\tdnC_{0,j,0}^{1,b,1}=\delta_{j-1,b}
,\quad
\tdnC_{0,0,0}^{0,0,0}=1
,\quad
\tdnC_{1,j,1}^{0,b,0}=\delta_{j+1,b}
,
\label{3dN mat el crys}
\end{align}
where $\tdnC_{i,j,k}^{a,b,c}=0$ other than (\ref{3dN mat el crys}).
\end{proposition}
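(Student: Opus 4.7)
The plan is to verify Proposition \ref{3dN crys prop} by direct computation from the explicit formulas (\ref{3dN mat el}) for the 3D N, carefully tracking the renormalization factor $[b]_q!/[j]_q!$ introduced in (\ref{3dN crys setting}). The only nontrivial analytic input is the asymptotic behavior of the $q$-integer: from $[k]_q = (q^k-q^{-k})/(q-q^{-1})$ one derives the identity $[k]_q\, q^{k-1} = (1-q^{2k})/(1-q^2)$, whose $q\to 0$ limit is $1$. Equivalently, the leading behavior of $[k]_q$ as $q\to 0$ is $q^{-(k-1)}$.

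Then I would work through each of the five families of nonzero matrix elements in (\ref{3dN mat el}) case by case. For $\tdn(q)_{0,j,1}^{0,b,1} = \tdn(q)_{1,j,0}^{1,b,0} = \delta_{j,b}$, the Kronecker delta forces $j=b$, so the normalization factor equals $1$ and each element remains $\delta_{j,b}$ in the limit. For $\tdn(q)_{0,j,0}^{0,b,0} = \delta_{j,b}q^j$ and $\tdn(q)_{1,j,1}^{1,b,1} = -\delta_{j,b}q^{j+1}$, the normalization is again trivial and the explicit $q$-powers force the limit to be $\delta_{j,0}\delta_{b,0}$ and $0$, respectively, producing only the fixed point value $\tdnC_{0,0,0}^{0,0,0} = 1$. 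For $\tdn(q)_{1,j,1}^{0,b,0} = \delta_{j+1,b}\,q^j(1-q^2)$, the normalization factor is $[j+1]_q$, and the identity above gives $[j+1]_q\, q^j(1-q^2) = 1-q^{2(j+1)}$, whose limit is $\delta_{j+1,b}$. For $\tdn(q)_{0,j,0}^{1,b,1} = \delta_{j-1,b}[j]_q$, the normalization factor $1/[j]_q$ cancels the $[j]_q$ exactly, producing $\delta_{j-1,b}$.

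Assembling these five computations yields precisely the matrix elements listed in the proposition. To confirm the bijection claim, I would organize the resulting map on $\{0,1\}\times \mathbb{Z}_{\geq 0}\times \{0,1\}$ into orbits: every element $(0,j,1)$ and $(1,j,0)$ is fixed; $(0,0,0)$ is a fixed point; and for $j\geq 1$ the matrix elements $\tdnC_{0,j,0}^{1,j-1,1}=1$ and $\tdnC_{1,j-1,1}^{0,j,0}=1$ show $(0,j,0) \leftrightarrow (1,j-1,1)$ as a two-cycle. Each element of the codomain appears exactly once as an image, so the map is bijective. No serious obstacle is expected; the argument is bookkeeping, with the single subtle point being the cancellation of the diverging powers $q^{-(k-1)}$ in $[k]_q$ against the compensating factors in the normalization and in the original matrix elements.
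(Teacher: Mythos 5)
Your proposal is correct and matches the paper's own treatment: the paper proves Proposition \ref{3dN crys prop} simply ``by direct calculations'' from (\ref{3dN mat el}) and the normalization (\ref{3dN crys setting}), which is exactly the case-by-case limit computation you carry out (and your key identity $[k]_q\,q^{k-1}=(1-q^{2k})/(1-q^2)$ and the orbit decomposition into fixed points and the two-cycles $(0,j,0)\leftrightarrow(1,j-1,1)$ are both accurate).
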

\par
Next, we proceed to the case of type B of rank 2.
We set the crystal limits of the 3D X and 3D Y by
\begin{align}
\tdxC_{i,j,k,l}^{a,b,c,d}
&=\lim_{q\to 0}\left(
\frac{[c]_{q^{-1/2},(-1)}!}{[k]_{q^{-1/2},(-1)}!}\tdx(q)_{i,j,k,l}^{a,b,c,d}\right)
\label{3dX crys setting}
,\\
\tdyC_{i,j,k,l}^{a,b,c,d}
&=\lim_{q\to 0}\left(
\frac{[c]_{q^{-1/2}}!}{[k]_{q^{-1/2}}!}
\tdy(q)_{i,j,k,l}^{a,b,c,d}\right)
\label{3dY crys setting}
.
\end{align}
We note that the normalization changes in (\ref{3dX crys setting}) and (\ref{3dY crys setting}) correspond to use \textit{partially unnormalized} PBW bases for \ddtwoB{\cct}{\cc} and \ddtwoB{\cct}{\ccb}, respectively.
Then, we have the following results by direct calculations:
\begin{proposition}\label{3dX crys prop}
The crystal limit of the 3D X defines a non-trivial bijection on $\mathbb{Z}_{\geq 0}\times \{0,1\}\times\mathbb{Z}_{\geq 0}\times\{0,1\}$.
The matrix elements are given by
\begin{align}
\tdxC_{i,0,k,0}^{a,0,c,0}
=&\delta_{i,a}\delta_{k,c}
\ronri(a\geq 1{\ }\mathrm{or}{\ }a=c=0)
,
\label{3dX mat el 1 crys}
\\
\tdxC_{i,0,k,1}^{a,0,c,0}
=&\delta_{i,a+1}\delta_{k,c-1}
\ronri(a=0)
,\\
\tdxC_{i,0,k,1}^{a,1,c,0}
=&\delta_{i,a+2}\delta_{k,c}
,\\
\tdxC_{i,0,k,0}^{a,0,c,1}
=&\delta_{i,a-1}\delta_{k,c+1}
\ronri(a=1)
,\\
\tdxC_{i,1,k,0}^{a,0,c,1}
=&\delta_{i,a-2}\delta_{k,c}
,\\
\tdxC_{i,0,k,1}^{a,0,c,1}
=&\delta_{i,a}\delta_{k,c}
\ronri(a=0)
,\\
\tdxC_{i,1,k,1}^{a,1,c,1}
=&\delta_{i,a}\delta_{k,c}
,
\label{3dX mat el 8 crys}
\end{align}
where $\tdxC_{i,j,k,l}^{a,b,c,d}=0$ otherwise, and we used $\ronri$ defined by $\ronri(\mathrm{true})=1$ and $\ronri(\mathrm{false})=0$.
\end{proposition}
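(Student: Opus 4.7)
The plan is to prove the proposition by direct case analysis on the sixteen matrix element formulas (\ref{3dX mat el 1})–(\ref{3dX mat el 8}), substituting each into the renormalized limit (\ref{3dX crys setting}) and reading off the leading behaviour as $q\to 0$. The Kronecker deltas $\delta_{i,a+?}\delta_{k,c+?}$ are already $q$-independent and fix the allowed shifts between source and target indices, so the only $q$-dependent input consists of polynomial prefactors in $q^{\pm 1/2}$ multiplied by the normalization ratio $[c]_{q^{-1/2},(-1)}!/[k]_{q^{-1/2},(-1)}!$.

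The first step would be to pin down the leading order of the normalization. From the definition $[m]_{q^{-1/2},(-1)} = ((-q^{-1/2})^m - q^{m/2})/(-q^{-1/2}-q^{1/2})$ one reads off the leading term $(-1)^{m+1}q^{-(m-1)/2}$, so
\begin{align*}
\frac{[c]_{q^{-1/2},(-1)}!}{[k]_{q^{-1/2},(-1)}!}
= \varepsilon(c,k)\, q^{(k(k-1)-c(c-1))/4} + O(\mathrm{higher})
\end{align*}
for an explicit sign $\varepsilon(c,k)\in\{\pm 1\}$. Combining this with the explicit $q$-powers in each case of (\ref{3dX mat el 1})–(\ref{3dX mat el 8}) (using the delta-constraints to eliminate $i,k$ in favour of $a,c$), the total exponent of $q$ turns out to be either zero or strictly positive, yielding a finite non-zero limit in the former case and zero in the latter. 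This mechanism is what produces the vanishing/non-vanishing dichotomy recorded in (\ref{3dX mat el 1 crys})–(\ref{3dX mat el 8 crys}).

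The second step would be to handle the remaining case-splitting, which originates from the factors $(1-q^{a+1})$ and $(1-(-q)^{c+1})$ appearing in several prefactors: these vanish identically when the relevant exponent equals $0$ rather than being of positive order in $q$. For example, the prefactor $1-(1-(-q)^c)q^a$ of $\tdx(q)_{i,0,k,0}^{a,0,c,0}$ equals $1+O(q)$ when $a\geq 1$, but reduces to $(-q)^c$ when $a=0$, which survives the limit only if $c=0$. A parallel analysis accounts for the $\theta$-condition in each of the listed cases. All other cases of $(j,b,l,d)$ not appearing in the proposition are either excluded by the weight conservation built into the delta-factors or else acquire a strictly positive leading power of $q$ after multiplication by the normalization ratio.

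The main obstacle is purely organizational rather than conceptual: one has to track sixteen separate prefactors, together with half-integer powers of $q$ and signs from the specialization $q\to q^{-1/2}$ with parameter $\pi=-1$, and verify that the various cancellations between the prefactor and the normalization factor produce exactly the formulas stated. As an independent consistency check, one can note that by Corollary \ref{my inv cor 3dX} the 3D X satisfies $\tdx^{-1}=\tdx$, and provided the relevant entries have a well-defined crystal limit, this identity descends to $\tdxC{}^{-1}=\tdxC$, which forces $\tdxC$ to be an involution on $\mathbb{Z}_{\geq 0}\times\{0,1\}\times\mathbb{Z}_{\geq 0}\times\{0,1\}$. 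This can be cross-verified against the bijection read off from (\ref{3dX mat el 1 crys})–(\ref{3dX mat el 8 crys}), confirming both the bijectivity claim and the internal consistency of the case analysis.
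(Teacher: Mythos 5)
Your proposal is correct and is essentially the paper's own argument: the paper justifies Proposition \ref{3dX crys prop} simply ``by direct calculations,'' which amounts to exactly the case-by-case substitution of (\ref{3dX mat el 1})--(\ref{3dX mat el 8}) into the renormalized limit (\ref{3dX crys setting}) and tracking of the leading $q$-powers that you describe. The involution cross-check via Corollary \ref{my inv cor 3dX} is a sensible extra sanity test but not needed.
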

\begin{proposition}\label{3dY crys prop}
The crystal limit of the 3D Y defines a non-trivial bijection on $\mathbb{Z}_{\geq 0}\times \{0,1\}\times\mathbb{Z}_{\geq 0}\times\{0,1\}$.
The matrix elements are given by
\begin{align}
\tdyC_{i,0,k,0}^{a,0,c,0}
=&\delta_{i,a}\delta_{k,c}
\ronri(a\geq 1{\ }\mathrm{or}{\ }a=c=0)
,
\label{3dY mat el 1 crys}
\\
\tdyC_{i,0,k,1}^{a,0,c,0}
=&\delta_{i,a+1}\delta_{k,c-1}
\ronri(a=0)
,\\
\tdyC_{i,0,k,1}^{a,1,c,0}
=&\delta_{i,a+2}\delta_{k,c}(-1)^{a}
,\\
\tdyC_{i,0,k,0}^{a,0,c,1}
=&\delta_{i,a-1}\delta_{k,c+1}
\ronri(a=1)
,\\
\tdyC_{i,1,k,0}^{a,0,c,1}
=&\delta_{i,a-2}\delta_{k,c}(-1)^{a}
,\\
\tdyC_{i,0,k,1}^{a,0,c,1}
=&\delta_{i,a}\delta_{k,c}
\ronri(a=0)
,\\
\tdyC_{i,1,k,1}^{a,1,c,1}
=&\delta_{i,a}\delta_{k,c}
,
\label{3dY mat el 8 crys}
\end{align}
where $\tdyC_{i,j,k,l}^{a,b,c,d}=0$ otherwise, and we used $\ronri$ defined by $\ronri(\mathrm{true})=1$ and $\ronri(\mathrm{false})=0$.
\end{proposition}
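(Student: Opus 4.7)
My plan is to mirror the direct-calculation strategy from Propositions~\ref{3dL crys prop} and~\ref{3dN crys prop}, working through the sixteen families of matrix elements~(\ref{3dY mat el 1})--(\ref{3dY mat el 8}) of the 3D Y. For each family the calculation reduces to extracting the leading $q$-behaviour after multiplication by the normalization $[c]_{q^{-1/2}}!/[k]_{q^{-1/2}}!$. First I would record the asymptotic $[m]_{q^{-1/2}} \sim q^{-(m-1)/2}$ as $q \to 0$, so $[m]_{q^{-1/2}}! \sim q^{-m(m-1)/4}$; the normalization therefore contributes a leading $q$-weight of $(k(k-1) - c(c-1))/4$. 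Combining this with the explicit $q$-prefactor of each formula gives the net leading power: families with strictly positive net power vanish in the limit, and families with zero net power contribute the surviving constants listed in~(\ref{3dY mat el 1 crys})--(\ref{3dY mat el 8 crys}). A case-by-case check should confirm that exactly seven families survive, with the signs $(-1)^{a}$ in~(\ref{3dY mat el 3 crys}) and~(\ref{3dY mat el 5 crys}) arising from $(-q)^{a}$-type prefactors in the original formulae.

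To obtain bijectivity I would invoke Corollary~\ref{my inv cor 3dY}, which states $\tdy(q)^{2} = \mathrm{id}$. The normalization factor $[c]_{q^{-1/2}}!/[k]_{q^{-1/2}}!$ amounts to expressing $\tdy(q)$ in the rescaled basis $\tilde{v}_{i,j,k,l} := v_{i,j,k,l}/[k]_{q^{-1/2}}!$, a diagonal change of basis that preserves the involution property; consequently the renormalized operator is still an involution, and passing to $q \to 0$ yields $\tdyC \circ \tdyC = \mathrm{id}$. Since the surviving families each assign to a source $(i,j,k,l)$ exactly one target with nonzero coefficient, this shows $\tdyC$ is a signed permutation squaring to the identity, hence a bijection. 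Non-triviality follows from any surviving off-diagonal entry, for example $\tdyC_{1,0,k,1}^{0,0,k+1,0} = 1$ for all $k \geq 0$.

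The principal obstacle I anticipate is the bookkeeping: accurately tracking the half-integer $q$-powers through each of the sixteen cases, and checking that families which appear to survive by $q$-power count are not secretly ruled out by Kronecker deltas forcing an inadmissible index (e.g.\ $i = a - 1 = -1$ when $a = 0$, which kills several candidate survivors). As a sanity check, or an alternative route, one may apply the identity~(\ref{my3dY proof10}) relating $\tdy(q)$ to $\tdx(-q)$ together with Proposition~\ref{3dX crys prop}; this requires carefully reconciling the two conventions $[m]_{q^{-1/2}}!$ and $[m]_{q^{-1/2},(-1)}!$ via the sign identity $(-1)^{m(m-1)/4}$ appearing in~(\ref{my3dY proof11})--(\ref{my3dY proof12}), but has the advantage of making the sign structure of the surviving entries particularly transparent and of cross-validating the direct computation.
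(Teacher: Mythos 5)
Your proposal is correct and follows essentially the same route as the paper, which establishes Proposition \ref{3dY crys prop} purely ``by direct calculations'': one multiplies each of the sixteen families in (\ref{3dY mat el 1})--(\ref{3dY mat el 8}) by the normalization $[c]_{q^{-1/2}}!/[k]_{q^{-1/2}}!\sim q^{(k(k-1)-c(c-1))/4}$, checks the net leading $q$-power (taking care, as you note, that families like $\tdy_{i,1,k,0}^{a,0,c,0}$ with net power $q^{a/2}$ are killed at $a=0$ by the constraint $i=a-1\geq 0$), and reads off the seven survivors together with the signs $(-1)^{a}$. Your explicit bijectivity argument --- observing that the normalization is a diagonal conjugation, so Corollary \ref{my inv cor 3dY} survives the renormalization and the limit, giving $\tdyC^{2}=\mathrm{id}$ --- is a small but welcome addition that the paper leaves implicit.
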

Here, the bijections obtained by the crystal limit of the 3D X and 3D Y are actually same, but have different sign factors.
Actually, $\tdyC_{i,j,k,l}^{a,b,c,d}$ takes not only $0,1$ but also $-1$.
This is a new aspect not arising for non-super cases.
\par
We can also observe the crystal limit of the 3D Z although we do not have an explicit formula for it.
We set the crystal limit of the 3D Z by
\begin{align}
\tdzC_{i,j,k,l}^{a,b,c,d}
&=\lim_{q\to 0}\tdz(q)_{i,j,k.l}^{a,b,c,d}
\label{3dZ crys setting}
.
\end{align}
Supported by computer experiments, we conjecture the crystal limit of the 3D Z also defines a non-trivial bijection on $(\mathbb{Z}_{\geq 0})^{4}$.
For example, the list of all the non-zero elements of $\tdz_{0,1,1,2}^{a,b,c,d}$ is given in Example \ref{3dZ example}.
The crystal limit of them gives $\tdzC_{0,1,1,2}^{a,b,c,d}=\delta_{a,1}\delta_{b,1}\delta_{c,0}\delta_{d,3}$.
We note that the negative factor also appears for the 3D Z.
For example, the following is the list of all the non-zero elements of $\tdz_{2,0,1,0}^{a,b,c,d}$:
\begin{align}
&\tdz_{2,0,1,0}^{2,0,1,0}
=-(1-q^2+q^3)
,\\
&\tdz_{2,0,1,0}^{1,1,0,0}
=q(1-q)^2
,\\
&\tdz_{2,0,1,0}^{3,0,0,1}
=q
.
\end{align}
The crystal limit of them gives $\tdzC_{2,0,1,0}^{a,b,c,d}=-\delta_{a,2}\delta_{b,0}\delta_{c,1}\delta_{d,0}$.
\par
The above results give a super analog of Lusztig's parametrizations of the canonical basis.
To the best of my knowledge, there is no such a study considering transition maps for super cases at present.
We note that there are some earlier results attempting to construct the canonical basis from PBW bases for super cases recently\cite{Cla16,CHW16} although they mainly deal with the distinguished Dynkin diagrams and the canonical basis not depending on reduced expressions has obtained only for {\!\!}\ddgenArightmost{\!\!}.
As we considered for the 3D N, it seems our results are consistent with them.
On the other hand, further investigations should be done for negative factors, which is also remarked in \cite[Remark 7.10]{CHW16}.
\subsection{Crystal limit of transition matrices of rank 3}\label{sec 62}
Here, we remark for the case of rank 3.
In contrast to the case of rank 2, we can \textit{not} take the crystal limit for all cases.
For example, we obtained the tetrahedron equation for \ddthreeA{\cc}{\cct}{\cct} given by (\ref{my te c3}):
\begin{align}
\tdn(q^{-1})_{123}\tdn(q^{-1})_{145}\tdr_{246}\tdl_{356}=\tdl_{356}\tdr_{246}\tdn(q^{-1})_{145}\tdn(q^{-1})_{123}
.
\end{align}
This equation is not consistent with the crystal limits introduced in the previous section because of their \textit{staggered} $q$-dependence of the components.
\par
Among the Dynkin diagrams dealt with Section \ref{sec 43} and \ref{sec 53}, we can take the limit for the following cases:
\begin{gather}
\begin{gathered}
\ddthreeA{\cc}{\cc}{\cct}
,\quad
\ddthreeA{\cct}{\cc}{\cct}
,\\
\ddthreeB{\cct}{\cc}{\cc}
,\quad
\ddthreeB{\cc}{\cc}{\ccb}
,\quad
\ddthreeB{\cct}{\cc}{\ccb}
,
\end{gathered}
\end{gather}
where we omit the non-super cases.
Actually, by setting the normalization factors appropriately, we obtain solutions to the tetrahedron and 3D reflection equations which are compositions of bijections.
Such solutions are often called \textit{set-theoretical} or \textit{combinatorial}, here we use the latter term.
\begin{corollary}
We have the combinatorial solution to the tetrahedron equation given by
\begin{align}
\tdlC_{123}\tdlC_{145}\tdlC_{246}\tdrC_{356}
=\tdrC_{356}\tdlC_{246}\tdlC_{145}\tdlC_{123}
,
\label{BS06 te crys}
\end{align}
and the combinatorial solution up to sign factors given by
\begin{align}
\begin{split}
&\sum
(-1)^{i_1o_6+x_4+x_2x_5}\tdlC_{x_1,x_2,x_3}^{o_1,o_2,o_3}\tdnC_{i_1,x_4,x_5}^{x_1,o_4,o_5}\tdnC_{i_2,i_4,x_6}^{x_2,x_4,o_6}\tdmC_{i_3,i_5,i_6}^{x_3,x_5,x_6}
\\
&=
\sum
(-1)^{o_1i_6+x_4+x_2x_5}\tdmC_{x_3,x_5,x_6}^{o_3,o_5,o_6}\tdnC_{x_2,x_4,i_6}^{o_2,o_4,x_6}\tdnC_{x_1,i_4,i_5}^{o_1,x_4,x_5}\tdlC_{i_1,i_2,i_3}^{x_1,x_2,x_3}
.
\end{split}
\label{my te c5 crys}
\end{align}
where summations are taken on $x_k{\ }(k=1,\cdots,6)$.
\end{corollary}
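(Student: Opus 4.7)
The plan is to obtain both identities as $q \to 0$ limits of the corresponding tetrahedron equations already established in Section \ref{sec 43}, namely (\ref{BS06 te}) and (\ref{my te c5}), combined with the componentwise crystal-limit descriptions in (\ref{3dR crys limit}) and in Propositions \ref{3dL crys prop}, \ref{3dN crys prop} and its corollary for $\tdmC$. The nontrivial point is that $\tdlC, \tdrC, \tdmC$ are obtained from $\tdl(q), \tdr(q), \tdm(q)$ by a bare $q \to 0$ limit, whereas $\tdnC$ requires the prefactor $[b]_q!/[j]_q!$ built into (\ref{3dN crys setting}); consequently, the first equation is a direct specialization, while the second one needs a normalization calculation.

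For (\ref{BS06 te crys}) the argument I would write is as follows. Fix all external indices. On each side of (\ref{BS06 te}) the implicit sums over internal indices are finite, because the weight-conservation built into the Kronecker deltas of $\tdr$ and $\tdl$ (see (\ref{3dR mat el}) and (\ref{3dL mat el})) forces only finitely many nonzero terms. Every matrix element of $\tdr(q)$ is a polynomial in $q^2$, and every matrix element of $\tdl(q)$ is one of $\delta q^{m}$ or $\delta(1-q^{2k})$, so each summand is a Laurent polynomial in $q$ with a finite, well-defined $q \to 0$ limit. Since a limit of a finite sum equals the sum of limits, passing to $q \to 0$ term by term in (\ref{BS06 te}) yields (\ref{BS06 te crys}).

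For (\ref{my te c5 crys}) I would start from (\ref{my te c5}) and insert the normalizations dictated by (\ref{3dN crys setting}). The critical combinatorial observation, which I expect to be the main technical step, is that the two factors of $\tdn$ on each side share a contracted middle index $x_4$, so the normalizations telescope:
\begin{align*}
\frac{[x_4]_q!}{[o_4]_q!}\cdot\frac{[i_4]_q!}{[x_4]_q!}
=\frac{[i_4]_q!}{[o_4]_q!},
\end{align*}
and the resulting overall scalar $[i_4]_q!/[o_4]_q!$ depends only on external indices, hence factors out of the sum over $x_1,\ldots,x_6$. The same scalar appears on the right-hand side by the same telescoping, so multiplying (\ref{my te c5}) by $[i_4]_q!/[o_4]_q!$ converts each occurrence of $\tdn(q)$ into an expression whose $q \to 0$ limit is $\tdnC$; the $\tdl$ and $\tdm$ factors contribute $\tdlC$ and $\tdmC$ directly. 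The sign factors depend only on binary exponents and are preserved. After this rewriting, the same finite-sum limit argument as in the previous paragraph produces (\ref{my te c5 crys}).

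The main obstacle will be the careful bookkeeping of renormalizations: verifying that the $\tdn$-contractions pair up on both sides in the way described, so that a single external scalar $[i_4]_q!/[o_4]_q!$ cleanly factors out. Once this is checked directly from the arrangement of indices in (\ref{my te c5}), the remainder of the proof is routine, since every renormalized matrix element of $\tdn$ (as appearing in (\ref{3dN mat el crys})) is manifestly a polynomial in $q$ with a finite $q \to 0$ limit, and the sums in question are finite by the weight conservation inherited from the defining relations of the transition matrices.
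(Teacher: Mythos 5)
Your proposal is correct and matches the approach the paper intends (the paper gives no explicit proof of this corollary, only the remark that the normalization factors can be set appropriately): both identities are obtained as termwise $q\to 0$ limits of the finite sums in (\ref{BS06 te}) and (\ref{my te c5}), with the telescoping of the $\tdn$-normalizations over the contracted index $x_4$ being the one point requiring care. Note only that the factor attached to $\tdn_{i,j,k}^{a,b,c}$ by (\ref{3dN crys setting}) is $[b]_{q}!/[j]_{q}!$ (upper over lower), so the telescoped scalar is $[o_4]_{q}!/[i_4]_{q}!$ rather than its reciprocal; this does not affect the argument, since either way the scalar depends only on external indices and is common to both sides.
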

\begin{corollary}
We have the combinatorial solution to the 3D reflection equation given by
\begin{align}
\tdmC_{456}
\tdmC_{489}
\tdxC_{3579}
\tdmC_{269}
\tdmC_{258}
\tdxC_{1678}
\tdjC_{1234}
=
\tdjC_{1234}
\tdxC_{1678}
\tdmC_{258}
\tdmC_{269}
\tdxC_{3579}
\tdmC_{489}
\tdmC_{456}
.
\label{my tre crys}
\end{align}
\end{corollary}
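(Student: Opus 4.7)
The plan is to derive (\ref{my tre crys}) as the crystal ($q\to 0$) limit of the quantum 3D reflection equation (\ref{my tre can crys}) established in Corollary \ref{tB r3 mythm2}, after renormalizing the $\tdx$ factors in a manner compatible with the definition (\ref{3dX crys setting}) of $\tdxC$.

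First I would introduce $\tilde{\tdx}(q)_{i,j,k,l}^{a,b,c,d}:=([c]_{q^{-1/2},(-1)}!/[k]_{q^{-1/2},(-1)}!)\,\tdx(q)_{i,j,k,l}^{a,b,c,d}$, which converges to $\tdxC$ as $q\to 0$ by Proposition \ref{3dX crys prop}; by (\ref{3dM crys setting}) and (\ref{3dJ crys limit}) the factors $\tdm$ and $\tdj$ require no renormalization. Next I would check that (\ref{my tre can crys}) remains valid with every $\tdx$ replaced by $\tilde{\tdx}$. The decisive observation is that among the seven operators appearing in (\ref{my tre can crys}), only $\tdx_{1678}$ and $\tdx_{3579}$ touch tensor component $7$, and both attach their ``$k$''-slot to it---no other slot of any operator hosts the normalization $[c]!/[k]!$. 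Fixing a matrix element with component-$7$ initial/final states $s_7^{\mathrm{in}},s_7^{\mathrm{out}}$, the sum over the intermediate component-$7$ state $m$ telescopes the factors as $\bigl([s_7^{\mathrm{out}}]!/[m]!\bigr)\cdot\bigl([m]!/[s_7^{\mathrm{in}}]!\bigr)=[s_7^{\mathrm{out}}]!/[s_7^{\mathrm{in}}]!$, a scalar independent of $m$. The same scalar arises on the right-hand side from the opposite composition order, so the $\tilde{\tdx}$-version of (\ref{my tre can crys}) is just a common rescaling of (\ref{my tre can crys}) and therefore holds. Finally, each matrix element is a finite sum by weight conservation, and every summand admits a finite limit as $q\to 0$ by Propositions \ref{3dL crys prop}, \ref{3dX crys prop} and the closed form (\ref{3dJ crys limit}); passing to the limit term by term delivers (\ref{my tre crys}).

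The hard part is the telescoping verification: confirming that the normalization factor $[c]!/[k]!$ produces a single common scalar on both sides rather than side-dependent expressions. Here this is guaranteed by the structural fact that space $7$ is the unique tensor slot affected by the renormalization and is visited exactly twice (once per $\tdx$), so the intermediate $[m]!$ cancels by a purely formal manipulation and no nontrivial $q$-binomial identity enters. Were the $\tdx$-normalization to spill onto a slot shared with some $\tdm$ or $\tdj$, a genuinely new identity would be needed, which is precisely why the combinatorial reduction is restricted to the quantum reflection equation (\ref{my tre can crys}) arising from the case (II) rather than the cases of Section \ref{sec 53} whose $\tdy$ or $\tdz$ normalizations do not telescope so cleanly. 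All remaining steps---finiteness of each matrix sum and existence of term-by-term limits---are routine.
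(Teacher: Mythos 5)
Your proposal is correct and follows the route the paper intends: the identity is the $q\to 0$ limit of the quantum 3D reflection equation of Corollary \ref{tB r3 mythm2}, taken with the renormalizations of Section \ref{sec 61}, which is legitimate because the sums are finite and each renormalized factor has a finite limit. Your explicit telescoping check---that the $[c]!/[k]!$ normalization of $\tdx$ lives only on tensor component $7$, visited exactly twice with matching slots on both sides, so both sides acquire the same overall factor $[o_7]!/[i_7]!$---is exactly the (unstated) reason the limit is consistent here, in contrast to the cases with \emph{staggered} $q$-dependence excluded in Section \ref{sec 62}.
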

\begin{conjecture}
We have the combinatorial solution to the 3D reflection equation given by
\begin{align}
\tdrC_{456}
\tdrC_{489}
\tdzC_{3579}
\tdrC_{269}
\tdrC_{258}
\tdzC_{1678}
\tdzC_{1234}
=
\tdzC_{1234}
\tdzC_{1678}
\tdrC_{258}
\tdrC_{269}
\tdzC_{3579}
\tdrC_{489}
\tdrC_{456}
,
\label{my tre crys conj1}
\end{align}
and the combinatorial solution up to sign factors given by
\begin{align}
\begin{split}
&
\sum
(-1)^{o_1i_9+x_5+x_7+x_3y_8+x_5+x_1i_5+o_3y_6}
\\
&\spaceDb
\times
\tdmC_{y_4,y_5,y_6}^{o_4,o_5,o_6}
\tdmC_{x_4,y_8,y_9}^{y_4,o_8,o_9}
\tdyC_{x_3,x_5,x_7,x_9}^{o_3,y_5,o_7,y_9}
\tdmC_{y_2,x_6,i_9}^{o_2,y_6,x_9}
\tdmC_{x_2,i_5,x_8}^{y_2,x_5,y_8}
\tdyC_{x_1,i_6,i_7,i_8}^{o_1,x_6,x_7,x_8}
\tdzC_{i_1,i_2,i_3,i_4}^{x_1,x_2,x_3,x_4}
\\
&=
\sum
(-1)^{i_1o_9+x_7+y_5+x_3x_8+y_5+x_1o_5+i_3x_6}
\\
&\spaceDb
\times
\tdzC_{x_1,y_2,x_3,y_4}^{o_1,o_2,o_3,o_4}
\tdyC_{i_1,y_6,x_7,y_8}^{x_1,o_6,o_7,o_8}
\tdmC_{x_2,y_5,x_8}^{y_2,o_5,y_8}
\tdmC_{i_2,x_6,y_9}^{x_2,y_6,o_9}
\tdyC_{i_3,x_5,i_7,x_9}^{x_3,y_5,x_7,y_9}
\tdmC_{x_4,i_8,i_9}^{y_4,x_8,x_9}
\tdmC_{i_4,i_5,i_6}^{x_4,x_5,x_6}
.
\end{split}
\label{my tre crys conj2}
\end{align}
where summations are taken on $x_k{\ }(k=1,\cdots,9)$ and $y_k{\ }(k=2,4,5,6,8,9)$.
\end{conjecture}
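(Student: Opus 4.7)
The plan is to pass to the crystal limit $q\to 0$ in the two already-established quantum 3D reflection equations \eqref{my tre c4} (Corollary \ref{tB r3 mythm4}) and \eqref{my tre c7} (Corollary \ref{tB r3 mythm7}). Since the sign factors appearing in \eqref{my tre c7} are $q$-independent, they descend to the crystal limit verbatim, so the only substantive task is to justify exchanging the limit with the finite sums of products of matrix elements. All operators other than $\tdz$ that appear in the equations, namely $\tdr,\tdm,\tdj,\tdx,\tdy$, have crystal limits already identified in Section \ref{sec 61} as bijections with values in $\{-1,0,1\}$, so granting the analogous statement for $\tdz$ the two conjectural equations follow immediately by term-by-term passage to the limit.

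The entire content of the conjecture thus reduces to the following normalization claim for the 3D Z: with the convention of \eqref{3dZ crys setting}, each matrix element $\tdzC_{i,j,k,l}^{a,b,c,d}$ is well-defined, takes values in $\{-1,0,1\}$, and the resulting map defines a bijection on $(\mathbb{Z}_{\geq 0})^4$. My approach would be to run an induction based on the recurrence equations derived in Appendix \ref{app 3dZ}, which express each $\tdz(q)_{i,j,k,l}^{a,b,c,d}$ in terms of matrix elements of the 3D R (whose $q$-behavior is fully controlled) together with smaller-index elements of the 3D Z. The inductive hypothesis would assert that each entry has the form $\pm q^{n(i,j,k,l;a,b,c,d)}P(q)$ with $P\in\mathbb{Z}[q]$ and $P(0)\in\{0,1\}$, and moreover that for each source tuple there is a unique target tuple with $P(0)=1$. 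Example \ref{3dZ example} and analogous computer experiments, together with the fact that $\tdxC$ and $\tdyC$ have identical support, let one guess that the underlying bijection of $\tdzC$ coincides with that of $\tdjC$ up to signs; this guess can then be fed into the induction as the candidate exponent function $n$ and verified recurrence by recurrence.

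The main obstacle will be the bookkeeping of signs and exponents without a closed formula for $\tdz$. Non-trivial signs already appear in small cases, such as $\tdz_{2,0,1,0}^{2,0,1,0}=-(1-q^2+q^3)$ in Example \ref{3dZ example}, reflecting the presence of the anisotropic odd root $\alpha_2+\alpha_3\in\prar$ associated with \ddtwoB{\cct}{\ccb} and its rank-3 analogs, and tracking how these signs propagate through the recurrences of Appendix \ref{app 3dZ} is the delicate part. A more conceptual route would exhibit $\tdz$ as a transition map between super PBW parametrizations of a canonical basis; such a framework has so far been developed only for distinguished Dynkin diagrams \cite{Cla16,CHW16} and its extension to diagrams containing an anisotropic odd root is the principal missing ingredient. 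Either path, combined with the already-proven quantum equations, would upgrade the conjecture to a theorem.
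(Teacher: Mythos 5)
The statement you are addressing is stated in the paper as a \emph{conjecture}, and the paper offers no proof: the reduction you describe --- pass to the crystal limit in the already-established quantum identities (\ref{my tre c4}) and (\ref{my tre c7}) term by term, using the known limits of $\tdr$, $\tdm$, $\tdy$ from Section \ref{sec 61} --- is exactly the paper's implicit reasoning, and the paper is explicit that the only missing ingredient is the behavior of the 3D Z at $q=0$, which it supports solely by computer experiments. Your proposal therefore correctly locates the crux but does not close it: the induction you outline over the recurrences of Appendix \ref{app 3dZ} is never carried out, the candidate exponent function $n$ is only ``guessed'' from the supports of $\tdxC$ and $\tdyC$, and neither the existence of $\lim_{q\to 0}\tdz(q)_{i,j,k,l}^{a,b,c,d}$ (i.e.\ the absence of negative powers of $q$) nor the bijectivity of the limiting map on $(\mathbb{Z}_{\geq 0})^4$ is established. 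These are precisely the open points that make the statement a conjecture rather than a corollary.

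Two concrete problems with the plan as written. First, your inductive hypothesis that each entry has the form $\pm q^{n}P(q)$ with $P\in\mathbb{Z}[q]$ is already contradicted by the data: Example \ref{3dZ example} contains entries such as $\tdz_{0,1,1,2}^{2,0,1,3}=-q^5(1+q^3)/(1-q)$ and $\tdz_{0,1,1,2}^{3,0,0,4}=q^2(1+q)/(1-q)$, which are rational functions with a pole at $q=1$; the hypothesis must at least be weakened to rational functions regular at $q=0$, and controlling the denominators through the recurrences of Appendix \ref{app 3dZ} (which themselves contain factors $1/(1-q^2)$ and half-integer powers such as $q^{-2a+3/2}$) is a nontrivial piece of bookkeeping you have not supplied. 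Second, passing to the limit in (\ref{my tre c7}) is not purely term-by-term: the crystal limit of $\tdy$ is defined in (\ref{3dY crys setting}) only after inserting a normalization ratio of $q$-factorials, and one must verify that these ratios telescope consistently across the seven-fold product --- this is the same compatibility issue that, as Section \ref{sec 62} points out, obstructs the crystal limit for other Dynkin diagrams, e.g.\ in (\ref{my te c3}). Neither obstacle looks fatal, and your alternative suggestion of realizing $\tdz$ as a transition map for a super canonical basis is a reasonable conceptual route, but as it stands the proposal is a program rather than a proof, and it leaves open exactly what the paper leaves open.
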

\section{Concluding remarks}
In this paper, we studied transition matrices of PBW bases of the nilpotent subalgebra of quantum superalgebras of type A and B in the case of rank 2 and 3, and obtained explicit formulae for many cases.
By considering the case of rank 3, we obtained the ``mother'' solution to the tetrahedron equation (\ref{te general}) and 3D reflection equation (\ref{tre general}) as identities of transition matrices attributed to compositions of transition matrices of rank 2 in two ways.
Then, we reduced them to special cases and obtained several solutions to the tetrahedron and 3D reflection equation.
The parts of them are summarized as the following table:
\vspace{-0.5em}
\begin{table}[H]
\centering
\begin{tabular}{c}
\begin{minipage}{0.45\hsize}
\centering
\caption*{Type A:}
\vspace{-0.5em}
\begin{tabular}{c|c}\hline
Dynkin diagram & Transition matrix \\\hline
\ddtwoA{\cc}{\cc} & $\tdr${\ }(\ref{3dR mat el}) \\\hline
\ddtwoA{\cc}{\cct} & $\tdl${\ }(\ref{3dL mat el}) \\\hline
\ddtwoA{\cct}{\cc} & $\tdm${\ }(\ref{3dM mat el}) \\\hline
\ddtwoA{\cct}{\cct} & $\tdn${\ }(\ref{3dN mat el}) \\\hline
\end{tabular}
\end{minipage}
\begin{minipage}{0.45\hsize}
\centering
\caption*{Type B:}
\vspace{-0.5em}
\begin{tabular}{c|c}\hline
Dynkin diagram & Transition matrix \\\hline
\ddtwoB{\cc}{\cc} & $\tdj${\ }(\ref{3dJ mat el}) \\\hline
\ddtwoB{\cct}{\cc} & $\tdx${\ }(\ref{3dX mat el 1}) $\sim$ (\ref{3dX mat el 8}) \\\hline
\ddtwoB{\cct}{\ccb} & $\tdy${\ }(\ref{3dY mat el 1}) $\sim$ (\ref{3dY mat el 8}) \\\hline
\ddtwoB{\cc}{\ccb} & $\tdz${\ }(\ref{3dZ mat el}) \\\hline
\end{tabular}
\end{minipage}
\end{tabular}
\end{table}
\vspace{-0.5em}
\noindent
It is important that our proofs exploit higher-order relations for quantum superalgebras given in Section \ref{tA qroot rels subsec} and \ref{tB qroot rels subsec}, and did not use any result for quantum coordinate rings.
\par
For the case of \ddtwoA{\cc}{\cct}, our approach exactly reproduced matrix elements of the 3D L (\ref{3dL mat el}), and the associated tetrahedron equation (\ref{BS06 te}) for \ddthreeA{\cc}{\cc}{\cct}.
That is one of the remarkable result of this paper.
It was known that the 3D L also satisfies another tetrahedron equation (\ref{LLMM te}).
We obtained the similar equation (\ref{my te c4}) for \ddthreeA{\cc}{\cct}{\cc}, but it involves nonlocal sign factors, so we can not write it as a matrix equation at present, in the sense explained in Remark \ref{LLMM remark}.
It is open whether we can attribute (\ref{LLMM te}) to (\ref{my te c4}).
If we can, it is also interesting whether the procedure ``eliminating nonlocal sign factors'' can be applied to other tetrahedron equations (\ref{my te c5}) and (\ref{my te c6}), and the 3D reflection equations (\ref{my tre c6}), (\ref{my tre c7}) and (\ref{my tre c8}) for the case of type B.
\par
We further obtained the new solution to the tetrahedron equation by considering \ddtwoA{\cct}{\cct}, which we call the 3D N (\ref{3dN mat el}).
The associated equation (\ref{my te c3}) was obtained by considering \ddthreeA{\cc}{\cct}{\cct}.
We found matrix elements of the 3D N are related to ones of the 3D L as (\ref{3dl 3dn rel}).
It is interesting whether, in general, transition matrices associated with a pair of Cartan data mapped to each other via odd reflections are attributed to each other or not.
\par
Our framework also can be applied to the case of type B.
We derived the new solutions to the 3D reflection equation (\ref{my tre can crys}), (\ref{my tre c3}), (\ref{my tre c4}) and (\ref{my tre c5}).
As parts of the equations, we introduced the 3D X, Y and Z and obtained explicit formulae for the 3D X and 3D Y given by (\ref{3dX mat el 1}) $\sim$ (\ref{3dX mat el 8}) and (\ref{3dY mat el 1}) $\sim$ (\ref{3dY mat el 8}), respectively.
Although we did not for the 3D Z, we can calculate any matrix elements by recurrence equations like Example \ref{3dZ example}.
We hope to report an explicit formula for the 3D Z in a future publication.
\par
We also discussed the crystal limit of transition matrices for super cases, and obtained a super analog of transition maps of Lusztig's parametrizations of the canonical basis.
We hope that our result gives a new insight into recent studies for a super analog of the canonical basis\cite{Cla16,CHW16}.
It is also an interesting question whether a geometric lifting\cite{BZ01} for them exists or not.
\par
Our result stimulates to challenge whether the Kuniba-Okado-Yamada theorem can be generalized to the case of quantum superalgebras, or not.
This question is quite interesting but needs hard works because there is no theory about irreducible representations of quantum super coordinate rings like Soibelman's theory for the non-super case\cite{Soi92}.
To construct a super version of Soibelman's theory, it seems that the Weyl groupoid plays important roles\cite{HY08}.
More concretely, as we mentioned in Section \ref{sec 12}, \cite{Ser09} seems to give a related result.
We hope to report this issue in a future publication.
\addtocontents{toc}{\setcounter{tocdepth}{1}}
\appendix
\section{Proof of Theorem \ref{my 3dX result}}\label{app 3dX}
By considering (\ref{tm rel}), it is sufficient to prove $\tmt_{i,j,k,l}^{a,b,c,d}=\tdx_{l,k,j,i}^{d,c,b,a}$.
Our proof is motivated by the proof of Proposition 2.{\ }of \cite{KOY13}.
If we obtain an explicit formula for $\appGa_{i,j,k,l}^{a,b,c,d}$ defined by
\begin{align}
e_{1}^{a}e_{21}^{b}e_{2(21)}^{c}e_{2}^{d}
&=\sum_{i,k\in\mathbb{Z}_{\geq 0},j,l\in\{0,1\}}
\appGa_{i,j,k,l}^{a,b,c,d}
e_{2}^{l}e_{(12)2}^{k}e_{12}^{j}e_{1}^{i}
,
\label{tB trans mat def2 app}
\end{align}
we can obtain $\tmt_{i,j,k,l}^{a,b,c,d}$ by
\begin{align}
\tmt_{i,j,k,l}^{a,b,c,d}
=
\frac{[j]_{q^{-1/2},(-1)}![l]_{q^{1/2}}!}{[b]_{q^{-1/2},(-1)}![d]_{q^{1/2}}!}
\appGa_{i,j,k,l}^{a,b,c,d}
.
\end{align}
Then, in order to prove $\tmt_{i,j,k,l}^{a,b,c,d}=\tdx_{l,k,j,i}^{d,c,b,a}$, it is sufficient to show $\appGa_{i,j,k,l}^{a,b,c,d}$ is given by
\begin{alignat}{3}
\appGa_{0,j,0,l}^{0,b,0,d}
&=\delta_{j,b}\delta_{l,d}(1-(1-(-q)^{b})q^d)
,&\quad
\appGa_{0,j,1,l}^{0,b,0,d}
&=-\delta_{j,b-1}\delta_{l,d-1}\frac{q^{1/2}(1-(-q)^{b})(1-q^{d})}{1-q}
\label{3dX mat el 1 app}
,\\
\appGa_{1,j,0,l}^{0,b,0,d}
&=\delta_{j,b-1}\delta_{l,d+1}q^{d}(1-q)(1-(-q)^{b})
,&\quad
\appGa_{1,j,1,l}^{0,b,0,d}
&=-\delta_{j,b-2}\delta_{l,d}q^{d+1/2}(1-(-q)^{b-1})(1-(-q)^{b})
,\\
\appGa_{0,j,0,l}^{0,b,1,d}
&=-\delta_{j,b+1}\delta_{l,d+1}q^{d+1/2}(1-q)
,&\quad
\appGa_{0,j,1,l}^{0,b,1,d}
&=\delta_{j,b}\delta_{l,d}q^{d+1}
,\\
\appGa_{1,j,0,l}^{0,b,1,d}
&=\delta_{j,b}\delta_{l,d+2}q^{d+1/2}(1-q)^2
,&\quad
\appGa_{1,j,1,l}^{0,b,1,d}
&=-\delta_{j,b-1}\delta_{l,d+1}q^{d+1}(1-q)(1-(-q)^{b})
,\\
\appGa_{0,j,0,l}^{1,b,0,d}
&=\delta_{j,b+1}\delta_{l,d-1}\frac{1-q^{d}}{1-q}
,&\quad
\appGa_{0,j,1,l}^{1,b,0,d}
&=\delta_{j,b}\delta_{l,d-2}\frac{q^{-d+3/2}(1-q^{d-1})(1-q^{d})}{(1-q)^2}
,\\
\appGa_{1,j,0,l}^{1,b,0,d}
&=\delta_{j,b}\delta_{l,d}q^d
,&\quad
\appGa_{1,j,1,l}^{1,b,0,d}
&=\delta_{j,b-1}\delta_{l,d-1}\frac{q^{1/2}(1-(-q)^{b})(1-q^{d})}{1-q}
,\\
\appGa_{0,j,0,l}^{1,b,1,d}
&=-\delta_{j,b+2}\delta_{l,d}q^{d+1/2}
,&\quad
\appGa_{0,j,1,l}^{1,b,1,d}
&=-\delta_{j,b+1}\delta_{l,d-1}\frac{q(1-q^{d})}{1-q}
,\\
\appGa_{1,j,0,l}^{1,b,1,d}
&=\delta_{j,b+1}\delta_{l,d+1}q^{d+1/2}(1-q)
,&\quad
\appGa_{1,j,1,l}^{1,b,1,d}
&=\delta_{j,b}\delta_{l,d}(1-(1-(-q)^{b+1})q^{d+1})
.
\label{3dX mat el 8 app}
\end{alignat}
\subsection{Recurrence equations}
Our strategy to derive (\ref{3dX mat el 1 app}) $\sim$ (\ref{3dX mat el 8 app}) is using recurrence equations for $\appGa$.
For simplicity, we write $b_1=e_{1},{\ }b_2=e_{21},{\ }b_3=e_{2(21)},{\ }b_4=e_{2}$ and $F_1^{i,j,k,l}=b_1^{i}b_2^{j}b_3^{k}b_4^{l}$, $F_2^{l,k,j,i}=\chi(F_1^{i,j,k,l})$.
Then, (\ref{tB trans mat def2 app}) with $q\to q^{-1}$ is represented by
\begin{align}
F_1^{a,b,c,d}
&=\sum_{i,k\in\mathbb{Z}_{\geq 0},j,l\in\{0,1\}}
\appGa_{i,j,k,l}^{a,b,c,d}
F_2^{l,k,j,i}
.
\label{tB trans mat def2 app simp}
\end{align}
The elements $b_i$ satisfy the following relations:
\begin{alignat}{3}
&b_2b_1=-qb_1b_2
,&\quad
&b_3b_1=-q^2b_1b_3-q^{1/2}b_2^2
,&\quad
&b_4b_1=qb_1b_4+b_2
,\\
&b_3b_2=-qb_2b_3
,&\quad
&b_4b_2=b_2b_4+(q^{1/2}+q^{-1/2})b_3
,&\quad
&b_4b_3=q^{-1}b_3b_4
.
\end{alignat}
We can easily prove the following relations for $n\in\mathbb{N}$ by induction.
\begin{align}
b_2b_1^n
&=(-q)^nb_1^nb_2
,\\
b_4b_1^n
&=q^nb_1^nb_4+\frac{1-(-1)^n}{2}q^{n-1}b_1^{n-1}b_2
,\\
b_4b_2^n
&=b_2^nb_4+q^{-1/2}(1-(-1)^nq^n)b_2^{n-1}b_3
,\\
b_4b_3^n
&=q^{-n}b_3^nb_4
,\\
b_4^nb_1
&=q^{n}b_1b_4^n+\frac{1-q^n}{1-q}b_2b_4^{n-1}+\frac{(1+q)(1-q^n)(1-q^{n-1})}{(1-q)(1-q^2)}q^{-n+3/2}b_3b_4^{n-2}
,\\
b_3^nb_1
&=(-1)^nq^{2n}b_1b_3^n-\frac{1-(-1)^n}{2}q^{2n-3/2}b_2^2b_3^{n-1}
,\\
b_2^nb_1
&=(-q)^{n}b_1b_2^{n}
,\\
b_3^nb_2
&=(-q)^{n}b_2b_3^n
.
\end{align}
Then, the left multiplication of $b_1,b_2,b_4$ on $F_1^{a,b,c,d}$ and $F_2^{l,k,j,i}$ are given by
\begin{align}
b_1F^{a,b,c,d}_1
=&F^{a+1,b,c,d}_1
,\\
b_2F^{a,b,c,d}_1
=&(-q)^{a}F^{a,b+1,c,d}_1
,\\
b_4F^{a,b,c,d}_1
=&q^{a-c}F^{a,b,c,d+1}_1
+q^{a-1/2}(1-(-1)^bq^b)F^{a,b-1,c+1,d}_1
+\frac{1-(-1)^{a}}{2}q^{a-1}F^{a-1,b+1,c,d}_1
,\\
\begin{split}
b_1F^{l,k,j,i}_2
=&(-1)^{j+k}q^{j+2k+l}F^{l,k,j,i+1}_2
-\frac{1-(-1)^{k}}{2}q^{2k+l-3/2}F^{l,k-1,j+2,i}_2
\\
&+\frac{1-q^{l}}{1-q}(-q)^{k}F^{l-1,k,j+1,i}_2
+\frac{(1+q)(1-q^l)(1-q^{l-1})}{(1-q)(1-q^2)}q^{-l+3/2}F^{l-2,k+1,j,i}_2
,
\end{split}
\\
\begin{split}
b_2F^{l,k,j,i}_2
=&(-1)^{j+k}q^{j+2k+l}(1-q^2)F^{l+1,k,j,i+1}_2
-\frac{1-(-1)^{k}}{2}q^{2k+l-3/2}(1-q^2)F^{l+1,k-1,j+2,i}_2
\\
&+(-q)^{k}(1-(1+q)q^l)F^{l,k,j+1,i}_2
-q^{1/2}\frac{(1+q)(1-q^l)}{1-q}F^{l-1,k+1,j,i}_2
,
\end{split}
\\
b_4F^{l,k,j,i}_2
=&F^{l+1,k,j,i}_2
,
\end{align}
where the left multiplication of $b_2$ on $F_2^{l,k,j,i}$ can be calculated only using the right multiplication of $b_1,b_4$ on $F_1^{i,j,k,l}$ via $b_2F_2^{l,k,j,i}=\chi(F_1^{i,j,k,l}(b_1b_4-qb_4b_1))$.
By considering the left multiplication of $b_1,b_2,b_4$ on (\ref{tB trans mat def2 app simp}), we obtain the following recurrence equations:
\begin{align}
\begin{split}
\appGa_{i,j,k,l}^{a+1,b,c,d}
=&(-1)^{j+k}q^{j+2k+l}\appGa_{i-1,j,k,l}^{a,b,c,d}
-\frac{1-(-1)^{k+1}}{2}q^{2k+l+1/2}\appGa_{i,j-2,k+1,l}^{a,b,c,d}
\\
&+\frac{1-q^{l+1}}{1-q}(-q)^{k}\appGa_{i,j-1,k,l+1}^{a,b,c,d}
+\frac{(1+q)(1-q^{l+2})(1-q^{l+1})}{(1-q)(1-q^2)}q^{-l-1/2}\appGa_{i,j,k-1,l+2}^{a,b,c,d}
,
\end{split}
\label{rel b1}
\\
\begin{split}
\appGa_{i,j,k,l}^{a,b+1,c,d}
=&
(-1)^{j+k+a}q^{j+2k+l-a-1}(1-q^2)\appGa_{i-1,j,k,l-1}^{a,b,c,d}
\\
&
-\frac{1-(-1)^{k+1}}{2}(-q)^{-a}q^{2k+l-1/2}(1-q^2)\appGa_{i,j-2,k+1,l-1}^{a,b,c,d}
\\
&
+(-q)^{k-a}(1-(1+q)q^l)\appGa_{i,j-1,k,l}^{a,b,c,d}
-(-q)^{-a}q^{1/2}\frac{(1+q)(1-q^{l+1})}{1-q}\appGa_{i,j,k-1,l+1}^{a,b,c,d}
,
\end{split}
\label{rel b2}
\\
\appGa_{i,j,k,l}^{a,b,c,d+1}
=&q^{c-a}\appGa_{i,j,k,l-1}^{a,b,c,d}
-q^{c-1/2}(1-(-1)^bq^b)\appGa_{i,j,k,l}^{a,b-1,c+1,d}
-\frac{1-(-1)^{a}}{2}q^{c-1}\appGa_{i,j,k,l}^{a-1,b+1,c,d}
.
\label{rel b4}
\end{align}
\subsection{1-parameter family}
We first construct the 1-parameter family for $\appGa_{i,j,k,l}^{a,b,c,d}$ by using (\ref{rel b2}), which has generic $j,b$ and takes as small as possible $l,d$ for each $i,k,a,c\in\{0,1\}$.
By the same discussion as (\ref{my 3dL result proof2}), we obtain the following weight conservation:
\begin{align}
\appGa_{i,j,k,l}^{a,b,c,d}=0
\quad
(i+j+k\neq a+b+c\quad \mathrm{or}\quad j+2k+l\neq b+2c+d)
.
\label{wc appGa}
\end{align}
\begin{enumerate}[(1)]
\item
For the case $(i,k,a,c)=(0,0,0,0)$, the non-trivial case for (\ref{rel b2}) is $j=b+1,l=d$ by (\ref{wc appGa}).
Then, if we set $d=0$, (\ref{rel b2}) gives
\begin{align}
\appGa_{0,b+1,0,0}^{0,b+1,0,0}
=-q\appGa_{0,b,0,0}^{0,b,0,0}
=(-q)^{b+1}\appGa_{0,0,0,0}^{0,0,0,0}
.
\end{align}
It is easy to verify $\appGa_{0,0,0,0}^{0,0,0,0}=1$, so we obtain
\begin{align}
\appGa_{0,b,0,0}^{0,b,0,0}
=(-q)^{b}
.
\end{align}
Later, we use the case $d=1$.
By setting $d=1$, (\ref{rel b2}) gives
\begin{align}
\appGa_{0,b+1,0,1}^{0,b+1,0,1}
=-q^{1/2}(1-q^2)\appGa_{0,b-1,1,0}^{0,b,0,1}
+(1-q-q^2)\appGa_{0,b,0,1}^{0,b,0,1}
.
\label{1para case1 rel later}
\end{align}
\item
For the case $(i,k,a,c)=(0,1,0,0)$, the non-trivial case for (\ref{rel b2}) is $j=b,l=d-1$ by (\ref{wc appGa}).
Then, if we set $d=1$, (\ref{rel b2}) gives
\begin{align}
\appGa_{0,b,1,0}^{0,b+1,0,1}
=&q^2\appGa_{0,b-1,1,0}^{0,b,0,1}
-q^{1/2}(1+q)\appGa_{0,b,0,1}^{0,b,0,1}
\\
\begin{split}
=&q^2\left[q^2\appGa_{0,b-2,1,0}^{0,b-1,0,1}
-q^{1/2}(1+q)\appGa_{0,b-1,0,1}^{0,b-1,0,1}
\right]
\\
&-q^{1/2}(1+q)\left[
-q^{1/2}(1-q^2)\appGa_{0,b-2,1,0}^{0,b-1,0,1}
+(1-q-q^2)\appGa_{0,b-1,0,1}^{0,b-1,0,1}
\right]
\end{split}
\\
=&q(1+q-q^2)\appGa_{0,b-2,1,0}^{0,b-1,0,1}
-q^{1/2}(1-q^2)\appGa_{0,b-1,0,1}^{0,b-1,0,1}
\\
=&q(1-(-q)^{b-1}-(-q)^{b})\appGa_{0,0,1,0}^{0,1,0,1}
-q^{1/2}(1-(-q)^b)\appGa_{0,1,0,1}^{0,1,0,1}
,
\end{align}
where we use (\ref{1para case1 rel later}).
It is easy to verify $\appGa_{0,0,1,0}^{0,1,0,1}=-q^{1/2}(1+q)$, $\appGa_{0,1,0,1}^{0,1,0,1}=1-q-q^2$, so we obtain
\begin{align}
\appGa_{0,b,1,0}^{0,b+1,0,1}
&=-q^{3/2}(1+q)(1-(-q)^{b-1}-(-q)^{b})-q^{1/2}(1-(-q)^b)(1-q-q^2)
\\
&=-q^{1/2}(1-(-q)^{b+1})
.
\end{align}
\end{enumerate}
Similarly, we can derive the following formulae for other $i,k,a,c\in\{0,1\}$:
\begin{alignat}{2}
\appGa_{1,b,0,1}^{0,b+1,0,0}
&=(1-q)(1-(-q)^{b+1})
,&\quad
\appGa_{1,b,1,0}^{0,b+2,0,0}
&=-q^{1/2}(1-(-q)^{b+1})(1-(-q)^{b+2})
,\\
\appGa_{0,b+1,0,1}^{0,b,1,0}
&=-q^{1/2}(1-q)
,&\quad
\appGa_{0,b,1,0}^{0,b,1,0}
&=q
,\\
\appGa_{1,b,0,2}^{0,b,1,0}
&=q^{1/2}(1-q)^2
,&\quad
\appGa_{1,b,1,1}^{0,b+1,1,0}
&=-q(1-q)(1-(-q)^{b+1})
,\\
\appGa_{0,b+1,0,0}^{1,b,0,1}
&=1
,&\quad
\appGa_{0,b,1,0}^{1,b,0,2}
&=q^{-1/2}(1+q)
,\\
\appGa_{1,b,0,0}^{1,b,0,0}
&=1
,&\quad
\appGa_{1,b,1,0}^{1,b+1,0,1}
&=q^{1/2}(1-(-q)^{b+1})
,\\
\appGa_{0,b+2,0,0}^{1,b,1,0}
&=-q^{1/2}
,&\quad
\appGa_{0,b+1,1,0}^{1,b,1,1}
&=-q
,\\
\appGa_{1,b+1,0,1}^{1,b,1,0}
&=q^{1/2}(1-q)
,&\quad
\appGa_{1,b,1,0}^{1,b,1,0}
&=1-q-(-q)^{b+2}
.
\end{alignat}
\subsection{2-parameter family}
Next, we lift the 1-parameter family to the 2-parameter family by using (\ref{rel b4}) and (\ref{rel b1}), which has generic $j,k,b,d$ for each $i,k,a,c\in\{0,1\}$.
\begin{enumerate}[(I)]
\item
For the case $(a,c)=(0,1)$, (\ref{rel b4}) gives
\begin{align}
\appGa_{i,j,k,l}^{0,b,1,d+1}
=q\appGa_{i,j,k,l-1}^{0,b,1,d}
=q^{\alpha+1}\appGa_{i,j,k,l-\alpha-1}^{0,b,1,d-\alpha}
,
\label{2para case1}
\end{align}
where $\alpha$ is specified below.
\begin{enumerate}[(i)]
\item
For the case $(i,k)=(0,0)$, the non-trivial case for (\ref{2para case1}) is $j=b+1,l=d+2$ by (\ref{wc appGa}).
In that case, (\ref{2para case1}) gives
\begin{align}
\appGa_{0,b+1,0,d+2}^{0,b,1,d+1}
&=q^{d+1}\appGa_{0,b+1,0,1}^{0,b,1,0}
,\\
\therefore
\appGa_{0,b+1,0,d+1}^{0,b,1,d}
&=-q^{d+1/2}(1-q)
.
\label{appGa result1}
\end{align}
\item
For the case $(i,k)=(0,1)$, the non-trivial case for (\ref{2para case1}) is $j=b,l=d+1$ by (\ref{wc appGa}).
In that case, (\ref{2para case1}) gives
\begin{align}
\appGa_{0,b,1,d+1}^{0,b,1,d+1}
&=q^{d+1}\appGa_{0,b,1,0}^{0,b,1,0}
,\\
\therefore
\appGa_{0,b,1,d}^{0,b,1,d}
&=q^{d+1}
.
\label{appGa result2}
\end{align}
\item
For the case $(i,k)=(1,0)$, the non-trivial case for (\ref{2para case1}) is $j=b,l=d+3$ by (\ref{wc appGa}).
In that case, (\ref{2para case1}) gives
\begin{align}
\appGa_{1,b,0,d+3}^{0,b,1,d+1}
&=q^{d+1}\appGa_{1,b,0,2}^{0,b,1,0}
,\\
\therefore
\appGa_{1,b,0,d+2}^{0,b,1,d}
&=q^{d+1/2}(1-q)^2
.
\label{appGa result3}
\end{align}
\item
For the case $(i,k)=(1,1)$, the non-trivial case for (\ref{2para case1}) is $j=b-1,l=d+2$ by (\ref{wc appGa}).
In that case, (\ref{2para case1}) gives
\begin{align}
\appGa_{1,b-1,1,d+2}^{0,b,1,d+1}
&=q^{d+1}\appGa_{1,b-1,1,1}^{0,b,1,0}
,\\
\therefore
\appGa_{1,b,1,d+1}^{0,b+1,1,d}
&=-q^{d+1}(1-q)(1-(-q)^{b+1})
.
\label{appGa result4}
\end{align}
\end{enumerate}
\item\label{ac00}
For the case $(a,c)=(0,0)$, (\ref{rel b4}) gives
\begin{align}
\appGa_{i,j,k,l}^{0,b,0,d+1}
&=\appGa_{i,j,k,l-\alpha-1}^{0,b,0,d-\alpha}
-q^{-1/2}(1-(-1)^bq^b)\frac{1-q^{\alpha+1}}{1-q}\appGa_{i,j,k,l-\alpha}^{0,b-1,1,d-\alpha}
,
\label{2para case2}
\end{align}
where $\alpha$ is specified below.
\begin{enumerate}[(i)]
\item
For the case $(i,k)=(0,0)$, the non-trivial case for (\ref{2para case2}) is $j=b,l=d+1$ by (\ref{wc appGa}).
In that case, (\ref{2para case2}) gives
\begin{align}
\appGa_{0,b,0,d+1}^{0,b,0,d+1}
&=\appGa_{0,b,0,0}^{0,b,0,0}
-q^{-1/2}(1-(-1)^bq^b)\frac{1-q^{d+1}}{1-q}\appGa_{0,b,0,1}^{0,b-1,1,0}
,\\
\therefore
\appGa_{0,b,0,d}^{0,b,0,d}
&=(-q)^{b}
+(1-(-q)^b)(1-q^{d})
\\
&=1-(1-(-q)^{b})q^d
.
\label{appGa result5}
\end{align}
\item
For the case $(i,k)=(0,1)$, the non-trivial case for (\ref{2para case2}) is $j=b-1,l=d$ by (\ref{wc appGa}).
In that case, (\ref{2para case2}) gives
\begin{align}
\appGa_{0,b-1,1,d}^{0,b,0,d+1}
&=\appGa_{0,b-1,1,0}^{0,b,0,1}
-q^{-1/2}(1-(-1)^bq^b)\frac{1-q^{d}}{1-q}\appGa_{0,b-1,1,1}^{0,b-1,1,1}
\\
&=\appGa_{0,b-1,1,0}^{0,b,0,1}
-q^{1/2}(1-(-1)^bq^b)\frac{1-q^{d}}{1-q}\appGa_{0,b-1,1,0}^{0,b-1,1,0}
,\\
\therefore
\appGa_{0,b,1,d}^{0,b+1,0,d+1}
&=-q^{1/2}(1-(-q)^{b+1})
-q^{3/2}(1-(-q)^{b+1})\frac{1-q^{d}}{1-q}
\\
&=
-\frac{q^{1/2}(1-(-q)^{b+1})(1-q^{d+1})}{1-q}
.
\label{appGa result6}
\end{align}
\item
For the case $(i,k)=(1,0)$, the non-trivial case for (\ref{2para case2}) is $j=b-1,l=d+2$ by (\ref{wc appGa}).
In that case, (\ref{2para case2}) gives
\begin{align}
\appGa_{1,b-1,0,d+2}^{0,b,0,d+1}
&=\appGa_{1,b-1,0,1}^{0,b,0,0}
-q^{-1/2}(1-(-1)^bq^b)\frac{1-q^{d+1}}{1-q}\appGa_{1,b-1,0,2}^{0,b-1,1,0}
,\\
\therefore
\appGa_{1,b,0,d+1}^{0,b+1,0,d}
&=(1-q)(1-(-q)^{b+1})
-(1-q)(1-(-q)^{b+1})(1-q^{d})
\\
&=q^d(1-q)(1-(-q)^{b+1})
.
\label{appGa result7}
\end{align}
\item
For the case $(i,k)=(1,1)$, the non-trivial case for (\ref{2para case2}) is $j=b-2,l=d+1$ by (\ref{wc appGa}).
In that case, (\ref{2para case2}) gives
\begin{align}
\appGa_{1,b-2,1,d+1}^{0,b,0,d+1}
=&\appGa_{1,b-2,1,0}^{0,b,0,0}
-q^{-1/2}(1-(-1)^bq^b)\frac{1-q^{d+1}}{1-q}\appGa_{1,b-2,1,1}^{0,b-1,1,0}
,\\
\begin{split}
\therefore
\appGa_{1,b,1,d}^{0,b+2,0,d}
=&-q^{1/2}(1-(-q)^{b+1})(1-(-q)^{b+2})
\\
&+q^{1/2}(1-(-q)^{b+1})(1-(-q)^{b+2})(1-q^{d})
\end{split}
\\
=&-q^{d+1/2}(1-(-q)^{b+1})(1-(-q)^{b+2})
.
\label{appGa result8}
\end{align}
\end{enumerate}
\end{enumerate}
Similarly to (\ref{ac00}), we can derive the following formulae for $(a,c)=(1,1)$:
\begin{alignat}{2}
\appGa_{0,b+2,0,d}^{1,b,1,d}
&=-q^{d+1/2}
,&\quad
\appGa_{0,b+1,1,d}^{1,b,1,d+1}
&=-\frac{q(1-q^{d+1})}{1-q}
,
\label{appGa result9}
\\
\appGa_{1,b+1,0,d+1}^{1,b,1,d}
&=q^{d+1/2}(1-q)
,&\quad
\appGa_{1,b,1,d}^{1,b,1,d}
&=1-(1-(-q)^{b+1})q^{d+1}
.
\label{appGa result10}
\end{alignat}
Finally, we consider the case $(a,c)=(1,0)$.
(\ref{rel b1}) with $(a,c)=(0,0)$ gives
\begin{align}
\begin{split}
\appGa_{i,j,k,l}^{1,b,0,d}
=&(-1)^{j+k}q^{j+2k+l}\appGa_{i-1,j,k,l}^{0,b,0,d}
-\frac{1-(-1)^{k+1}}{2}q^{2k+l+1/2}\appGa_{i,j-2,k+1,l}^{0,b,0,d}
\\
&+\frac{1-q^{l+1}}{1-q}(-q)^{k}\appGa_{i,j-1,k,l+1}^{0,b,0,d}
+\frac{(1+q)(1-q^{l+2})(1-q^{l+1})}{(1-q)(1-q^2)}q^{-l-1/2}\appGa_{i,j,k-1,l+2}^{0,b,0,d}
,
\end{split}
\label{2para case4}
\end{align}
\begin{enumerate}[(i)]
\item
For the case $(i,k)=(0,0)$, the non-trivial case for (\ref{2para case4}) is $j=b+1,l=d-1$ by (\ref{wc appGa}).
In that case, (\ref{2para case4}) gives
\begin{align}
\appGa_{0,b+1,0,d-1}^{1,b,0,d}
=&-q^{d-1/2}\appGa_{0,b-1,1,d-1}^{0,b,0,d}
+\frac{1-q^{d}}{1-q}\appGa_{0,b,0,d}^{0,b,0,d}
\\
=&-q^{d-1/2}
\left(
-\frac{q^{1/2}(1-(-q)^{b})(1-q^d)}{1-q}
\right)
+\frac{1-q^{d}}{1-q}
\left(
1-(1-(-q)^{b})q^{d}
\right)
\\
=&\frac{1-q^d}{1-q}
,\\
\therefore
\appGa_{0,b+1,0,d}^{1,b,0,d+1}
=&\frac{1-q^{d+1}}{1-q}
.
\label{appGa result11}
\end{align}
\item
For the case $(i,k)=(0,1)$, the non-trivial case for (\ref{2para case4}) is $j=b,l=d-2$ by (\ref{wc appGa}).
In that case, (\ref{2para case4}) gives
\begin{align}
\appGa_{0,b,1,d-2}^{1,b,0,d}
=&-q\frac{1-q^{d-1}}{1-q}\appGa_{0,b-1,1,d-1}^{0,b,0,d}
+\frac{(1+q)(1-q^{d-1})(1-q^{d})}{(1-q)(1-q^2)}q^{-d+3/2}\appGa_{0,b,0,d}^{0,b,0,d}
\\
\begin{split}
=&-q\frac{1-q^{d-1}}{1-q}
\left(
-\frac{q^{1/2}(1-(-q)^{b})(1-q^d)}{1-q}
\right)
\\
&+\frac{(1+q)(1-q^{d-1})(1-q^{d})}{(1-q)(1-q^2)}q^{-d+3/2}
\left(
1-(1-(-q)^{b})q^{d}
\right)
\end{split}
\\
=&
\frac{q^{-d+3/2}(1-q^{d-1})(1-q^{d})}{(1-q)^2}
,\\
\therefore
\appGa_{0,b,1,d}^{1,b,0,d+2}
=&\frac{q^{-d-1/2}(1-q^{d+1})(1-q^{d+2})}{(1-q)^2}
.
\label{appGa result12}
\end{align}
\item
For the case $(i,k)=(1,0)$, the non-trivial case for (\ref{2para case4}) is $j=b,l=d$ by (\ref{wc appGa}).
In that case, (\ref{2para case4}) gives
\begin{align}
\appGa_{1,b,0,d}^{1,b,0,d}
=&(-1)^{b}q^{b+d}\appGa_{0,b,0,d}^{0,b,0,d}
-q^{d+1/2}\appGa_{1,b-2,1,d}^{0,b,0,d}
+\frac{1-q^{d+1}}{1-q}\appGa_{1,b-1,0,d+1}^{0,b,0,d}
\\
\begin{split}
=&
(-q)^{b}q^{d}
\left(
1-(1-(-q)^{b})q^{d}
\right)
-q^{d+1/2}
\left(
-q^{d+1/2}(1-(-q)^{b-1})(1-(-q)^{b})
\right)
\\
&+\frac{1-q^{d+1}}{1-q}
\left(
q^{d}(1-q)(1-(-q)^{b})
\right)
\end{split}
\\
=&q^d
.
\label{appGa result13}
\end{align}
\item
For the case $(i,k)=(1,1)$, the non-trivial case for (\ref{2para case4}) is $j=b-1,l=d-1$ by (\ref{wc appGa}).
In that case, (\ref{2para case4}) gives
\begin{align}
\begin{split}
\appGa_{1,b-1,1,d-1}^{1,b,0,d}
=&(-q)^{b}q^{d}\appGa_{0,b-1,1,d-1}^{0,b,0,d}
-q\frac{1-q^{d}}{1-q}\appGa_{1,b-2,1,d}^{0,b,0,d}
\\
&+\frac{(1+q)(1-q^{d})(1-q^{d+1})}{(1-q)(1-q^2)}q^{-d+1/2}\appGa_{1,b-1,0,d+1}^{0,b,0,d}
\end{split}
\\
\begin{split}
=&(-q)^{b}q^{d}
\left(
-\frac{q^{1/2}(1-(-q)^{b})(1-q^d)}{1-q}
\right)
\\
&-q\frac{1-q^{d}}{1-q}
\left(
-q^{d+1/2}(1-(-q)^{b-1})(1-(-q)^{b})
\right)
\\
&+\frac{(1+q)(1-q^{d})(1-q^{d+1})}{(1-q)(1-q^2)}q^{-d+1/2}
\left(
q^{d}(1-q)(1-(-q)^{b})
\right)
\end{split}
\\
=&\frac{q^{1/2}(1-(-q)^{b})(1-q^{d})}{1-q}
,\\
\therefore
\appGa_{1,b,1,d}^{1,b+1,0,d+1}
=&\frac{q^{1/2}(1-(-q)^{b+1})(1-q^{d+1})}{1-q}
.
\label{appGa result14}
\end{align}
\end{enumerate}
Therefore, (\ref{appGa result1}), (\ref{appGa result2}), (\ref{appGa result3}), (\ref{appGa result4}), (\ref{appGa result5}), (\ref{appGa result6}), (\ref{appGa result7}), (\ref{appGa result8}), (\ref{appGa result9}), (\ref{appGa result10}), (\ref{appGa result11}), (\ref{appGa result12}), (\ref{appGa result13}) and (\ref{appGa result14}) exactly correspond to (\ref{3dX mat el 1 app}) $\sim$ ((\ref{3dX mat el 8 app})).
\section{Recurrence equations for the 3D Z}\label{app 3dZ}
In order to calculate matrix elements of the 3D Z, it is sufficient to calculate $\tmt_{i,j,k,l}^{a,b,c,d}$ by the relation (\ref{tm rel}).
If we obtain a formula for $\appGb$ defined by
\begin{align}
e_{1}^{a}e_{21}^{b}e_{2(21)}^{c}e_{2}^{d}
&=\sum_{i,j,k,l\in\mathbb{Z}_{\geq 0}}
\appGb_{i,j,k,l}^{a,b,c,d}
e_{2}^{l}e_{(12)2}^{k}e_{12}^{j}e_{1}^{i}
,
\label{tB trans mat def2 appB}
\end{align}
we can obtain $\tmt_{i,j,k,l}^{a,b,c,d}$ by
\begin{align}
\tmt_{i,j,k,l}^{a,b,c,d}
=
\frac{[i]_{q}![j]_{q^{1/2,(-1)}}![k]_{q}![l]_{q^{1/2},(-1)}!}{[a]_{q}![b]_{q^{1/2},(-1)}![c]_{q}![d]_{q^{1/2},(-1)}!}
\appGb_{i,j,k,l}^{a,b,c,d}
.
\end{align}
Then, it is sufficient to calculate $\appGb_{i,j,k,l}^{a,b,c,d}$.
In this section, we derive recurrence equations for $\appGb$.
By the same discussion as (\ref{my 3dL result proof2}), we obtain the following weight conservation:
\begin{align}
\appGb_{i,j,k,l}^{a,b,c,d}=0
\quad
(i+j+k\neq a+b+c\quad\mathrm{or}\quad j+2k+l\neq b+2c+d)
.
\label{wc appGb}
\end{align}
\par
For simplicity, we write $b_1=e_{1},{\ }b_2=e_{21},{\ }b_3=e_{2(21)},{\ }b_4=e_{2}$ and $F_1^{i,j,k,l}=b_1^{i}b_2^{j}b_3^{k}b_4^{l}$, $F_2^{l,k,j,i}=\chi(F_1^{i,j,k,l})$, where $\chi$ is the anti-algebra automorphism given by (\ref{anti chi}).
Then, (\ref{tB trans mat def2 appB}) is represented by
\begin{align}
F_1^{a,b,c,d}
&=\sum_{i,j,k,l\in\mathbb{Z}_{\geq 0}}
\appGb_{i,j,k,l}^{a,b,c,d}
F_2^{l,k,j,i}
.
\label{tB trans mat def2 appB simp}
\end{align}
The elements $b_i$ satisfy the following relations:
\begin{align}
\begin{alignedat}{3}
&b_2b_1=q^{-1}b_1b_2
,&\quad
&b_3b_1=b_1b_3+q^{-1/2}b_2^2
,&\quad
&b_4b_1=qb_1b_4+b_2
,\\
&b_3b_2=q^{-1}b_2b_3
,&\quad
&b_4b_2=-b_2b_4+(q^{1/2}+q^{-1/2})b_3
,&\quad
&b_4b_3=q^{-1}b_3b_4
.
\end{alignedat}
\end{align}
We can easily prove the following relations for $n\in\mathbb{N}$ by induction.
\begin{align}
b_2b_1^n
=&q^{-n}b_1^nb_2
,\\
b_3b_1^n
=&b_1^nb_3+\frac{1-q^{2n}}{1-q^2}q^{-2n+3/2}b_1^{n-1}b_2^2
,\\
b_3b_2^n
=&q^{-n}b_2^nb_3
,\\
b_4b_1^n
=&q^{n}b_1^{n}b_4+[n]_{q}b_1^{n-1}b_2
,\\
b_4b_2^n
=&(-1)^{n}b_2^nb_4+q^{-n+1/2}(1-(-1)^{n}q^{n})b_2^{n-1}b_3
,\\
b_4b_3^n
=&q^{-n}b_3^nb_4
,\\
\begin{split}
b_4^nb_1
=&q^{n}b_1b_4^n
-(-1)^{n}\frac{1-(-1)^nq^n}{1+q}b_2b_4^{n-1}
\\
&+q^{-n+3/2}\frac{(1-(-1)^nq^{n})(1-(-1)^{n-1}q^{n-1})}{1-q^2}b_3b_4^{n-2}
,
\end{split}
\\
b_3^nb_1
=&b_1b_3^n
+q^{-2n+3/2}\frac{1-q^{2n}}{1-q^2}b_2^2b_3^{n-1}
,\\
b_2^nb_1
=&q^{-n}b_1b_2^n
,\\
b_3^nb_2
=&q^{-n}b_2b_3^n
.
\end{align}
Then, the left multiplication of $b_2,b_3,b_4$ on $F_1^{a,b,c,d}$ are given by
\begin{align}
b_2F_1^{a,b,c,d}
&=q^{-a}F_1^{a,b+1,c,d}
,\\
b_3F_1^{a,b,c,d}
&=q^{-b}F_1^{a,b,c+1,d}
+\frac{1-q^{2a}}{1-q^2}q^{-2a+3/2}F_1^{a-1,b+2,c,d}
,\\
b_4F_1^{a,b,c,d}
&=(-1)^{b}q^{a-c}F_1^{a,b,c,d+1}
+q^{a-b+1/2}(1-(-1)^{b}q^{b})F_1^{a,b-1,c+1,d}
+[a]_{q}F_1^{a-1,b+1,c,d}
,
\end{align}
and the right multiplication of $b_1,b_4$ on $F_1^{i,j,k,l}$ are given by
\begin{align}
\begin{split}
F_1^{i,j,k,l}b_1
=&q^{l-j}F_1^{i+1,j,k,l}
+q^{l-2k+3/2}\frac{1-q^{2k}}{1-q^2}F_1^{i,j+2,k-1,l}
-(-1)^{l}q^{-k}\frac{1-(-1)^{l}q^{l}}{1+q}F_1^{i,j+1,k,l-1}
\\
&+q^{-l+3/2}\frac{(1-(-1)^{l}q^{l})(1-(-1)^{l-1}q^{l-1})}{1-q^2}F_1^{i,j,k+1,l-2}
,
\end{split}
\\
F_1^{i,j,k,l}b_4
=&F_1^{i,j,k,l+1}
.
\end{align}
Calculating $F_1^{i,j,k,l}(b_1b_4-qb_4b_1)$ and $F_1^{i,j,k,l}(b_1b_4^2+(1-q)b_4b_1b_4-qb_4^2b_1)/(q^{1/2}+q^{-1/2})$, then by using $\chi$, we get
\begin{align}
\begin{split}
b_2F_2^{l,k,j,i}
=&
q^{l-j}(1-q^2)F_2^{l+1,k,j,i+1}
+q^{l-2k+3/2}(1-q^{2k})F_2^{l+1,k-1,j+2,i}
\\
&-(-1)^{l}q^{-k}(1-(-q)^{l}(1-q))F_2^{l,k,j+1,i}
+(-1)^{l}q^{1/2}(1-(-q)^{l})F_2^{l-1,k+1,j,i}
,
\end{split}
\\
\begin{split}
b_3F_2^{l,k,j,i}
=&
q^{l-j+1/2}(1-q^2)F_2^{l+2,k,j,i+1}
+q^{l-2k+2}(1-q^{2k})F_2^{l+2,k-1,j+2,i}
\\
&+q^{l-k+1/2}(1-q)F_2^{l+1,k,j+1,i}
-q^{l+1}F_2^{l,k+1,j,i}
,
\end{split}
\\
b_4F_2^{l,k,j,i}
=&F_2^{l+1,k,j,i}
.
\end{align}
By considering the left multiplication of $b_2,b_3,b_4$ on (\ref{tB trans mat def2 appB simp}), we obtain the following recurrence equations:
\begin{align}
\begin{split}
\appGb_{i,j,k,l}^{a,b,c,d}
=&
q^{a}
\left[
q^{l-j-1}(1-q^2)\appGb_{i-1,j,k,l-1}^{a,b-1,c,d}
+q^{l-2k-3/2}(1-q^{2k+2})\appGb_{i,j-2,k+1,l-1}^{a,b-1,c,d}
\right.
\\
&\left.
-(-1)^{l}q^{-k}(1-(-q)^{l}(1-q))\appGb_{i,j-1,k,l}^{a,b-1,c,d}
+(-1)^{l+1}q^{1/2}(1-(-q)^{l+1})\appGb_{i,j,k-1,l+1}^{a,b-1,c,d}
\right]
\label{rel b2 z B}
,
\end{split}
\\
\begin{split}
\appGb_{i,j,k,l}^{a,b,c,d}
=&
q^{b}\left[
q^{l-j-3/2}(1-q^2)\appGb_{i-1,j,k,l-2}^{a,b,c-1,d}
+q^{l-2k-2}(1-q^{2k+2})\appGb_{i,j-2,k+1,l-2}^{a,b,c-1,d}
\right.
\\
&\left.
+q^{l-k-1/2}(1-q)\appGb_{i,j-1,k,l-1}^{a,b,c-1,d}
-q^{l+1}\appGb_{i,j,k-1,l}^{a,b,c-1,d}
-\frac{1-q^{2a}}{1-q^2}q^{-2a+3/2}\appGb_{i,j,k,l}^{a-1,b+2,c-1,d}
\right]
,
\end{split}
\label{rel b3 z B}
\\
\appGb_{i,j,k,l}^{a,b,c,d}
=&(-1)^{b}
q^{c-a}\left[
\appGb_{i,j,k,l-1}^{a,b,c,d-1}
-[a]_{q}\appGb_{i,j,k,l}^{a-1,b+1,c,d-1}
-q^{a-b+1/2}(1-(-1)^{b}q^{b})\appGb_{i,j,k,l}^{a,b-1,c+1,d-1}
\right]
,
\label{rel b4 z B}
\end{align}
which (\ref{rel b2 z B}) holds for $b\geq 1$, (\ref{rel b3 z B}) holds for $c\geq 1$ and (\ref{rel b4 z B}) holds for $d\geq 1$.
\par
We can calculate $\appGb_{i,j,k,l}^{a,b,c,d}$ by using the above reccurence equations (\ref{rel b2 z B}) $\sim$ (\ref{rel b4 z B}) as follows.
First, we can reduce $\appGb_{i,j,k,l}^{a,b,c,d}$ to the case of $d=0$ by using (\ref{rel b4 z B}).
Second, we can reduce $\appGb_{i,j,k,l}^{a,b,c,0}$ to the case of $c=0$ by using (\ref{rel b3 z B}) keeping $d=0$.
Finally, we can reduce $\appGb_{i,j,k,l}^{a,b,0,0}$ to the case of $b=0$ by using (\ref{rel b2 z B}) keeping $c=d=0$.
Then, by considering the weight conservation (\ref{wc appGb}), we find $\appGb_{i,j,k,l}^{i,0,0,0}\neq 0$ for $j=k=l=0$ and $a=i$.
In that case, we can easily obtain $\appGb_{i,0,0,0}^{i,0,0,0}=1$ by (\ref{tB trans mat def2 appB}).
Therefore, we can obtain any matrix elements of $\appGb$ by the above procedure.
See Example \ref{3dZ example}.{\ }for the cases of $(i,j,k,l)=(0,1,1,2)$.
\addtocontents{toc}{\setcounter{tocdepth}{2}}

\begin{thebibliography}{99}
%
%
\bibitem{Bax07}
R.~J.~Baxter,
{\it Exactly solved models in statistical mechanics},
Dover (2007).
%
%
\bibitem{BB92}
V.~V.~Bazhanov, R.~J.~Baxter,
{\it New solvable lattice models in three dimensions},
\href{https://link.springer.com/article/10.1007/BF01050423}{J. Stat. Phys. {\bf 69} 453--485 (1992)}.
%
%
\bibitem{BMS10}
V.~V.~Bazhanov, V.~V.~Mangazeev, S.~M.~Sergeev,
{\it Quantum geometry of 3-dimensional lattices and tetrahedron equation},
\href{https://doi.org/10.1142/9789814304634_0001}{XVIth International Congress on Mathematical Physics, World Scientific Publishing, Hackensack 23--44 (2010)}.
%
%
\bibitem{BS06}
V.~V.~Bazhanov, S.~M.~Sergeev,
{\it Zamolodchikov's tetrahedron equation and hidden structure of quantum groups},
\href{https://iopscience.iop.org/article/10.1088/0305-4470/39/13/009}{J. Phys. A: Math. Theor. {\bf 39} 3295 16pages (2006)}.
%
%
\bibitem{BZ01}
A.~Berenstein, A.~Zelevinsky,
{\it Tensor product multiplicities, canonical bases and totally positive varieties},
\href{https://link.springer.com/article/10.1007/s002220000102}{Inventiones mathematicae {\bf 143} 77--128 (2001)}.
%
%
\bibitem{CK90}
M.~Chaichian, P.~Kulish,
{\it Quantum Lie superalgebras and $q$-oscillators},
\href{https://doi.org/10.1016/0370-2693(90)92004-3}{Phys. Lett. B {\bf 234} 72--80 (1990)}.
%
%
\bibitem{CW12}
S.~J.~Cheng, W.~Wang,
{\it Dualities and representations of Lie superalgebras},
\href{http://www.ams.org/books/gsm/144/}{Graduate Studies in Mathematics, vol. 144, Amer. Math. Soc., Providence, RI (2012)}.
%
%
\bibitem{Che84}
I.~V.~Cherednik,
{\it Factorizing particles on a half-line and root systems},
\href{https://link.springer.com/article/10.1007/BF01038545}{Theor. Math. Phys. {\bf 61}  35--44 (1984)}.
%
%
\bibitem{Cla16}
S.~Clark,
{\it Canonical bases for the quantum enveloping algebra of $\mathfrak{gl}(m|1)$ and its modules},
\href{https://arxiv.org/abs/1605.04266}{arXiv:1605.04266v1}.
%
%
%
%
%
%
%
%
\bibitem{CHW16}
S.~Clark, D.~Hill, W.~Wang,
{\it Quantum shuffles and quantum supergroups of basic type},
\href{https://www.ems-ph.org/journals/show_abstract.php?issn=1663-487X&vol=7&iss=3&rank=3}{Quantum Topol. {\bf 7} 553--638 (2016)}.
%
%
\bibitem{Dri86}
V.~G.~Drinfeld,
{\it Quantum Groups},
Proc. ICM. {\bf 1,2} 798--820 (1986).
%
%
%
%
\bibitem{FSS89}
L.~Frappat, A.~Sciarrino, P.~Sorba,
{\it Structure of basic Lie superalgebras and of their affine extensions},
\href{https://link.springer.com/article/10.1007/BF01217734}{Commun. Math. Phys. {\bf 121} 457--500 (1989)}.
%
%
\bibitem{GSZ20}
P.~Gavrylenko, M.~Semenyakin, Y.~Zenkevich,
{\it Solution of tetrahedron equation and cluster algebras},
\href{https://arxiv.org/abs/2010.15871v1}{arXiv:2010.15871v1}.
%
%
%
%
\bibitem{Hec10}
I.~Heckenberger,
{\it Lusztig isomorphisms for Drinfel'd doubles of bosonizations of Nichols algebras of diagonal type},
\href{https://doi.org/10.1016/j.jalgebra.2010.02.013}{J. Alg. {\bf 323} 2130--2180 (2010)}.
%
%
\bibitem{HY08}
I.~Heckenberger, H.~Yamane,
{\it A generalization of Coxeter groups, root systems, and Matsumoto’s theorem},
\href{https://link.springer.com/article/10.1007/s00209-007-0223-3}{Math. Z. {\bf 259} 255--276 (2008)}.
%
%
\bibitem{Hum72}
J.~E.~Humphreys,
{\it Introduction to Lie algebras and representation theory},
\href{https://link.springer.com/book/10.1007\%2F978-1-4612-6398-2}{Graduate Texts in Mathematics, vol. 9, Springer-Verlag New York-Berlin (1972)}.
%
%
\bibitem{Hum90}
J.~E.~Humphreys,
{\it Reflection groups and Coxeter groups},
\href{https://doi.org/10.1017/CBO9780511623646}{Cambridge University Press, Cambridge (1990)}.
%
%
\bibitem{IK97}
A.~P.~Isaev, P.~P.~Kulish,
{\it Tetrahedron reflection equations},
\href{https://www.worldscientific.com/doi/abs/10.1142/S0217732397000443}{Mod. Phys. Lett. A. {\bf 12} 427--437 (1997)}.
%
%
\bibitem{Jim86b}
M.~Jimbo,
{\it A $q$-analogue of $U(\mathfrak{gl}(N+1))$, Hecke algebra, and the Yang-Baxter equation},
\href{https://link.springer.com/article/10.1007\%2FBF00400222}{Lett. Math. Phys. {\bf 11} 247--252 (1986)}.
%
%
\bibitem{Kac77}
V.~G.~Kac,
{\it Lie superalgebras},
\href{https://www.sciencedirect.com/science/article/pii/0001870877900172}{Adv. Math. {\bf 26} 8--96 (1977)}.
%
%
%
%
\bibitem{KV94}
M.~M.~Kapranov, V.~A.~Voevodsky,
{\it 2-categories and Zamolodchikov tetrahedra equations},
\href{https://www.math.ias.edu/vladimir/node/71}{Proc. Sympos. Pure Math. {\bf 56} 177--259 (1994)}.
%
%
\bibitem{Kas91}
M.~Kashiwara,
{\it On crystal bases of the $Q$-analogue of universal enveloping algebras},
\href{https://projecteuclid.org/euclid.dmj/1077295931}{Duke Math. J. {\bf 63} 465--516 (1991)}.
%
%
\bibitem{KT91} S.~Khoroshkin, V.~N.~Tolstoy,
{\it Universal $R$-Matrix for Quantized (Super)Algebras},
\href{https://link.springer.com/article/10.1007/BF02102819}{Commun. Math. Phys. {\bf 141} 599--617 (1991)}.
%
%
%
%
\bibitem{Kun18}
A.~Kuniba,
{\it Matrix product solutions to the $G_2$ reflection equation},
\href{https://doi.org/10.1093/integr/xyy008}{J. Integrable Syst. {\bf 3} xyy008 28pages (2018)}.
%
%
\bibitem{KO12}
A.~Kuniba, M.~Okado,
{\it Tetrahedron and 3D reflection equations from quantized algebra of functions},
\href{https://iopscience.iop.org/article/10.1088/1751-8113/45/46/465206/meta}{J. Phys. A: Math. Theor. {\bf 45} 465206 27pages (2012)}.
%
%
\bibitem{KO13}
A.~Kuniba, M.~Okado,
{\it A solution of the 3D reflection equation from quantized algebra of functions of type B},
\href{https://www.worldscientific.com/doi/abs/10.1142/9789814518550_0021}{Symmetries and groups in contemporary physics, Nankai Series in Pure, Applied Mathematics and Theoretical Physics 181--190 (2013)}.
%
%
\bibitem{KOS15} A.~Kuniba, M.~Okado, S.~Sergeev,
{\it Tetrahedron equation and generalized quantum groups},
\href{https://iopscience.iop.org/article/10.1088/1751-8113/48/30/304001/meta}{J. Phys. A: Math. Theor. {\bf 48} 304001 38pages (2015)}.
%
%
\bibitem{KOY13} A.~Kuniba, M.~Okado, Y.~Yamada,
{\it A Common Structure in PBW Bases of the Nilpotent Subalgebra of $U_q(\mathfrak{g})$ and Quantized Algebra of Functions},
\href{https://www.emis.de/journals/SIGMA/2013/049/}{SIGMA Symmetry Integrability Geom. Methods Appl. {\bf 9} 049 23pages (2013)}.
%
%
\bibitem{KOY19b}
A.~Kuniba, M.~Okado, A.~Yoneyama,
{\it Reflection ${K}$ matrices associated with an Onsager coideal of ${U}_p({A}_{n-1}^{(1)})$,${U}_p({B}_n^{(1)})$,${U}_p({D}_n^{(1)})$ and ${U}_p({D}_{n+1}^{(2)})$},
\href{https://iopscience.iop.org/article/10.1088/1751-8121/ab3715}{J. Phys. A: Math. Theor. {\bf 52} 375202 27pages (2019)}.
%
%
\bibitem{KP18}
A.~Kuniba, V.~Pasquier,
{\it Matrix product solutions to the reflection equation from three dimensional integrability},
\href{https://iopscience.iop.org/article/10.1088/1751-8121/aac3b4}{J. Phys. A: Math. Theor. {\bf 51} 255204 26pages (2018)}.
%
%
\bibitem{LSS85}
D.~Leites, M.~Saveliev, V.~Serganova,
{\it  Embeddings of Lie superalgebra $osp(1,2)$ and nonlinear  symmetric equations},
\href{https://inis.iaea.org/search/search.aspx?orig_q=RN:17026007}{Group theoretical methods in physics, proceedings of the Third Yurmala Seminar, Yurmala (1985)}.
%
%
%
%
\bibitem{Lus90}
G.~Lusztig,
{\it Canonical  bases  arising  from  quantized  enveloping  algebras},
\href{https://www.jstor.org/stable/1990961}{J. Amer. Math. Soc. {\bf 3} 447--498 (1990)}.
%
%
\bibitem{MN89}
J.~M.~Maillet, F.~W.~Nijhoff,
{\it Integrability for multidimensional lattice models},
\href{https://www.sciencedirect.com/science/article/abs/pii/0370269389914664}{Phys. Lett. B {\bf 224} 389--396 (1989)}.
%
%
\bibitem{Man89} Y.~I.~Manin,
{\it Multiparametric  Quantum  Deformationof  the  General  Linear  Supergroup},
\href{https://link.springer.com/article/10.1007/BF01244022}{Commun. Math. Phys. {\bf 123} 163--175 (1989)}.
%
%
\bibitem{MMRZZ13}
A.~Mironov, A.~Morozov, B.~Runov, Y.~Zenkevich, A.~Zotov,
{\it Spectral dualities in XXZ spin chains and five dimensional gauge theories},
\href{https://link.springer.com/article/10.1007\%2FJHEP12\%282013\%29034}{JHEP {\bf 12} 034 11pages (2013)}.
%
%
\bibitem{RTF90}
N.~Yu.~Reshetikhin, L.~A.~Takhtadzhyan, L.~D.~Faddeev,
{\it Quantization of Lie groups and Lie algebras},
\href{http://mi.mathnet.ru/eng/aa7}{Leningrad Math. J. 1193–225 (1990)}.
%
%
\bibitem{Sai16} Y.~Saito,
{\it Quantized coordinate rings, PBW--type bases and $q$-boson algebras},
\href{https://www.sciencedirect.com/science/article/pii/S0021869316000272}{J. Algebra {\bf 453} 456--491 (2016)}.
%
%
\bibitem{Ser06} S.~M.~Sergeev,
{\it Quantum curve in $q$-oscillator model},
\href{https://www.hindawi.com/journals/ijmms/2006/092064/}{Int. J. Math. Math. Sci. {\bf 2006} 92064 31pages (2006)}.
%
%
\bibitem{Ser08} S.~M.~Sergeev,
{\it Tetrahedron  equations  and  nilpotent  subalgebras of $\mathscr{U}_q(sl_n)$},
\href{https://link.springer.com/article/10.1007\%2Fs11005-008-0219-x}{Lett. Math. Phys. {\bf 83} 231--235 (2008)}.
%
%
\bibitem{Ser09} S.~M.~Sergeev,
{\it Supertetrahedra and superalgebras},
\href{https://doi.org/10.1063/1.3204504}{J. Math. Phys. {\bf 50} 083519 21pages (2009)}.
%
%
\bibitem{SMS96}
S.~M.~Sergeev, V.~V.~Mangazeev, Y.~G.~Stroganov,
{\it The vertex formulation of the Bazhanov-Baxter model},
\href{https://link.springer.com/article/10.1007/BF02189224}{J. Stat. Phys. {\bf 82} 31--49 (1996)}.
%
%
\bibitem{Soi92}
Y.~S.~Soibelman,
{\it Selected topics in quantum groups},
\href{https://www.worldscientific.com/doi/10.1142/S0217751X92004087}{Int. J. Modern Phys. A {\bf 7} Suppl. {\bf 1B} 859–-887 (1992)}.
%
%
\bibitem{Tan17}
T.~Tanisaki,
{\it Modules over quantized coordinate algebras and PBW-bases},
\href{https://doi.org/10.2969/jmsj/06931105}{J. Math. Soc. Japan {\bf 69} 1105--1156 (2017)}.
%
%
\bibitem{XZ16}
Y.~Xu, R.~B.~Zhang,
{\it Quantum correspondences of affine Lie superalgebras},
\href{https://arxiv.org/abs/1607.01142v3}{arXiv:1607.01142v3}.
%
%
\bibitem{Yam94} H.~Yamane,
{\it Quantized enveloping algebras associated with simple Lie superalgebras and their universal $R$-matrices},
\href{https://doi.org/10.2977/prims/1195166275}{Publ. RIMS, Kyoto Univ. {\bf 30} 15--84 (1994)}.
%
%
\bibitem{Zam80}
A.~B.~Zamalodchikov,
{\it Tetrahedra equations and integrable systems in three-dimensional space},
\href{http://www.jetp.ac.ru/cgi-bin/e/index/e/52/2/p325?a=list}{Soviet Phys. JETP {\bf 52} 325--336 (1980)}.

\bibitem{Zam81}
A.~B.~Zamolodchikov,
{\it Tetrahedron Equations and the Relativistic $S$-Matrix of Straight-Strings in $2+1$-Dimensions},
\href{https://link.springer.com/article/10.1007/BF01209309}{Commun. Math. Phys. {\bf 79} 489--505 (1981)}.
%
%
\bibitem{Zha98} R.~B.~Zhang,
{\it Structure and Representations of the Quantum General Linear Supergroup},
\href{https://link.springer.com/article/10.1007/s002200050401}{Commun. Math. Phys. {\bf 195} 525--547 (1998)}.
%
%
\bibitem{Zha14} R.~B.~Zhang,
{\it Serre presentations of Lie superalgebras},
\href{https://link.springer.com/chapter/10.1007/978-3-319-02952-8_14}{Advances in Lie Superalgebras, Springer INdAM Ser. {\bf 7} 235--280 (2014)}.
%
%
\bibitem{Zhe87} D.~P.~Zhelobenko,
{\it Extremal cocycles of Weyl groups},
\href{https://link.springer.com/article/10.1007/BF02577133}{Funct. Anal. Appl. {\bf 21} (1987)}.
%
%
\end{thebibliography}
\end{document}